\DeclareMathAlphabet{\mathbbb}{U}{bbold}{m}{n}  %
\theoremstyle{definition}
\newtheorem{definition}{Definition}%
\theoremstyle{plain}
\newtheorem{theorem}{Theorem}%
\theoremstyle{plain}
\theoremstyle{plain}
\newtheorem{lemma}{Lemma}%
\newtheorem{proposition}{Proposition}
\newtheorem{corollary}{Corollary}
\newtheorem{remark}{Remark}
\newcommand{\ds}{\displaystyle}
\newcommand{\secref}[1]{Section~\ref{#1}}
\newcommand{\appref}[1]{Appendix~\ref{#1}}
\newcommand{\figref}[1]{Fig.~\ref{#1}}
\newcommand{\lemref}[1]{Lemma~\ref{#1}}
\newcommand{\thmref}[1]{Theorem~\ref{#1}}
\newcommand{\propref}[1]{Proposition~\ref{#1}}
\newcommand{\defref}[1]{Definition~\ref{#1}}
\newcommand{\algoref}[1]{Algorithm~\ref{#1}}
\DeclareMathOperator*{\argmin}{arg\,min}
\DeclareMathOperator*{\tr}{tr}
\DeclareMathOperator*{\sgn}{sgn}
\DeclareMathOperator*{\diag}{diag}
\newcommand{\eps}{\epsilon}
\newcommand{\fb}{\bar{f}}
\newcommand{\fh}{\hat{f}}
\newcommand{\gb}{\bar{g}}
\newcommand{\gh}{\hat{g}}
\newcommand{\X}{{\mathcal{X}}}
\newcommand{\Y}{{\mathcal{Y}}}
\newcommand{\cN}{\mathcal{N}}
\newcommand{\cS}{\mathcal{S}}
\newcommand{\cI}{\mathcal{I}}
\newcommand{\cX}{\X}
\newcommand{\cY}{\Y}
\newcommand{\ev}{{\mathbb{E}}}
\newcommand{\E}[1]{\ev\left[#1\right]}
\renewcommand{\P}{{\mathbb{P}}}
\renewcommand{\vec}[1]{\underline{#1}}
\newcommand{\norm}[1]{\|#1\|}
\newcommand{\ip}[2]{\langle#1,#2\rangle}
\newcommand{\ub}{\underline{\beta}}
\newcommand{\psib}{\bar{\psi}{}}
\newcommand{\psih}{\hat{\psi}{}}
\newcommand{\phib}{\bar{\phi}{}}
\newcommand{\phih}{\hat{\phi}{}}
\newcommand{\frob}[1]{\|#1\|_\mathrm{F}}
\newcommand{\bfrob}[1]{\bigl\|#1\bigr\|_\mathrm{F}}
\newcommand{\bbfrob}[1]{\left\|#1\right\|_\mathrm{F}}
\newcommand{\spectral}[1]{\|#1\|_\mathrm{s}}
\newcommand{\bspectral}[1]{\bigl\|#1\bigr\|_\mathrm{s}}
\newcommand{\bbspectral}[1]{\left\|#1\right\|_\mathrm{s}}
\newcommand{\trop}[1]{\tr\left\{#1\right\}}
\newcommand{\defeq}{\triangleq}
\newcommand{\nbhd}{\cN}
\newcommand{\Ph}{\hat{P}}
\newcommand{\dtm}{\mathbf{B}}
\newcommand{\dtmh}{\hat{\dtm}}
\newcommand{\dtmt}{\tilde{\dtm}}
\newcommand{\T}{\mathrm{T}}
\newcommand{\Pt}{\tilde{P}}
\newcommand{\bone}{\mathbf{1}}
\newcommand{\Phib}{{\bold{\Phi}}}
\newcommand{\Phih}{\hat{\Phib}}
\newcommand{\Psib}{{\bold{\Psi}}}
\newcommand{\Psih}{\hat{\Psib}}
\newcommand{\Ab}{\mathbf{A}}
\newcommand{\Vb}{\mathbf{V}}
\newcommand{\vb}{\bm{v}}
\newcommand{\Lambdab}{\mathbf{\Lambda}}
\newcommand{\varthetab}{{\bm{\vartheta}}} 
\newcommand{\varthetabb}{\bar{\varthetab}} 
\newcommand{\phibh}{\hat{\phib}}
\newcommand{\Mbh}{\hat{\Mb}}
\newcommand{\cIb}{\mathcal{I}^{\mathsf{c}}}
\newcommand{\Ib}{\mathbf{I}}
\newcommand{\Cb}{\mathbf{C}}
\newcommand{\Mb}{\mathbf{M}}
\newcommand{\Pb}{\mathbf{P}}
\newcommand{\Qb}{\mathbf{Q}}
\newcommand{\qb}{\bm{q}}
\newcommand{\Xib}{\mathbf{\Xi}}
\renewcommand{\phib}{\bm{\phi}}
\newcommand{\imate}{\Gamma} %
\newcommand{\imat}{\bm{\imate}} %
\newcommand{\dtmdmate}{\Xi} %
\newcommand{\dtmdmat}{\bm{\dtmdmate}} %
\newcommand{\Gb}{\mathbf{G}}
\newcommand{\Jb}{\mathbf{J}}
\newcommand{\Lb}{\mathbf{L}}
\newcommand{\Lbh}{\bar{\Lb}}
\newcommand{\thetab}{\bm{\theta}}
\renewcommand{\psib}{\bm{\psi}}
\newcommand{\eqspace}{}%
\newcommand{\zerob}{\bm{0}}
\newcommand{\vbh}{\hat{\vb}}
\newcommand{\Vbh}{\hat{\Vb}}
\newcommand{\Ob}{\mathbf{O}}
\newcommand{\Wb}{\mathbf{W}}
\DeclareMathOperator{\vecop}{vec}
\renewcommand{\vec}{\vecop}
\renewcommand{\ub}{\bm{u}}
\newcommand{\imatem}{\zeta} %
\newcommand{\imatm}{\bm{\imatem}} %
\newcommand{\imatej}{\imate} %
\newcommand{\imatj}{\mathbf{\imatej}} %
\newcommand{\imates}{\Upsilon} %
\newcommand{\imats}{\mathbf{\imates}} %
\newcommand{\Ub}{\mathbf{U}}
\newcommand{\F}{\mathrm{F}}
\newcommand{\Er}{\mathrm{E}}
\newcommand{\Erh}{\hat{\Er}}
\newcommand{\Erb}{\bar{\Er}}
\newcommand{\alphah}{\bar{\alpha}}
\newcommand{\betah}{\bar{\beta}}
\newcommand{\Gbh}{\bar{\Gb}}
\newcommand{\Jbh}{\bar{\Jb}}
\newcommand{\convec}{\bm{\varsigma}} %
\newcommand{\nub}{\bm{\nu}} %
\newcommand{\Phibb}{\bar{\Phib}}
\newcommand{\Psibb}{\bar{\Psib}}
\newcommand{\psibb}{\bar{\psib}}
\newcommand{\phibb}{\bar{\phib}}
\newcommand{\psibar}{\bar{\psi}}
\newcommand{\phibar}{\bar{\phi}}
\newcommand{\nbhdbar}{\bar{\nbhd}}
\newcommand{\varphib}{\bm{\varphi}} %
\newcommand{\varphibb}{\bar{\varphib}} %
\newcommand{\ubb}{\bar{\ub}} %
\renewcommand{\Pt}{\bar{P}}
\newcommand{\Phibt}{\bar{\Phib}}
\newcommand{\dtmtt}{\bar{\dtm}}
\newcommand{\cSt}{\bar{\cS}}
\newcommand{\dtmdmatet}{\bar{\Xi}} %
\newcommand{\dtmdmatt}{\bar{\bm{\dtmdmate}}} %
\newcommand{\dtmeh}{\hat{B}}
\newcommand{\dtmet}{\tilde{B}}
\newcommand{\dtmett}{\bar{B}}
\newcommand{\Bt}{\tilde{B}}
\newcommand{\Sigmab}{\bm{\Sigma}}
\newcommand{\reply}[1]{{#1}}
\definecolor{electricpurple}{rgb}{0.75, 0.0, 1.0}
\begin{document}
\title{On the Sample Complexity of HGR Maximal Correlation Functions\reply{ for Large Datasets} }

\author{
Shao-Lun Huang,~\IEEEmembership{Member,~IEEE}
and~Xiangxiang Xu,~\IEEEmembership{Graduate Student Member,~IEEE}%
\thanks{This paper was presented in part at the Inform.\ Theory Workshop (ITW-2019),
Visby, Sweden, Aug. 2019 and at Allerton Conf. Commun., Contr., Computing (Allerton-2019), Monticello, IL, Sep. 2019.}
\thanks{S.-L. Huang is with the Data Science and Information Technology Research
Center, Tsinghua--Berkeley Shenzhen Institute, Shenzhen, China (e-mail:
shaolun.huang@sz.tsinghua.edu.cn).}%
\thanks{X. Xu is with the Department of Electronic Engineering, Tsinghua University, Beijing, China (e-mail: xuxx14@mails.tsinghua.edu.cn).}
}

\maketitle

\begin{abstract}
  The Hirschfeld--Gebelein--R\'{e}nyi (HGR) maximal correlation and the corresponding functions have been shown useful in many machine learning scenarios. In this paper, we study the sample complexity of estimating the HGR maximal correlation functions by the alternating conditional expectation (ACE) algorithm using training samples from large datasets. Specifically, we develop a mathematical framework to characterize the learning errors between the maximal correlation functions computed from the true distribution, and the functions estimated from the ACE algorithm. For both supervised and semi-supervised learning scenarios, we establish the analytical expressions for the error exponents of the learning errors. Furthermore, we demonstrate that for large datasets, the upper bounds for the sample complexity of learning the HGR maximal correlation functions by the ACE algorithm can be expressed using the established error exponents.  
 Moreover, with our theoretical results, we investigate the sampling strategy for different types of samples in semi-supervised learning with a total sampling budget constraint, and an optimal sampling strategy is developed to maximize the error exponent of the learning error. Finally, the numerical simulations are presented to support our theoretical results. 

\end{abstract}

\begin{IEEEkeywords}
  error exponent, sample complexity, HGR maximal correlation, ACE algorithm, supervised learning, semi-supervised learning, singular value decomposition, generalization error
\end{IEEEkeywords}

\IEEEpeerreviewmaketitle

\section{Introduction} \label{sec:1}

Learning informative and generalizable representations of data is a crucial issue in machine learning \cite{bengio2013representation}. To measure the correlation and select informative features, the Hirschfeld--Gebelein--R\'{e}nyi (HGR) maximal correlation~\cite{H35, H41, RenyiCorrelation}  is a normalized measure of the dependence between two random variables and has been widely applied as an information metric to study inference and learning problems~\cite{bell1962mutual, ahlswede1976spreading, lopez2013randomized}. \reply{Specifically, given a pair of jointly distributed discrete random variables $X,Y$ over finite alphabets $\cX, \cY$, their HGR maximal correlation $\rho(X; Y)$ is defined as
\begin{align} \label{eq:HGR:org}
\rho(X;Y) \triangleq \max_{f\colon \cX \mapsto \mathbb{R}, \ g\colon \cY \mapsto \mathbb{R}} \E{f(X)g(Y)},
\end{align}
where the maximum is taken over all functions $f, g$ with zero mean and unit variance. Therefore, the HGR maximal correlation characterizes the correlation between the most correlated function mappings of $X$ and $Y$, and the optimal functions $f, g$ that achieve the maximal correlation essentially extract the most correlated aspects between $X$ and $Y$. Recently, the HGR maximal correlation has been further generalized to consider the correlation in the $k$-dimensional functional spaces by defining} \cite{makur2015efficient}
\begin{align} \label{eq:HGR}
\rho_k(X;Y) \triangleq \max_{\substack{ f\colon \cX \mapsto \mathbb{R}^k, \ g\colon \cY \mapsto \mathbb{R}^k  \\ {\E{f(X)} = \E{g(Y)} = \bold{0}} \\ {\E{f(X)f^{\T}(X)} = \E{g(Y)g^{\T}(Y)} = \mathbf{I}}}} \E{f^{\T}(X)g(Y)},
\end{align}
of which the special case with $k = 1$ corresponds to the original problem \eqref{eq:HGR:org}. In particular, the optimal functions $f^*,g^*$ maximizing~\eqref{eq:HGR}, referred to as the \emph{maximal correlation functions}, have been shown to take important roles in statistics \cite{ahlswede1976spreading}, information theory~\cite{makur2015efficient, huang2017information}, machine learning \cite{razaviyayn2015discrete, wang2019efficient, xu2020maximal}, and especially in interpreting deep neural networks \cite{HXZW2019}. For example, in machine learning scenarios, the variable $Y$ can be viewed as the label, and $X$ is the data variable that is used to infer or predict about attributes of $Y$. Then, $f^*$ can be illustrated as the optimal feature to predict $Y$, with $g^*$ being the corresponding weights \cite{xu2020maximal}. Therefore, efficiently and effectively computing maximal correlation functions from data is important in information theory and machine learning.

In this paper, we study the sample complexity of estimating the maximal correlation functions from a sequence of $n$ training samples $(x_1, y_1), \ldots, (x_n, y_n)$, i.i.d. generated from the (unknown) true distribution $P_{XY}$, by the widely adopted alternating conditional expectation (ACE) algorithm~\cite{ACE}. Mathematically, the ACE algorithm computes the maximal correlation functions over the empirical joint distribution $\Ph_{XY}$ of the training samples. Therefore, there exists a learning error between the true maximal correlation functions and the computed functions due to the i.i.d. sampling process. To quantify this learning error, for functions computed from the ACE algorithm, we apply the \emph{H-score} introduced by~\cite{HXZW2019} as the performance metric. It has been shown that the H-score of a function indicates the performance of employing that function as the input feature to the softmax regression~\cite{HXZW2019}, and hence is a meaningful information metric in machine learning applications. Then, we study the large deviation property of the H-score for the functions computed from the ACE algorithm, in which we characterize the error exponent of the learning error in the asymptotic regime, i.e., when $n$ tends to infinity. In particular, we establish the analytical solutions for this error exponent expressed by the true distribution $P_{XY}$. Furthermore, for large datasets, we demonstrate an upper bound for the sample complexity of learning the maximal correlation functions, which can be expressed using the established error exponent.
Our results %
also provide insights in designing the dimensionality $k$ for the selected features to effectively extract correlation structures among data variables in machine learning problems.

In addition, we investigate the sample complexity of the maximal correlation functions in the semi-supervised learning \cite{zhu2005semi} scenario, in which not only the labeled samples, but also a sequence of $m$ unlabeled training samples $x_{n+1}, \ldots , x_{n+m}$ is observed. In this case, the empirical joint distribution can in general be learned more accurately, since the marginal distribution of $X$ is trained better due to the unlabeled samples. Thus, the sample complexity is expected to be improved. To quantify this performance gain, we first propose a generalized ACE algorithm to deal with the unlabeled training samples, and then study the sample complexity of estimating the maximal correlation functions from the generalized ACE algorithm. As in the supervised case, we develop the closed form expressions for the error exponent of the learning error, and demonstrate the performance gain from the unlabeled samples. In addition, the theoretical results are applied to study the optimal sampling strategy, when the labeled and unlabeled samples have different acquiring costs. In particular, we formulate an optimization problem to maximize the error exponent of estimating the maximal correlation functions from different types of samples, subjected to a total budget constraint on acquiring these samples. The solution of this optimization problem then illustrates the optimal design of selecting different types of samples in semi-supervised learning problems. Finally, our theoretical results are supported by numerical simulations.

The rest of this paper is organized as follows. In \secref{sec:pf}, we formulate the sample complexity problem of the maximal correlation functions and define the corresponding error exponent of the test error, and the mathematical framework for computing this error exponent is presented in \secref{sec:mpa}. In \secref{sec:sup}, we establish the analytical expression for the error exponent in the supervised learning scenario, in which the number of required samples for computing the maximal correlation functions on large datasets is provided. Then, similar analyses of the error exponent and the number of samples required for the semi-supervised learning are presented in \secref{sec:semi}, in which we develop the optimal sampling strategy for semi-supervised learning with a sampling budget constraint. Finally, the numerical simulations are presented in \secref{sec:ns} to support our theoretical results.

\section{Problem Formulation} \label{sec:pf}
We commence by briefly introducing the singular value decomposition (SVD) structure of the HGR maximal correlation problem and the ACE algorithm for computing the maximal correlation functions. Given the joint distribution\footnote{We assume that the true marginal distributions $P_X(x) > 0$, and $P_{Y}(y) > 0$, for all $x,y$, since otherwise we can remove the symbols with probability 0 from the alphabets.} $P_{XY}$, the HGR maximal correlation \eqref{eq:HGR:org} is known to be the second largest singular value of the matrix $\dtm \in \mathbb{R}^{|\cY|\times |\cX|}$, also referred to as the \emph{divergence transition matrix} (DTM) \cite{huang2012linear}, whose entries are \cite{witsenhausen1975sequences}
\begin{align*}
  B(y, x) = \frac{P_{XY}(x, y)}{\sqrt{P_X(x)P_Y(y)}},
\end{align*}
and the maximal correlation functions $f^*, g^*$ are chosen such that the vectors\footnote{In our development, we may simply take the alphabets $\cX = \{1 , 2 , \ldots , |\cX| \}$ and $\cY = \{1 , 2 , \ldots , |\cY| \}$, which corresponds to some given alphabet orders of random variables $X$ and $Y$.}
\begin{align*}
  \left[f^*(1)\sqrt{P_{X}(1)}, \dots, f^*(|\cX|)\sqrt{P_{X}(|\cX|)}\right]^\T \text{ and } \left[g^*(1)\sqrt{P_{Y}(1)}, \dots, g^*(|\cY|)\sqrt{P_{Y}(|\cY|)}\right]^\T
\end{align*}
are the right and left singular vector of $\dtm$ associated with the second largest singular value, respectively. It can be shown that he largest singular value of $\dtm$ is $1$, with the corresponding right and left singular vectors being $  \left[\sqrt{P_{X}(1)}, \dots, \sqrt{P_{X}(|\cX|)}\right]^\T$ and $\left[\sqrt{P_{Y}(1)}, \dots, \sqrt{P_{Y}(|\cY|)}\right]^\T$, and thus it would be more convenient to subtract the top singular mode and introduce the matrix $\dtmt \in \mathbb{R}^{|\cY|\times|\cX|}$ with entries
\begin{align} \label{eq:dtmt}
  \Bt(y, x) = B(y, x) - \sqrt{P_X(x)P_Y(y)} = \frac{P_{XY}(x, y) - P_X(x)P_Y(y)}{\sqrt{P_X(x)P_Y(y)}},
\end{align} 
referred to as the \emph{canonical dependence matrix} (CDM) \cite{huang2019universal}. Then, the HGR maximal correlation and the 1-dimensional maximal correlation functions can be represented by the largest singular value and the corresponding singular vectors of the $\dtmt$. It can be shown that the generalized HGR maximal correlation \eqref{eq:HGR} has retained this SVD structure \cite{huang2019universal}. Specifically, the maximal correlation $\rho_k(X; Y)$ is the sum of the largest $k$ singular values (i.e., the Ky Fan $k$-norm) of $\dtmt$, and the maximal correlation functions $f^*, g^*$ correspond to its top $k$ right and left singular vectors, respectively.

\begin{algorithm}[!t]
  \caption{\reply{The ACE Algorithm for computing $k$-dimensional maximal correlation functions}}
\reply{
  \begin{algorithmic}[1]
    \State \textbf{Input}: $n$ i.i.d. samples $(x_1, y_1), \dots, (x_n, y_n)$.
    \State \textbf{Initialize}: random choose zero-mean functions $\fh\colon \cX \mapsto \mathbb{R}^k$ and $\gh \colon \cY \mapsto \mathbb{R}^k$.
    \Repeat
    \State $\ds\fh(x) \gets    \E{\Lambdab_{\gh}^{-1}\gh(Y) \,\middle|\, X=x}\ \text{for all } x \in \cX$\label{alg:ace:line:begin}
    \vspace{.1em}
    \State $\ds\gh(y) \gets   \E{ \Lambdab_{\fh}^{-1}\fh(X) \,\middle|\, Y=y}\ \text{for all } y \in \cY$
    \vspace{.1em}
    \Until $\ds 2\E{\fh^\T(X)\gh(Y)} - \trop{\Lambdab_{\fh}\Lambdab_{\gh}}$
    stops increasing \label{alg:ace:line:end}
    \vspace{.1em}
    \State Whiten:
    \State \qquad$\fh(x) \gets \Lambdab_{\fh}^{-1/2}\fh(x)\ \text{ for all } x \in \cX$ \label{alg:ace:line:norm:f}
    \State $\gh(y) \gets  \E{ \fh(X) \,\middle|\, Y=y}\ \text{ for all } y \in \cY$ \label{alg:ace:line:g}
    \vspace{.1em}
    \State Whiten:
    \State \qquad$\gh(y) \gets \Lambdab_{\gh}^{-1/2}\gh(y)\ \text{ for all } y \in \cX$ \label{alg:ace:line:norm:g}
    \State \textbf{Output:} $\fh$, $\gh$ and the maximal correlation $\E{\fh^\T(X)\gh(Y)}$.
  \end{algorithmic}
  \label{alg:ace}
}
\end{algorithm}

\reply{In practical learning applications, since $X$ and $Y$ can have large alphabets or even be continuous, the matrix $\dtmt$ cannot be easily estimated from data samples for computing the maximal correlation functions. 
  Instead, we can use the  ACE algorithm \cite{ACE} to iteratively compute the maximal correlation functions, which is mathematically equivalent to the power method on $\dtmt$.} In particular, given a sequence of $n$ training samples $(x_1, y_1), \ldots, (x_n, y_n)$, i.i.d. generated from the joint distribution $P_{XY}$, the ACE algorithm estimates the $k$-dimensional maximal correlation functions of~\eqref{eq:HGR} can be summarized as \algoref{alg:ace}\footnote{There are also other variants of ACE algorithm for computing $k$-dimensional maximal correlation functions using different numerical techniques for normalization, see, e.g., \cite[Algorithm 3]{makur2015efficient} and \cite[Algorithm 1]{huang2019universal}.}, where the expectations are taken over the empirical distributions $\Ph_{XY}$, or conditional distributions $\Ph_{Y|X}$ and $\Ph_{X|Y}$ from the training samples. In addition, $\Lambdab_{\fh}$ and $\Lambdab_{\gh}$ are the empirical covariance matrices defined as
\begin{align*}
\Lambdab_{\fh} = \frac{1}{n}\sum_{i = 1}^{n} \fh(x_i) \fh^{\T}(x_i), \quad \Lambdab_{\gh} = \frac{1}{n}\sum_{i = 1}^{n} \gh(y_i) \gh^{\T}(y_i).
\end{align*}

Now, let us define the $| \cX |$ and $| \cY |$ dimensional vectors $\hat{\phib}_i$ and  $\hat{\psib}_i$, respectively, for $i = 1, \ldots , k$ as
\begin{align} \label{eq:correspondence}
\phih_i (x) = \sqrt{\Ph_X(x)} \fh_i(x), \quad \psih_i (y) = \sqrt{\Ph_Y(y)} \gh_i (y),
\end{align}
where $\Ph_X$ and $\Ph_Y$ are the empirical marginal distributions, and $\fh_i$ and $\gh_i$ are the $i$-th dimension of $\fh$ and $\gh$, i.e.,
\begin{align*}
\fh(x) = \left[ \fh_1(x), \cdots, \fh_k(x) \right]^{\T}, \quad \gh(y) = \left[ \gh_1(y), \cdots, \gh_k(y) \right]^{\T},
\end{align*}
for all $x,y$. In addition, we define $\dtmh \in  \mathbb{R}^{|\cY| \times |\cX|}$ as
\begin{align} \label{eq:dtmh}
\dtmeh (y,x) = \frac{\Ph_{XY}(x,y)}{\sqrt{\Ph_{X}(x)\Ph_Y(y)}} - \sqrt{\Ph_{X}(x)\Ph_Y(y)},
\end{align}
which is the CDM matrix for training samples. Then, the key steps of \algoref{alg:ace} that alternatively compute conditional expectations (cf. line \ref{alg:ace:line:begin}--\ref{alg:ace:line:end}) can be equivalently expressed as alternating iterations \cite[Eq. (10) and (12)]{HXZW2019}
\begin{equation}
\begin{aligned} 
\text{i)}\quad \Phih_k &\leftarrow \dtmh^{\T} \Psih_k \left( \Psih_k^{\T} \Psih_k \right)^{-1}  \\
\text{ii)}\quad \Psih_k &\leftarrow \dtmh \Phih_k \left( \Phih_k^{\T} \Phih_k \right)^{-1}
\end{aligned}
\label{eq:ace_matrix_a}
\end{equation}
until $\bbfrob{\dtmh}^2- \bbfrob{\dtmh - \Psih_k\Phih_k^\T}^2 $ stops increasing, where 
\begin{align} \label{eq:Phih:Psih}
\Phih_k = \left[ \hat{\phib}_1, \dots, \hat{\phib}_k \right], \quad  \Psih_k = \left[ \hat{\psib}_1, \dots, \hat{\psib}_k \right].
\end{align}
Note that \eqref{eq:ace_matrix_a} in fact coincides with the alternating least squares algorithm \cite{koren2009matrix} for %
solving the low-rank approximation problem
\begin{align}
  \min_{\Psih_k, \Phih_k}\,\bbfrob{\dtmh - \Psih_k\Phih_k^\T}^2.
  \label{eq:low-rank}
\end{align}
Therefore, from the Eckart--Young--Mirsky theorem \cite{eckart1936approximation}, \algoref{alg:ace} essentially computes the singular vectors of $\dtmh$ with respect to the top $k$ singular values, with more detailed illustrations provided in \appref{sec:app:ace} for completeness.
In the following, we simply use $\hat{\phib}_i$ to denote the $i$-th right singular vector of $\dtmh$, and $\Phih_k$ is the $|\cX| \times k$ matrix formed by the top $k$ right singular vectors.

As discussed above, the maximal correlation functions of~\eqref{eq:HGR} correspond to the top $k$ singular vectors [cf.~\eqref{eq:correspondence}] of the matrix $\dtmt$ as defined in \eqref{eq:dtmt}. Therefore, if the empirical distribution $\Ph_{XY}$ coincides with the true distribution $P_{XY}$, then the matrix $\dtmh$ satisfies $\dtmh = \dtmt$, and the ACE algorithm outputs the maximal correlation functions of~\eqref{eq:HGR} precisely. However, since the training samples are i.i.d. sampled from $P_{XY}$, the empirical distribution might deviate from the true distribution, which leads to deviations between the true singular vectors and the ones computed from the ACE algorithm. In order to quantify this deviation, we define the $|\cX| \times k$ matrix $\Phib_k$ as %
\begin{align*} %
\Phib_k \triangleq \left[ \phib_1, \dots, \phib_k \right] , %
\end{align*}
where $\phib_i$ is the $i$-th right singular vector of $\dtmt$.
Then, we apply the difference of the Frobenius norm-squares $\bbfrob{\dtmt \Phib_k}^2 - \bbfrob{\dtmt \Phih_k}^2$ as the measurement to quantify how $\Phih_k$ deviates from $\Phib_k$. \reply{It is worth emphasizing that this measurement represents the test error of learned singular vectors and can be more effective than directly computing the difference between $\Phib_k$ and $\Phih_k$. For example, consider the simplest case $k = 1$, and let $\sigma_i$ denote the $i$-th singular value of $\dtmt$. Without loss of generality, we assume that the estimated $\phibh_1$ satisfies $\ip{\phibh_1}{\phib_1} \geq 0$, since if not we can use $-\phibh_1$ as the estimated singular vector. Then, it is shown in \appref{sec:app:metric} that when $\sigma_1 > \sigma_2$, we have
\begin{align}
  \norm{\phib_1 - \phibh_1}^2 \leq \frac{2}{\sigma_1^2 - \sigma_2^2}\left( \norm{\dtmt \phib_1}^2 - \norm{\dtmt \phibh_1}^2\right),
  \label{eq:metric}
\end{align}
where the Frobenius norm becomes $\ell_2$-norm since $k = 1$. Therefore, a small error in $\norm{\dtmt \phib_1}^2 - \norm{\dtmt \phibh_1}^2$ implies a small error measured in $\norm{\phib_1 - \phibh_1}$. However, when $\sigma_1 = \sigma_2$, both $\phib_1$ and $\phib_2$ (and thus any linear combination of $\phib_1$ and $\phib_2$) correspond to the maximal correlation function. While the measurement $\norm{\dtmt \phib_1}^2 - \norm{\dtmt \phibh_1}^2$ is able to indicate zero learning error for both optimal choices $\phibh_1 = \phib_1$ and $\phibh_1 = \phib_2$, the measurement $\norm{\phib_1 - \phibh_1}$ would indicate a large error for the optimal estimation $\phibh_1 = \phib_2$.}

In addition, this measurement can be interpreted as the performance of learning the matrix $\dtmt$ using $\Phih_k$ by low-rank approximation, since
\begin{align*}
  \bbfrob{\dtmt \Phih_k}^2 = \bbfrob{\dtmt}^2 - \min_{\Psih_k}\,\bbfrob{\dtmt - \Psih_k\Phih_k^\T}^2.
\end{align*}

Moreover, it is shown in~\cite{HXZW2019} that $\bbfrob{\dtmt \Phih_k}^2$ is related to the softmax regression loss, and is called H-score therein, which provides the operational meaning for our selected performance measurement.

Note that $\bbfrob{\dtmt \Phib_k}^2 - \bbfrob{\dtmt \Phih_k}^2 \geq 0$, for all $|\cX| \times k$ matrices $\Phih_k$ satisfying $\Phih_k^{\T} \Phih_k = \Ib_k$, where $\Ib_k$ is the $k \times k$ identity matrix. \reply{In this paper, our goal is to characterize the sample complexity in learning maximal correlation functions for large datasets, i.e., the minimum number of samples required such that with high probability, the learning error $\bbfrob{\dtmt \Phib_k}^2 - \bbfrob{\dtmt \Phih_k}^2$ is small \cite{shalev2014understanding}. %
} To this end, we first consider the related error exponent $\Er_k$ defined as\footnote{Throughout, all logarithms are base $e$, i.e., natural.}

\begin{align} \label{eq:asym_sample_complexity}
\Er_k \defeq -\lim_{\eps \rightarrow 0^{+}} \frac{1}{\eps} \lim_{n \rightarrow \infty} \frac{1}{n} \log \mathbb{P}_n \left\{ \bbfrob{ \dtmt \Phib_k }^2 - \bbfrob{ \dtmt \Phih_k }^2 > \eps  \right\},
\end{align} 
where the probability is measured over the i.i.d. sampling process from $P_{XY}$. In particular, the first limit in~\eqref{eq:asym_sample_complexity} indicates the asymptotic regime of $n$ we are interested in, since in large datasets the number of i.i.d. samples $n$ can be sufficiently large. Then, the second limit of $\eps$ is naturally from that in this asymptotic regime, the empirical distribution converges to the true distribution with high probability, and thus the learning error
is small with high probability; see, e.g.,  \cite[Proposition 4.6]{makur2019information} or \cite[Proposition 47]{huang2019universal} for a more rigorous characterization. 

In the remaining sections, we will show that these two limits facilitate the derivation of the analytical solution for the error exponent~\eqref{eq:asym_sample_complexity}, and further use this exponent to provide the upper bounds for the sample complexity on large datasets, where the test error $\eps$ is small. 
In order to establish the analytical solution of \eqref{eq:asym_sample_complexity}, in the next section, we develop a mathematical framework for computing the learning error $\bbfrob{\dtmt \Phib_k}^2 - \bbfrob{\dtmt \Phih_k}^2$
for the empirical distributions $\Ph_{XY}$ close to the true distribution $P_{XY}$.

\section{The Matrix Perturbation Analyses} \label{sec:mpa}

\newcommand{\pv}{\tau} %

Suppose that $\Ab \in \mathbb{R}^{d \times d}$ is a symmetric matrix with the eigenvectors $\vb_1, \dots, \vb_d$ and the eigenvalues $\lambda_1 \geq \dots \geq \lambda_d$. In addition, we denote 
\begin{align} \label{eq:vk}
\Vb_k \triangleq \left[ \vb_1, \dots, \vb_k \right] \in \mathbb{R}^{d \times k}
\end{align}
as the matrix formed by the top $k$ eigenvectors of $\Ab$. %
Then, it follows that
\begin{align} \label{eq:tr1}
\trop{\Vb_k^{ \T} \Ab \Vb_k} = \sum_{i=1}^k \lambda_i,
\end{align}
where $\tr\{ \cdot \}$ denotes the trace of the matrix.
Now, suppose that $\Ab(\pv)$ is a family of symmetric matrices parametrized by $\pv$ with $\Ab(0) = \Ab$, and is an analytic function of $\pv$ with the Taylor series expansion
\begin{align*}
\Ab(\pv) = \Ab + \pv \Ab' + o(\pv),
\end{align*}
where $\Ab' = \Ab'(0)$ is %
the first-order derivative of $\Ab(\pv)$ with respect to $\pv$ at $\pv = 0$. In addition, let $\Vb_k(\pv) \in \mathbb{R}^{d \times k}$ be the matrix formed by the top $k$ eigenvectors of $\Ab(\pv)$ defined similarly to~\eqref{eq:vk}. Then, when $\lambda_k > \lambda_{k+1}$, the following lemma characterizes the second-order Taylor series expansion of $\tr \left\{ \Vb_k^{\T}(\pv) \Ab \Vb_k(\pv) \right\}$ with respect to $\pv$.

\begin{lemma} \label{lem:eig:k}
Suppose that $\lambda_k > \lambda_{k+1}$, then
\begin{align*}
&\tr \left\{ \Vb_k^{\T}(\pv) \Ab \Vb_k(\pv) \right\} %
  = \tr \left\{ \Vb_k^{ \T} \Ab \Vb_k \right\} - \pv^2\sum_{i=1}^k\sum_{j = k + 1}^d\frac{\bigl(\vb_i^{\T}\Ab'\vb_j\bigr)^2}{\lambda_i - \lambda_j} + o(\pv^2),
\end{align*}
where $\tr\{ \cdot \}$ denotes the trace of the matrix.
\end{lemma}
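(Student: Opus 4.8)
The plan is to perform standard analytic (Rayleigh–Schr\"odinger) perturbation theory on the eigenvectors and then collect terms up to second order. First I would write $\Vb_k(\pv) = \Vb_k + \pv \Vb_k' + \tfrac{1}{2}\pv^2 \Vb_k'' + o(\pv^2)$, where the derivatives are taken at $\pv = 0$; since $\Ab(\pv)$ is analytic and the gap condition $\lambda_k > \lambda_{k+1}$ holds, the top-$k$ eigenspace varies analytically, so such an expansion exists. Because the quantity $\tr\{\Vb_k^\T(\pv)\Ab\Vb_k(\pv)\}$ depends only on the \emph{subspace} spanned by the columns of $\Vb_k(\pv)$ (it is invariant under $\Vb_k(\pv)\mapsto \Vb_k(\pv)\Ob$ for orthogonal $\Ob$), I am free to fix a convenient gauge: choose the perturbed eigenvectors so that $\vb_i^\T \vb_i'(0) = 0$ for $i\le k$ and, more strongly, so that the ``internal'' rotations within the top-$k$ block vanish, i.e. $\vb_i^\T \vb_j' = 0$ for all $i,j \le k$. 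Under this gauge the first-order correction is purely ``external'': $\vb_i' = \sum_{j=k+1}^d c_{ij}\vb_j$ with $c_{ij} = \dfrac{\vb_i^\T \Ab' \vb_j}{\lambda_i - \lambda_j}$, which is the classical first-order eigenvector formula and uses the spectral gap to guarantee the denominators are the relevant ones (eigenvalues $\lambda_i$, $i\le k$, need not be distinct from each other, only separated from $\lambda_{k+1},\dots,\lambda_d$).

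Next I would substitute the expansion into $F(\pv) \defeq \tr\{\Vb_k^\T(\pv)\Ab\Vb_k(\pv)\}$ and expand. The zeroth-order term is $\tr\{\Vb_k^\T\Ab\Vb_k\} = \sum_{i=1}^k \lambda_i$ by~\eqref{eq:tr1}. The first-order term is $2\pv\tr\{(\Vb_k')^\T\Ab\Vb_k\} = 2\pv\sum_{i=1}^k \lambda_i\, \vb_i^\T\vb_i' = 0$, both because $\Ab\vb_i = \lambda_i\vb_i$ and because $\vb_i^\T\vb_i' = 0$ in our gauge; this confirms there is no linear term, consistent with the statement. The $\pv^2$-coefficient has two pieces: the ``cross'' piece $\tr\{(\Vb_k'')^\T\Ab\Vb_k\} = \sum_i \lambda_i\,\vb_i^\T\vb_i''$, and the ``square'' piece $\tr\{(\Vb_k')^\T\Ab\Vb_k'\} = \sum_{i=1}^k\sum_{j=k+1}^d c_{ij}^2\,\lambda_j$. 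To handle the second-order derivative $\vb_i''$ I would differentiate the normalization constraint $\vb_i^\T(\pv)\vb_i(\pv) = 1$ twice, giving $\vb_i^\T\vb_i'' = -\|\vb_i'\|^2 = -\sum_{j=k+1}^d c_{ij}^2$, so that the cross piece equals $-\sum_{i,j}\lambda_i c_{ij}^2$. Adding the two pieces yields $\sum_{i=1}^k\sum_{j=k+1}^d c_{ij}^2(\lambda_j - \lambda_i) = -\sum_{i=1}^k\sum_{j=k+1}^d \dfrac{(\vb_i^\T\Ab'\vb_j)^2}{\lambda_i-\lambda_j}$, which is exactly the claimed coefficient.

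The main obstacle I anticipate is justifying the analyticity/differentiability of $\Vb_k(\pv)$ and, relatedly, handling possible \emph{degeneracies among} $\lambda_1,\dots,\lambda_k$ (or among $\lambda_{k+1},\dots,\lambda_d$): when eigenvalues coincide, individual eigenvectors are not analytic, only the projection $\bPi_k(\pv)$ onto the top-$k$ eigenspace is. The clean way around this is to avoid individual eigenvectors altogether and instead write $F(\pv) = \tr\{\bPi_k(\pv)\Ab\}$ and use the contour-integral (Riesz projection) representation $\bPi_k(\pv) = \frac{1}{2\pi i}\oint_\Gamma (z\Ib - \Ab(\pv))^{-1}\,\diff z$ for a fixed contour $\Gamma$ separating $\{\lambda_1,\dots,\lambda_k\}$ from $\{\lambda_{k+1},\dots,\lambda_d\}$ — this is legitimate precisely because $\lambda_k > \lambda_{k+1}$, and it makes the analyticity in $\pv$ manifest via the resolvent expansion $(z\Ib-\Ab(\pv))^{-1} = (z\Ib-\Ab)^{-1} + \pv(z\Ib-\Ab)^{-1}\Ab'(z\Ib-\Ab)^{-1} + o(\pv)$. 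Expanding $\tr\{\bPi_k(\pv)\Ab\}$ this way and evaluating the contour integrals by residues (using $\Ab = \sum_\ell \lambda_\ell \vb_\ell\vb_\ell^\T$ and grouping by whether an index is $\le k$ or $> k$) reproduces the same $\pv^2$-coefficient while sidestepping all gauge and degeneracy issues; the terms with both indices $\le k$ or both $> k$ cancel, leaving only the stated double sum over $i \le k < j$. I would present the eigenvector-gauge computation as the main line and remark that the resolvent argument covers the degenerate case.
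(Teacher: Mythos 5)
Your proposal is correct and computes the same second-order coefficient, but it travels a technically different road from the paper. The paper works entirely at the level of matrix blocks: it partitions $\{1,\dots,k\}$ by distinct eigenvalues, expands each block matrix $\Vb_{i_s}(\pv)$ around a \emph{perturbation-adapted} basis $\Vbh_{i_s}$ of that eigenspace, derives $\Vb_{i_s}'$ via the resolvent-style Moore--Penrose identity $(\lambda_{i_s}\Ib - \Ab)^{\dagger}$, and then explicitly invokes the invariance $\bbnorm{\Vbh_{i_s}^\T\Ab'\vb_j}^2 = \bbnorm{\Vb_{i_s}^\T\Ab'\vb_j}^2$ (equation~\eqref{eq:tr:eps:3}) to replace the adapted basis by the given one. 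You instead run the classical Rayleigh--Schr\"odinger calculation vector by vector, fix a ``no internal rotation'' gauge so that $\vb_i' = \sum_{j>k}c_{ij}\vb_j$, and combine the cross and square second-order pieces. This is cleaner to read, and you correctly flag that the only real obstruction --- non-analyticity of individual eigenvectors under degeneracy --- is handled either by working with the Riesz projection or, as you might add explicitly, by the fact that the parallel-transport gauge always yields an analytic $\Vb_k(\pv)$ under the sole hypothesis $\lambda_k > \lambda_{k+1}$. One small caveat: the gauge $\vb_i^\T\vb_j' = 0$ for all $i,j\le k$ is generally \emph{not} compatible with the columns of $\Vb_k(\pv)$ being eigenvectors of $\Ab(\pv)$ (eigenvectors satisfy $\vb_j^\T\vb_i' = (\vb_j^\T\Ab'\vb_i)/(\lambda_i-\lambda_j)$ for $i\neq j\le k$, which is generically nonzero); so ``choose the perturbed eigenvectors'' should really read ``choose a smooth orthonormal frame for the top-$k$ eigenspace.'' Alternatively, one can forgo the gauge entirely: using the genuine eigenvector expansion, the internal contributions $\sum_{i\neq m\le k}c_{im}^2(\lambda_m-\lambda_i)$ cancel in pairs by the antisymmetry of $\lambda_m - \lambda_i$, so the same double sum over $i\le k < j$ survives either way. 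Both your route and the paper's buy the same answer; the paper's is more self-contained about degeneracies at the cost of heavier bookkeeping, while yours is shorter but leans on a gauge-invariance observation that deserves a sentence of justification.
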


\begin{proof}
See Appendix~\ref{sec:appa}.
\end{proof}

Moreover, for the case $\lambda_k = \lambda_{k+1}$, we apply the notation $[d] \defeq \{1, \dots, d\}$, and define the indices set $\cI_k \defeq \{i \in [d]\colon \lambda_i = \lambda_k \}$, and the complement set $\cIb_k \defeq [d] \setminus \cI_k = \{i \in [d]\colon \lambda_i \neq \lambda_k \}$. Furthermore, we define the matrix $\Vb_{\cI_k} \defeq [\vb_{i}, i\in \cI_k] \in \mathbb{R}^{d \times |\cI_k|} $ as the submatrix of $\Vb$ composed of the columns of $\Vb$ with indices in $\cI_k$. Then, the following lemma establishes the second-order Taylor series expansion of $\tr \left\{ \Vb_k^{\T}(\pv) \Ab \Vb_k(\pv) \right\}$ for the case $\lambda_k = \lambda_{k+1}$.

\begin{lemma} \label{lem:2}
  Suppose that $\lambda_k = \lambda_{k + 1}$, then
  \begin{align*}
    \tr \left\{ \Vb_k^{\T}(\pv) \Ab \Vb_k(\pv) \right\} %
    = \tr \left\{ \Vb_k^{ \T} \Ab \Vb_k \right\} - \pv^2\sum_{i=1}^{l - 1}\sum_{j = l}^d\frac{\left(\vb_i^{\T}\Ab'\vb_j\right)^2}{\lambda_i - \lambda_j} %
    - \pv^2\sum_{i = l}^k\sum_{j \in \cIb_k}\frac{\left(\vbh_{i}^{\T}\Ab'\vb_j\right)^2}{\lambda_k - \lambda_j} + o(\pv^2),
\end{align*}
where $l$ is the minimal element of $\cI_k$, and
\begin{align*}
  \vbh_{i} \defeq \Vb_{\cI_k}\ub_{i - l + 1}, \quad l \leq i \leq k,
\end{align*}
where $\ub_{1}, \dots, \ub_{k - l + 1} \in \mathbb{R}^{ |\cI_k|}$ are the top $k - l + 1$ eigenvectors of $\Vb_{\cI_k}^{\T}\Ab'\Vb_{\cI_k}$.
\end{lemma}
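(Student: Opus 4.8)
The plan is to reduce Lemma~\ref{lem:2} to the non-degenerate case handled by Lemma~\ref{lem:eig:k} by a two-step argument: first separate the ``block across the gap $l-1 \mid l$'' from the ``block inside the degenerate eigenspace,'' and then within the degenerate eigenspace perform a second-order perturbation analysis restricted to that subspace. Concretely, let $\mu \defeq \lambda_k = \lambda_{k+1}$, and write $\cI_k = \{l, l+1, \dots, r\}$ for the full block of indices with eigenvalue $\mu$, so $l \le k < k+1 \le r$. The quantity $\tr\{\Vb_k^\T(\pv)\Ab\Vb_k(\pv)\}$ is invariant under any reordering/rotation of eigenvectors sharing the same eigenvalue of $\Ab(\pv)$, so the only subtlety is which $k - l + 1$ of the $|\cI_k| = r - l + 1$ perturbed eigenvectors emerging from the degenerate block get selected into $\Vb_k(\pv)$ for $\pv$ small; standard analytic perturbation theory (Rellich/Kato) says these are governed by the eigenvectors of the compression $\Vb_{\cI_k}^\T \Ab' \Vb_{\cI_k}$, which is exactly the matrix whose top $k-l+1$ eigenvectors define the $\vbh_i$ in the statement.

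First I would invoke the analytic perturbation theory for symmetric matrices to obtain analytic branches of eigenvalues $\lambda_i(\pv)$ and an analytic orthonormal frame of eigenvectors $\vb_i(\pv)$ with $\vb_i(0)$ chosen appropriately: outside $\cI_k$ we take $\vb_i(0) = \vb_i$, while inside $\cI_k$ we take $\vb_i(0) = \vbh_i$, the ``correct'' zeroth-order eigenvectors dictated by diagonalizing $\Vb_{\cI_k}^\T \Ab' \Vb_{\cI_k}$. With this choice, the top $k$ eigenvectors of $\Ab(\pv)$ for small $\pv > 0$ are $\vb_1(\pv), \dots, \vb_k(\pv)$ (using $\lambda_l(\pv) > \dots > \lambda_k(\pv) > \lambda_{k+1}(\pv)$ once the degeneracy is split by $\Ab'$ at first order; if $\Ab'$ does not split the $k \mid k+1$ boundary the trace expression is still well-defined and one argues by the invariance noted above). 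Then $\tr\{\Vb_k^\T(\pv)\Ab\Vb_k(\pv)\} = \sum_{i=1}^k \vb_i^\T(\pv)\Ab\,\vb_i(\pv)$, and I expand each term to second order in $\pv$ using the first-order eigenvector derivative formula
\begin{align*}
\vb_i'(0) = \sum_{j \,:\, \lambda_j(0) \neq \lambda_i(0)} \frac{\vb_j(0)^\T \Ab' \vb_i(0)}{\lambda_i(0) - \lambda_j(0)}\,\vb_j(0),
\end{align*}
which is valid because, within $\cI_k$, the choice of frame diagonalizes $\Ab'$ restricted to that block and hence kills the otherwise-singular $0/0$ contributions. Plugging in and using $\Ab\,\vb_j(0) = \lambda_j(0)\,\vb_j(0)$ together with orthonormality, the first-order term in $\pv$ vanishes (as in Lemma~\ref{lem:eig:k}) and the second-order term for index $i$ becomes $-\sum_{j:\lambda_j(0)\neq\lambda_i(0)} (\lambda_i(0)-\lambda_j(0))\,\bigl(\vb_j(0)^\T\Ab'\vb_i(0)\bigr)^2/(\lambda_i(0)-\lambda_j(0))^2 = -\sum_j \bigl(\vb_j(0)^\T\Ab'\vb_i(0)\bigr)^2/(\lambda_i(0)-\lambda_j(0))$.

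Summing over $i = 1, \dots, k$ and sorting the pairs $(i,j)$ by whether both lie below the block, or one lies in the block, yields precisely the two double sums in the statement: for $i \le l-1$ the relevant $j$ range is $j \ge l$ with denominator $\lambda_i - \lambda_j$ (and here $\vb_i(0) = \vb_i$, while $\vb_j(0)$ for $j$ in the block can be replaced by any orthonormal basis of the block since the sum $\sum_{j\in\cI_k}(\vb_j(0)^\T\Ab'\vb_i)^2$ is basis-independent — so we may write it with the original $\vb_j$), and for $l \le i \le k$ the relevant $j$ are those in $\cIb_k$ with denominator $\lambda_k - \lambda_j = \mu - \lambda_j$ and numerator $\vbh_i{}^\T\Ab'\vb_j$ — contributions from $j$ inside $\cI_k$ vanish because the block frame diagonalizes the compression $\Vb_{\cI_k}^\T\Ab'\Vb_{\cI_k}$, so $\vbh_i{}^\T\Ab'\vbh_j = 0$ for $i \ne j$ both in the block. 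That produces exactly the claimed second-order expansion. The main obstacle I anticipate is the bookkeeping around the degenerate block: justifying that one may (i) choose the analytic eigenvector frame so that $\Ab'$ is block-diagonalized within $\cI_k$, (ii) freely swap between that frame and the original $\vb_j$ in the ``across-the-gap'' sum by appealing to rotation-invariance of $\sum_{j\in\cI_k}(\,\cdot\,)^2$, and (iii) handle the sub-case where $\Ab'$ fails to lift the degeneracy at the $k\mid k+1$ interface (so that the selection of which perturbed eigenvectors enter $\Vb_k(\pv)$ is only resolved at higher order) — there the cleanest route is to note the second-order trace expression depends only on the chosen zeroth-order frame and is continuous, so it agrees with the formula obtained in the generic split case by a limiting/perturbing-$\Ab'$ argument. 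Everything else is the same Taylor-expansion computation as in Lemma~\ref{lem:eig:k}, so I would present this proof by reducing to that lemma's appendix calculation wherever possible.
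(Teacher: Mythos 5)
Your proposal takes essentially the same route as the paper's proof. The paper's Appendix~B proof also proceeds by writing the trace as a sum over the distinct eigenvalue blocks, invoking analyticity of the eigenvector frame (Kato), and showing that the zeroth-order frame inside the degenerate block $\cI_k$ is given by $\Vb_{\cI_k}\Ub$ where $\ub_1, \dots, \ub_{k-l+1}$ are the top eigenvectors of the compression $\Vb_{\cI_k}^{\T}\Ab'\Vb_{\cI_k}$ (obtained there as the maximizer of $\tr\{\Ub^{\T}\Vb_{\cI_k}^{\T}\Ab'\Vb_{\cI_k}\Ub\}$, which is equivalent to your Rayleigh--Ritz/diagonalization step); the across-the-gap part for $i \le l-1$ is handled by invoking Lemma~\ref{lem:eig:k} directly, exactly as you do. The only cosmetic difference is presentation: the paper phrases the $\pv^2$ coefficient via $\Vb'^{\T}(\lambda \Ib_d - \Ab)\Vb'$ and the Moore--Penrose pseudoinverse $(\lambda\Ib_d - \Ab)^{\dagger}$, whereas you use the explicit scalar first-order derivative formula for $\vb_i'(0)$; the rotation-invariance observation you use to freely swap frames in the cross-gap sum plays the same role as the paper's remark (in the proof of Lemma~\ref{lem:eig:k}) that $\Vbh_{i_1}$ and $\Vb_{i_1}$ are interchangeable since both span the same eigenspace.
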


\begin{proof}
See Appendix~\ref{sec:appb}.
\end{proof}

Note that since the Frobenius norm $\bbfrob{\dtmt \Phih_k}^2 = \tr \left\{ \Phih_k^{\T} \dtmt^{\T} \dtmt \Phih_k \right\}$, the results we developed in this section essentially characterize the difference between $\left\| \dtmt \Phih_k \right\|^2_{\mathrm{F}}$ and $\left\| \dtmt \Phib_k \right\|^2_{\mathrm{F}}$ with respect to the perturbations on $\dtmt$ due to the difference between $P_{XY}$ and $\Ph_{XY}$. These results will be useful when we derive the error exponent~\eqref{eq:asym_sample_complexity} in the rest of this paper.

\section{The Supervised Learning} \label{sec:sup}
Given $n$ training samples $(x_1, y_1), \ldots, (x_n, y_n)$, i.i.d. generated from the joint distribution $P_{XY}$, in this section we develop the error exponent~\eqref{eq:asym_sample_complexity} and an upper bound for sample complexity for large datasets, in both cases $\sigma_k > \sigma_{k+1}$ and $\sigma_k = \sigma_{k+1}$, where $\sigma_k$ and $\sigma_{k+1}$ are the $k$-th and $(k+1)$-th largest singular values of $\dtmt$. 

\subsection{The Sample Complexity for the Case $\sigma_k > \sigma_{k+1}$} \label{sec:k>k+1}
To delineate our results, we need the following definitions. First, we define the quantity $\alpha_k$ for the given $P_{XY}$, which will be useful in characterizing the error exponent \eqref{eq:asym_sample_complexity}.

\begin{definition}%
  \label{def:Gb:alpha}
  Given a joint distribution $P_{XY}$ and $k \in \mathbb{N}^+$, 
  we define the matrix $\Gb_k$ as
\begin{align}  \label{eq:gb:k}
  \Gb_k \defeq \Lb^{\T} \left(\sum_{i = 1}^k\sum_{j = k + 1}^d \frac{\thetab_{ij}\thetab_{ij}^{\T}}{\sigma_i^2 - \sigma_j^2}\right) \Lb,
\end{align}
where $d \triangleq |\cX |$, and $\sigma_i$ denotes the $i$-th singular value\footnote{We define $\sigma_i = 0$ for $i > |\cY|$, if $|\cX| > |\cY|$.} of $\dtmt$. In addition, $\Lb$ is an $(|\cX| \cdot |\cY|) \times (|\cX| \cdot |\cY|)$ matrix, whose entry at the $[(x - 1)|\cY| + y]$-th row and $[(x' - 1)|\cY| + y']$-th column is defined as
\begin{align} %
  &\sqrt{\frac{P_{XY}(x', y')}{P_X(x)P_Y(y)}} \biggl[\delta_{xx'}\delta_{yy'} - \frac{1}{2}\left(\frac{\delta_{xx'}}{P_X(x)} + \frac{\delta_{yy'}}{P_Y(y)}\right)%
      \cdot\left[P_{XY}(x,y)+P_{X}(x)P_Y(y)\right]\biggr],\label{eq:L:def}
\end{align}
where $\delta_{ij}$ is the Kronecker delta, and
\begin{align} \label{eq:theta_sup}
  \thetab_{ij} \defeq \phib_j \otimes \bigl(\dtmt\phib_i\bigr) +  \phib_i \otimes \bigl(\dtmt\phib_j\bigr),\quad 1 \leq i \leq j \leq d,
\end{align}
where ``$\otimes$'' denotes the Kronecker product, and $\phib_i$ %
represents the $i$-th right singular vector %
of $\dtmt$. Then, $\alpha_k$ is defined as the spectral norm of $\Gb_k$.
\end{definition}

Then, the error exponent $\Er_k$ as defined in \eqref{eq:asym_sample_complexity} can be established as follows, whose proof will later be provided.

  \begin{theorem} \label{thm:exponent}
    If $\sigma_k > \sigma_{k + 1}$, then the error exponent $\Er_k$ as defined in \eqref{eq:asym_sample_complexity} is
    \begin{align} 
      \Er_k = -\lim_{\eps \rightarrow 0^{+}} \frac{1}{\eps} \lim_{n \rightarrow \infty} \frac{1}{n} \log \mathbb{P}_n \left\{ \bbfrob{ \dtmt \Phib_k }^2 - \bbfrob{ \dtmt \Phih_k }^2 > \eps  \right\} = \frac{1}{2\alpha_k}.
      \label{eq:exponent}
    \end{align}
  \end{theorem}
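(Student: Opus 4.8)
The plan is to derive the error exponent \eqref{eq:exponent} by combining the matrix perturbation analysis of \lemref{lem:eig:k} with a large-deviations estimate for the empirical distribution. First I would parametrize the deviation of the empirical distribution from the truth: writing $\empgen_{XY} = P_{XY} + \pv \, \bDel$ where $\bDel$ is the (normalized) fluctuation vector in the simplex and $\pv$ plays the role of the perturbation size, one expresses the perturbed CDM $\dtmh$ as an analytic function of $\pv$ through \eqref{eq:dtmh}. The matrix $\Lb$ in \defref{def:Gb:alpha} is precisely the Jacobian of the map $P_{XY}\mapsto \dtmt$ (the bracketed expression in \eqref{eq:L:def} is the derivative of $B(y,x)-\sqrt{P_X P_Y}$ accounting for how the marginals also shift), so to first order the perturbation of $\dtmt^\T\dtmt$ is captured by $\Lb$ acting on $\bDel$. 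Applying \lemref{lem:eig:k} with $\Ab = \dtmt^\T\dtmt$ (whose eigenvectors are the $\phib_i$ and eigenvalues the $\sigma_i^2$, with the gap condition $\sigma_k>\sigma_{k+1}$ ensuring $\lambda_k>\lambda_{k+1}$) gives
\begin{align*}
  \bbfrob{\dtmt\Phib_k}^2 - \bbfrob{\dtmt\Phih_k}^2 = \pv^2 \sum_{i=1}^k\sum_{j=k+1}^d \frac{\bigl(\phib_i^\T (\dtmt^\T\dtmt)' \phib_j\bigr)^2}{\sigma_i^2-\sigma_j^2} + o(\pv^2),
\end{align*}
and expanding $(\dtmt^\T\dtmt)' = \dtmt^\T \dtmt' + (\dtmt')^\T\dtmt$ shows the numerator equals $\bigl(\langle \thetab_{ij}, \Lb\,\vec(\bDel)\rangle\bigr)^2$ with $\thetab_{ij}$ as in \eqref{eq:theta_sup}; hence the learning error is, to leading order, the quadratic form $\vec(\bDel)^\T \Gb_k \vec(\bDel)$.

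Next I would invoke Sanov's theorem (or the local central-limit / Gaussian approximation of the empirical measure) to characterize the probability that this quadratic form exceeds $\eps$. In the regime $n\to\infty$ with $\pv \sim 1/\sqrt n$, the fluctuation $\sqrt n\,\vec(\bDel)$ is asymptotically Gaussian, and the relevant large-deviation rate for $\{\vec(\bDel)^\T\Gb_k\vec(\bDel) > \eps\}$ is governed by minimizing the KL-divergence (equivalently, the quadratic Fisher-information form) over the set where the quadratic form hits level $\eps$. Because both the constraint and the rate function are quadratic, this is a finite-dimensional quadratic program: minimize $\tfrac12\|\bDel\|^2_{\text{Fisher}}$ subject to $\vec(\bDel)^\T\Gb_k\vec(\bDel)\ge\eps$. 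The minimum is achieved along the top eigenvector of $\Gb_k$ (after whitening by the Fisher metric, which $\Lb$ already incorporates), giving exponent $\eps/(2\alpha_k)$ where $\alpha_k = \spectral{\Gb_k}$ is exactly the spectral norm from \defref{def:Gb:alpha}. Dividing by $\eps$ and letting $\eps\to 0^+$ yields $\Er_k = 1/(2\alpha_k)$; the outer $\eps\to0^+$ limit is what legitimizes discarding the $o(\pv^2)$ remainder, since on the event of interest $\pv = O(\sqrt\eps)$.

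I would organize the write-up as: (i) set up the $\pv$-parametrization and show $\dtmh(\pv)$ is analytic with derivative encoded by $\Lb$; (ii) apply \lemref{lem:eig:k} to get the quadratic-form expression for the learning error; (iii) state the large-deviation principle for the empirical distribution and reduce the exponent to a quadratic optimization; (iv) solve the optimization via the spectral norm and take the two limits. The main obstacle is step (iii)–(iv) done rigorously: one must justify interchanging the $n\to\infty$ limit with the Taylor truncation uniformly, control the $o(\pv^2)$ term on the large-deviation event (not just in expectation), and verify that the constrained KL-minimization is genuinely dominated by its quadratic part as $\eps\to0$ — i.e., that higher-order terms in both the rate function and the error functional do not affect the leading exponent. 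This is the delicate interchange-of-limits argument that the ordering of limits in \eqref{eq:asym_sample_complexity} was designed to make tractable; I expect the bulk of the proof (deferred to an appendix) to be spent making this rigorous, likely via a sandwich argument with slightly perturbed quadratic forms $\Gb_k \pm \delta I$.
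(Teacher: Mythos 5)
Your proposal follows essentially the same route as the paper's proof: Sanov's theorem reduces the exponent to an I-projection onto $\cS_1(\eps)$; a second-order Taylor expansion of the KL divergence combined with the eigenvalue perturbation result of \lemref{lem:eig:k} (applied to $\Ab = \dtmt^\T\dtmt$, with $\Lb$ encoding the Jacobian of $\Ph_{XY}\mapsto\dtmh$ as in \lemref{lem:perturb:B}) converts that constrained KL minimization into the quadratic program whose optimal value is the spectral norm $\alpha_k$ of $\Gb_k$. The only cosmetic departure is your sandwich device: the paper perturbs the threshold level $t$ in the defining inequality of the sets $\cS_2^{(t)}(\eps)$ rather than perturbing the matrix $\Gb_k \pm \delta\Ib$, but both accomplish the same uniform control of the $o(\eps)$ remainders on the large-deviation event.
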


  \reply{%
    Then, the following result illustrates that, for large datasets where the learning error is typically small, an upper bound of sample complexity can be established using the error exponent $\Er_k$.}

  \reply{
    
  \begin{theorem} \label{thm:samples}
    For given $P_{XY}$, there exist an absolute positive constant $\eps_0 > 0$ that depends only on $P_{XY}$, such that for all $\eps \in (0, \eps_0)$ and $\delta \in (0, 1)$, we have
    \begin{align}
      \mathbb{P}_n \left\{ \bbfrob{ \dtmt \Phib_k }^2 - \bbfrob{ \dtmt \Phih_k }^2 > \eps \right\} < \delta
      \label{eq:pac}
    \end{align}
    for all $n > N^{(4\alpha_k)}(\eps, \delta)$,
    where we have defined%
    \begin{align}
      N^{(t)}(\eps, \delta) \defeq \frac{t|\cX||\cY|}{\eps} \log \frac{6t|\cX||\cY|}{\eps} + \frac{t}{\eps}\log\frac1\delta,
    \end{align}
    and where $\alpha_k$ is as defined in~\defref{def:Gb:alpha}.
  \end{theorem}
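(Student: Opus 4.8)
The plan is to prove \eqref{eq:pac} in three steps: (i) a matrix-perturbation estimate that, in a fixed neighbourhood of $P_{XY}$, bounds the learning error by $4\alpha_k$ times the Kullback--Leibler divergence $D(\Ph_{XY}\,\|\,P_{XY})$; (ii) a method-of-types tail bound on that divergence; and (iii) inverting the resulting estimate to solve for $n$. For Step (i) I would apply \lemref{lem:eig:k} with $\Ab = \dtmt^{\T}\dtmt\in\reals^{|\cX|\times|\cX|}$, whose eigenvalues are $\sigma_1^2\ge\cdots\ge\sigma_d^2$ and whose eigenvectors are the right singular vectors $\phib_i$ of $\dtmt$, viewing $\dtmh^{\T}\dtmh$ (along the segment from $P_{XY}$ to $\Ph_{XY}$) as the perturbed matrix: the hypothesis $\lambda_k>\lambda_{k+1}$ is exactly $\sigma_k>\sigma_{k+1}$, and for $\Ph_{XY}$ close enough to $P_{XY}$ one has $\hat\sigma_k>\hat\sigma_{k+1}$, so $\Phih_k$ is the matrix $\Vb_k(\pv)$ of top-$k$ eigenvectors of $\dtmh^{\T}\dtmh$, while $\tr\{\Phib_k^{\T}\Ab\Phib_k\}=\bbfrob{\dtmt\Phib_k}^2$ and $\tr\{\Phih_k^{\T}\Ab\Phih_k\}=\bbfrob{\dtmt\Phih_k}^2$. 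Writing $\eta(x,y)\defeq\bigl(\Ph_{XY}(x,y)-P_{XY}(x,y)\bigr)\big/\sqrt{P_{XY}(x,y)}$ and linearizing the CDM map \eqref{eq:dtmh} (accounting also for the induced perturbations of $\Ph_X,\Ph_Y$) gives $\vec(\dtmh-\dtmt)=\Lb\,\eta+o(\norm{\eta})$ with $\Lb$ the matrix of \eqref{eq:L:def}; substituting into \lemref{lem:eig:k} and using $\phib_i^{\T}\Ab'\phib_j=\thetab_{ij}^{\T}\vec(\dtmh-\dtmt)$ with $\thetab_{ij}$ as in \eqref{eq:theta_sup} identifies the second-order term as the quadratic form $\eta^{\T}\Gb_k\eta$, so that
\begin{align*}
  \bbfrob{\dtmt\Phib_k}^2-\bbfrob{\dtmt\Phih_k}^2=\eta^{\T}\Gb_k\eta+o\bigl(\norm{\eta}^2\bigr)\le\alpha_k\norm{\eta}^2+o\bigl(\norm{\eta}^2\bigr),
\end{align*}
where $\alpha_k=\spectral{\Gb_k}$ and $\Gb_k\succeq 0$. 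Since $\norm{\eta}^2=\sum_{x,y}\bigl(\Ph_{XY}(x,y)-P_{XY}(x,y)\bigr)^2\big/P_{XY}(x,y)$ equals $(2+o(1))\,D(\Ph_{XY}\,\|\,P_{XY})$ as $\Ph_{XY}\to P_{XY}$ (here $P_X,P_Y>0$ is used), there is $\eps_0>0$ depending only on $P_{XY}$ (through $\alpha_k$, the gap $\sigma_k^2-\sigma_{k+1}^2$, and $\min_{x,y}P_{XY}(x,y)$) such that $D(\Ph_{XY}\,\|\,P_{XY})\le\eps_0/(4\alpha_k)$ forces $\bbfrob{\dtmt\Phib_k}^2-\bbfrob{\dtmt\Phih_k}^2\le 4\alpha_k\,D(\Ph_{XY}\,\|\,P_{XY})$; in particular, for every $\eps\in(0,\eps_0)$ the event $\{\bbfrob{\dtmt\Phib_k}^2-\bbfrob{\dtmt\Phih_k}^2>\eps\}$ is contained in $\{D(\Ph_{XY}\,\|\,P_{XY})>\eps/(4\alpha_k)\}$.

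For Steps (ii)--(iii), the method of types gives $\mathbb{P}_n\{\Ph_{XY}=Q\}\le e^{-nD(Q\|P_{XY})}$ for each empirical type $Q$ and at most $(n+1)^{|\cX||\cY|}$ types, hence $\mathbb{P}_n\{D(\Ph_{XY}\,\|\,P_{XY})>r\}\le(n+1)^{|\cX||\cY|}e^{-nr}$ for every $r>0$; combined with Step (i), for $\eps\in(0,\eps_0)$,
\begin{align*}
  \mathbb{P}_n\Bigl\{\bbfrob{\dtmt\Phib_k}^2-\bbfrob{\dtmt\Phih_k}^2>\eps\Bigr\}\le(n+1)^{|\cX||\cY|}\exp\!\Bigl(-\frac{n\eps}{4\alpha_k}\Bigr).
\end{align*}
Requiring the right-hand side to be below $\delta$ is the inequality $\frac{n\eps}{4\alpha_k}>|\cX||\cY|\log(n+1)+\log\frac1\delta$, and a routine estimate for the threshold past which $n$ overtakes a logarithm of itself converts this into the explicit sufficient condition $n>N^{(4\alpha_k)}(\eps,\delta)$, with the polynomial type-counting factor contributing the term $\tfrac{4\alpha_k|\cX||\cY|}{\eps}\log\tfrac{24\alpha_k|\cX||\cY|}{\eps}$ and the confidence level contributing $\tfrac{4\alpha_k}{\eps}\log\tfrac1\delta$.

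The main obstacle is Step (i): \lemref{lem:eig:k} and the estimate $\norm{\eta}^2=(2+o(1))\,D(\Ph_{XY}\,\|\,P_{XY})$ are pointwise $o(\,\cdot\,)$ statements, whereas the argument needs $\bbfrob{\dtmt\Phib_k}^2-\bbfrob{\dtmt\Phih_k}^2\le 4\alpha_k\,D(\Ph_{XY}\,\|\,P_{XY})$ to hold \emph{uniformly} over a neighbourhood of $P_{XY}$ whose radius is fixed by $P_{XY}$ alone, so that a single $\eps_0$ works for all $\eps\in(0,\eps_0)$ and $\delta\in(0,1)$. Securing this requires explicit bounds on the remainders of the eigen-expansion and of $\dtmh$ as a function of $\Ph_{XY}$ in terms of $\min_{x,y}P_{XY}(x,y)$ and $\sigma_k^2-\sigma_{k+1}^2$, together with the continuity argument that $\hat\sigma_k>\hat\sigma_{k+1}$ throughout the neighbourhood. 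Much of this machinery is shared with the proof of \thmref{thm:exponent}, where the same chain yields the matching lower bound $\Er_k\ge 1/(2\alpha_k)$ on the error exponent; here only the upper tail bound is needed, and Steps (ii)--(iii) are standard.
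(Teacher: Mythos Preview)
Your proposal is correct and follows essentially the same route as the paper. The paper's proof is shorter only because it reuses the limit \eqref{eq:lim:S1} already established in proving \thmref{thm:exponent}, which yields directly $D\bigl(\cS_1(\eps)\,\big\|\,P_{XY}\bigr)>\eps/(3\alpha_k)$ for all small $\eps$---this is exactly the content of your Step~(i), and the uniformity concern you flag is precisely what \lemref{lem:perturb:B} and \lemref{lem:S1:nbhd} handle there. Your Steps~(ii)--(iii) match the paper almost line for line: the $(n+1)^{|\cX||\cY|}$ type-counting bound combined with Sanov, and the ``routine estimate'' you allude to is the paper's use of $x<e^x$ to absorb the polynomial factor, losing a factor $3/4$ in the exponent so that the $3\alpha_k$ becomes the stated $4\alpha_k$ in $N^{(4\alpha_k)}(\eps,\delta)$.
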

}

\begin{proof}
  See \appref{sec:app:col:non-asym}.
\end{proof}

\reply{
  From \thmref{thm:samples}, to guarantee that the error in learning maximal correlation functions is within some small $\eps$ with probability at least $1 - \delta$, it suffices to use $n = O \left( \frac{\alpha_k}{\eps} \log \frac{\alpha_k}{\eps\delta} \right) = O \left( \frac{1}{\eps\Er_k} \log \frac{1}{\delta\eps\Er_k} \right)$ samples.

  \begin{remark}
    For comparison, an upper bound of sample complexity $n = O(\frac{1}{\eps^2}\log \frac{1}{\delta})$ was provided in \cite[Proposition 4.6]{makur2019information} and \cite[Proposition 47]{huang2019universal}. In particular, this upper bound is obtained via analyzing the concentration properties of $\dtmh$, with the assumption that the true marginal distributions $P_{X}$ and $P_Y$ have been known in advance. While our analysis does not rely on such assumptions, the resulting upper bound of sample complexity for large datasets can be significantly tighter.
  \end{remark}
}

When we are interested in learning the entire correlation structure between $X,Y$, i.e., $k = d-1$, the following proposition establishes a simple closed form solution of the error exponent.%

\begin{proposition}
  \label{eg:1}
  If $d = |\cX| \leq |\cY|$, $\sigma_{d - 1} > 0$, and $k = d - 1$%
  , then we have $\alpha_k = \frac{\sigma_1^2}{4}$ and
  \begin{align*}
    \Er_k = \frac{2}{\sigma_1^2}.%
  \end{align*}
\end{proposition}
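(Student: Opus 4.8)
The plan is to use \thmref{thm:exponent} to reduce everything to the single computation $\alpha_{d-1}=\sigma_1^2/4$ (the value $\Er_{d-1}=1/(2\alpha_{d-1})=2/\sigma_1^2$ then follows at once). Write $\sqrt{P_X}\in\reals^{|\cX|}$ for the vector with entries $\sqrt{P_X(x)}$, and likewise $\sqrt{P_Y}\in\reals^{|\cY|}$. One checks directly that $\dtmt\sqrt{P_X}=\bzero$ and $\dtmt^\T\sqrt{P_Y}=\bzero$, and $\norm{\sqrt{P_X}}=1$. Since $\dtmt$ has $d=|\cX|$ columns, the first identity forces $\mathrm{rank}(\dtmt)\le d-1$, while $\sigma_{d-1}>0$ (together with $|\cX|\le|\cY|$) forces $\mathrm{rank}(\dtmt)= d-1$; hence $\sqrt{P_X}$ is, up to sign, the last right singular vector $\phib_d$ of $\dtmt$, and $\sigma_d=0$. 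In particular $\sigma_{d-1}>\sigma_d$, so \thmref{thm:exponent} applies with $k=d-1$.

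Now I would specialize \defref{def:Gb:alpha} to $k=d-1$: the inner sum over $j$ collapses to $j=d$, and since $\sigma_d=0$ we have $\dtmt\phib_d=\bzero$, so $\thetab_{id}=\phib_d\otimes(\dtmt\phib_i)=\sigma_i(\phib_d\otimes\psib_i)$, where $\psib_i$ is the $i$-th left singular vector, i.e. $\dtmt\phib_i=\sigma_i\psib_i$. Since $\sigma_i>0$ for $1\le i\le d-1$, this gives
\[
\frac{\thetab_{id}\thetab_{id}^\T}{\sigma_i^2-\sigma_d^2}=(\phib_d\phib_d^\T)\otimes(\psib_i\psib_i^\T),\qquad 1\le i\le d-1,
\]
and summing over $i$ yields $\Gb_{d-1}=\Lb^\T M\Lb$ with $M\defeq(\phib_d\phib_d^\T)\otimes\Pi$ and $\Pi\defeq\sum_{i=1}^{d-1}\psib_i\psib_i^\T$ the orthogonal projection onto the column space of $\dtmt$. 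As a Kronecker product of two orthogonal projections, $M$ is itself an orthogonal projection, so $\Gb_{d-1}=(M\Lb)^\T(M\Lb)$ and therefore $\alpha_{d-1}=\spectral{\Gb_{d-1}}=\spectral{M\Lb}^2$.

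The computational core is to make $M\Lb$ explicit. Using $\phib_d\phib_d^\T=\sqrt{P_X}\sqrt{P_X}^\T$, applying $M=(\sqrt{P_X}\sqrt{P_X}^\T)\otimes\Pi$ to a vector $v$ with components $v_{xy}$ produces the vector with components $\sqrt{P_X(x)}\,(\Pi u)_y$, where $u_y\defeq\sum_x\sqrt{P_X(x)}\,v_{xy}$; hence $\norm{Mv}^2=\norm{\Pi u}^2$. Taking $v=\Lb w$ and carrying out the sums over the Kronecker deltas in \eqref{eq:L:def} (using $\sum_y P_{XY}(x,y)=P_X(x)$, $\sum_x P_{XY}(x,y)=P_Y(y)$, and the definition of $B(y,x)$) one obtains
\[
u=-\tfrac12\Bigl(\dtm\,a(w)+\ip{\sqrt{P_X}}{a(w)}\sqrt{P_Y}\Bigr),\qquad a(w)_x\defeq\frac{1}{\sqrt{P_X(x)}}\sum_{y}w_{xy}\sqrt{P_{XY}(x,y)} .
\]
Writing $\dtm=\dtmt+\sqrt{P_Y}\sqrt{P_X}^\T$ gives $\dtm\,a(w)=\dtmt a(w)+\ip{\sqrt{P_X}}{a(w)}\sqrt{P_Y}$, and since $\dtmt^\T\sqrt{P_Y}=\bzero$ we have $\Pi\sqrt{P_Y}=\bzero$ while $\Pi\dtmt a(w)=\dtmt a(w)$; thus $\Pi u=-\tfrac12\dtmt a(w)$, so $\norm{M\Lb w}^2=\tfrac14\norm{\dtmt a(w)}^2$.

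Finally, Cauchy--Schwarz in the inner $y$-sum (with $\sum_y P_{XY}(x,y)=P_X(x)$) gives $a(w)_x^2\le\sum_y w_{xy}^2$, hence $\norm{a(w)}\le\norm{w}$, so $\norm{M\Lb w}=\tfrac12\norm{\dtmt a(w)}\le\tfrac12\sigma_1\norm{a(w)}\le\tfrac12\sigma_1\norm{w}$, i.e. $\alpha_{d-1}\le\sigma_1^2/4$. The bound is attained at $w_{xy}=\phib_1(x)\,P_X(x)^{-1/2}\sqrt{P_{XY}(x,y)}$: this $w$ has unit norm, satisfies $a(w)=\phib_1$, and hence $\norm{\dtmt a(w)}=\norm{\dtmt\phib_1}=\sigma_1$. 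Therefore $\alpha_{d-1}=\sigma_1^2/4$ and $\Er_{d-1}=1/(2\alpha_{d-1})=2/\sigma_1^2$. I expect the main obstacle to be the bookkeeping in the third paragraph --- evaluating the action of $\Lb$ with the correct Kronecker-product index convention and collapsing the delta-sums --- since once that identity is in place the remainder is a one-line Cauchy--Schwarz bound together with the explicit maximizer.
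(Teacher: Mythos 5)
Your proof is correct and agrees with the paper in its starting point (collapsing the $j$-sum to $j=d$, identifying $\phib_d=\sqrt{P_X}$ and $\sigma_d=0$, rewriting $\thetab_{id}=\sigma_i(\phib_d\otimes\psib_i)$), but it departs at the computational core. The paper evaluates $\Lb^\T\thetab_{id}$ entry-by-entry and shows directly that $\Lb^\T\thetab_{id}=-\tfrac{\sigma_i^2}{2}\Mb\phib_i$ where $\Mb$ is a block-diagonal isometry with $\Mb^\T\Mb=\Ib_d$; this exhibits the full eigendecomposition of $\Gb_{d-1}$ (nonzero eigenvalues $\sigma_i^2/4$ with orthonormal eigenvectors $\Mb\phib_i$), from which $\alpha_{d-1}$ is read off as the top eigenvalue. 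You instead write $\Gb_{d-1}=(M\Lb)^\T(M\Lb)$ for the orthogonal projection $M=(\phib_d\phib_d^\T)\otimes\Pi$, reduce to $\alpha_{d-1}=\spectral{M\Lb}^2$, show $\norm{M\Lb w}^2=\tfrac14\norm{\dtmt a(w)}^2$ where $a$ is the conditional-expectation contraction (your $a$ is exactly $\Mb^\T$ in the paper's notation), and finish with Cauchy--Schwarz plus an explicit extremizer. The underlying algebra is dual to the paper's (your identity $\Pi u=-\tfrac12\dtmt a(w)$ is the transpose of $\Lb^\T\thetab_{id}=-\tfrac{\sigma_i^2}{2}\Mb\phib_i$, packaged across all $i$ at once), but the route is genuinely different: you establish only the top singular value of $\Gb_{d-1}$ rather than the full spectrum, trading the paper's explicit eigendecomposition for a conceptually tidier upper bound (non-expansion of $a$) matched by an exhibited maximizer. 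Both are correct and of comparable length; the paper's version yields slightly more information (the entire spectrum), while yours isolates the operator-theoretic reason the answer is $\sigma_1^2/4$.
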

\begin{proof}
  See \appref{sec:appc}.
\end{proof}

\reply{
We then introduce the proof of \thmref{thm:exponent}, which will make use of the perturbation analyses established in \secref{sec:mpa}. To begin, we first define the following sets of empirical distributions.}%

\begin{definition} For all $\eps > 0$, the set $\cS_1(\eps)$ is defined as
  \label{def:nbhd:S1}
  \begin{align}
    \cS_1(\eps) \triangleq \left\{ \Ph_{XY}\colon  \bbfrob{\dtmt \Phib_k}^2 - \bbfrob{\dtmt \Phih_k}^2 > \eps  \right\},
    \label{eq:s1_sup}
  \end{align}
where $\Phih_k$ corresponds to the top $k$ right singular vectors of $\dtmh$, as defined in~\eqref{eq:dtmh} and~\eqref{eq:Phih:Psih}. Moreover, the set $\nbhd(\eps)$ is defined as
  \begin{align}
    \nbhd(\eps) \defeq \left\{\Ph_{XY}\colon D(\Ph_{XY}\|P_{XY}) \leq \frac{\eps}{\alpha_k}\right\}.
    \label{eq:nbhd}
  \end{align}
\end{definition}

Furthermore, to characterize the empirical distributions $\Ph_{XY} \in \nbhd(\eps)$, we denote the difference between $\Ph_{XY}$ and the true distribution $P_{XY}$ as
      \begin{align} \label{eq:imate:def}
        \imate(y,x) \defeq
        \begin{cases}
          \displaystyle \frac{\Ph_{XY} (x,y) - P_{XY} (x,y)}{ \sqrt{\eps P_{XY}(x,y)}},&\!\!\text{if}~P_{XY}(x,y) > 0,\\
          0,&\!\!\text{if}~P_{XY}(x,y) = 0,
        \end{cases}
      \end{align}
      which induces a one-to-one correspondence\footnote{Note that since $D(\Ph_{XY}\|P_{XY})$ is finite, we have $\Ph_{XY}(x, y) = 0$ for each $(x, y)$ with $P_{XY}(x, y) = 0$. Therefore, we can obtain the distribution $\Ph_{XY}$ from $\imate$, via
        \begin{align*}
          \Ph_{XY}(x, y) = P_{XY}(x, y) + \sqrt{\eps P_{XY}(x, y)}\imate(y, x), \text{~for all~} (x, y) \in \cX \times \cY.
        \end{align*}
      } $\Ph_{XY} \leftrightarrow \imate$. We also define the $|\cY| \times |\cX|$ matrices $\imat$ and $\Xib$, with entries at the $y$-th row and $x$-th column being $\imate(y, x)$ and %
  \begin{align}
    \dtmdmate(y, x) &\defeq \frac{\sqrt{P_{XY}(x,y)}}{\sqrt{P_X(x)P_Y(y)}} \imate(y, x) -  \frac{P_{XY}(x, y) + P_X(x)P_Y(y)}{2\sqrt{P_X(x)P_Y(y)}}\notag\\
                    &\qquad\cdot\left[\frac{1}{P_X(x)}  \sum_{y' \in \cY} \sqrt{P_{XY}(x, y')} \imate(y', x) +\frac{1}{P_Y(y)} \sum_{x' \in \cX}\sqrt{P_{XY}(x', y)}\imate(y, x') \right], \label{eq:theta}
  \end{align}
  respectively. Then, using \lemref{lem:eig:k}, the matrix $\dtmh$ estimated from data samples with the empirical distribution in $\nbhd(\eps)$ can be represented in a perturbation form illustrated as follows.

  \begin{lemma}
    \label{lem:perturb:B}
    For given $P_{XY}$,  there exists a constant $C > 0$, such that for all $\eps > 0$ and $\Ph_{XY} \in \nbhd(\eps)$, we have $\bbfrob{\dtmdmat} \leq C$ and
    \begin{align}
      \label{eq:perturb:B}
      \dtmh  = \dtmt + \sqrt{\eps}\, \dtmdmat + o\left(\sqrt{\eps}\right).
    \end{align}    
  \end{lemma}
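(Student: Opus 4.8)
The plan is to regard any $\Ph_{XY}\in\nbhd(\eps)$ as a perturbation of $P_{XY}$ of order $\sqrt\eps$ along the ``direction'' $\imate$ of \eqref{eq:imate:def}, to first establish a bound on $\norm{\imate}_2$ that is \emph{uniform over $\nbhd(\eps)$ and independent of $\eps$}, and then to Taylor-expand each entry of the CDM map $\Ph_{XY}\mapsto\dtmh$ of \eqref{eq:dtmh} to first order in $\sqrt\eps$. The uniform bound on $\norm{\imate}_2$ will simultaneously yield $\bbfrob{\dtmdmat}\le C$ (because each entry of $\dtmdmat$ depends linearly on $\imate$) and the uniformity of the remainder term in \eqref{eq:perturb:B}; throughout we assume $\alpha_k>0$, without which $\nbhd(\eps)$ in \eqref{eq:nbhd} is not even defined.

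\textbf{Step 1: a uniform bound on $\imate$.} Put $p_{\min}\defeq\min\{P_{XY}(x,y)\colon P_{XY}(x,y)>0\}>0$. For $\Ph_{XY}\in\nbhd(\eps)$ the finiteness of $D(\Ph_{XY}\|P_{XY})$ forces $\Ph_{XY}(x,y)=0$ whenever $P_{XY}(x,y)=0$, so, using $\sum_i a_i^2\le(\sum_i\abs{a_i})^2$ and then Pinsker's inequality,
\begin{align*}
  \eps\,\norm{\imate}_2^2
  &=\sum_{(x,y)\colon P_{XY}(x,y)>0}\frac{\bigl(\Ph_{XY}(x,y)-P_{XY}(x,y)\bigr)^2}{P_{XY}(x,y)}
   \le\frac{1}{p_{\min}}\biggl(\sum_{x,y}\babs{\Ph_{XY}(x,y)-P_{XY}(x,y)}\biggr)^{2}\\
  &\le\frac{2}{p_{\min}}\,D(\Ph_{XY}\|P_{XY})
   \le\frac{2\eps}{\alpha_k\,p_{\min}},
\end{align*}
hence $\norm{\imate}_2^2\le 2/(\alpha_k p_{\min})=:C_0^2$, a constant depending only on $P_{XY}$. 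Since, by \eqref{eq:theta}, each $\dtmdmate(y,x)$ is a linear combination of the entries of $\imate$ with coefficients built from $P_{XY},P_X,P_Y$ that are bounded in absolute value (using $P_X,P_Y>0$), I get $\bbfrob{\dtmdmat}\le\kappa\,\norm{\imate}_2\le\kappa C_0=:C$ for some $\kappa$ depending only on $P_{XY}$, which is the first assertion.

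\textbf{Step 2: the first-order expansion.} Write $\Delta(x,y)\defeq\sqrt{P_{XY}(x,y)}\,\imate(y,x)$, so $\Ph_{XY}(x,y)=P_{XY}(x,y)+\sqrt\eps\,\Delta(x,y)$ and, summing over $y$ (resp.\ $x$), $\Ph_X(x)=P_X(x)+\sqrt\eps\,\Delta_X(x)$ and $\Ph_Y(y)=P_Y(y)+\sqrt\eps\,\Delta_Y(y)$ with $\Delta_X,\Delta_Y$ the marginals of $\Delta$. By Cauchy--Schwarz, $\abs{\Delta_X(x)}\le\sqrt{P_X(x)}\,\norm{\imate}_2\le\sqrt{P_X(x)}\,C_0$ and likewise for $\Delta_Y$, so there is $\eps_1>0$ depending only on $P_{XY}$ such that for all $\eps\le\eps_1$ and all $\Ph_{XY}\in\nbhd(\eps)$ one has $\Ph_X(x)\in[\tfrac12 P_X(x),\tfrac32 P_X(x)]$ and $\Ph_Y(y)\in[\tfrac12 P_Y(y),\tfrac32 P_Y(y)]$; in particular the denominators in \eqref{eq:dtmh} are bounded away from $0$ uniformly. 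For fixed $(x,y)$, view $t\mapsto\dtmeh(y,x)=\Ph_{XY}(x,y)/\sqrt{\Ph_X(x)\Ph_Y(y)}-\sqrt{\Ph_X(x)\Ph_Y(y)}$, with all three factors affine in $t=\sqrt\eps$, as a scalar function $h_{xy}(t)$; on $[0,\sqrt{\eps_1}]$ it is $C^2$ with $\abs{h_{xy}''(t)}\le M_1$ for a constant $M_1$ depending only on $P_{XY}$ (its second derivative is a rational expression whose denominators are powers of $\Ph_X(x)\Ph_Y(y)\ge\tfrac14 P_X(x)P_Y(y)$ and whose numerators are bounded in terms of $C_0$). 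A direct differentiation shows $h_{xy}(0)=\Bt(y,x)$ [cf.\ \eqref{eq:dtmt}] and $h_{xy}'(0)=\dtmdmate(y,x)$ --- this is exactly the content of formula \eqref{eq:theta}; the entries with $P_{XY}(x,y)=0$, where $\imate(y,x)=\Ph_{XY}(x,y)=\Delta(x,y)=0$ (but $\Delta_X(x),\Delta_Y(y)$ need not vanish), require no separate treatment since there $h_{xy}(t)=-\sqrt{\Ph_X(x)\Ph_Y(y)}$, whose expansion again matches the corresponding entry of \eqref{eq:theta}. Taylor's theorem then gives $\babs{\dtmeh(y,x)-\Bt(y,x)-\sqrt\eps\,\dtmdmate(y,x)}\le\tfrac12 M_1\eps$ for every $(x,y)$, so summing squares over the $|\cX|\cdot|\cY|$ entries yields $\bbfrob{\dtmh-\dtmt-\sqrt\eps\,\dtmdmat}\le\sqrt{|\cX||\cY|}\cdot\tfrac12 M_1\,\eps=O(\eps)=o(\sqrt\eps)$, uniformly over $\nbhd(\eps)$, which is \eqref{eq:perturb:B}.

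\textbf{Main obstacle.} The one genuine difficulty is not any individual calculation but making the remainder $o(\sqrt\eps)$ hold \emph{uniformly over the entire set $\nbhd(\eps)$}, rather than merely along a single sequence $\Ph_{XY}\to P_{XY}$. This is precisely what the $\eps$-free bound $\norm{\imate}_2^2\le 2/(\alpha_k p_{\min})$ buys: it keeps $\bbfrob{\dtmdmat}$ bounded \emph{and} keeps $\Ph_X,\Ph_Y$ uniformly positive for $\eps\le\eps_1$, so that a single constant $M_1$ controls all the second derivatives simultaneously. Everything else --- carrying out the first-order differentiation and checking it equals \eqref{eq:theta} entrywise, and the bookkeeping for the $P_{XY}(x,y)=0$ entries --- is routine.
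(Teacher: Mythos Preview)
Your proposal is correct and follows essentially the same approach as the paper: both use Pinsker's inequality to obtain the $\eps$-free bound $\norm{\imate}_2^2\le 2/(\alpha_k p_{\min})$, deduce the Frobenius bound on $\dtmdmat$ from the linearity of \eqref{eq:theta} in $\imate$, and then Taylor-expand each entry of \eqref{eq:dtmh} to first order in $\sqrt\eps$. The only notable difference is that you are more explicit than the paper about the uniformity of the remainder over $\nbhd(\eps)$ (via a uniform bound on the second derivatives $h_{xy}''$), whereas the paper simply writes the expansions \eqref{eq:pxpy}--\eqref{eq:dtm:diff} with an unqualified $o(\pv)$; your added care here is justified and does not change the underlying argument.
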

  \begin{proof}
    See \appref{sec:app:dtmh}.
  \end{proof}

  Moreover, the following lemma characterizes the I-projection of $P_{XY}$ onto the set $\cS_1(\eps) \cap \nbhd(\eps)$, which will be useful for characterizing the error exponent.

  \begin{lemma}
  \label{lem:S1:nbhd}
  For $\cS_1(\eps)$ and $\nbhd(\eps)$ as defined in \defref{def:nbhd:S1}, we have\footnote{Given a distribution $P$, we adopt the notation \cite[p.~431]{csiszar2004information}
    \begin{align*}
    D(\cS\|P) \defeq \inf_{Q \in \cS}D(Q\|P),
    \end{align*}
 where %
    $\cS$ is a set of distributions.}
    \begin{align}
      \lim_{\eps \rightarrow 0^+} \eps^{-1}\,  D\left(\cS_1(\eps) \cap \nbhd(\eps)\,\middle\|\, P_{XY}\right) = \frac{1}{2\alpha_k}.
      \label{eq:S1:nbhd}
    \end{align}    
  \end{lemma}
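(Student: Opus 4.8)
The plan is to reduce the infimum of $D(\Ph_{XY}\|P_{XY})$ over $\cS_1(\eps) \cap \nbhd(\eps)$ to a quadratic optimization problem in the parametrization $\Ph_{XY} \leftrightarrow \imate$ introduced in~\eqref{eq:imate:def}, and then solve that quadratic problem explicitly in terms of $\Gb_k$ and $\alpha_k$. First I would expand the KL divergence for $\Ph_{XY}$ close to $P_{XY}$: writing $\Ph_{XY}(x,y) = P_{XY}(x,y) + \sqrt{\eps P_{XY}(x,y)}\,\imate(y,x)$, a standard second-order Taylor expansion gives $D(\Ph_{XY}\|P_{XY}) = \tfrac{\eps}{2}\fnorm{\imat}^2 + o(\eps)$, where $\fnorm{\imat}^2 = \sum_{x,y}\imate(y,x)^2$. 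Hence the membership $\Ph_{XY} \in \nbhd(\eps)$ is, to leading order, the constraint $\fnorm{\imat}^2 \le 2/\alpha_k$ on the ``direction'' $\imat$, while minimizing $\eps^{-1}D(\Ph_{XY}\|P_{XY})$ amounts to minimizing $\tfrac12\fnorm{\imat}^2$.

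Next I would translate the constraint defining $\cS_1(\eps)$ into a constraint on $\imat$. By \lemref{lem:perturb:B}, $\dtmh = \dtmt + \sqrt{\eps}\,\dtmdmat + o(\sqrt\eps)$ with $\dtmdmat$ a fixed linear image of $\imat$ (via the matrix $\Lb$, as encoded in~\eqref{eq:theta} and~\eqref{eq:L:def}); in fact $\vec(\dtmdmat) = \Lb\,\vec(\dtmt\text{-independent part})$, the point being that $\vec(\Xib)$ depends linearly on $\vec(\imat)$ through $\Lb$. Applying \lemref{lem:eig:k} to the symmetric matrix $\Ab = \dtmt^\T\dtmt$ with perturbation $\Ab' $ determined by $\dtmdmat$, and using $\bbfrob{\dtmt\Phih_k}^2 = \tr\{\Phih_k^\T\dtmt^\T\dtmt\Phih_k\}$, we obtain
\begin{align*}
  \bbfrob{\dtmt\Phib_k}^2 - \bbfrob{\dtmt\Phih_k}^2 = \eps \sum_{i=1}^k\sum_{j=k+1}^d \frac{\bigl(\phib_i^\T(\dtmt^\T\dtmdmat + \dtmdmat^\T\dtmt)\phib_j\bigr)^2}{\sigma_i^2 - \sigma_j^2} + o(\eps).
\end{align*}
The bilinear form $\phib_i^\T(\dtmt^\T\dtmdmat + \dtmdmat^\T\dtmt)\phib_j$ is exactly $\ip{\thetab_{ij}}{\vec(\dtmdmat)} = \ip{\thetab_{ij}}{\Lb\,\vec(\imat)} = \ip{\Lb^\T\thetab_{ij}}{\vec(\imat)}$ with $\thetab_{ij}$ as in~\eqref{eq:theta_sup}, so the leading coefficient equals $\vec(\imat)^\T \Gb_k\, \vec(\imat)$ with $\Gb_k$ as in~\eqref{eq:gb:k}. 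Therefore, to first order, $\Ph_{XY} \in \cS_1(\eps)$ is the constraint $\vec(\imat)^\T \Gb_k\,\vec(\imat) > 1$ (after dividing the defining inequality by $\eps$).

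Finally I would solve: minimize $\tfrac12\fnorm{\imat}^2 = \tfrac12\norm{\vec(\imat)}^2$ subject to $\vec(\imat)^\T\Gb_k\vec(\imat) \ge 1$. Since $\Gb_k \succeq 0$ with $\spectral{\Gb_k} = \alpha_k$, the minimum of $\norm{u}^2$ over $u^\T\Gb_k u \ge 1$ is $1/\alpha_k$, attained by $u$ proportional to the top eigenvector of $\Gb_k$; the minimal divergence is thus $\tfrac{\eps}{2}\cdot\tfrac{1}{\alpha_k} + o(\eps)$, and one checks this optimal $\imat$ automatically satisfies $\fnorm{\imat}^2 = 1/\alpha_k \le 2/\alpha_k$, so the $\nbhd(\eps)$ constraint is inactive and does not change the value. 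Dividing by $\eps$ and letting $\eps \to 0^+$ yields the claimed limit $1/(2\alpha_k)$. The main obstacle is making the ``to leading order'' reductions rigorous uniformly over the shrinking sets: one must show that the $o(\eps)$ and $o(\sqrt\eps)$ error terms from \lemref{lem:perturb:B} and \lemref{lem:eig:k} are uniform over $\Ph_{XY} \in \nbhd(\eps)$ (using the boundedness $\bbfrob{\dtmdmat}\le C$), that the infimum is not affected by restricting to this neighborhood, and that upper and lower bounds on $D(\cS_1(\eps)\cap\nbhd(\eps)\|P_{XY})$ match in the limit — the upper bound by exhibiting the near-optimal $\imat$ above (suitably scaled so the strict inequality holds), the lower bound by the quadratic estimate applied to any feasible $\imat$.
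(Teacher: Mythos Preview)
Your proposal is correct and follows essentially the same route as the paper: local $\chi^2$ approximation of the KL divergence, \lemref{lem:eig:k} to reduce the learning error to the quadratic form $\vec(\imat)^\T\Gb_k\vec(\imat)$, and the Rayleigh-quotient argument giving $1/(2\alpha_k)$; the paper makes the limit rigorous via a sandwich $\cS_2^{(1+t)}(\eps)\subseteq\cS_1(\eps)\cap\nbhd(\eps)\subseteq\cS_2^{(1-t)}(\eps)$ with auxiliary sets defined directly by the quadratic threshold, which is exactly the upper/lower-bound scheme you sketch at the end. One detail you should also verify (the paper does so explicitly): the optimization over $\imat$ must respect the affine constraints $\sum_{x,y}\sqrt{P_{XY}(x,y)}\,\imate(y,x)=0$ and $\imate(y,x)=0$ wherever $P_{XY}(x,y)=0$, and one checks that the top eigenvector of $\Gb_k$ already satisfies these, so the constrained and unconstrained optima coincide.
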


  \begin{proof}
    See \appref{sec:app:S1:nbhd}.
  \end{proof}

  Based on the above lemmas, \thmref{thm:exponent} can be established as follows.
  
  \begin{proof}[Proof of \thmref{thm:exponent}]
    First, it follows from Sanov's theorem \cite{csiszar2004information} that
    \begin{align} 
      \mathbb{P}_n \left\{ \bbfrob{ \dtmt \Phib_k }^2 - \bbfrob{ \dtmt \Phih_k }^2 > \eps  \right\}%
       \doteq \exp \left\{ -n  D\left(\cS_1(\eps)\middle\| P_{XY}\right) \right\}, \label{eq:scnn}
    \end{align} 
    where ``$\doteq$'' is the conventional dot-equal notation.\footnote{In particular, we use $a_n \doteq \exp(nb)$ to denote
      \begin{align*}
        \lim_{n \to \infty} \frac{1}{n}\log a_n = b.
      \end{align*}
    } Therefore, the error exponent \eqref{eq:exponent} can be expressed as
    \begin{align}
      -\lim_{\eps \rightarrow 0^+} \frac{1}{\eps} \lim_{n \rightarrow \infty} \frac{1}{n} \log \mathbb{P}_n \left\{ \bbfrob{ \dtmt \Phib_k }^2 - \bbfrob{ \dtmt \Phih_k }^2 > \eps  \right\} = \lim_{\eps \rightarrow 0^+} \eps^{-1} D\left(\cS_1(\eps)\middle\| P_{XY}\right). %
      \label{eq:lim:sanov}
    \end{align}

    From Lemma \ref{lem:S1:nbhd}, there exists an $\eps_0 > 0$ such that, for all $\eps \in (0, \eps_0)$,
    \begin{align*}
       D\left(\cS_1(\eps) \cap \nbhd(\eps)\,\middle\|\, P_{XY}\right)
      < \frac{\eps}{\alpha_k}.
    \end{align*}
    In addition, note that from~\eqref{eq:nbhd}, for all $\Ph_{XY} \in \cS_1(\eps) \setminus \nbhd(\eps)$ we have $D\bigl(\Ph_{XY}\big\| P_{XY}\bigr) > \frac{\eps}{\alpha_k}$. Hence, for all $\eps \in (0, \eps_0)$ we have
    \begin{align*}
      D\left( \cS_1(\eps)\,\middle\|\, P_{XY}\right)
         = D\left( \cS_1(\eps) \cap \nbhd(\eps)\,\middle\|\, P_{XY}\right),
    \end{align*}
    which implies that
    \begin{align}
      \lim_{\eps \rightarrow 0^+} \eps^{-1}  D\left(\cS_1(\eps) \,\middle\|\, P_{XY}\right) = \lim_{\eps \rightarrow 0^+} \eps^{-1} D\left( \cS_1(\eps) \cap \nbhd(\eps)\,\middle\|\, P_{XY}\right) = \frac{1}{2\alpha_k}.
      \label{eq:lim:S1}
    \end{align}
    Combining \eqref{eq:lim:sanov} and \eqref{eq:lim:S1}, we obtain \eqref{eq:exponent}.
  \end{proof}

\subsection{The Sample Complexity for the Case $\sigma_k = \sigma_{k+1}$}
\label{sec:exponent:k:k+1}

The idea of deriving the sample complexity in this case is similar to the case $\sigma_k > \sigma_{k+1}$. To delineate the result, we first define
\begin{align} \label{eq:Jk:lmin:Phib}
  \cI_k \defeq \left\{i \in [d]\colon \sigma_i = \sigma_k\right\},
  \quad
  l \defeq \min \cI_k,
  \quad\text{and}\quad
  \Phib_{\cI_k} \defeq \left[\phib_i\colon i \in \cI_k\right] \in \mathbb{R}^{|\cX| \times |\cI_k|}. 
\end{align}

Similar to $\Gb_k$ and $\alpha_k$ defined in \secref{sec:k>k+1}, for the case $\sigma_k = \sigma_{k + 1}$ we define the matrix $\Jb_k$ and $\beta_k$ to characterize the error exponent $\Er_k$.

\begin{definition}
  \label{def:Jb:beta}
    Given $\imatj \in \mathbb{R}^{|\cY| \times |\cX|}$, the matrix $\Jb_k(\imatj) \in \mathbb{R}^{(|\cX||\cY|) \times (|\cX||\cY|)}$ is defined as
    \begin{align}\label{eq:Jbk}
      \Jb_k(\imatj) \defeq \Gb_{l - 1} + \Lb^{\T}\left(\sum_{i = l}^k\sum_{j \in \cIb_k} \frac{\varthetab_{ij}\varthetab^{\T}_{ij}}{\sigma_i^2 - \sigma_j^2}\right)\Lb, 
    \end{align}
    where $\Gb_{l - 1}$ and $\Lb$ are as defined in \eqref{eq:gb:k}--\eqref{eq:L:def}, where $\cIb_k \defeq [d]\setminus \cI_k$, and $\varthetab_{ij}$ are defined as, for all $i,j$,
      $\varthetab_{ij} \defeq \phib_j \otimes \bigl(\dtmt\varphib_i\bigr) + \varphib_i \otimes \bigl(\dtmt\phib_j\bigr)$, 
    where $\varphib_i$ are defined as
    \begin{align} \label{eq:varphi:def}
      \varphib_i \defeq \Phib_{\cI_k}\ub_{i - l + 1},\quad l \leq i \leq k,
    \end{align}
    where $\ub_1, \dots, \ub_{k - l + 1} \in \mathbb{R}^{|\cI_k|}$ are the top $k - l + 1$ eigenvectors of the matrix $\Phib_{\cI_k}^{\T}\left(\dtmt^{\T}\dtmdmat + \dtmdmat^{\T}\dtmt\right)\Phib_{\cI_k}$, and $\dtmdmat$ is as defined in \eqref{eq:theta}. In addition, $\beta_k$ is defined as the optimal value of the optimization problem\footnote{Here, we apply the vectorization operation $\vec(\cdot)$ to stack all columns of a matrix into a vector. Specifically, for $\Wb = [w_{ij}] \in \mathbb{R}^{p\times q}$, $\vec(\Wb)$ is a $pq$-dimensional column vector with the $[p(j-1) + i]$-th entry being $w_{ij}$.}
    \begin{subequations}
    \begin{alignat}{2}
      &\max_{\imatj} &\quad &\vec^{\T}\left(\imatj\right)\Jb_k(\imatj)\vec\left(\imatj\right)    \\
      &~\,\mathrm{s.t.}& & \|\imatj\|_{\F}^2 \leq 1.
    \end{alignat}
    \label{eq:opt:k:k+1:Jbk}
  \end{subequations}
  \end{definition}
  Then, the following theorem characterizes the error exponent for the general case where $\sigma_k$ and $\sigma_{k + 1}$ can be equal, and the corresponding upper bound of sample complexity can be established similar to \thmref{thm:samples}.
\begin{theorem}\label{thm:exponent:k:k+1}
  If $\sigma_k = \sigma_{k + 1}$, the error exponent $\Er_k$ as defined in \eqref{eq:asym_sample_complexity} is
  \begin{align} 
    \Er_k = -\lim_{\eps \rightarrow 0^+} \frac{1}{\eps} \lim_{n \rightarrow \infty} \frac{1}{n} \log \mathbb{P}_n \left\{ \bbfrob{ \dtmt \Phib_k }^2 - \bbfrob{ \dtmt \Phih_k }^2 > \eps  \right\} = \frac{1}{2\beta_k}.
    \label{eq:exponent:k:k+1}
  \end{align}
\end{theorem}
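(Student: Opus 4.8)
The plan is to repeat the proof of \thmref{thm:exponent} almost verbatim, replacing the non-degenerate perturbation expansion of \lemref{lem:eig:k} by the degenerate one of \lemref{lem:2}. As in \eqref{eq:scnn}--\eqref{eq:lim:sanov}, Sanov's theorem gives $\mathbb{P}_n\{\bbfrob{\dtmt\Phib_k}^2 - \bbfrob{\dtmt\Phih_k}^2 > \eps\} \doteq \exp\{-n\,D(\cS_1(\eps)\|P_{XY})\}$, so that $\Er_k = \lim_{\eps\to0^+}\eps^{-1}D(\cS_1(\eps)\|P_{XY})$, with $\cS_1(\eps)$ as in \eqref{eq:s1_sup}; it thus suffices to show this I-projection limit equals $1/(2\beta_k)$. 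Following the supervised case, I would introduce the neighborhood $\nbhd_\beta(\eps) \defeq \{\Ph_{XY}\colon D(\Ph_{XY}\|P_{XY}) \le \eps/\beta_k\}$, the analogue of $\nbhd(\eps)$ with $\beta_k$ in place of $\alpha_k$. Since every $\Ph_{XY}\in\cS_1(\eps)\setminus\nbhd_\beta(\eps)$ satisfies $D(\Ph_{XY}\|P_{XY}) > \eps/\beta_k$, once the local analysis below yields $D(\cS_1(\eps)\cap\nbhd_\beta(\eps)\|P_{XY}) < \eps/\beta_k$ for all small $\eps$, we obtain $D(\cS_1(\eps)\|P_{XY}) = D(\cS_1(\eps)\cap\nbhd_\beta(\eps)\|P_{XY})$ for small $\eps$, exactly as in the proof of \thmref{thm:exponent}. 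The theorem then reduces to the degenerate analogue of \lemref{lem:S1:nbhd}, namely $\lim_{\eps\to0^+}\eps^{-1}D(\cS_1(\eps)\cap\nbhd_\beta(\eps)\|P_{XY}) = 1/(2\beta_k)$.

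To prove this, I parametrize $\Ph_{XY}\in\nbhd_\beta(\eps)$ by $\imat$ via \eqref{eq:imate:def}, so that $D(\Ph_{XY}\|P_{XY}) = \tfrac{\eps}{2}\|\imat\|_{\F}^2 + o(\eps)$, while \lemref{lem:perturb:B} (which only uses $D(\Ph_{XY}\|P_{XY}) = O(\eps)$ and hence applies over $\nbhd_\beta(\eps)$) gives $\dtmh^\T\dtmh = \dtmt^\T\dtmt + \sqrt\eps\,(\dtmt^\T\dtmdmat + \dtmdmat^\T\dtmt) + o(\sqrt\eps)$, with $\vec(\dtmdmat) = \Lb\vec(\imat)$ by \eqref{eq:theta} and \eqref{eq:L:def}. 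Applying \lemref{lem:2} with $\Ab = \dtmt^\T\dtmt$ (eigenpairs $(\phib_i,\sigma_i^2)$, so that $\sigma_k^2 = \sigma_{k+1}^2$ supplies the hypothesis $\lambda_k = \lambda_{k+1}$, and the index set $\cI_k$ and integer $l$ of \eqref{eq:Jk:lmin:Phib} coincide with those in \lemref{lem:2}, with $\Vb_{\cI_k} = \Phib_{\cI_k}$ making the auxiliary vectors $\vbh_i$ of \lemref{lem:2} equal to the $\varphib_i$ of \eqref{eq:varphi:def}), $\Ab' = \dtmt^\T\dtmdmat + \dtmdmat^\T\dtmt$, and $\pv = \sqrt\eps$, together with the identities $\phib_i^\T\Ab'\phib_j = \vec^\T(\imat)\Lb^\T\thetab_{ij}$ and $\varphib_i^\T\Ab'\phib_j = \vec^\T(\imat)\Lb^\T\varthetab_{ij}$ (both from $\vec(uv^{\T}) = v\otimes u$), the two double sums in \lemref{lem:2} collapse into the two blocks of $\Jb_k(\imat)$ in \eqref{eq:Jbk}, giving $\bbfrob{\dtmt\Phib_k}^2 - \bbfrob{\dtmt\Phih_k}^2 = \eps\,\vec^\T(\imat)\,\Jb_k(\imat)\,\vec(\imat) + o(\eps)$ uniformly over $\nbhd_\beta(\eps)$. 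Hence $\eps^{-1}D(\cS_1(\eps)\cap\nbhd_\beta(\eps)\|P_{XY})$ converges to $\min\{\tfrac12\|\imat\|_{\F}^2\colon \vec^\T(\imat)\Jb_k(\imat)\vec(\imat) \ge 1\}$. Since $\dtmdmat$ is linear in $\imat$ whereas the eigenvectors $\ub_i$ of $\Phib_{\cI_k}^\T\Ab'\Phib_{\cI_k}$ --- and therefore $\varphib_i$, $\varthetab_{ij}$, and $\Jb_k(\cdot)$ --- are invariant under positive rescaling of $\imat$, the substitution $\imat = t\,\imatj$ with $\|\imatj\|_{\F} = 1$ reduces this minimization to $\min_{\|\imatj\|_{\F}=1}\bigl(2\,\vec^\T(\imatj)\Jb_k(\imatj)\vec(\imatj)\bigr)^{-1} = (2\beta_k)^{-1}$, by the definition \eqref{eq:opt:k:k+1:Jbk} of $\beta_k$; moreover the minimizing $\imat$ has $\|\imat\|_{\F}^2 = 1/\beta_k < 2/\beta_k$, hence lies strictly inside $\nbhd_\beta(\eps)$, which confirms the reduction used in the previous paragraph. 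Chaining the displays yields $\Er_k = 1/(2\beta_k)$.

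The main obstacle, beyond this bookkeeping, is to make the degenerate expansion rigorous and uniform. \lemref{lem:2} concerns a genuinely analytic one-parameter family, whereas $\dtmh^\T\dtmh$ matches the affine family $\pv\mapsto\dtmt^\T\dtmt + \pv\,\Ab'$ only up to a data-dependent $o(\sqrt\eps)$ remainder, and the effective second-order coefficient $\Jb_k(\imat)$ itself depends on $\imat$ through the eigenvector selection inside \lemref{lem:2}. I plan to handle this by fixing the direction $\imatj$, applying \lemref{lem:2} along the associated affine ray to extract the quadratic coefficient, and bounding the discrepancy between $\tr\{\Phih_k^\T\Ab\Phih_k\}$ and the corresponding idealized trace: by a Davis--Kahan bound the top $l-1$ eigenvectors of $\dtmh^\T\dtmh$ are $\Theta(1)$-separated from the rest and hence only $o(\sqrt\eps)$-perturbed, contributing just $o(\eps)$ to the trace (the leading change is quadratic in the perturbation because each of these is a unit eigenvector of $\Ab$), while the remaining $k-l+1$ directions lie, up to $o(\sqrt\eps)$, in the $\sigma_k^2$-eigenspace of $\Ab$, on which $\Ab = \sigma_k^2\Ib$, so any residual ambiguity in their within-block orientation leaves the trace unchanged. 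A compactness argument over the unit sphere $\{\|\imatj\|_{\F} = 1\}$ then upgrades these estimates to an $o(\eps)$ that is uniform over $\nbhd_\beta(\eps)$, which is exactly what licenses interchanging $\lim_{\eps\to0^+}$ with the infimum defining the I-projection; a non-asymptotic sample-complexity bound in the spirit of \thmref{thm:samples} would follow by the same route.
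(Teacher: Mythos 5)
Your proof takes essentially the same route as the paper's: Sanov's theorem reduces the exponent to an I-projection, the neighborhood $\nbhd_\beta(\eps)$ localizes the infimum, \lemref{lem:perturb:B} and the degenerate expansion of \lemref{lem:2} (applied to $\Ab = \dtmt^\T\dtmt$, $\Ab' = \dtmt^\T\dtmdmat + \dtmdmat^\T\dtmt$) turn the constraint into the quadratic form $\vec^\T(\imat)\Jb_k(\imat)\vec(\imat)$, and the scale-invariance of $\Jb_k(\cdot)$ reduces the minimization of $\tfrac12\|\imat\|_\F^2$ to the definition of $\beta_k$. The one presentational difference is that the paper makes the small-$\eps$ passage rigorous via the sandwich family $\cS_3^{(t)}(\eps)$, $t \in (0,2)$, together with \lemref{lem:S3:beta_k} (an analogue of \lemref{lem:S2:alpha_k}), whereas you state the limit directly and flag the uniformity of the $o(\eps)$ remainder as a separate technical step to be closed by Davis--Kahan plus compactness over $\{\|\imatj\|_\F = 1\}$ --- a legitimate concern that the paper treats implicitly through the $t$-parametrized inclusions and the uniform bound $\bbfrob{\dtmdmat} \le C$ of \lemref{lem:perturb:B}.
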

\begin{proof}
    See \appref{sec:appd}.
\end{proof}

Note that $\Jb_k(\imat)$ in~\eqref{eq:Jbk} is dependent on $\imatj$, since $\varthetab_{ij}$ is dependent on $\dtmdmat$. Therefore, unlike Theorem~\ref{thm:exponent}, the optimal value of~\eqref{eq:opt:k:k+1} is not simply the largest singular value of some given matrix, and the optimization problem~\eqref{eq:opt:k:k+1} is in general neither convex nor concave. However, note that if $\Jb_k$ is fixed, the optimization problem~\eqref{eq:opt:k:k+1} is reduced to solving the largest singular value of $\Jb_k$. Therefore, we can first fix $\imatj$ to compute (or update) $\Jb_k$, and then solve the largest singular vector of $\Jb_k$ to update $\imatj$, and so forth. This iterative procedure, summarized in Algorithm~\ref{alg:alpha}, solves the local optimum of the optimization problem~\eqref{eq:opt:k:k+1}. In particular, to update $\imatj$ (cf. line \ref{alg:alpha:proj:begin}--\ref{alg:alpha:proj:end} of Algorithm~\ref{alg:alpha}), we project $\vec\left(\imatj\right)$ onto the eigenspace of $\Jb_k$ associated with its largest singular value, where a learning rate $\eta$ is used to enhance the robustness of the update.

\begin{algorithm}[t]
  \caption{An algorithm for computing the error exponent of \thmref{thm:exponent:k:k+1}}%
  \label{alg:alpha}
  \begin{algorithmic}[1]
    \State {\bfseries Input:} $k$ and a learning rate $\eta$.
    \State Compute $\cI_k$, $l$, and $\Phib_{\cI_k}$.
    \State Randomly choose a $\imat \in \mathbb{R}^{|\cY|\times |\cX|}$ with $\|\imat\|_{\F} = 1$.
    \Repeat
    \State Compute $\dtmdmat$ from \eqref{eq:theta}.
    \State $\Wb \gets \Phib_{\cI_k}^{\T}\bigl(\dtmt^{\T}\dtmdmat + \dtmdmat^{\T}\dtmt\bigr)\Phib_{\cI_k}$
    \For {$i = 1, \dots, k - l + 1$}
    \State $\ub_i \gets$ the $i$-th eigenvector of $\Wb$
    \State $\varphib_{i+l-1} \gets \Phib_{\cI_k}\ub_{i}$
    \EndFor
    \State Compute $\Jb_k(\imat)$ from \eqref{eq:Jbk}.
    \State $\beta_k \gets \vec^{\T}\left(\imat\right){\Jb_k(\imat)}\vec\left(\imat\right)$
    \State Compute the eigenvectors $\qb_1, \dots, \qb_s$ of $\Jb_k(\imat)$ associated with its largest eigenvalue.\label{alg:alpha:proj:begin}
    \State $\Qb \gets [\qb_1, \dots, \qb_s]$ 
    \State $\vec\left(\imat\right) \gets \vec\left(\imat\right) + \eta\Qb\Qb^{\T}\vec\left(\imat\right)$%
    \State $\displaystyle\vec\left(\imat\right) \gets \frac{\vec\left(\imat\right)}{\left\|\vec\left(\imat\right)\right\|}$   \label{alg:alpha:proj:end}
    \Until{$\beta_k$ converges.}
    \State $\Er_k \gets (2\beta_k)^{-1}$
    \State {\bfseries Output:} $\Er_k$.
  \end{algorithmic}
\end{algorithm}

While there is in general no closed form solution for~\eqref{eq:opt:k:k+1}, for some special joint distributions the closed form solutions exist.

\begin{corollary}
  \label{eg:2}
Suppose $d = |\cX| \leq |\cY|$, and the joint distribution $P_{XY}(x, y)$ takes the form
  \begin{align*}
    P_{XY}(x, y) =
    \begin{cases}
      p_1, &\text{if}~x = y,\\
      p_2,&\text{if}~x \neq y,
    \end{cases}
  \end{align*}
  where the probabilities $p_1$ and $p_2$ satisfy $p_1 \neq p_2$ and $d[p_1 + (|\cY| - 1)p_2] = 1$. Then for any dimension $k \in [d - 1]$, we have $\beta_k = \frac{\sigma_1^2}{4}$ and thus
  \begin{align*}
    \Er_k = \frac{2}{\sigma_1^2}, %
  \end{align*}
  where
  \begin{align}\label{eq:sigma:eg2}
    \sigma_1 = \dots = \sigma_{d - 1} = \frac{|p_1 - p_2|\sqrt{d}}{\sqrt{p_1 + (d - 1)p_2}}
  \end{align}
  are the none-zero singular values of the corresponding $\dtmt$.
\end{corollary}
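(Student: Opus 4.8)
The plan is to reduce the corollary to the explicit structure of the canonical dependence matrix $\dtmt$ for this family and then evaluate the optimization defining $\beta_k$ in \defref{def:Jb:beta}, exploiting the permutation symmetry of $P_{XY}$. First I would compute $\dtmt$ from \eqref{eq:dtmt}: normalization forces $P_X(x) = 1/d$ for all $x$, $P_Y(y) = p_1 + (d-1)p_2$ for $y \in \{1, \dots, d\}$, and $P_Y(y) = d p_2$ for $y > d$. Substituting, every row of $\dtmt$ with index $y > d$ vanishes, while the $d \times d$ block on rows $\{1, \dots, d\}$ equals $c_1\bI + c_2(\bone\bone^\T - \bI)$ with $c_1 = \frac{(d-1)(p_1-p_2)}{\sqrt d\sqrt{p_1+(d-1)p_2}}$ and $c_2 = -\frac{p_1-p_2}{\sqrt d\sqrt{p_1+(d-1)p_2}}$, so that $c_1 + (d-1)c_2 = 0$; hence this block is $\sgn(p_1-p_2)\,\sigma_1\bigl(\bI - \tfrac1d\bone\bone^\T\bigr)$ with $\sigma_1 = |c_1 - c_2| = \frac{\sqrt d\,|p_1-p_2|}{\sqrt{p_1+(d-1)p_2}}$. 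This yields \eqref{eq:sigma:eg2} and shows $\dtmt$ has exactly $d-1$ nonzero singular values, all equal to $\sigma_1$, with $\sigma_d = 0$; moreover $\phib_d = \sqrt{P_X} = \tfrac1{\sqrt d}\bone$, $\dtmt\phib_d = 0$, $\dtmt^\T\dtmt = \sigma_1^2\bigl(\bI - \tfrac1d\bone\bone^\T\bigr)$, and $\dtmt\dtmt^\T = \sigma_1^2$ times the orthogonal projection onto the column space of $\dtmt$.

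Since $\sigma_k = \sigma_1 > \sigma_d = 0$ for every $k \in [d-1]$, we have $\cI_k = \{1, \dots, d-1\}$, $l = 1$, $\cIb_k = \{d\}$ and $\Gb_{l-1} = \Gb_0 = \mathbf 0$. Using $\dtmt\phib_d = 0$ in \defref{def:Jb:beta} gives $\varthetab_{id} = \phib_d \otimes (\dtmt\varphib_i)$, and a short Kronecker-product identity (together with $\Lb\vec(\imatj) = \vec(\dtmdmat)$) rewrites the objective of \eqref{eq:opt:k:k+1:Jbk} as
\[
  \vec^\T(\imatj)\,\Jb_k(\imatj)\,\vec(\imatj) = \frac{1}{\sigma_1^2}\sum_{i=1}^{k}\bigl(\varphib_i^\T\dtmt^\T\dtmdmat\,\phib_d\bigr)^2 = \frac{1}{\sigma_1^2}\,\bbnorm{\Pi_k\,\mathbf w(\imatj)}^2,
\]
where $\mathbf w(\imatj) \defeq \dtmt^\T\dtmdmat\,\phib_d$ with $\dtmdmat = \dtmdmat(\imatj)$ as in \eqref{eq:theta}, and $\Pi_k$ is the orthogonal projection onto $\mathrm{span}(\varphib_1, \dots, \varphib_k)$. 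Because $\dtmt\phib_d = 0$, both $\mathbf w(\imatj)$ and each $\varphib_i$ lie in the $(d-1)$-dimensional space $\phib_d^\perp = \mathrm{span}(\phib_1, \dots, \phib_{d-1})$. The upper bound $\beta_k \le \sigma_1^2/4$ then follows from $\bbnorm{\Pi_k\mathbf w} \le \norm{\mathbf w}$ combined with the identity $\norm{\mathbf w(\imatj)}^2 = \sigma_1^2\,\vec^\T(\imatj)\,\Gb_{d-1}\,\vec(\imatj)$ (a direct consequence of $\dtmt\dtmt^\T = \sigma_1^2$ times the column-space projection and \eqref{eq:gb:k}), whose maximum over $\norm{\imatj}_\F \le 1$ is $\sigma_1^2\,\spectral{\Gb_{d-1}} = \sigma_1^2\,\alpha_{d-1} = \sigma_1^4/4$ by \propref{eg:1}.

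For the matching bound $\beta_k \ge \sigma_1^2/4$ I would invoke symmetry. The symmetric group $\mathrm S_d$ acts on $(X, Y)$ by simultaneously relabelling $\{1, \dots, d\}$ in $\cX$ and $\cY$ (fixing the symbols $> d$ of $\cY$); $P_{XY}$ is invariant, so the linear map $T\colon \imatj \mapsto \mathbf w(\imatj)$ is equivariant with image inside $\phib_d^\perp$, the standard $(d-1)$-dimensional irreducible representation. By Schur's lemma $TT^\T$ is a scalar multiple of the identity on $\phib_d^\perp$, so for \emph{every} unit vector $\hat v \in \phib_d^\perp$ the matrix $\imatj^\star \defeq T^\T\hat v / \norm{T^\T\hat v}$ is a maximizer, i.e.\ $\norm{\imatj^\star}_\F = 1$ and $\mathbf w(\imatj^\star) = \tfrac{\sigma_1^2}{2}\hat v$, and it is automatically invariant under the stabilizer of $\hat v$. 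Taking $\hat v$ to be the direction of $\phib_d^\perp$ fixed by the copy of $\mathrm S_{d-1}$ permuting $\{2, \dots, d\}$ (so $\hat v \propto (d-1, -1, \dots, -1)$), the matrix $W(\imatj^\star) \defeq \Phib_{\cI_k}^\T\bigl(\dtmt^\T\dtmdmat(\imatj^\star) + \dtmdmat(\imatj^\star)^\T\dtmt\bigr)\Phib_{\cI_k}$, viewed via $\Phib_{\cI_k}$ as an operator on $\phib_d^\perp = \mathrm{span}(\hat v) \oplus V'$ (with $V'$ the standard $(d-2)$-dimensional irreducible of $\mathrm S_{d-1}$), commutes with the $\mathrm S_{d-1}$-action, hence by Schur acts as scalars on the two summands; in particular $\hat v$ is an eigenvector, and after replacing $\imatj^\star$ by $-\imatj^\star$ if necessary it is a \emph{top} eigenvector of $W(\imatj^\star)$, so $\mathbf w(\imatj^\star) \in \mathrm{span}(\varphib_1, \dots, \varphib_k)$ and the objective at $\imatj^\star$ equals $\tfrac1{\sigma_1^2}\norm{\mathbf w(\imatj^\star)}^2 = \sigma_1^2/4$.

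Combining the two bounds gives $\beta_k = \sigma_1^2/4$ for all $k \in [d-1]$, whence $\Er_k = 1/(2\beta_k) = 2/\sigma_1^2$ by \thmref{thm:exponent:k:k+1} when $k \le d-2$, and by \thmref{thm:exponent} together with \propref{eg:1} (which also yields $\beta_{d-1} = \alpha_{d-1} = \sigma_1^2/4$) when $k = d-1$. The main obstacle is the lower-bound step: one must produce a \emph{genuine} maximizer of the non-convex problem \eqref{eq:opt:k:k+1:Jbk} whose $\mathbf w$-value lands in the top-$k$ eigenspace of $W$. The representation-theoretic argument above handles this cleanly except in the borderline case where $W(\imatj^\star)$ turns out to be a scalar matrix, so that the ``top $k$ eigenvectors'' in \defref{def:Jb:beta} are pinned down only up to a favorable choice and one must instead argue by continuity within the $(d-1)$-dimensional set of maximizers; the remaining ingredients — the explicit form of $\dtmt$, the Kronecker-product reduction of $\Jb_k$, and the reduction of the upper bound to \propref{eg:1} — are routine.
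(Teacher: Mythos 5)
Your proof is correct and reaches the same conclusion, but the achievability step is a genuinely different argument from the one in the paper. The explicit computation of $\dtmt$, the Kronecker identity rewriting the objective of \eqref{eq:opt:k:k+1:Jbk} as $\tfrac{1}{\sigma_1^2}\norm{\Pi_k\,\mathbf{w}(\imatj)}^2$, and the upper bound via \propref{eg:1} all mirror the paper's derivation of $\Jb_k(\imat)=\tfrac{\sigma_1^2}{4}\sum_i(\Mb\varphib_i)(\Mb\varphib_i)^\T$ and the consequent bound $\beta_k \leq \spectral{\Jb_k(\imat)}$. For achievability, the paper exhibits the candidate optimizer $\imate(y,x)=\sqrt{P_{Y|X}(y|x)}\,\phi'(x)$ with $\phib'\propto(d-1,-1,\dots,-1)^\T$, reduces the eigenvector condition $\varphib_1=\pm\phib'$ to the scalar maximization $\max_{\phib\perp\bone,\,\norm{\phib}=1}\sum_x\phi'(x)\phi^2(x)$, and solves it by an elementary AM--QM argument (\lemref{lem:psi=phi}, \lemref{lem:phi_1:max}); you instead invoke Schur's lemma twice: $\mathrm{S}_d$-equivariance forces $TT^\T$ to be scalar on the standard representation $\phib_d^\perp$, so every direction $\hat v$ yields a saturating $\imatj^\star$, and $\mathrm{S}_{d-1}$-equivariance places the $\mathrm{S}_{d-1}$-fixed $\hat v$ in an eigenspace of $W(\imatj^\star)$. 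Your route is shorter and more conceptual once equivariance is established, at the cost of representation-theoretic machinery the paper avoids; the paper's route is elementary but requires the longer explicit evaluation in \eqref{eq:comp:alpha} and \eqref{eq:opt:phi:obj}. The borderline scalar-$W$ case you flag is present in the paper's route as well (the scalar in \eqref{eq:opt:phi:obj:6} multiplying $\sum_x\phi'(x)\phi^2(x)$ is proportional to $dp_2$ and degenerates when $p_2=0$), and is harmless in either proof: if $W(\imatj^\star)$ is scalar, every orthonormal set in $\phib_d^\perp$ qualifies as the ``top eigenvectors'' in \defref{def:Jb:beta}, so $\varphib_1=\hat v$ is an admissible choice and the value $\sigma_1^2/4$ is still attained, making the continuity argument you mention unnecessary.
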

\begin{proof}
  See \appref{sec:appe}.
\end{proof}

\subsection{Remarks on the General Trend of Error Exponent}

In machine learning problems, it is also interesting to investigate $\bfrob{\dtmt \Phih_k}^2 / \bfrob{\dtmt \Phib_k}^2$, which tells how effective $\Phih_k$ is, compared to $\Phib_k$. In particular, this is studied by the asymptotic problem
\begin{subequations}
  \begin{align} 
    \Erh_k  \defeq -\lim_{\eps \rightarrow 0^+} \frac{1}{\eps} \lim_{n \rightarrow \infty} \frac{1}{n} \log \mathbb{P}_n \left\{ \frac{ \bbfrob{ \dtmt \Phib_k }^2 - \bbfrob{ \dtmt \Phih_k }^2 }{\bbfrob{ \dtmt \Phib_k }^2}> \eps  \right\},
  \end{align} 
  in which the normalized error exponent $\Erh_k$ is given by
  \begin{align} 
    \Erh_k = \bbfrob{ \dtmt \Phib_k }^2 \cdot \Er_k = \left(\sum_{i=1}^k \sigma_i^2\right)\cdot \Er_k,%
  \end{align}
  \label{eq:error_exp_norm}
\end{subequations}
  where $\sigma_1, \ldots , \sigma_k$ are the top $k$ singular values of $\dtmt$. 

  The error exponent~\eqref{eq:error_exp_norm} combining with \thmref{thm:samples} offers insights on designing the dimensionality $k$ to effectively extract the correlation structure among different data variables from a given set of training samples. In particular, since both $X$ and $Y$ are discrete, the true distribution $P_{XY}$ can be approximated by the empirical distribution $\Ph_{XY}$ of training samples, and then the normalized error exponent $\Erh_k$ can be obtained via computing the corresponding $\alpha_k$ or $\beta_k$ from $\Ph_{XY}$. %

  However, in real algorithm designs, it is more useful to provide a general trend for error exponent over different $k$. For certain symmetric joint distributions, it can be verified that the normalized error exponent $\Er_k$ is linear to $k$. For example, consider the joint distribution $P_{XY}$ as constructed in Corollary~\ref{eg:2}, then we have $\Erh_k = \left(\sum_{i = 1}^k \sigma_i^2\right) \Er_k = k \sigma_1^2 \Er_k = 2k$.
  However, one can easily construct examples that the error exponent is not monotonic with respect to $k$, and the behavior is generally complicated.

  To gain more insights, we uniformly sample\footnote{In particular, we generate independent random numbers uniformly from $[0,1]$ for each entry of $P_{XY}$, and then normalize the sum of the entries to $1$.} the joint distribution $P_{XY}$ from the distribution space of $X,Y$, and consider the empirical average of the error exponents~\eqref{eq:error_exp_norm} over the sampled joint distributions. \figref{fig:alpha1} plots the empirical average of the error exponents over $10^5$ sampled joint distributions with $| \cX | = 12$ and $|\cY| = 10$, from which we observe that the error exponent grows linear to $k$, for small $k$, and becomes super-linear when $k$ is large. Although we do not have a rigorous proof in this paper, this general trend of the normalized error exponent, combining with \thmref{thm:samples}, may provide a practical design guidance for selecting the dimensionality of the maximal correlation functions in real problems.

\begin{figure}[t]
  \centering
  \includegraphics[width = .5\textwidth]{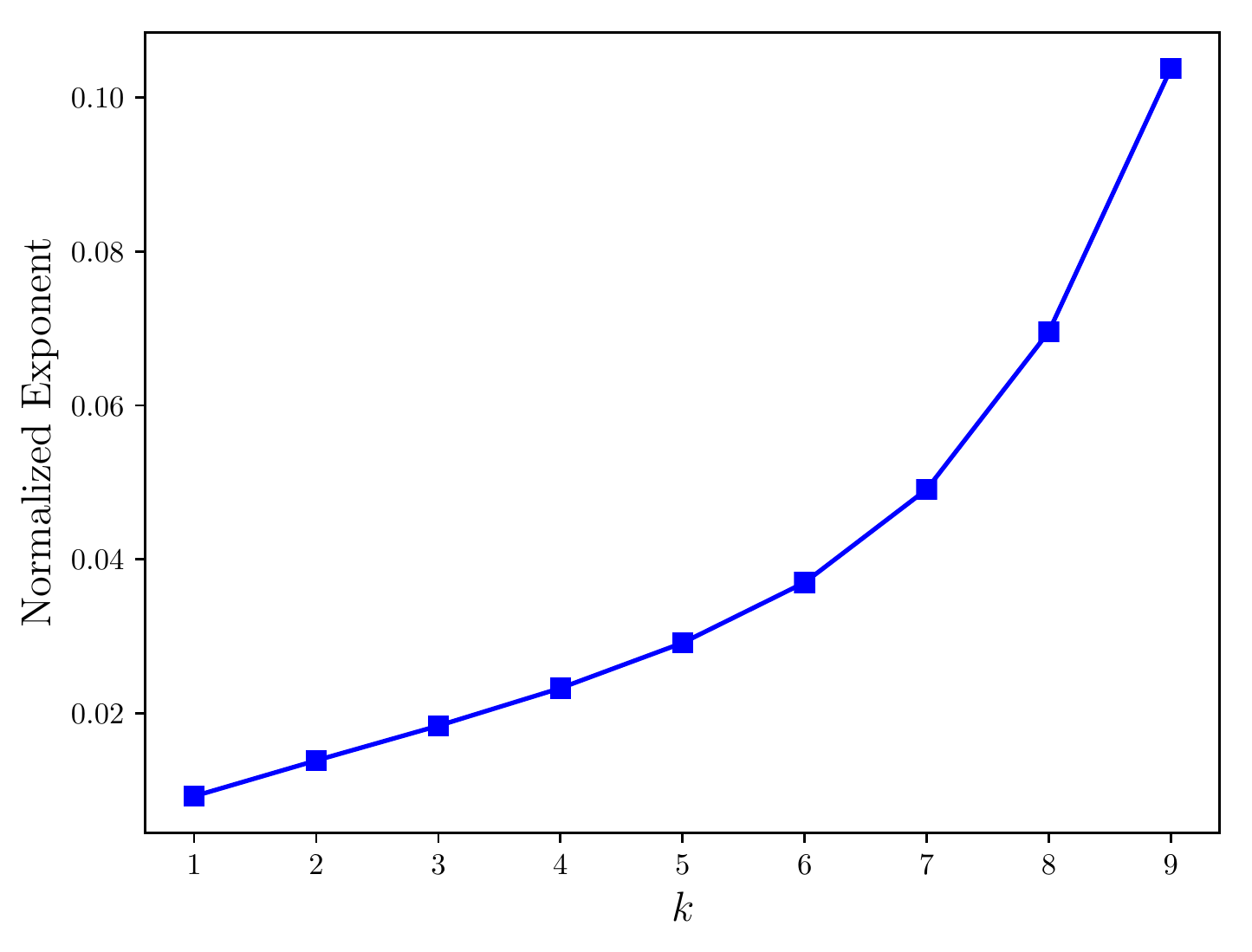}
  \caption{The empirical average of the normalized exponent $\Erh_k$, with respect to the dimensionality $k$ of the HGR maximal correlation functions. The average is taken over $10^5$ randomly generated joint distributions $P_{XY}$ with cardinalities $| \cX | = 12$ and $|\cY| = 10$.
}
  \label{fig:alpha1}
\end{figure}

\section{The Semi-supervised Learning} \label{sec:semi}

In the semi-supervised learning setup, in addition to the $n$ labeled samples $(x_1, y_1), \dots, (x_n, y_n)$, we also have $m = nr$ unlabeled samples $x_{n+1} , \ldots , x_{n+m}$, where $r$ is the ratio between the labeled and unlabeled samples, and the unlabeled samples are assumed to be i.i.d. sampled from the marginal distribution $P_X$ and independent of the labeled samples. In order to estimate the maximal correlation functions from both the labeled and unlabeled samples, we denote the empirical distribution of the unlabeled samples $x_{n+1} , \ldots , x_{n+m}$ as $Q_X$, and again apply $\Ph_{XY}$ and $\Ph_{Y|X}$ to denote the empirical joint and conditional distributions of the labeled samples. Moreover, we denote the empirical distribution for $x_1, \ldots , x_{n+m}$ as $\Pt_X$, which is the empirical marginal distribution of $X$ over all samples, and can be expressed as
\begin{align}
  \Pt_X(x) &= \frac{n}{m + n}\Ph_X(x) + \frac{m}{m + n}Q_X(x)%
    = \frac{1}{r + 1}\Ph_X(x) + \frac{r}{r + 1}Q_X(x).\label{eq:Px:semi}
\end{align}
\reply{
Then, \algoref{alg:ace} can be generalized to estimating the maximal correlation functions from both labeled and unlabeled samples. For this purpose, we define
\begin{align} \label{eq:PtXY}
  \Pt_{XY}(x,y) \defeq \Ph_{Y|X}(y|x) \Pt_X(x)
\end{align}
as the empirical joint distribution by including both the labeled and unlabeled samples, with the corresponding marginal distributions being $\Pt_X$ and 
\begin{align}\label{eq:Pt:y}
  \Pt_{Y}(y) \defeq \sum_{x \in \cX}\Ph_{Y|X}(y|x) \Pt_X(x),
\end{align}
respectively. Similarly, we obtain the conditional distributions
\begin{align*}
  \Pt_{Y|X}(y|x) = \frac{\Pt_{XY}(x, y)}{\Pt_X(x)} = \Ph_{Y|X}(y|x)
\end{align*}
and 
\begin{align*}
  \Pt_{X|Y}(x|y) = \frac{\Pt_{XY}(x, y)}{\Pt_Y(y)} = \Ph_{Y|X}(y|x) \frac{\Pt_X(x)}{\Pt_Y(y)}.
\end{align*}
Then, we can generalize \algoref{alg:ace} to semi-supervised learning by replacing the all expectation operations taken over empirical distributions $\Ph_{X}, \Ph_{Y}, \Ph_{XY}, \Ph_{X|Y}, \Ph_{Y|X}$ with the expectations taken over empirical distributions $\Pt_{X}, \Pt_{Y}, \Pt_{XY}, \Pt_{X|Y}, \Pt_{Y|X}$, respectively. In particular, let $\fb\colon \cX \mapsto \mathbb{R}^k$ and $\gb \colon \cY \mapsto \mathbb{R}^k$ denote the initially chosen functions of $X$ and $Y$, which are zero-mean over the distribution $\Pt_X$ and $\Pt_{Y}$, respectively. Then, the alternating conditional expectation operations (cf. line \ref{alg:alpha:proj:begin}--\ref{alg:alpha:proj:end} of \algoref{alg:ace}) can be represented as:
\begin{equation} \label{eq:gACE}
\begin{aligned} 
\text{i)}\quad \fb(x) &\gets \Lambdab_{\gb}^{-1}  \sum_{y \in \Y} \gb(y) \Ph_{Y|X}(y|x) \\
\text{ii)}\quad \gb(y) &\gets \Lambdab_{\fb}^{-1} \sum_{x \in \X} \fb(x) \Pt_{Y|X}(y|x) = \Lambdab_{\fb}^{-1} \sum_{x \in \X} \fb(x) \Ph_{Y|X}(y|x) \frac{\Pt_X(x)}{\Pt_Y(y)}
\end{aligned}
\end{equation}
where %
 $\Lambdab_{\fb}$ and $\Lambdab_{\gb}$ are the covariance matrices of $\fb$ and $\gb$, defined as
\begin{align*}
  \Lambdab_{\fb} \defeq \sum_{x\in\cX} \Pt_X(y)\fb(x) \fb^{\T}(x) = \frac{1}{n+m}\sum_{i = 1}^{n+m} \fb(x_i) {\fb}^{\T}(x_i)\quad\text{and}\quad
  \Lambdab_{\gb} \defeq \sum_{y\in\cY} \Pt_Y(y)\gb(y) \gb^{\T}(y).%
\end{align*}
}

The idea of this generalization is that the unlabeled samples do not help the estimation of the conditional distribution $P_{Y|X}$, but improve the estimation of the marginal distribution $P_X$. Therefore, the first step in the algorithm remains the same, while in the second step, the improved empirical marginal distribution $\Pt_X$ is applied to improve the estimation of the conditional distribution $P_{X|Y}$. In practice, we may assume that the marginal distribution is much easier to estimate than the joint distribution~\cite{makur2015efficient}, and hence the above generalized ACE algorithm can still be implemented for computing the maximal correlation functions from training samples. In this section, our goal is to characterize the corresponding error exponent for the generalized ACE algorithm~\eqref{eq:gACE} in the semi-supervised learning scenario. %

To this end, let us define  the $|\cY| \times |\cX|$ matrix $\dtmtt$, with entries 
\begin{align}
  \label{eq:dtm:bar}
  \dtmett(y, x) = \frac{\Pt_{XY}(x, y)}{\sqrt{\Pt_X(x)\Pt_Y(y)}} - \sqrt{\Pt_X(x)\Pt_Y(y)},
\end{align}
where $\Pt_Y$ is the marginal distribution of $\Pt_{XY}$ as defined in \eqref{eq:Pt:y}. Then, it is shown in \appref{sec:appf} that the algorithm~\eqref{eq:gACE} essentially computes the top $k$ singular vectors of $\dtmtt$. %
In addition, we denote $\Phibt_k$ as the $|\cX| \times k$ dimensional matrix with the $i$-th column being the $i$-th right singular vector of $\dtmtt$. Then for large datasets, the sample complexity of estimating the maximal correlation functions by the algorithm~\eqref{eq:gACE} can be characterized by investigating the error exponent
\begin{align} \label{eq:asym_sample_complexity:semi}
  \Erb_k(r) \defeq -\lim_{\eps \rightarrow 0^+} \frac{1}{\eps} \lim_{n \rightarrow \infty} \frac{1}{n} \log \mathbb{P}_{n, m} \left\{ \bbfrob{ \dtmt \Phib_k }^2 - \bbfrob{ \dtmt \Phibt_k }^2 > \eps  \right\},
\end{align}
where the probability is measured over $n$ i.i.d. samples from $P_{XY}$ and the $m = nr$ i.i.d samples from $P_X$. In the following, we develop the error exponent~\eqref{eq:asym_sample_complexity:semi} for the semi-supervised learning, for both cases $\sigma_k > \sigma_{k+1}$ and $\sigma_k = \sigma_{k+1}$, where $\sigma_k$ and $\sigma_{k+1}$ are the $k$-th and $(k+1)$-th largest singular values of $\dtmt$.

\subsection{The Sample Complexity for the Case $\sigma_k > \sigma_{k+1}$}
\label{sec:k>k+1:semi}

Similar to \secref{sec:k>k+1}, we first define the matrix $\Gbh_k(r)$ and the quantity $\alphah(r)$, which will be useful in characterizing the exponent $\Erb_k$.%

\begin{definition}
  \label{def:Gbh:alphah}
  For given $r \geq 0$, the matrix $\Gbh_k(r)$ is defined as
  \begin{align}\label{eq:gbh:def}
    \Gbh_k(r) \defeq \Lbh^{\T}(r) \Gb_k \Lbh(r),
  \end{align}
  where $\Gb_k$ is as defined in \eqref{eq:gb:k}, and $\Lbh(r)$ is an $(|\cX|\cdot|\cY|)\times[|\cX|(|\cY| + 1)]$ matrix with its entry at the $[(x-1)|\cY| + y]$-th row and $[(x'-1)|\cY| + y']$-th column defined as
  \begin{subequations}
    \begin{align}
      \delta_{xx'}\delta_{yy'} - \frac{r}{1 + r}\cdot\sqrt{P_{Y|X}(y|x)P_{Y|X}(y'|x')}\cdot \delta_{xx'},
      \label{eq:Lbh:def:1}
    \end{align}
    and the entry at the $[(x-1)|\cY| + y]$-th row and $[|\cX|\cdot|\cY| + x']$-th column defined as
    \begin{align}
      \frac{\sqrt{r}}{1 + r} \cdot\sqrt{P_{Y|X}(y|x)}\cdot \delta_{xx'},
      \label{eq:Lbh:def:2}
    \end{align}
    \label{eq:Lbh:def}
  \end{subequations}
  for all $x, x' \in \{1, \dots, |\cX|\}$ and $y, y' \in \{1, \dots, |\cY|\}$, where $\delta_{ij}$ denotes the Kronecker delta. Then, $\alphah_k(r)$ %
  is defined as the spectral norm of the matrix $\Gbh_k(r)$.
\end{definition}

\reply{
Then, the following theorem establishes the analytical form of the error exponent $\Erb_k$, whose proof will be later provided.}

\begin{theorem} \label{thm:exponent:semi}
  If $\sigma_k > \sigma_{k + 1}$, the error exponent
  $\Erb_k(r)$ as defined in \eqref{eq:asym_sample_complexity:semi}
  is
  \begin{align} %
    &\Erb_k(r) = - \lim_{\eps \rightarrow 0^+} \frac{1}{\eps} \lim_{n \rightarrow \infty} \frac{1}{n} \log \mathbb{P}_{n,m} \left\{ \bbfrob{ \dtmt \Phib_k }^2 - \bbfrob{ \dtmt \Phibt_k }^2 > \eps  \right\}%
      = \frac{1}{2\alphah_k(r)}.
    \label{eq:exponent:semi}
  \end{align} 
\end{theorem}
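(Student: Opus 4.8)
The plan is to adapt the argument proving \thmref{thm:exponent}, replacing the single divergence $D(\Ph_{XY}\|P_{XY})$ by the rate function governing the two independent sampling processes. Since the $n$ labeled samples and the $m=nr$ unlabeled samples are independent, Sanov's theorem (applied to each sequence, using independence to add the rates with the weight $m/n=r$) gives
\begin{align*}
  \mathbb{P}_{n,m}\left\{\bbfrob{\dtmt\Phib_k}^2 - \bbfrob{\dtmt\Phibt_k}^2 > \eps\right\} \doteq \exp\left\{-n\,\inf_{(\Ph_{XY},Q_X)}\Bigl[D(\Ph_{XY}\|P_{XY}) + r\,D(Q_X\|P_X)\Bigr]\right\},
\end{align*}
where the infimum ranges over all pairs $(\Ph_{XY},Q_X)$ with $\bbfrob{\dtmt\Phib_k}^2 - \bbfrob{\dtmt\Phibt_k}^2 > \eps$; here $\dtmtt$, and hence the column span of $\Phibt_k$ (uniquely defined because $\sigma_k > \sigma_{k+1}$), depends continuously on $(\Ph_{XY},Q_X)$. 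It therefore remains to show that $\eps^{-1}$ times this infimum converges to $\frac1{2\alphah_k(r)}$ as $\eps\to0^+$.

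The core of the proof is the semi-supervised analogue of \lemref{lem:perturb:B} and \lemref{lem:S1:nbhd}: localizing the infimum to a small neighborhood of $(P_{XY},P_X)$ and expanding $\dtmtt$ there. Parametrizing $\Ph_{XY}(x,y) = P_{XY}(x,y) + \sqrt{\eps P_{XY}(x,y)}\,\imate(y,x)$ and $Q_X(x) = P_X(x) + \sqrt{\eps P_X(x)}\,\mu(x)$, one has $D(\Ph_{XY}\|P_{XY}) = \frac\eps2\frob{\imat}^2 + o(\eps)$ and $D(Q_X\|P_X) = \frac\eps2\norm{\mu}^2 + o(\eps)$. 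Because the unlabeled samples perturb only $\Pt_X = \frac1{r+1}\Ph_X + \frac r{r+1}Q_X$ and not $\Ph_{Y|X}$, a direct first-order computation from $\Pt_{XY} = \Ph_{Y|X}\Pt_X$ shows $\Pt_{XY}(x,y) = P_{XY}(x,y) + \sqrt{\eps P_{XY}(x,y)}\,\imate_\circ(y,x) + o(\sqrt\eps)$ with $\vec(\imat_\circ) = \Lbh(r)\,z$, where $z \defeq \bigl(\vec(\imat)^{\T},\ \sqrt r\,\mu^{\T}\bigr)^{\T}$ and $\Lbh(r)$ is exactly the matrix in \eqref{eq:Lbh:def}; the $\sqrt r$ scaling in the second block of $z$ is precisely what turns the weight $r$ into the factor $\frac{\sqrt r}{1+r}$ of \eqref{eq:Lbh:def:2}. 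Since $\dtmtt$ is the canonical dependence matrix of $\Pt_{XY}$, composing with the first-order map $\Lb$ of \eqref{eq:L:def} gives $\dtmtt = \dtmt + \sqrt\eps\,\dtmdmat_\circ + o(\sqrt\eps)$ with $\vec(\dtmdmat_\circ) = \Lb\,\Lbh(r)\,z$ and $\frob{\dtmdmat_\circ}$ bounded uniformly on the neighborhood.

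Applying \lemref{lem:eig:k} to $\Ab = \dtmt^{\T}\dtmt$ (with eigenpairs $(\sigma_i^2,\phib_i)$), perturbation $\Ab' = \dtmt^{\T}\dtmdmat_\circ + \dtmdmat_\circ^{\T}\dtmt$, and $\pv = \sqrt\eps$, and using $\phib_i^{\T}\Ab'\phib_j = \vec(\dtmdmat_\circ)^{\T}\thetab_{ij}$ with $\thetab_{ij}$ as in \eqref{eq:theta_sup}, one obtains
\begin{align*}
  \bbfrob{\dtmt\Phib_k}^2 - \bbfrob{\dtmt\Phibt_k}^2
  &= \eps\,z^{\T}\Lbh^{\T}(r)\,\Lb^{\T}\!\left(\sum_{i=1}^k\sum_{j=k+1}^d\frac{\thetab_{ij}\thetab_{ij}^{\T}}{\sigma_i^2-\sigma_j^2}\right)\!\Lb\,\Lbh(r)\,z + o(\eps)\\
  &= \eps\,z^{\T}\Gbh_k(r)\,z + o(\eps),
\end{align*}
while at the same time $D(\Ph_{XY}\|P_{XY}) + r\,D(Q_X\|P_X) = \frac\eps2\bigl(\frob{\imat}^2 + r\norm{\mu}^2\bigr) + o(\eps) = \frac\eps2\norm{z}^2 + o(\eps)$, the $\sqrt r$-reparametrization making the combined divergence isotropic in $z$. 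Hence the localized problem reduces to $\min\{\frac12\norm{z}^2 : z^{\T}\Gbh_k(r)z \ge 1\}$, whose value is $\frac1{2\alphah_k(r)}$ with $\alphah_k(r)$ the spectral norm of $\Gbh_k(r)$. As in the supervised case, the probability-simplex constraints on $\imate$ and $\mu$ turn out to be inactive: they amount to $z$ being orthogonal to two explicit vectors, and since $\phib_i\perp\sqrt{P_X}$ forces the corresponding inner products of the $\thetab_{ij}$ to vanish, both vectors lie in $\ker\Gbh_k(r)$. This yields \eqref{eq:exponent:semi}; as a sanity check, $\Lbh(0) = [\,\Ib_{|\cX||\cY|}\ \ \mathbf{0}\,]$ gives $\alphah_k(0) = \alpha_k$, recovering \thmref{thm:exponent}.

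The main obstacle is the localization step, i.e., the semi-supervised version of \lemref{lem:S1:nbhd}: showing that as $\eps\to0$ the infimum of the combined divergence over the event is asymptotically attained inside the neighborhood on which the quadratic approximations above hold uniformly, so that pairs $(\Ph_{XY},Q_X)$ bounded away from $(P_{XY},P_X)$, where $\dtmtt$ no longer matches its first-order expansion and $r\,D(Q_X\|P_X)$ could a priori be small when $r$ is small, contribute nothing. This requires bounding $\bbfrob{\dtmt\Phib_k}^2 - \bbfrob{\dtmt\Phibt_k}^2$ from below by a positive multiple of the combined divergence outside the neighborhood, and is precisely where the continuity of $(\Ph_{XY},Q_X)\mapsto\Phibt_k$ and the strict gap $\sigma_k>\sigma_{k+1}$ are used; the remaining bookkeeping parallels \secref{sec:k>k+1} step by step.
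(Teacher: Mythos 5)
Your proposal follows the paper's own route closely: the same weighted Sanov bound giving the combined rate $D(\Ph_{XY}\|P_{XY}) + r\,D(Q_X\|P_X)$ (the paper's \lemref{lem:exponent:kl:semi}), the same reparametrization $z = (\vec(\imat)^{\T},\ \sqrt{r}\,\mu^{\T})^{\T}$ that turns the weighted divergence into the isotropic $\tfrac{\eps}{2}\|z\|^2 + o(\eps)$ while mapping the perturbation of $\dtmtt$ to $\Lb\,\Lbh(r)\,z$ (\lemref{lem:perturb:B:semi}), the same application of \lemref{lem:eig:k}, and the same reduction to maximizing $z^{\T}\Gbh_k(r)z$ over the unit ball (\lemref{lem:S1t:nbhd}). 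The one place you genuinely improve on the paper's exposition is the treatment of the simplex constraints: the paper disposes of the two linear constraints $\sum\sqrt{P_{XY}}\,\imate = 0$ and $\sum\sqrt{P_X}\,\mu = 0$ by a somewhat laborious contradiction argument (showing a nonzero component along $\sqrt{P_{XY}}$ or $\sqrt{P_X}$ can be removed to strictly increase the Rayleigh quotient), whereas you observe directly that the constraint vectors lie in $\ker\Gbh_k(r)$. That observation is correct — the transformed constraint vectors both become scalar multiples of $\vec(\psib_0\phib_0^{\T})$ after applying $\Lb\Lbh(r)$, and $\thetab_{ij}^{\T}\vec(\psib_0\phib_0^{\T}) = 0$ because $\dtmt^{\T}\psib_0 = 0$ — though note the precise reason is annihilation by $\dtmt^{\T}\psib_0$, not merely $\phib_i\perp\phib_0$ (the latter alone fails for the $j=d$ term). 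Your final paragraph slightly mischaracterizes the localization step: the paper does not bound the learning error from below by a multiple of the divergence outside $\nbhdbar(\eps)$; it simply defines $\nbhdbar(\eps)$ so that the divergence exceeds $\eps/\alphah_k(r)$ outside it, and then shows the I-projection inside $\cSt_1(\eps)\cap\nbhdbar(\eps)$ is asymptotically $\eps/(2\alphah_k(r))$, which is strictly smaller — no lower bound on the H-score gap by divergence is needed. That is a more economical localization than the one you sketch, but the end result is the same.
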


Similar to the case of supervised learning, for semi-supervised learning we have the following result establishing the upper bound for the sample complexity of learning maximal correlation functions on large datasets.
\begin{theorem}
  \label{thm:sample:semi}
  \reply{
    For given $P_{XY}$ and $r > 0$, there exists an absolute positive constant and $\bar{\eps}_0$ that depends only on $P_{XY}$ and $r$, such that for all $\eps \in (0, \bar{\eps}_0)$ and $\delta \in (0, 1)$, we have
    \begin{align}
      \mathbb{P}_{n, m} \left\{ \bbfrob{ \dtmt \Phib_k }^2 - \bbfrob{ \dtmt \Phibt_k }^2 > \eps  \right\}  < \delta
      \label{eq:pac:semi}
    \end{align}
    for all $n > \bar{N}^{(4\alphah_k(r))}(\eps, \delta, r)$,
  where we have defined%
  \begin{align}
    \bar{N}^{(t)}(\eps, \delta, r) \defeq  \frac{t|\cX|(1+|\cY|)}{\eps} \log \frac{9t(1+r)|\cX||\cY|}{\eps} + \frac{t}{\eps}\log\frac{1}{\delta},
  \end{align}
  and where  $\alphah_k(r)$ is as defined in \defref{def:Gbh:alphah}.
  }
  \end{theorem}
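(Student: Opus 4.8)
The plan is to mirror the argument behind \thmref{thm:samples}, adapting each step to account for the fact that now the randomness comes from two independent i.i.d.\ sources: the $n$ labeled samples drawn from $P_{XY}$ and the $m = nr$ unlabeled samples drawn from $P_X$. The underlying empirical object is the pair $(\Ph_{XY}, Q_X)$, which lives in the product simplex $\simpXY \times \simpX$ of dimension $|\cX||\cY| + |\cX| = |\cX|(1+|\cY|)$; this is the source of the factor $|\cX|(1+|\cY|)$ in $\bar N^{(t)}$. First I would reduce the event $\{\bbfrob{\dtmt\Phib_k}^2 - \bbfrob{\dtmt\Phibt_k}^2 > \eps\}$ to a divergence condition on $(\Ph_{XY}, Q_X)$: using the perturbation analysis of \secref{sec:mpa} together with the semi-supervised analogues of \lemref{lem:perturb:B} and \lemref{lem:S1:nbhd} (which are developed in \appref{sec:appf} en route to \thmref{thm:exponent:semi}), the learning error is, to leading order, the quadratic form governed by $\Gbh_k(r)$, so for $\eps$ small enough the event is contained in
\begin{align*}
  \left\{(\Ph_{XY}, Q_X)\colon D(\Ph_{XY}\|P_{XY}) + r\, D(Q_X\|P_X) > \frac{\eps}{2\alphah_k(r)} - o(\eps)\right\},
\end{align*}
and in particular inside $\{D(\Ph_{XY}\|P_{XY}) + r\,D(Q_X\|P_X) > \eps/(4\alphah_k(r))\}$ once $\eps < \bar\eps_0$ for a suitable $\bar\eps_0(P_{XY}, r)$. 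This containment is exactly the non-asymptotic counterpart of the limit identity \eqref{eq:exponent:semi}; making the $o(\eps)$ term explicit and uniform over the relevant neighborhood is the step that determines $\bar\eps_0$.

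Next I would bound the probability of this divergence event. Writing $t = 4\alphah_k(r)$, a union-type bound over the two sources combined with the method-of-types estimate $\mathbb{P}_n\{D(\Ph_{XY}\|P_{XY}) > a\} \le (n+1)^{|\cX||\cY|} e^{-na}$ (Sanov, in its finite-$n$ form, see \cite{csiszar2004information}) and the analogous bound $\mathbb{P}_m\{D(Q_X\|P_X) > b\} \le (m+1)^{|\cX|} e^{-mb}$ gives, after splitting the budget $\eps/t$ between the two terms in proportion to their weights $1$ and $r$ (so $m r\,D(Q_X\|P_X)$ and $n D(\Ph_{XY}\|P_{XY})$ each get a comparable share and $n + m = n(1+r)$),
\begin{align*}
  \mathbb{P}_{n,m}\left\{D(\Ph_{XY}\|P_{XY}) + r\,D(Q_X\|P_X) > \frac{\eps}{t}\right\}
  \le (n+1)^{|\cX||\cY|}(m+1)^{|\cX|}\, e^{-n\eps/t}.
\end{align*}
Using $m+1 \le (1+r)(n+1)$ and $(n+1)^{|\cX|(1+|\cY|)} \le e^{n\eps/(2t)}$ for $n$ large — this is where the first term $\tfrac{t|\cX|(1+|\cY|)}{\eps}\log\tfrac{9t(1+r)|\cX||\cY|}{\eps}$ of $\bar N^{(t)}$ comes from, via the standard inequality $\log x \le \gamma x - \log\gamma - 1$ applied with $\gamma$ of order $\eps/t$ — reduces the bound to $e^{-n\eps/(2t)}$ up to the constant absorbed into the $9(1+r)$ factor. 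Requiring $e^{-n\eps/(2t)} < \delta$ then yields the second term $\tfrac{t}{\eps}\log\tfrac1\delta$, and combining the two thresholds gives $n > \bar N^{(t)}(\eps,\delta,r)$ as claimed.

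The main obstacle I anticipate is the first step: unlike the supervised case, the map from $(\Ph_{XY}, Q_X)$ to $\dtmtt$ factors through $\Pt_X = \tfrac1{1+r}\Ph_X + \tfrac{r}{1+r}Q_X$ and the conditional $\Ph_{Y|X}$, so the first-order perturbation of $\dtmtt$ is a specific linear image of the joint perturbation — precisely the content of the matrix $\Lbh(r)$ in \eqref{eq:Lbh:def}, which encodes how a perturbation of $Q_X$ propagates (scaled by $\sqrt r/(1+r)$) and how a perturbation of $\Ph_{XY}$ propagates (with the $\tfrac{r}{1+r}$ correction on the $X$-marginal). Getting the quadratic form and its spectral norm $\alphah_k(r)$ to line up with a clean, $\eps$-uniform upper bound on the learning error — rather than just the asymptotic exponent — requires carefully controlling the remainder in \lemref{lem:eig:k} uniformly over the neighborhood $\{D(\Ph_{XY}\|P_{XY}) + r D(Q_X\|P_X) \le \eps/\alphah_k(r)\}$, and checking that outside this neighborhood the divergence is automatically large enough that those configurations contribute negligibly. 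Once that uniform perturbation bound is in hand, the rest is the method-of-types bookkeeping sketched above, entirely parallel to \appref{sec:app:col:non-asym}.
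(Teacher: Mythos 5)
Your plan matches the paper's proof in \appref{sec:app:thm:non-asym:semi}: reduce to a lower bound on $\inf_{\Pt_{XY}\in\cSt_1(\eps)}\bigl[D(\Ph_{XY}\|P_{XY}) + r\,D(Q_X\|P_X)\bigr]$, apply the method of types on the pair $(T(\Ph_{XY}),T(Q_X))$, absorb the polynomial prefactors into part of the exponent, and solve for $n$. Two of your bookkeeping steps, however, are slightly off in ways that would prevent you from reaching the exact formula in the theorem.

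First, the phrase ``union-type bound over the two sources... after splitting the budget'' is the wrong mechanism. The paper does not union-bound the events $\{D(\Ph_{XY}\|P_{XY}) > a\}$ and $\{D(Q_X\|P_X) > b\}$; it directly multiplies the per-type-class estimates $\P_n\{T(\Ph_{XY})\}\leq e^{-nD(\Ph_{XY}\|P_{XY})}$ and $\P_m\{T(Q_X)\}\leq e^{-mD(Q_X\|P_X)}$, so each joint type pair in $\cSt_1(\eps)$ has probability at most $\exp\{-n[D(\Ph_{XY}\|P_{XY})+rD(Q_X\|P_X)]\}$, and the polynomial factor $(n+1)^{|\cX||\cY|}(m+1)^{|\cX|}$ arises solely from counting type pairs. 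A genuine budget-split union bound would lose an extra factor of $2$ in the exponent. The bound you write is correct, but the reasoning you give for it is not what delivers it.

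Second, the constant split: the paper takes $\bar\kappa = [3\alphah_k(r)]^{-1}$ and absorbs the polynomial $(2n)^{|\cX||\cY|}(n(1+r))^{|\cX|}$ into $\exp(\bar\kappa n\eps/4)$ — i.e.\ only a quarter of the exponent — leaving $\exp(-3\bar\kappa n\eps/4) = \exp(-n\eps/t)$ with $t = 4\alphah_k(r)$, whence $\exp(-n\eps/t)<\delta/\text{const}$ gives the second term $\frac{t}{\eps}\log\frac1\delta$. Your version starts from $\exp(-n\eps/t)$ already and absorbs $\exp(n\eps/(2t))$, leaving $\exp(-n\eps/(2t))$, which solves to $\frac{2t}{\eps}\log\frac1\delta$ — a factor of $2$ too large and inconsistent with the stated $\bar{N}^{(t)}$. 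To land exactly on the theorem, follow the paper's choice: begin from the sharper $\eps/(3\alphah_k(r))$ lower bound and split the exponent as $\tfrac34 + \tfrac14$, not $\tfrac12 + \tfrac12$. The rest of your outline — including the identification of the effective simplex dimension $|\cX|(1+|\cY|)$, the use of $m+1\leq(1+r)(n+1)$, and the role of $\Lbh(r)$ in the first-order perturbation — is correct.
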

  \begin{proof}
    See \appref{sec:app:thm:non-asym:semi}.
  \end{proof}

  Furthermore, the performance gain of estimating the maximal correlation functions with the aids of the unlabeled samples can be characterized by the following proposition.

\begin{proposition}
  \label{prop:alphah}
  For all $r \geq 0$, the $\alphah_k(r)$ as defined in \defref{def:Gbh:alphah} %
  is a non-increasing and convex function of $r$, and satisfies
    \begin{align}
      \frac{1}{1 + r} \alphah_k(0) \leq \alphah_k(r) \leq \frac{1}{1 + r} \alphah_k(0) + \frac{r}{1 + r} \alphah_k(\infty),
      \label{eq:alphah:bound}
    \end{align}
    where $\alphah_k(\infty)$ is defined as\footnote{The limit exists since $\alphah_k(r)$ is non-increasing and has a lower bound $0$.} $\alphah_k(\infty) \defeq \lim_{r\to +\infty} \alphah_k(r),$ and we have 
    $\alphah_k(0) = \alpha_k$ with $\alpha_k$ as defined in \defref{def:Gb:alpha}.%

\end{proposition}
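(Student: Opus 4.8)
The plan is to collapse the whole statement onto one identity for $\Lbh(r)\Lbh^{\T}(r)$ and then read off the four assertions from elementary facts about $\lambda_{\max}$. First I would record that, in the regime $\sigma_k>\sigma_{k+1}$ relevant here, every coefficient $(\sigma_i^2-\sigma_j^2)^{-1}$ in \eqref{eq:gb:k} is positive, so the bracketed matrix there is a nonnegative combination of rank-one terms and $\Gb_k$ is symmetric positive semidefinite; write $\Gb_k^{1/2}$ for its symmetric square root. Then, using $\Gbh_k(r)=\Lbh^{\T}(r)\Gb_k\Lbh(r)=\bigl(\Gb_k^{1/2}\Lbh(r)\bigr)^{\T}\bigl(\Gb_k^{1/2}\Lbh(r)\bigr)$ and $\spectral{A^{\T}A}=\spectral{A}^{2}=\spectral{AA^{\T}}$,
\[
\alphah_k(r)=\spectral{\Gbh_k(r)}=\bbspectral{\Gb_k^{1/2}\Lbh(r)}^{2}=\lambda_{\max}\!\bigl(\Gb_k^{1/2}\,\Lbh(r)\Lbh^{\T}(r)\,\Gb_k^{1/2}\bigr),
\]
and likewise $\alpha_k=\spectral{\Gb_k}=\lambda_{\max}(\Gb_k)$. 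So it suffices to identify the $(|\cX||\cY|)\times(|\cX||\cY|)$ matrix $\Lbh(r)\Lbh^{\T}(r)$.

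Next I would introduce $\bD$, the block-diagonal matrix whose $x$-th diagonal block is the rank-one orthogonal projection onto the unit vector $\bigl(\sqrt{P_{Y|X}(y|x)}\bigr)_{y\in\cY}$; thus $\bD^{\T}=\bD=\bD^{2}$. Reading off \eqref{eq:Lbh:def:1}--\eqref{eq:Lbh:def:2} with $\Lbh(r)=[\,\Lbh_1(r)\mid\Lbh_2(r)\,]$, one has $\Lbh_1(r)=\bI-\tfrac{r}{1+r}\bD$ and $\Lbh_2(r)\Lbh_2^{\T}(r)=\tfrac{r}{(1+r)^2}\bD$, so a one-line expansion using $\bD^{2}=\bD$ gives
\[
\Lbh(r)\Lbh^{\T}(r)=\Bigl(\bI-\tfrac{r}{1+r}\bD\Bigr)^{2}+\tfrac{r}{(1+r)^2}\bD=\bI-\tfrac{r}{1+r}\bD=\tfrac{1}{1+r}\bigl(\bI+r(\bI-\bD)\bigr)=(\bI+r\bD)^{-1}.
\]
Substituting into the previous display, with $\mathbf{N}\defeq\Gb_k^{1/2}(\bI-\bD)\Gb_k^{1/2}\succeq0$, yields the working forms
\[
\alphah_k(r)=\frac{1}{1+r}\,\lambda_{\max}\bigl(\Gb_k+r\mathbf{N}\bigr)=\lambda_{\max}\!\bigl(\Gb_k^{1/2}(\bI+r\bD)^{-1}\Gb_k^{1/2}\bigr)=\max_{v\neq 0}\frac{v^{\T}\Gb_k v}{\;\norm{v}^2+r\,v^{\T}\bD v\;},
\]
the last equality by the generalized Rayleigh quotient together with $v^{\T}(\bI+r\bD)v=\norm{v}^2+r\,v^{\T}\bD v$ (valid since $\bI+r\bD\succ0$).

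All four assertions then follow. Putting $r=0$ in the first form gives $\alphah_k(0)=\lambda_{\max}(\Gb_k)=\alpha_k$. For monotonicity, $r\mapsto\tfrac{r}{1+r}$ is non-decreasing, so $r_1\le r_2$ forces $\bI-\tfrac{r_1}{1+r_1}\bD\succeq\bI-\tfrac{r_2}{1+r_2}\bD$ (as $\bD\succeq0$); conjugating by $\Gb_k^{1/2}$ and using monotonicity of $\lambda_{\max}$ under the Loewner order gives $\alphah_k(r_1)\ge\alphah_k(r_2)$, hence $\alphah_k(\cdot)$ is non-increasing (so $\alphah_k(\infty)$ exists). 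For convexity, in the third form each map $r\mapsto \tfrac{v^{\T}\Gb_kv}{\norm{v}^2+r\,v^{\T}\bD v}$ has the shape $r\mapsto a/(b+cr)$ with $a=v^{\T}\Gb_kv\ge0$, $b=\norm{v}^2>0$, $c=v^{\T}\bD v\ge0$, which is convex on $[0,\infty)$; a pointwise supremum of convex functions is convex. For \eqref{eq:alphah:bound}: from $\Gb_k+r\mathbf{N}\succeq\Gb_k$ we get $\lambda_{\max}(\Gb_k+r\mathbf{N})\ge\lambda_{\max}(\Gb_k)=\alphah_k(0)$, which is the lower bound; the subadditivity of $\lambda_{\max}$ gives $\lambda_{\max}(\Gb_k+r\mathbf{N})\le\lambda_{\max}(\Gb_k)+r\,\lambda_{\max}(\mathbf{N})$, and writing $\alphah_k(r)=\tfrac{r}{1+r}\lambda_{\max}\bigl(\tfrac1r\Gb_k+\mathbf{N}\bigr)$ and letting $r\to\infty$ identifies $\lambda_{\max}(\mathbf{N})=\alphah_k(\infty)$, so dividing by $1+r$ gives the upper bound. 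The hard part is really just the bookkeeping behind the identity $\Lbh(r)\Lbh^{\T}(r)=\bI-\tfrac{r}{1+r}\bD$ — once the projection $\bD$ is recognized, the rest is Weyl's inequality, Loewner monotonicity of $\lambda_{\max}$, and the convexity of $a/(b+cr)$.
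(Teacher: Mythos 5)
Your proof is correct, and while it shares the paper's central algebraic identity $\Lbh(r)\Lbh^{\T}(r)=\Ib_{|\cX|\cdot|\cY|}-\tfrac{r}{1+r}\Mb\Mb^{\T}$ (your $\bD$ equals the paper's $\Mb\Mb^{\T}$), the arguments you build on top of it are genuinely different from the ones in the appendix. The paper proceeds by factoring the middle matrix as a perfect square $\Pb(r)^2$, composing $\Pb(r_2)=\Pb(r_1)\hat{\Pb}$ with a contraction $\hat{\Pb}$, and then invoking submultiplicativity of the spectral norm for monotonicity; for convexity it combines concavity of $w(r)=\tfrac{r}{1+r}$, the already-proved monotonicity of $\alphah_k$, and the triangle inequality; and the two sides of \eqref{eq:alphah:bound} are obtained separately by the triangle inequality and a second application of submultiplicativity. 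You instead push the identity one step further to the more suggestive forms $\Lbh(r)\Lbh^{\T}(r)=(\Ib+r\Mb\Mb^{\T})^{-1}$ and $\alphah_k(r)=\tfrac{1}{1+r}\lambda_{\max}\bigl(\Gb_k+r\,\Gb_k^{1/2}(\Ib-\Mb\Mb^{\T})\Gb_k^{1/2}\bigr)$, and then: monotonicity falls out of Loewner-order monotonicity of $\lambda_{\max}$ under conjugation; convexity follows because the generalized Rayleigh quotient $v^{\T}\Gb_kv / \bigl(\|v\|^2 + r\,v^{\T}\Mb\Mb^{\T}v\bigr)$ is, for each fixed $v$, of the convex form $a/(b+cr)$ and a pointwise sup of convex functions is convex; both inequalities in \eqref{eq:alphah:bound} come from Weyl's inequality applied to $\Gb_k+r\,\Gb_k^{1/2}(\Ib-\Mb\Mb^{\T})\Gb_k^{1/2}$, after you identify $\alphah_k(\infty)=\lambda_{\max}\bigl(\Gb_k^{1/2}(\Ib-\Mb\Mb^{\T})\Gb_k^{1/2}\bigr)$ by a scaling limit. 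Your route is more unified --- a single spectral decomposition $\tfrac{1}{1+r}\lambda_{\max}(\Gb_k+r\mathbf{N})$ delivers all four claims --- and avoids the paper's auxiliary factorization $\Pb(r)$ entirely; the one place you compress is the equality $\lambda_{\max}\bigl(\Gb_k^{1/2}(\Ib+r\Mb\Mb^{\T})^{-1}\Gb_k^{1/2}\bigr)=\max_{v\ne0} v^{\T}\Gb_k v / v^{\T}(\Ib+r\Mb\Mb^{\T})v$, which rests on the standard fact that $AB$ and $BA$ have the same nonzero spectrum (so $\Gb_k^{1/2}\Bb^{-1}\Gb_k^{1/2}$ and $\Bb^{-1/2}\Gb_k\Bb^{-1/2}$ share eigenvalues), and it would be worth stating that explicitly.
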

\begin{proof}
  See \appref{sec:apph}.
\end{proof}

From Proposition \ref{prop:alphah}, the error exponent $\Erb_k(r) = [2\alphah_k(r)]^{-1}$ of semi-supervised learning is a non-decreasing function of $r$. Thus, with more unlabeled data samples used to train maximal correlation functions, we can obtain better performance. Moreover, it follows immediately from the first inequality of \eqref{eq:alphah:bound} that
\begin{align}
  \Erb_k(r) \leq (1 + r)\Er_k,
  \label{eq:Erb:bound}
\end{align}
where $(1 + r)\Er_k$ 
can be interpreted as the error exponent in the case where we replace all $nr$ unlabeled data samples with labeled data samples and obtain $n(1 + r)$ labeled samples of $(X, Y)$. Therefore, the upper bound \eqref{eq:Erb:bound} simply implies that the labeled data is generally more useful in estimating the maximal correlation functions. However, this upper bound is achievable for certain cases, where the unlabeled data can be as useful as the labeled data, as illustrated in the following proposition (cf. \propref{eg:1}).

\begin{proposition}
  \label{eg:1:semi}  
  If $d = |\cX| \leq |\cY|$, $\sigma_{d - 1} > 0$, and $k = d - 1$%
  , then we have $\alphah_k(r) = \frac{\alpha_k}{1 + r} = \frac{\sigma_1^2}{4(1 + r)}$, and thus
  \begin{align*}
    \Erb_k(r) = (1 + r)\Er_k =  \frac{2(1+r)}{\sigma_1^2}.
  \end{align*}
\end{proposition}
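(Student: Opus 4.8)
The plan is to compute $\alphah_k(r)$ directly from \defref{def:Gbh:alphah} by exploiting the special structure that arises when $k = d-1$, and then combine this with \propref{eg:1}. Recall from \propref{eg:1} that under the hypotheses $d = |\cX| \le |\cY|$, $\sigma_{d-1} > 0$, and $k = d-1$, we have $\alpha_k = \frac{\sigma_1^2}{4}$, and moreover (from its proof in \appref{sec:appc}) the matrix $\Gb_k$ has a particularly simple form: the only index pair $(i,j)$ with $1 \le i \le k$ and $k+1 \le j \le d$ is $j = d$, and $\sigma_d = 0$, so $\Gb_k = \Lb^\T\bigl(\sum_{i=1}^{d-1} \sigma_i^{-2}\,\thetab_{id}\thetab_{id}^\T\bigr)\Lb$. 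I would first re-examine that argument to extract the explicit eigenvector of $\Gb_k$ achieving the spectral norm $\sigma_1^2/4$, since I will need to track what $\Lbh(r)$ does to it.

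Next I would analyze $\Gbh_k(r) = \Lbh^\T(r)\,\Gb_k\,\Lbh(r)$. The key observation is that $\Lbh(r)$, as defined in \eqref{eq:Lbh:def}, factors the perturbation of $\dtmtt$ in terms of the independent perturbations of the labeled joint distribution and the unlabeled marginal; the block structure in \eqref{eq:Lbh:def:1}--\eqref{eq:Lbh:def:2} reflects that the conditional $P_{Y|X}$ is only estimated from labeled data while the marginal $P_X$ gets the benefit of all $n(1+r)$ samples. For the case $k = d-1$, I expect that the relevant quadratic form collapses because $\Gb_k$ annihilates precisely the directions corresponding to perturbations that do not change the top $d-1$ singular subspace — and the extra averaging over unlabeled samples scales the surviving direction by exactly $\frac{1}{1+r}$. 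Concretely, I would show $\spectral{\Gbh_k(r)} = \frac{1}{1+r}\spectral{\Gb_k}$ by verifying that the maximizing direction for $\Gb_k$ pulls back under $\Lbh(r)$ to a direction whose norm is scaled by $\frac{1}{1+r}$, and that no other direction does better — the latter using \propref{prop:alphah}, whose first inequality $\alphah_k(r) \ge \frac{1}{1+r}\alphah_k(0)$ already gives the lower bound, so only the matching upper bound $\alphah_k(r) \le \frac{1}{1+r}\alpha_k$ needs a direct argument.

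Then the conclusion is immediate: $\alphah_k(r) = \frac{\alpha_k}{1+r} = \frac{\sigma_1^2}{4(1+r)}$, and by \thmref{thm:exponent:semi}, $\Erb_k(r) = \frac{1}{2\alphah_k(r)} = \frac{2(1+r)}{\sigma_1^2} = (1+r)\Er_k$, where the last equality uses $\Er_k = \frac{2}{\sigma_1^2}$ from \propref{eg:1}. The main obstacle I anticipate is the explicit computation of how $\Lbh(r)$ acts on the extremal eigenvector of $\Gb_k$: this requires carrying the Kronecker-product bookkeeping in $\thetab_{id}$ and $\Lb$ through the two-block matrix $\Lbh(r)$, and checking that the "labeled-only" block \eqref{eq:Lbh:def:1} and the "unlabeled" block \eqref{eq:Lbh:def:2} combine to give a clean $\frac{1}{1+r}$ scaling rather than a messier $r$-dependence. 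I would try to sidestep the worst of this by using the upper bound in \propref{prop:alphah} together with a single well-chosen test direction achieving $\frac{1}{1+r}\alpha_k$, rather than diagonalizing $\Gbh_k(r)$ in full generality.
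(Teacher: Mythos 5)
Your overall plan — extract the structure of $\Gb_k$ from the proof of \propref{eg:1}, show $\spectral{\Gbh_k(r)}$ scales by $\frac{1}{1+r}$, then appeal to \thmref{thm:exponent:semi} — is exactly the paper's plan, and invoking the first inequality of \propref{prop:alphah} for the lower bound $\alphah_k(r)\geq\frac{1}{1+r}\alpha_k$ is a sensible shortcut. However, your proposed route to the matching upper bound has a genuine gap. A test direction achieving $\frac{1}{1+r}\alpha_k$ only certifies a \emph{lower} bound on the spectral norm (redundant with what \propref{prop:alphah} already gives), and the upper bound in \propref{prop:alphah} is $\alphah_k(r)\leq\frac{1}{1+r}\alpha_k+\frac{r}{1+r}\alphah_k(\infty)$, which does not collapse to $\frac{1}{1+r}\alpha_k$ unless you additionally show $\alphah_k(\infty)=0$ in this case. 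You never address that, so as written the sandwich does not close. The missing step is in fact easy with the structure you already isolated: since $\Gb_k=\frac14\sum_{i=1}^{d-1}\sigma_i^2(\Mb\phib_i)(\Mb\phib_i)^{\T}$, the range of $\Gb_k$ (hence of $\Gb_k^{1/2}$) lies in the range of $\Mb$, so $(\Ib-\Mb\Mb^{\T})\Gb_k^{1/2}=\mathbf{0}$ and therefore $\alphah_k(\infty)=\bbspectral{\Gb_k^{1/2}(\Ib-\Mb\Mb^{\T})\Gb_k^{1/2}}=0$; adding this one line makes your argument complete.

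For comparison, the paper sidesteps the sandwich and diagonalizes $\Gbh_k(r)$ directly: it sets $\Mbh(r)\defeq\sqrt{1+r}\,\Lbh^{\T}(r)\Mb$, verifies $\Mbh^{\T}(r)\Mbh(r)=\Ib_d$ (using $\Lbh(r)\Lbh^{\T}(r)=\Ib-\frac{r}{1+r}\Mb\Mb^{\T}$ and $\Mb^{\T}\Mb=\Ib_d$), and thereby reads $\Gbh_k(r)=\frac{1}{4(1+r)}\sum_{i=1}^{d-1}\sigma_i^2(\Mbh(r)\phib_i)(\Mbh(r)\phib_i)^{\T}$ as a literal eigendecomposition with eigenvalues $\frac{\sigma_i^2}{4(1+r)}$. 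The ``Kronecker bookkeeping'' you feared is really just this two-line orthonormality check, and it yields the full eigensystem rather than merely the spectral norm.
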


\begin{proof}
  See \appref{sec:app:eg1:semi}.
\end{proof}

In \propref{eg:1} and \propref{eg:1:semi}, we are interested in learning the entire correlation structure between $X$ and $Y$, i.e., $k = d - 1$. In such cases, learning top $d-1$ singular vectors $\Phib_k = [\phib_1, \dots, \phib_{d-1}]$ is equivalent to learning the last singular vector
\begin{align*}
  \phib_d = \left[\sqrt{P_X(1)}, \dots, \sqrt{P_X(d)}\right]^{\T},
\end{align*}
which depends only on the marginal distribution $P_X$. Hence, the unlabeled data samples of $X$ is as useful as the labeled data samples of $(X, Y)$, and thus we can achieve the upper bound of \eqref{eq:Erb:bound}.

\reply{
We then introduce the proof of \thmref{thm:exponent}, which will again make use of the perturbation analyses established in \secref{sec:mpa}. To start, we define the sets of the joint distributions $\Pt_{XY}$ as follows.}
\begin{definition} For all $\eps > 0$, the set $\cSt_1(\eps)$ is defined as
  \label{def:nbhd:S1:semi}
  \begin{align}
    \cSt_1(\eps) \triangleq \left\{ \Pt_{XY}\colon \bbfrob{\dtmt \Phib_k}^2 - \bbfrob{\dtmt \Phibt_k }^2 > \eps  \right\},
    \label{eq:s1_sup:semi}
  \end{align}
  where $\Phibt_k$ corresponds to the top $k$ right singular vectors of $\dtmtt$ as defined in~\eqref{eq:dtm:bar}. Moreover, the set $\nbhdbar(\eps)$ is defined as
  \begin{align}    
    \nbhdbar(\eps) \defeq \left\{\Pt_{XY}\colon
    D(\Ph_{XY}\| P_{XY})+ r D(Q_{X}\| P_{X}) \leq \frac{\eps}{\alphah_k(r)}\right\},
    \label{eq:nbhd:bar}
  \end{align}
  where for given $\Ph_{XY}$ and $Q_X$, the joint distribution $\Pt_{XY}$ is as defined in \eqref{eq:PtXY}.
\end{definition}

Furthermore, for each $\Pt_{XY} \in \nbhdbar(\eps)$ with the corresponding empirical distributions $\Ph_{XY}$ and $Q_X$, we introduce the one-to-one correspondences %
$\Ph_{XY} \leftrightarrow \imate$ and $Q_{X} \leftrightarrow \imatem$, where $\imate(y, x)$ is as defined in \eqref{eq:imate:def}, and where, similarly, we have defined %
\begin{align}\label{eq:Q}
  \imatem(x) \defeq \frac{Q_{X}(x) - P_{X}(x)}{ \sqrt{\eps P_{X}(x)}}.
\end{align}

Moreover, we define $\imatm$ as the $|\cX|$-dimensional vector with the $x$-th entry being $\imatem(x)$, and define the $|\cY| \times |\cX|$ matrix $\dtmdmatt$ with the entries $\dtmdmatet(y,x)$ being %
\begin{align}
  \dtmdmatet(y, x) &\defeq \frac{\sqrt{P_{XY}(x,y)}}{\sqrt{P_X(x)P_Y(y)}} \imates(y, x)%
    -  \frac{P_{XY}(x, y) + P_X(x)P_Y(y)}{2\sqrt{P_X(x)P_Y(y)}}\notag\\
  &\qquad \cdot\left[\frac{1}{P_X(x)}  \sum_{y' \in \cY} \sqrt{P_{XY}(x, y')} \imates(y', x)%
    +\frac{1}{P_Y(y)} \sum_{x' \in \cX}\sqrt{P_{XY}(x', y)}\imates(y, x') \right], \label{eq:theta:semi}
\end{align}
where we have defined
\begin{align}
  \label{eq:imates}%
      \imates(y, x)%
  &\defeq \imatej(y, x) + \frac{r}{1 + r}\sqrt{P_{Y|X}(y|x)}%
    \cdot\left[\imatem(x) - \sum_{y'\in \cY}\sqrt{P_{Y|X}(y'|x)}\imatej(y', x)\right]. %
\end{align}

Then, similar to \lemref{lem:perturb:B}, the matrix $\dtmtt$ estimated from data samples can also be represented in a perturbation form, as the following lemma expresses.%

  \begin{lemma}
    \label{lem:perturb:B:semi}
    For given $P_{XY}$ and $r \geq 0$, there exists a constant $\bar{C} > 0$, such that for all $\eps > 0$ and $\Pt_{XY} \in \nbhdbar(\eps)$, we have $\bbfrob{\dtmdmatt} \leq \bar{C}$ and
    \begin{align}
      \label{eq:perturb:B:semi}
      \dtmtt  = \dtmt + \sqrt{\eps}\, \dtmdmatt + o\left(\sqrt{\eps}\right).
    \end{align}    
  \end{lemma}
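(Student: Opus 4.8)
The plan is to mirror the proof of \lemref{lem:perturb:B} (which handles the supervised case) and reduce the semi-supervised statement to it. The essential observation is that $\dtmtt$ is obtained from $\Pt_{XY}$ exactly as $\dtmh$ is obtained from $\Ph_{XY}$: both are the CDM of the respective joint distribution around the true one. So the whole content is to show that the perturbation of $\Pt_{XY}$ away from $P_{XY}$, driven by the two independent perturbations $\imate$ (from $\Ph_{XY}$) and $\imatem$ (from $Q_X$), has the effective first-order form encoded by $\imates$ in \eqref{eq:imates}, and that this perturbation stays uniformly bounded over $\nbhdbar(\eps)$.

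First I would make the boundedness precise. For $\Pt_{XY} \in \nbhdbar(\eps)$ we have $D(\Ph_{XY}\|P_{XY}) + rD(Q_X\|P_X) \le \eps/\alphah_k(r)$, so both KL divergences are $O(\eps)$. By Pinsker's inequality (or the standard quadratic lower bound on KL divergence near a strictly positive distribution, using $P_{XY}(x,y)>0$ and $P_X(x)>0$ for all $x,y$), this forces $\|\Ph_{XY}-P_{XY}\|_\infty = O(\sqrt\eps)$ and $\|Q_X-P_X\|_\infty = O(\sqrt\eps)$, hence $\imate$ and $\imatem$ are bounded by an absolute constant (depending on $P_{XY}$, $r$) uniformly over $\nbhdbar(\eps)$. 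Since $\imates$ in \eqref{eq:imates} is a fixed linear combination of $\imate$ and $\imatem$ with coefficients depending only on $P_{XY},r$, it is likewise uniformly bounded; then $\dtmdmatt$, being a fixed linear (matrix-valued) function of $\imates$ via \eqref{eq:theta:semi}, satisfies $\bbfrob{\dtmdmatt} \le \bar C$ for some $\bar C = \bar C(P_{XY},r)$.

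Next I would establish the expansion \eqref{eq:perturb:B:semi}. Write $\Pt_X = P_X + \sqrt\eps\,\mu_X + o(\sqrt\eps)$ where $\mu_X(x) = \sqrt{P_X(x)}\,\imates^X(x)$ is determined by combining \eqref{eq:Px:semi}, the definition $\Ph_X(x) = P_X(x) + \sqrt\eps\sqrt{P_X(x)}\sum_y\sqrt{P_{Y|X}(y|x)}\,\imate(y,x) \cdot(\ldots)$ — more directly, $\Ph_X(x)-P_X(x) = \sum_y(\Ph_{XY}(x,y)-P_{XY}(x,y)) = \sqrt\eps\sum_y\sqrt{P_{XY}(x,y)}\,\imate(y,x)$ — together with $Q_X - P_X = \sqrt\eps\sqrt{P_X}\,\imatem$. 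Then from \eqref{eq:PtXY}, $\Pt_{XY}(x,y) = \Ph_{Y|X}(y|x)\Pt_X(x)$, and using $\Ph_{Y|X}(y|x) = P_{Y|X}(y|x) + O(\sqrt\eps)$ one gets $\Pt_{XY}(x,y) - P_{XY}(x,y) = \sqrt\eps\sqrt{P_{XY}(x,y)}\,\imates(y,x) + o(\sqrt\eps)$ after collecting terms; the coefficient \eqref{eq:imates} is exactly the bookkeeping of "keep the conditional $\Ph_{Y|X}$ perturbation, replace the marginal perturbation of $X$ by its weighted combination with $\imatem$." Once $\Pt_{XY}$ has this first-order form in the ``$\imates$ direction'', substituting into the definition \eqref{eq:dtm:bar} of $\dtmett$ and Taylor-expanding the reciprocal square roots $1/\sqrt{\Pt_X(x)\Pt_Y(y)}$ around $P_X(x)P_Y(y)$ — exactly the computation already carried out for $\dtmh$ in the proof of \lemref{lem:perturb:B}, which produced \eqref{eq:theta} from \eqref{eq:imate:def} — yields $\dtmett(y,x) = \Bt(y,x) + \sqrt\eps\,\dtmdmatet(y,x) + o(\sqrt\eps)$ with $\dtmdmatet$ as in \eqref{eq:theta:semi}. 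This is where I would simply invoke the supervised computation with $\imate$ replaced by $\imates$, since \eqref{eq:theta:semi} is literally \eqref{eq:theta} with that substitution.

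The main obstacle, and the only place needing genuine care, is making the ``$o(\sqrt\eps)$'' uniform over $\nbhdbar(\eps)$ rather than merely pointwise: the Taylor remainders and the cross-terms (e.g.\ the product of the $O(\sqrt\eps)$ errors in $\Ph_{Y|X}$ and in $\Pt_X$, and the second-order terms in expanding $1/\sqrt{\Pt_X\Pt_Y}$) must be controlled by a single $o(\sqrt\eps)$ bound valid for all admissible $\Ph_{XY},Q_X$ simultaneously. This follows from the uniform boundedness of $\imate,\imatem$ established in the first step, which confines all perturbations to a compact set on which the analytic maps involved have bounded derivatives; concretely one writes each remainder as $O(\eps)$ with an absolute constant, which is $o(\sqrt\eps)$. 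I would also need $\Pt_X(x)>0$ and $\Pt_Y(y)>0$ for small enough $\eps$ so that $\dtmtt$ is well defined, which again is immediate from $\|\Pt_X-P_X\|_\infty, \|\Pt_Y-P_Y\|_\infty = O(\sqrt\eps)$ and strict positivity of $P_X,P_Y$. Assembling these gives \eqref{eq:perturb:B:semi} and completes the proof.
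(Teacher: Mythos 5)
Your proof is correct and follows essentially the same route as the paper's: bound $\imatej$ and $\imatem$ uniformly via the KL constraint and a Pinsker-type argument, propagate these bounds to $\imates$ and then to $\dtmdmatt$ as fixed linear maps, and finally Taylor-expand $\Pt_X$, $\Ph_{Y|X}$, and $\Pt_{XY}$ to first order and feed $\imates$ into the supervised CDM computation (cf.\ \eqref{eq:marg_pert_X}--\eqref{eq:dtm:diff}) to arrive at \eqref{eq:theta:semi}. Your emphasis on making the $o(\sqrt\eps)$ uniform over $\nbhdbar(\eps)$ is a point the paper leaves implicit, but the underlying argument is the same.
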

  \begin{proof}
    See \appref{sec:app:dtmtt}.
  \end{proof}

  In addition, the following lemma characterizing the error exponent will be useful in our analysis, and can be obtained using Sanov's theorem. 
  \begin{lemma}  
    \label{lem:exponent:kl:semi}
    Given $\eps > 0$, we have
    \begin{align} 
          \mathbb{P}_{n,m} \left\{ \bbfrob{ \dtmt \Phib_k }^2 - \bbfrob{ \dtmt \Phibt_k }^2 > \eps  \right\}%
        &\doteq \exp \biggl\{ -n \cdot \inf_{\Pt_{XY} \in \cSt_1(\eps)} \Bigl[D(\Ph_{XY}\| P_{XY})+ r D(Q_{X}\| P_{X}) \Bigr] \biggr\},
          \label{eq:exponent:kl:semi}
    \end{align}
    where $\cSt_1(\eps)$ is as defined in \eqref{eq:s1_sup:semi}.
  \end{lemma}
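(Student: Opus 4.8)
The plan is to obtain \eqref{eq:exponent:kl:semi} as the semi-supervised analogue of \eqref{eq:scnn}, by applying Sanov's theorem \cite{csiszar2004information} to the \emph{joint} empirical distribution of the labeled and unlabeled samples. First I would observe that the event $\{\bbfrob{\dtmt\Phib_k}^2 - \bbfrob{\dtmt\Phibt_k}^2 > \eps\}$ is measurable with respect to the pair $(\Ph_{XY}, Q_X)$ alone: by \eqref{eq:Px:semi} and \eqref{eq:PtXY}, the distribution $\Pt_{XY}$ — hence the matrix $\dtmtt$ of \eqref{eq:dtm:bar} and its top-$k$ right singular subspace $\Phibt_k$ — is a deterministic function of $(\Ph_{XY}, Q_X)$, and this map is continuous wherever the induced marginals $\Pt_X, \Pt_Y$ stay bounded away from $0$, which holds on a fixed neighborhood of $(P_{XY}, P_X)$ since $P_X, P_Y > 0$. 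Thus $\cSt_1(\eps)$, viewed through this map, pulls back to an open set $\cA$ in the space of pairs of distributions; note that a single $\Pt_{XY}$ is consistent with many pairs $(\Ph_{XY}, Q_X)$ (only the mixture $\Pt_X$ is pinned down by $\Pt_{XY}$), so that the infimum written as $\inf_{\Pt_{XY}\in\cSt_1(\eps)}$ in \eqref{eq:exponent:kl:semi} is really the infimum over all such pairs.

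Second, since the $n$ labeled samples are i.i.d.\ from $P_{XY}$, the $m = \lfloor nr\rfloor$ unlabeled samples are i.i.d.\ from $P_X$, and the two collections are independent, the pair of types $(\Ph_{XY}, Q_X)$ satisfies a large-deviation principle with an additive rate function: by the method of types there are only polynomially many $n$-types and $m$-types, the probability of a given pair of types $(\mu,\nu)$ is $\doteq \exp\{-nD(\mu\|P_{XY}) - mD(\nu\|P_X)\}$, and since $m/n = \lfloor nr\rfloor/n \to r$ this exponent equals $n[D(\mu\|P_{XY}) + r\,D(\nu\|P_X)] + o(n)$; summing over the (polynomially many) types in $\cA$ and keeping the dominant term gives
\begin{align*}
  \mathbb{P}_{n,m}\bigl\{(\Ph_{XY}, Q_X) \in \cA\bigr\} \doteq \exp\Bigl\{-n \inf_{(\mu,\nu)\in\cA}\bigl[D(\mu\|P_{XY}) + r\,D(\nu\|P_X)\bigr]\Bigr\}.
\end{align*}
Taking $\cA$ as above, the right-hand exponent is exactly $\inf_{\Pt_{XY}\in\cSt_1(\eps)}[D(\Ph_{XY}\|P_{XY}) + rD(Q_X\|P_X)]$, which yields \eqref{eq:exponent:kl:semi}.

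The step I expect to be the main obstacle is the regularity argument making the Sanov upper and lower exponents coincide, i.e.\ $\inf_{\overline{\cA}} = \inf_{\cA^{\circ}}$ of the additive rate. Openness of $\cA$ comes from continuity of $(\Ph_{XY}, Q_X)\mapsto\Pt_{XY}$ together with the strict inequality defining $\cSt_1(\eps)$; the only place where $\Pt_{XY}\mapsto\bbfrob{\dtmt\Phibt_k}^2$ can fail to be continuous is at joint distributions whose $\dtmtt$ has a tie between its $k$-th and $(k+1)$-th singular values (where the top-$k$ subspace is not unique), but these form a set of lower dimension, so they lie outside $\cA^{\circ}$ and do not affect $\inf_{\overline{\cA}}$ of the continuous divergence functional $D(\cdot\|P_{XY}) + rD(\cdot\|P_X)$. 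Hence the two Sanov bounds agree and \eqref{eq:exponent:kl:semi} follows; the replacement of $\lfloor nr\rfloor$ by $nr$ in the normalization is immaterial to the exponent.
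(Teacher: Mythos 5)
Your proposal is correct and follows essentially the same route as the paper: split the event by the pair of types $(\Ph_{XY}, Q_X)$, use independence of the two sample sets, bound each type-class probability by its exponential rate with $m \doteq nr$, and invoke the polynomial type count plus the Laplace principle to extract the infimum. Your added remarks — that the infimum is really over pairs $(\Ph_{XY}, Q_X)$ inducing $\Pt_{XY}\in\cSt_1(\eps)$, and that the singular-value-tie set is lower-dimensional so the Sanov upper and lower exponents agree — make explicit points the paper leaves implicit, but they do not change the argument.
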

  \begin{proof}
    See \appref{sec:app:sanov:semi}.
  \end{proof}

  From \eqref{eq:exponent:kl:semi}, for given $\eps > 0$, the error exponent is determined by the infimum of a weighted sum of K-L divergences. Furthermore, if we restrict our attention to the distributions $\Pt_{XY} \in \nbhdbar(\eps)$, the following result provides a characterization of this infimum in the small $\eps$ regime, and will also be useful in our analysis.

  \begin{lemma}
    \label{lem:S1t:nbhd}
    For $\cSt_1(\eps)$ and $\nbhdbar(\eps)$ as defined in \defref{def:nbhd:S1:semi}, we have  
    \begin{align}
      -\lim_{\eps \rightarrow 0^+} \frac{1}{\eps} \inf_{\Pt_{XY} \in \cSt_1(\eps) \cap \nbhdbar(\eps)} \left[D\bigl(\Ph_{XY}\big\| P_{XY}\bigr) + r D(Q_X \| P_X)\right] = \frac{1}{2\alphah_k(r)}.
      \label{eq:S1t:nbhd}
    \end{align}    

  \end{lemma}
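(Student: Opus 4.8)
The plan is to mirror the argument used for Lemma~\ref{lem:S1:nbhd} in the supervised case, but now working with the pair of perturbations $(\imate, \imatem)$ and the weighted divergence $D(\Ph_{XY}\|P_{XY}) + rD(Q_X\|P_X)$. First I would use the local expansion of the relative entropy: for $\Ph_{XY} \leftrightarrow \imate$ and $Q_X \leftrightarrow \imatem$ as in \eqref{eq:imate:def} and \eqref{eq:Q}, a standard second-order Taylor expansion gives $D(\Ph_{XY}\|P_{XY}) = \tfrac{\eps}{2}\|\imat\|_\F^2 + o(\eps)$ and $D(Q_X\|P_X) = \tfrac{\eps}{2}\|\imatm\|^2 + o(\eps)$, so the weighted divergence is $\tfrac{\eps}{2}\bigl(\|\imat\|_\F^2 + r\|\imatm\|^2\bigr) + o(\eps)$. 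Since we are restricting to $\Pt_{XY}\in\nbhdbar(\eps)$, the weighted divergence is $O(\eps)$, hence $\|\imat\|_\F$ and $\|\imatm\|$ are $O(1)$, which is what makes the expansions below legitimate.

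Next I would translate the event $\Pt_{XY}\in\cSt_1(\eps)$ into a condition on $(\imate,\imatem)$ using the perturbation form of $\dtmtt$ from \lemref{lem:perturb:B:semi}, namely $\dtmtt = \dtmt + \sqrt\eps\,\dtmdmatt + o(\sqrt\eps)$, together with the matrix perturbation expansion of \lemref{lem:eig:k} applied to $\Ab = \dtmt^\T\dtmt$ and $\Ab(\tau)$ induced by the perturbation $\sqrt\eps\,\dtmdmatt$ (so effectively $\tau = \sqrt\eps$). This yields
\begin{align*}
  \bbfrob{\dtmt\Phib_k}^2 - \bbfrob{\dtmt\Phibt_k}^2 = \eps\sum_{i=1}^k\sum_{j=k+1}^d \frac{\bigl(\vb_i^\T\Ab'\vb_j\bigr)^2}{\sigma_i^2 - \sigma_j^2} + o(\eps),
\end{align*}
and the quadratic form on the right can be rewritten, exactly as in the derivation of $\Gb_k$ in \defref{def:Gb:alpha}, as $\eps\cdot\vec^\T(\dtmdmatt)\bigl(\sum_{i,j}\thetab_{ij}\thetab_{ij}^\T/(\sigma_i^2-\sigma_j^2)\bigr)\vec(\dtmdmatt)$. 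The key algebraic step is then to express $\vec(\dtmdmatt)$ as a linear image of the stacked perturbation vector $(\vec(\imat),\imatm)$: comparing \eqref{eq:theta:semi}--\eqref{eq:imates} with the definition \eqref{eq:Lbh:def} of $\Lbh(r)$ shows precisely that $\vec(\dtmdmatt) = \Lb\,\Lbh(r)\,(\vec(\imat);\imatm)$, where $\Lb$ is the matrix of \eqref{eq:L:def} that converts a perturbation of $P_{XY}$ into the corresponding perturbation of $\dtmt$. Consequently the leading term of $\bbfrob{\dtmt\Phib_k}^2 - \bbfrob{\dtmt\Phibt_k}^2$ equals $\eps\cdot w^\T \Gbh_k(r)\, w$, where $w \defeq (\vec(\imat);\imatm)$ and $\Gbh_k(r) = \Lbh^\T(r)\Gb_k\Lbh(r)$ as in \eqref{eq:gbh:def}.

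Finally I would carry out the optimization. Dividing through by $\eps$, equation \eqref{eq:S1t:nbhd} reduces in the limit $\eps\to 0^+$ to the statement
\begin{align*}
  -\inf\Bigl\{\tfrac12\bigl(\|\imat\|_\F^2 + r\|\imatm\|^2\bigr)\;:\; w^\T\Gbh_k(r)\,w > 1,\ \tfrac12(\|\imat\|_\F^2+r\|\imatm\|^2)\le \tfrac1{\alphah_k(r)}\Bigr\} = \frac{1}{2\alphah_k(r)}.
\end{align*}
Writing $\|w\|_{(r)}^2 \defeq \|\imat\|_\F^2 + r\|\imatm\|^2$, this is a Rayleigh-quotient problem: minimizing $\tfrac12\|w\|_{(r)}^2$ subject to $w^\T\Gbh_k(r)w > 1$. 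Since $\alphah_k(r)$ is defined as the spectral norm of $\Gbh_k(r)$ with respect to the $\|\cdot\|_{(r)}$ inner product (this is what \eqref{eq:Lbh:def} builds in — one should check that the weighting by $r$ in the norm matches the way $\Lbh(r)$ is normalized), the optimal $w$ is an eigenvector attaining that norm, scaled so $w^\T\Gbh_k(r)w = 1$, giving $\tfrac12\|w\|_{(r)}^2 = \tfrac1{2\alphah_k(r)}$; one checks this is consistent with the $\nbhdbar(\eps)$ constraint, so the intersection does not cut off the optimum. The main obstacle, as in the supervised case, is the bookkeeping: carefully verifying the identity $\vec(\dtmdmatt) = \Lb\Lbh(r)\,w$ entry by entry from \eqref{eq:theta:semi}--\eqref{eq:imates} and \eqref{eq:Lbh:def}, and confirming that the divergence weighting $r$ is exactly the quadratic weighting under which $\alphah_k(r)$ is the operator norm of $\Gbh_k(r)$ — a technically routine but delicate matching of normalizations that I would relegate to the appendix (\appref{sec:app:S1t:nbhd}).
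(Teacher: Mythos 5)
Your overall strategy is the same as the paper's: Taylor-expand the weighted divergence to a quadratic $\tfrac{\eps}{2}(\|\imatj\|_\F^2 + r\|\imatm\|^2) + o(\eps)$, use \lemref{lem:perturb:B:semi} together with \lemref{lem:eig:k} to express the learning error as $\eps$ times a quadratic form in $\vec(\dtmdmatt)$, push this through the linear maps $\Lb$ and $\Lbh(r)$ to get a quadratic form in the stacked perturbation, and then solve a Rayleigh-quotient problem. The paper additionally runs this through intermediate sets $\cSt_2^{(1\pm t)}(\eps)$ and a sandwich argument to handle the $o(\eps)$ terms rigorously, which you gloss over but clearly intend by analogy to \lemref{lem:S1:nbhd}.

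The one concrete slip is the normalization of the stacked vector, which you half-acknowledge but do not resolve, and which as written breaks the final step. The correct identity coming out of \eqref{eq:theta:semi}--\eqref{eq:imates} and \eqref{eq:Lbh:def} is $\vec(\imats) = \Lbh(r)\convec$ with $\convec \defeq \bigl(\vec(\imatj);\ \sqrt{r}\,\imatm\bigr)$ (this is exactly \eqref{eq:convec:def}), not $\Lbh(r)\bigl(\vec(\imatj);\imatm\bigr)$ — the last $|\cX|$ columns of $\Lbh(r)$ carry a $\tfrac{\sqrt{r}}{1+r}$ factor and must multiply $\sqrt{r}\imatm$ to produce the $\tfrac{r}{1+r}\imatem(x)$ term in \eqref{eq:imates}. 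Because the $\sqrt{r}$ is absorbed into $\convec$, both the objective $\convec^\T\Gbh_k(r)\convec$ and the constraint $\|\convec\|^2 = \|\imatj\|_\F^2 + r\|\imatm\|^2 \leq 1$ become \emph{ordinary Euclidean} objects, and $\alphah_k(r)$ is just the usual spectral norm of $\Gbh_k(r)$ as \defref{def:Gbh:alphah} states — not a spectral norm with respect to an $r$-weighted inner product, as you conjecture. With your unscaled $w = (\vec(\imatj);\imatm)$ the quadratic form would be $w^\T\Db^{1/2}\Gbh_k(r)\Db^{1/2}w$ with $\Db = \diag(\Ib, r\Ib)$, not $w^\T\Gbh_k(r)w$; that and the weighted constraint $w^\T\Db w\leq 1$ do ultimately give the same maximum after the substitution $\convec = \Db^{1/2}w$, but the form you actually wrote down ($w^\T\Gbh_k(r)w$ with the $r$-weighted constraint) is inconsistent and would give $\spectral{\Db^{-1/2}\Gbh_k(r)\Db^{-1/2}}$ rather than $\alphah_k(r)$. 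So the fix is simply to carry the $\sqrt{r}$ into the stacked vector from the start; once you do, the argument closes exactly as in the paper. You should also, as the paper does, verify that the optimal $(\imatj^*,\imatm^*)$ lies in $\nbhdbar(\eps)$ and that the two linear (mean-zero) equality constraints on $\imatj$ and $\imatm$ can be dropped without changing the optimum.
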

  \begin{proof}
    See \appref{sec:app:S1t:nbhd}.
  \end{proof}

Using \lemref{lem:exponent:kl:semi} and \lemref{lem:S1t:nbhd}, \thmref{thm:exponent:semi} can be established as follows.

\begin{proof}[Proof of \thmref{thm:exponent:semi}]
  From Lemma \ref{lem:exponent:kl:semi}, the error exponent $\Erb_k(r)$ can be expressed as
  \begin{align}
      \Erb_k(r) = -\lim_{\eps \rightarrow 0^+} \frac{1}{\eps} \lim_{n \rightarrow \infty} \frac{1}{n} \log \mathbb{P}_{n, m} \left\{ \bbfrob{ \dtmt \Phib_k }^2 - \bbfrob{ \dtmt \Phibt_k }^2 > \eps  \right\} = \lim_{\eps \rightarrow 0^+} \frac{1}{\eps} \inf_{\Pt_{XY} \in \cSt_1(\eps) \cap \nbhdbar(\eps)} \left[D\bigl(\Ph_{XY}\big\| P_{XY}\bigr) + r D(Q_X \| P_X)\right].
      \label{eq:lim:sanov:semi}
    \end{align}

  Moreover, from Lemma \ref{lem:S1t:nbhd}, there exists an $\eps_0 > 0$ such that for all $\eps \in (0, \eps_0)$, we have
    \begin{align*}
      \inf_{\Pt_{XY} \in \cSt_1(\eps) \cap \nbhdbar(\eps)} \left[D\bigl(\Ph_{XY}\big\| P_{XY}\bigr) + r D(Q_X \| P_X)\right] < \frac{\eps}{\alphah_k(r)}.
    \end{align*}
     In addition, note that for all $\Pt_{XY} \in \cSt_1(\eps) \setminus \nbhdbar(\eps)$ we have $\left[D\bigl(\Ph_{XY}\big\| P_{XY}\bigr) + r D(Q_X \| P_X)\right] > \frac{\eps}{\alphah_k(r)}$. Hence, for all $\eps \in (0, \eps_0)$ we have
     \begin{align*}
       \inf_{\Pt_{XY} \in \cSt_1(\eps)} \left[D\bigl(\Ph_{XY}\big\| P_{XY}\bigr) + r D(Q_X \| P_X)\right]=
       \inf_{\Pt_{XY} \in \cSt_1(\eps) \cap \nbhdbar(\eps)} \left[D\bigl(\Ph_{XY}\big\| P_{XY}\bigr) + r D(Q_X \| P_X)\right],
    \end{align*}
    which implies that
    \begin{align}
      \lim_{\eps \rightarrow 0^+} \frac{1}{\eps} \inf_{\Pt_{XY} \in \cSt_1(\eps)} \left[D\bigl(\Ph_{XY}\big\| P_{XY}\bigr) + r D(Q_X \| P_X)\right] = \lim_{\eps \rightarrow 0^+} \frac{1}{\eps} \inf_{\Pt_{XY} \in \cSt_1(\eps) \cap \nbhdbar(\eps)} \left[D\bigl(\Ph_{XY}\big\| P_{XY}\bigr) + r D(Q_X \| P_X)\right] = \frac{1}{2\alphah_k(r)}.
      \label{eq:lim:S1t}
    \end{align}
    Combining \eqref{eq:lim:sanov:semi} and \eqref{eq:lim:S1t}, we obtain \eqref{eq:exponent:semi}.
\end{proof}

\subsection{The Sample Complexity for the Case $\sigma_k = \sigma_{k+1}$}
\label{sec:exponent:semi:k:k+1}

With ${\cI_k}$, $l$, and $\Phib_{\cI_k}$ as defined in~\eqref{eq:Jk:lmin:Phib}, we further introduce the quantity $\betah(r)$ as follows.
\begin{definition}
  Given $r \geq 0$, $\imat \in \mathbb{R}^{|\cY| \times |\cX|}$, and $\imatm \in \mathbb{R}^{|\cX|}$, the matrix $\Jbh_k(r, \imat, \imatm)$ is defined as
  \begin{align}\label{eq:Jbh}
    \Jbh_k(r, \imat, \imatm) \defeq \Gbh_{l-1}(r) + \Lbh^{\T}(r)\Lb^{\T}\left(\sum_{i = l}^k\sum_{j \in \cIb_k} \frac{\varthetabb_{ij}\varthetabb^{\T}_{ij}}{\sigma_i^2 - \sigma_j^2}\right)\Lb\Lbh(r), 
  \end{align}
  where $\Gbh_{l - 1}$ and $\Lbh(r)$ are as defined in \eqref{eq:gbh:def}--\eqref{eq:Lbh:def}, and $\Lb$ is as defined in \eqref{eq:L:def}
  and $\varthetabb_{ij}$ are defined as, for all $i, j$, $\varthetabb_{ij} \defeq \phib_j \otimes \bigl(\dtmt\varphibb_i\bigr) + \varphibb_i \otimes \bigl(\dtmt\phib_j\bigr)$, where $\varphibb_i$ are defined as %
  \begin{align} \label{eq:varphi:def:semi}
    \varphibb_i \defeq \Phib_{\cI_k}\ubb_{i - l + 1},\quad l \leq i \leq k,
  \end{align}
  and where $\ubb_1, \dots, \ubb_{k - l + 1} \in \mathbb{R}^{|\cI_k|}$ are the top $k - l + 1$ eigenvectors of the matrix $\Phib_{\cI_k}^{\T}\left(\dtmt^{\T}\dtmdmatt + \dtmdmatt^{\T}\dtmt\right)\Phib_{\cI_k}$. Then, $\betah_k(r)$ is defined as the optimal value of the optimization problem
  \begin{subequations}
    \begin{alignat}{2}
      &\max_{\convec} &\quad &\convec^{\T}\Jbh_k(r, \imat, \imatm)\convec\\
      &~\,\mathrm{s.t.}& & \|\convec\|^2 \leq 1,
    \end{alignat}
    \label{eq:opt:semi:k:k+1:simple}
  \end{subequations}
    where $\convec \in \mathbb{R}^{|\cX|(|\cY| + 1)}$ is defined as
    \begin{align}
       \convec \defeq
      \begin{bmatrix}
        \vec(\imatj)\\
        \sqrt{r}\imatm
      \end{bmatrix}
      .
      \label{eq:def:convec}
    \end{align}
\end{definition}

Then we have the following result characterizing the error exponent \eqref{eq:asym_sample_complexity:semi}, and the corresponding upper bound of sample complexity for large datasets can be established similar to \thmref{thm:sample:semi}.

\begin{theorem}
  \label{thm:exponent:semi:k:k+1}
    If $\sigma_k = \sigma_{k + 1}$,
    the error exponent
    $\Erb_k(r)$ as defined in \eqref{eq:asym_sample_complexity:semi}
    is
    \begin{align} 
        \Erb_k(r) = -\lim_{\eps \rightarrow 0^+} \frac{1}{\eps} \lim_{n \rightarrow \infty} \frac{1}{n} \log \mathbb{P}_{n, m} \left\{ \bbfrob{ \dtmt \Phib_k }^2 - \bbfrob{ \dtmt \Phibt_k }^2 > \eps  \right\}%
        = \frac{1}{2\betah_k(r)}.
      \label{eq:exponent:semi:k:k+1}
    \end{align} 

\end{theorem}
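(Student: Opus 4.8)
The plan is to replay the proof of \thmref{thm:exponent:semi} with the non-degenerate perturbation lemma \lemref{lem:eig:k} replaced by its degenerate counterpart \lemref{lem:2}, in the same way that the argument for \thmref{thm:exponent:k:k+1} parallels that of \thmref{thm:exponent}. The first step does not use the multiplicity of $\sigma_k$ at all: \lemref{lem:exponent:kl:semi} gives
\[
  \Erb_k(r) = \lim_{\eps\to0^+}\eps^{-1}\inf_{\Pt_{XY}\in\cSt_1(\eps)}\bigl[D(\Ph_{XY}\|P_{XY}) + rD(Q_X\|P_X)\bigr],
\]
so everything reduces to an asymptotic weighted-$I$-projection computation. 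As in the non-degenerate case, I would show that for small $\eps$ this infimum is attained by empirical distributions lying in an $O(\eps)$-size weighted-divergence ball $\nbhdbar^\dagger(\eps)$ around $P_{XY}$ — the natural modification of $\nbhdbar(\eps)$ with $\alphah_k(r)$ replaced by $\betah_k(r)$ — on which \lemref{lem:perturb:B:semi} yields $\dtmtt = \dtmt + \sqrt{\eps}\,\dtmdmatt + o(\sqrt{\eps})$ with $\dtmdmatt$ bounded, and parametrize the ball by $(\imat,\imatm)$, equivalently by $\convec$ as in~\eqref{eq:def:convec}, so that $D(\Ph_{XY}\|P_{XY}) + rD(Q_X\|P_X) = \tfrac{\eps}{2}\|\convec\|^2 + o(\eps)$ by the standard second-order ($\chi^2$) expansion of the divergences.

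The core step is the matrix-perturbation expansion. Taking $\Ab = \dtmt^\T\dtmt$ (so that $\lambda_i = \sigma_i^2$, $\vb_i = \phib_i$), $\Ab' = \dtmt^\T\dtmdmatt + \dtmdmatt^\T\dtmt$, and $\pv = \sqrt{\eps}$, \lemref{lem:2} gives the second-order expansion of $\bbfrob{\dtmt\Phibt_k}^2 = \tr\{\Vb_k^\T(\pv)\Ab\Vb_k(\pv)\}$, in which the distinguished directions $\vbh_i$ are precisely $\varphibb_i = \Phib_{\cI_k}\ubb_{i-l+1}$ with $\ubb_j$ the top eigenvectors of $\Phib_{\cI_k}^\T\Ab'\Phib_{\cI_k} = \Phib_{\cI_k}^\T(\dtmt^\T\dtmdmatt + \dtmdmatt^\T\dtmt)\Phib_{\cI_k}$ — i.e.\ exactly the $\varphibb_i,\ubb_j$ of the definition preceding the theorem. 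Using $u^\T M v = (v\otimes u)^\T\vec(M)$ to write $\phib_i^\T\Ab'\phib_j = \thetab_{ij}^\T\vec(\dtmdmatt)$ and $\varphibb_i^\T\Ab'\phib_j = \varthetabb_{ij}^\T\vec(\dtmdmatt)$, together with $\vec(\dtmdmatt) = \Lb\Lbh(r)\convec$ — which is what~\eqref{eq:theta:semi} and~\eqref{eq:imates} say once one reads off the matrices~\eqref{eq:L:def} and~\eqref{eq:Lbh:def} — the first double sum of \lemref{lem:2} collapses to $\convec^\T\Gbh_{l-1}(r)\convec$ and the second to $\convec^\T\Lbh^\T(r)\Lb^\T\bigl(\sum_{i=l}^k\sum_{j\in\cIb_k}\frac{\varthetabb_{ij}\varthetabb_{ij}^\T}{\sigma_i^2-\sigma_j^2}\bigr)\Lb\Lbh(r)\convec$, whence
\[
  \bbfrob{\dtmt\Phib_k}^2 - \bbfrob{\dtmt\Phibt_k}^2 = \eps\,\convec^\T\Jbh_k(r,\imat,\imatm)\convec + o(\eps),
\]
with $\Jbh_k$ as in~\eqref{eq:Jbh}.

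It remains to combine the two expansions. The $I$-projection problem becomes, in the $\eps\to0$ limit, $\min\{\tfrac12\|\convec\|^2 : \convec^\T\Jbh_k(r,\imat,\imatm)\convec \ge 1\}$; since $\ubb_j$, and hence $\Jbh_k$, depends only on the direction of $\convec$, both objective and constraint are positively homogeneous of degree $2$, so the optimal value equals $\tfrac{1}{2\betah_k(r)}$ with $\betah_k(r)$ the value of~\eqref{eq:opt:semi:k:k+1:simple}. A patching argument — the analogue of \lemref{lem:S1:nbhd}/\lemref{lem:S1t:nbhd} for the degenerate case, showing $\inf_{\cSt_1(\eps)} = \inf_{\cSt_1(\eps)\cap\nbhdbar^\dagger(\eps)}$ for small $\eps$ because the weighted divergence exceeds $\eps/(2\betah_k(r)) + o(\eps)$ outside $\nbhdbar^\dagger(\eps)$ while the displayed expansions are uniform inside it — makes the restriction to a small neighborhood rigorous. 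Substituting back into the Sanov expression gives $\Erb_k(r) = \tfrac{1}{2\betah_k(r)}$ (with the convention that the exponent is $+\infty$ in the degenerate sub-case $\betah_k(r)=0$).

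I expect the main obstacle to be the bookkeeping forced by the multiplicity of $\sigma_k$: one must argue that the top-$k$ eigenspace of $\dtmtt^\T\dtmtt$ converges to $\mathrm{span}\{\phib_1,\dots,\phib_{l-1},\varphibb_l,\dots,\varphibb_k\}$ — $\Phibt_k$ being non-unique when $\sigma_k=\sigma_{k+1}$ — so that \lemref{lem:2} with its specific (top-eigenvector) choice of distinguished directions really does govern the leading behavior of $\bbfrob{\dtmt\Phibt_k}^2$; and one must check that all the $o(\cdot)$ remainders (in \lemref{lem:perturb:B:semi}, in the $\chi^2$ expansion of the divergences, and in \lemref{lem:2}, whose analyticity hypothesis has to be relaxed to $\Ab(\pv) = \Ab + \pv\Ab' + o(\pv)$) are uniform over $\nbhdbar^\dagger(\eps)$, so that the two limits in~\eqref{eq:asym_sample_complexity:semi} may legitimately be taken in sequence. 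A secondary subtlety — absent in \thmref{thm:exponent:semi} — is that $\Jbh_k(r,\imat,\imatm)$ is not positive semidefinite for arbitrary direction choices, so~\eqref{eq:opt:semi:k:k+1:simple} is genuinely a coupled non-convex problem rather than a plain top-eigenvalue problem; the homogeneity reduction and continuity of $\convec\mapsto\convec^\T\Jbh_k(r,\imat,\imatm)\convec$ on the unit sphere (hence attainment of $\betah_k(r)$) must be invoked in place of a closed form.
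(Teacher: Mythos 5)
Your proposal follows the same route as the paper's proof: reduce via \lemref{lem:exponent:kl:semi} to a weighted $I$-projection, restrict to a small divergence ball $\nbhdbar(\eps)$ (with $\alphah_k(r)$ replaced by $\betah_k(r)$), apply \lemref{lem:perturb:B:semi} and \lemref{lem:2} to get a second-order expansion of the learning error, rewrite the resulting quadratic form in terms of $\convec$ and $\Jbh_k(r,\imat,\imatm)$ via $\vec(\dtmdmatt)=\Lb\Lbh(r)\convec$, and close with the sandwich/patching argument that mirrors \lemref{lem:S1t:nbhd}. The technical caveats you flag (non-uniqueness of the top-$k$ eigenspace under degeneracy, uniformity of the $o(\cdot)$ remainders over the neighborhood, and that \eqref{eq:opt:semi:k:k+1:simple} is a coupled non-convex problem rather than a top-eigenvalue computation) are exactly what the paper's appendix handles implicitly, so the proposal is correct and essentially the paper's argument.
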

\begin{proof}
  See \appref{sec:app:thm:semi:k:k+1}.
\end{proof}

Note that $\Jbh_k$ in~\eqref{eq:Jbh} is dependent on $\convec$, since $\varthetabb_{ij}$ is dependent on $\dtmdmatt$. Therefore, unlike Theorem~\ref{thm:exponent:semi}, the optimal value of~\eqref{eq:opt:semi:k:k+1:simple} is not simply the largest singular value of some given matrix. However, if we fix $\Jbh_k$, the optimization problem \eqref{eq:opt:semi:k:k+1:simple} is reduced to solving the largest singular value of $\Jbh_k$. As a result, similar to the approach introduced in \secref{sec:exponent:k:k+1}, we can alternatively solve the optimal $\convec$ and $\Jbh_k$, as summarized in Algorithm~\ref{alg:alphah}.

\begin{algorithm}[t]
  \caption{An algorithm for computing the error exponent of \thmref{thm:exponent:semi:k:k+1}}%
  \label{alg:alphah}
  \begin{algorithmic}[1]
    \State {\bfseries Input:} $k$, $r$, and a learning rate $\eta$.
    \State Compute $\cI_k$, $l$, and $\Phib_{\cI_k}$.
    \State Randomly choose a $\convec \in \mathbb{R}^{|\cX|( |\cY| + 1)}$ with $\|\convec\|^2 = 1$.
    \Repeat
    \State Compute $\imatj$ and $\imatm$ from \eqref{eq:def:convec}.
    \State Compute $\dtmdmatt$ from %
    \eqref{eq:theta:semi}.
    \State $\Wb \gets \Phib_{\cI_k}^{\T}\left(\dtmt^{\T}\dtmdmatt + \dtmdmatt^{\T}\dtmt\right)\Phib_{\cI_k}$
    \For {$i = 1, \dots, k - l + 1$}
    \State $\ubb_i \gets$ the $i$-th eigenvector of $\Wb$
    \State $\varphibb_{i+l-1} \gets \Phib_{\cI_k}\ubb_{i}$
    \EndFor
    \State Compute $\Jbh_k(r, \imat, \imatm)$ from \eqref{eq:Jbh}.
    \State $\betah_k(r) \gets \convec^{\T}{\Jbh_k(r, \imat, \imatm)}\convec$
    \State Compute the eigenvectors $\qb_1, \dots, \qb_s$ of $\Jbh_k(r, \imat, \imatm)$ associated with its largest eigenvalue
    \State $\Qb \gets \left[{\qb}_1, \dots, {\qb}_s\right]$ 
    \State $\convec \gets \convec + \eta\Qb\Qb^{\T}\convec$%
    \State $\displaystyle\convec \gets \convec/\|\convec\|$
    \Until{$\betah_k(r)$ converges.}
    \State $\Erb_k(r) \gets \left[2\betah_k(r)\right]^{-1}$
    \State {\bfseries Output:} $\Erb_k(r)$.
  \end{algorithmic}
\end{algorithm}

Similar to Corollary \ref{eg:2}, we can compute the sample complexity in closed form for some joint distributions.

\begin{corollary}
  \label{eg:2:semi}
  For the joint distribution $P_{XY}$ as constructed in Corollary \ref{eg:2}, all non-zero singular values of the corresponding $\dtmt$ are $\sigma_1 = \sigma_2 = \dots = \sigma_{d - 1}$ as given by \eqref{eq:sigma:eg2}. Then, for all $k \in [d - 1]$, we have $\betah_k(r) = \frac{\sigma_1^2}{4(1 + r)}$, and thus the corresponding error exponent is
  \begin{align*}
    \Erb_k(r) = \frac{2(1 + r)}{\sigma_1^2}.%
  \end{align*}
\end{corollary}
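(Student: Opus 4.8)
The approach is to follow the proof of Corollary~\ref{eg:2} — the supervised counterpart — carrying the dependence on $r$ through the linear map $\Lbh(r)$ and the effective perturbation matrix $\dtmdmatt$. \emph{Step 1 (spectral structure).} From the proof of Corollary~\ref{eg:2}, for this $P_{XY}$ the marginal $P_X$ is uniform, $\dtmt$ vanishes on every row $y$ with $y > d$, and on the remaining $d\times d$ block it equals a scalar multiple of $d\Ib_d - \mathbf{J}_d$, where $\mathbf{J}_d$ is all-ones. Hence the nonzero singular values of $\dtmt$ are all equal to $\sigma_1$ (with multiplicity $d-1$), $\sigma_d = 0$, the last right singular vector is $\phib_d = [\sqrt{P_X(1)},\dots,\sqrt{P_X(d)}]^{\T}$, $\dtmt\phib_d = \bzero$, and $\dtmt$ restricted to $W \defeq \mathrm{span}\{\phib_1,\dots,\phib_{d-1}\} = \phib_d^{\perp}$ is $\pm\sigma_1$ times the orthogonal projection onto $W$. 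Since all nonzero singular values coincide, for \emph{every} $k \in [d-1]$ we get $\cI_k = \{1,\dots,d-1\}$, $l = \min\cI_k = 1$, $\Phib_{\cI_k}$ an orthonormal basis of $W$, $\cIb_k = \{d\}$, and $\Gbh_{l-1}(r) = \Gbh_0(r) = \zerob$.

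\emph{Step 2 (reduction and upper bound).} Plugging Step~1 into the definition~\eqref{eq:Jbh} of $\Jbh_k$, and using $\dtmt\phib_d = \bzero$ together with $\sigma_i^2 - \sigma_d^2 = \sigma_1^2$ for $i \le d-1$, the matrix collapses to $\Jbh_k(r,\imat,\imatm) = \sigma_1^{-2}\,\Lbh^{\T}(r)\Lb^{\T}\bigl(\textstyle\sum_{i=1}^k(\phib_d\otimes\dtmt\varphibb_i)(\phib_d\otimes\dtmt\varphibb_i)^{\T}\bigr)\Lb\Lbh(r)$, where $\varphibb_i = \Phib_{\cI_k}\ubb_i$ form an orthonormal set in $W$. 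Because $\vec(\dtmdmatt) = \Lb\Lbh(r)\convec$ — the defining property of $\Lb$ and $\Lbh(r)$, read off from~\eqref{eq:L:def}, \eqref{eq:Lbh:def}, \eqref{eq:theta:semi}, \eqref{eq:imates} — the objective of~\eqref{eq:opt:semi:k:k+1:simple} becomes $\convec^{\T}\Jbh_k(r,\imat,\imatm)\convec = \sigma_1^{-2}\sum_{i=1}^k(\varphibb_i^{\T}\bm{w})^2$, where $\bm{w} \defeq \dtmt^{\T}\dtmdmatt\phib_d \in W$. Since the $\varphibb_i$ are orthonormal in $W$, $\sum_{i=1}^k(\varphibb_i^{\T}\bm{w})^2 \le \|\bm{w}\|^2$; and the same computation kept with the full sum $i = 1,\dots,d-1$ shows $\sigma_1^{-2}\|\bm{w}\|^2 = \convec^{\T}\Gbh_{d-1}(r)\convec$, so $\max_{\|\convec\|\le1}\sigma_1^{-2}\|\bm{w}\|^2 = \spectral{\Gbh_{d-1}(r)} = \alphah_{d-1}(r) = \frac{\sigma_1^2}{4(1+r)}$ by Proposition~\ref{eg:1:semi}. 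Therefore $\betah_k(r) \le \frac{\sigma_1^2}{4(1+r)}$ for all $k \in [d-1]$; for $k = d-1$ the $\varphibb_i$ span all of $W$, the inequality is an equality, and $\betah_{d-1}(r) = \alphah_{d-1}(r)$ (the relevant exponent there being $\Erb_{d-1}(r) = [2\alphah_{d-1}(r)]^{-1}$ via Theorem~\ref{thm:exponent:semi}, since $\sigma_{d-1} > \sigma_d$).

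\emph{Step 3 (lower bound, the crux).} For $k \le d-2$ it remains to prove $\betah_k(r) \ge \frac{\sigma_1^2}{4(1+r)}$, i.e., to exhibit a feasible $\convec^{*}$ whose objective attains the bound. I would start from the maximizer $(\imat^{*},\imatm^{*})$ of the $k = d-1$ problem, which gives $\|\bm{w}^{*}\|^2 = \frac{\sigma_1^4}{4(1+r)}$, and then exploit the permutation symmetry of $P_{XY}$ among the $d$ values of $X$: the $W\!\to\!W$ block of $\dtmdmatt^{*}$ — which determines the eigenvectors $\varphibb_i^{*}$ of $\Phib_{\cI_k}^{\T}(\dtmt^{\T}\dtmdmatt^{*} + \dtmdmatt^{*\T}\dtmt)\Phib_{\cI_k}$ — is decoupled from the $\phib_d$-column of $\dtmdmatt^{*}$, which determines $\bm{w}^{*}$. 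Using the rigid $d\Ib_d - \mathbf{J}_d$ form of $\dtmt$ and the explicit expressions~\eqref{eq:theta:semi}, \eqref{eq:imates}, one can re-select $(\imat^{*},\imatm^{*})$ inside the symmetry orbit so that this $W\!\to\!W$ block is rank one and aligned with $\bm{w}^{*}$ (hence $\bm{w}^{*}/\|\bm{w}^{*}\|$ is its top eigenvector), without changing $\|\bm{w}^{*}\|$ or violating $\|\convec^{*}\| \le 1$; then $\varphibb_1^{*} = \bm{w}^{*}/\|\bm{w}^{*}\|$, $\sum_{i=1}^k(\varphibb_i^{*\T}\bm{w}^{*})^2 = \|\bm{w}^{*}\|^2$, and the objective attains $\frac{\sigma_1^2}{4(1+r)}$. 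Combining the two bounds yields $\betah_k(r) = \frac{\sigma_1^2}{4(1+r)}$, and Theorem~\ref{thm:exponent:semi:k:k+1} then gives $\Erb_k(r) = [2\betah_k(r)]^{-1} = \frac{2(1+r)}{\sigma_1^2}$. The main obstacle is Step~3 — specifically, verifying that the effective-perturbation map $(\imat,\imatm) \mapsto \dtmdmatt$ has enough range to realize the required alignment of $\bm{w}^{*}$ with the $\varphibb_i^{*}$, which is precisely where the explicit $d\Ib_d - \mathbf{J}_d$ structure and the forms~\eqref{eq:theta:semi}, \eqref{eq:imates} are essential.
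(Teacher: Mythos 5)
Your Steps 1 and 2 are correct and in fact streamline the paper's argument: identifying $\cI_k = [d-1]$, $l=1$, $\cIb_k = \{d\}$, reducing $\Jbh_k$ via $\varthetabb_{id} = \phib_d\otimes(\dtmt\varphibb_i)$, and rewriting $\convec^{\T}\Jbh_k\convec = \sigma_1^{-2}\sum_{i=1}^k(\varphibb_i^{\T}\bm{w})^2$ with $\bm{w} = \dtmt^{\T}\dtmdmatt\phib_d \in W$ is exactly right, and the chain $\sum_{i=1}^k(\varphibb_i^{\T}\bm{w})^2 \le \|\bm{w}\|^2 = \sigma_1^2\,\convec^{\T}\Gbh_{d-1}(r)\convec \le \sigma_1^2\alphah_{d-1}(r)$ gives the upper bound $\betah_k(r) \le \frac{\sigma_1^2}{4(1+r)}$ cleanly (arguably more cleanly than the paper, whose passage to the spectral norm is carried out through the $\Mbh(r)\varphibb_i$ orthonormal system).

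Step 3, however, is a genuine gap, and you flag it yourself. The claim that the $W\!\to\!W$ block of $\dtmt^{\T}\dtmdmatt + \dtmdmatt^{\T}\dtmt$ and the $\phib_d$-column $\bm{w} = \dtmt^{\T}\dtmdmatt\phib_d$ can be ``decoupled'' is not established: both are determined through $(\imat,\imatm) \mapsto \dtmdmatt$, whose image is a constrained linear subspace, so there is no a priori freedom to independently tune one while fixing the other. The ``re-selection inside the symmetry orbit'' argument is not a proof either --- the orbit of a given $(\imat^{*},\imatm^{*})$ under the permutation group is a finite set, and nothing forces any element of that orbit to have a rank-one, $\bm{w}^{*}$-aligned $W\!\to\!W$ block. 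What the paper does instead is a direct construction: take $\imate(y,x) = \tfrac{1}{\sqrt{1+r}}\sqrt{P_{Y|X}(y|x)}\phi'(x)$ and $\imatem(x) = \tfrac{1}{\sqrt{1+r}}\phi'(x)$ with $\phib' = \tfrac{1}{\sqrt{(d-1)d}}[d-1,-1,\dots,-1]^{\T}$ from~\eqref{eq:phi'}; compute $\dtmdmatt$ explicitly via~\eqref{eq:theta:eg}; show that the inner optimization $\max_{\phib\perp\phib_d,\,\|\phib\|=1}\phib^{\T}\dtmt^{\T}\dtmdmatt\phib$ reduces to a positive multiple of $\sum_x\phi'(x)\phi^2(x)$, which by Lemma~\ref{lem:phi_1:max} is maximized exactly at $\phib = \pm\phib'$; and then verify by direct calculation that $\bm{w}^{*}$ is \emph{also} parallel to $\phib'$. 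The alignment $\varphibb_1^{*} = \pm\bm{w}^{*}/\|\bm{w}^{*}\|$ is thus a computed coincidence for this specific optimizer, hinging on Lemma~\ref{lem:phi_1:max}, not a consequence of symmetry or decoupling. Without that explicit verification your lower bound for $k \le d-2$ is unproven.
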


\begin{proof}
  See \appref{sec:app:eg2:semi}.
\end{proof}

\subsection{The Optimal Number of Samples with the Cost Constraint}

In semi-supervised learning, while the labeled samples are more useful than the unlabeled samples in learning problems, it is often much more expensive to acquire the labeled samples than the unlabeled ones. Therefore, it is important to understand the fundamental tradeoff between the sampling cost and the performance in learning tasks. In the following, we investigate such tradeoff for the sample complexity of learning the maximal correlation functions.

Suppose that the costs of acquiring the labeled and unlabeled samples are $\mathsf{C}_{\ell}$ and $\mathsf{C}_{u}$ per sample, respectively, and the total budget for sampling is $\mathsf{C}$. Then, the number of labeled samples $n_{\ell}$ and the number of unlabeled samples $n_u$ we can get are constrained by $n_{\ell} \mathsf{C}_{\ell} + n_u \mathsf{C}_{u} \leq \mathsf{C}$. Without loss of generality, we consider the case $\sigma_k > \sigma_{k + 1}$, and it follows from Theorem~\ref{thm:exponent:semi} that 
the error exponent for estimating the $k$-dimensional maximal correlation functions with these samples is $\eps n_{\ell} / [2\alphah_k(r)]$, %
where $r = n_u / n_{\ell}$. Hence, the optimal error exponent that can be achieved by the sampling budget $\mathsf{C}$ is given by
\begin{align*}
\max_{n_{\ell} \mathsf{C}_{\ell} + n_u \mathsf{C}_{u} \leq \mathsf{C}}\, \frac{\eps n_{\ell}}{2\alphah_k(r)} = \max_{r \geq 0}\, \frac{\eps \mathsf{C}}{2 (\mathsf{C}_{\ell} + r\mathsf{C}_u) \alphah_k(r) },
\end{align*}
which immediately implies the following proposition.
\begin{proposition}
Given the sampling budget constraint $\mathsf{C}$, the optimal number of labeled samples $n_{\ell}$ and unlabeled samples $n_u$ to optimize the sample complexity of estimating the $k$-dimensional maximal correlation functions are
\begin{align*}
n_{\ell} = \frac{\mathsf{C}}{\mathsf{C}_{\ell} + r^* \mathsf{C}_u}, \quad n_u = \frac{r^*\mathsf{C}}{\mathsf{C}_{\ell} + r^* \mathsf{C}_u},
\end{align*}
where 
\begin{align} \label{eq:semi_design}
r^* = \mathop{\argmin}_{r \geq 0} \ (\mathsf{C}_{\ell} + r\mathsf{C}_u) \alphah_k(r).
\end{align}
\end{proposition}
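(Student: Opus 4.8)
The plan is to decouple the budgeted optimization over the pair $(n_\ell, n_u)$ by reparametrizing it through the ratio $r = n_u/n_\ell$, so that the problem factors into a trivial choice of $n_\ell$ for each fixed $r$, followed by the scalar optimization \eqref{eq:semi_design}. I would assume throughout that $X$ and $Y$ are not independent (so that $\alphah_k(r) > 0$ for all $r \ge 0$) and that $n_\ell \ge 1$, and I would treat the case $\sigma_k > \sigma_{k+1}$ of the statement; the case $\sigma_k = \sigma_{k+1}$ is handled identically, with \thmref{thm:exponent:semi:k:k+1} and $\betah_k(r)$ replacing \thmref{thm:exponent:semi} and $\alphah_k(r)$.

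First I would fix $r \ge 0$ and restrict to feasible pairs of the form $(n_\ell, r n_\ell)$, for which the budget constraint $n_\ell \mathsf{C}_\ell + n_u \mathsf{C}_u \le \mathsf{C}$ reads $n_\ell(\mathsf{C}_\ell + r\mathsf{C}_u) \le \mathsf{C}$. By \thmref{thm:exponent:semi}, in the regime of large $n_\ell$ and small $\eps$ the error probability $\mathbb{P}_{n_\ell, n_u}\{\cdot > \eps\}$ decays like $\exp\{-\eps n_\ell/[2\alphah_k(r)]\}$, so its exponent $\eps n_\ell/[2\alphah_k(r)]$ is strictly increasing in $n_\ell$ for fixed $r$; hence the optimal choice saturates the budget, $n_\ell = \mathsf{C}/(\mathsf{C}_\ell + r\mathsf{C}_u)$, with exponent $\eps\mathsf{C}/[2(\mathsf{C}_\ell + r\mathsf{C}_u)\alphah_k(r)]$. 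Maximizing over $r \ge 0$ is then equivalent to minimizing $(\mathsf{C}_\ell + r\mathsf{C}_u)\alphah_k(r)$, which is precisely \eqref{eq:semi_design}; substituting the minimizer $r^*$ into $n_\ell = \mathsf{C}/(\mathsf{C}_\ell + r\mathsf{C}_u)$ and $n_u = r n_\ell$ yields the claimed formulas. I would also note that this reparametrization is exhaustive: every feasible configuration with $n_\ell \ge 1$ arises from some $r = n_u/n_\ell$, and configurations with $n_\ell = 0$ are degenerate (the generalized ACE algorithm cannot even be run) and never optimal since they make the exponent vanish.

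The only step beyond bookkeeping is to confirm that the minimum in \eqref{eq:semi_design} is attained. For this I would invoke \propref{prop:alphah}: convexity of $\alphah_k$ gives continuity on $[0,\infty)$, and the lower bound $\alphah_k(r) \ge \alphah_k(0)/(1+r) = \alpha_k/(1+r)$ gives $(\mathsf{C}_\ell + r\mathsf{C}_u)\alphah_k(r) \ge \alpha_k\,(\mathsf{C}_\ell + r\mathsf{C}_u)/(1+r) \ge \alpha_k \min\{\mathsf{C}_\ell,\mathsf{C}_u\} > 0$, while $\alphah_k(r) \to \alphah_k(\infty)$ shows the objective tends, as $r \to \infty$, either to $+\infty$ (when $\alphah_k(\infty) > 0$, so that the continuous objective attains its minimum on a bounded interval) or to a finite positive limit (when $\alphah_k(\infty) = 0$, in which case the sandwich in \eqref{eq:alphah:bound} in fact forces $\alphah_k(r) = \alpha_k/(1+r)$, an explicit function whose minimizer is $r^* = 0$ if $\mathsf{C}_u \ge \mathsf{C}_\ell$ and $r^* = +\infty$, i.e.\ all budget on unlabeled data, otherwise). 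I expect this existence/edge-case analysis to be the only real content; the decoupling and the ``saturate-the-budget-then-optimize-the-ratio'' argument are elementary, which is why the paper calls the proposition immediate.
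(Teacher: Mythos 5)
Your proof is correct and follows the same reparametrize-by-the-ratio-$r$-then-saturate-the-budget argument that the paper uses in its brief inline derivation preceding the proposition, where the constrained maximization over $(n_\ell, n_u)$ is collapsed to a scalar maximization over $r \geq 0$ of $\eps\mathsf{C}/[2(\mathsf{C}_\ell + r\mathsf{C}_u)\alphah_k(r)]$. The existence and edge-case discussion you append (continuity of $\alphah_k$ from \propref{prop:alphah}, the positive lower bound, and the degenerate $\alphah_k(\infty)=0$ case where the argmin may escape to $+\infty$) is a genuine refinement beyond what the paper states, but not a different route.
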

Note that the optimal ratio $r^*$ is independent of $\mathsf{C}$, which indicates the relative importance of the unlabeled samples compared to the labeled samples, by taking the sampling costs into account. While the optimization problem~\eqref{eq:semi_design} has no analytical solution and is neither convex nor concave, we can solve the local optimum by the numerical differentiation approach~\cite{heath2018scientific}. In particular,
the local optimum of $r$ can be computed via the updating rule 
\begin{align*}
  r \gets r - \frac{\eta}{h} \left[(\mathsf{C}_{\ell} + (r+h)\mathsf{C}_u) \alphah_k(r + h) - (\mathsf{C}_{\ell} + r\mathsf{C}_u) \alphah_k(r)\right],
\end{align*}
where $h > 0$ is the step size for computing the numerical differentiation, and $\eta > 0$ is the learning rate for gradient descent.

\section{The Numerical Simulations} \label{sec:ns}

In this section, we validate our theoretical results by some numerical simulations. In our experiments, we choose $| \cX | = |\cY | = 4$, and the joint distribution $P_{XY}$ as
\begin{align} \label{eq:pxyexp}
  P_{XY}(x, y) =
  \begin{cases}
    \frac{1}{8},&\text{if}~x = y,\\
    \frac{1}{24},&\text{if}~x \neq y.
  \end{cases}
\end{align}
In the following, we compare the empirical error exponents~\eqref{eq:asym_sample_complexity}
and \eqref{eq:asym_sample_complexity:semi} for estimating $k=2$ dimensional maximal correlation functions with the theoretical results. Note that since the joint distribution $P_{XY}$ of~\eqref{eq:pxyexp} is a special case of Corollary \ref{eg:2} and \ref{eg:2:semi}, we can apply the results from the corollaries as our theoretical benchmarks.

\subsection{Supervised Learning}
\label{sec:sim:supervised}

In this experiment, we sample the learning error $\bbfrob{\dtmt \Phib_k}^2 - \bbfrob{\dtmt \Phih_k}^2$ as follows. For each sample of the learning error, we first generate $n = 10^6$ pairs of $(x_i , y_i)$, i.i.d. from $P_{XY}$, and then compute the $\dtmh$ from the empirical distribution $\Ph_{XY}$ of these $n$ pairs. Then, the singular vectors of $\dtmh$ are computed to get a sample of $\bbfrob{\dtmt \Phib_k}^2 - \bbfrob{\dtmt \Phih_k}^2$. We repeat this sampling process for the learning error for $10^5$ times, and consider the empirical probability  %
\begin{align*}
p_n(\eps) = \mathbb{P} \left\{\bbfrob{\dtmt \Phib_k}^2 - \bbfrob{\dtmt \Phih_k}^2 > \eps\right\}
\end{align*}
over the $10^5$ samples. Then, the empirical error exponent can be computed as
\begin{align*}
  -\frac{1}{n \eps} \log p_n(\eps).
\end{align*}
The comparison between the empirical error exponent and the error exponent computed from Corollary \ref{eg:2} %
is plotted in~\figref{fig:exponent}, in which we can see the coincidence between these two error exponents.

\begin{figure}[t]
  \centering
  \includegraphics[width = .5\textwidth]{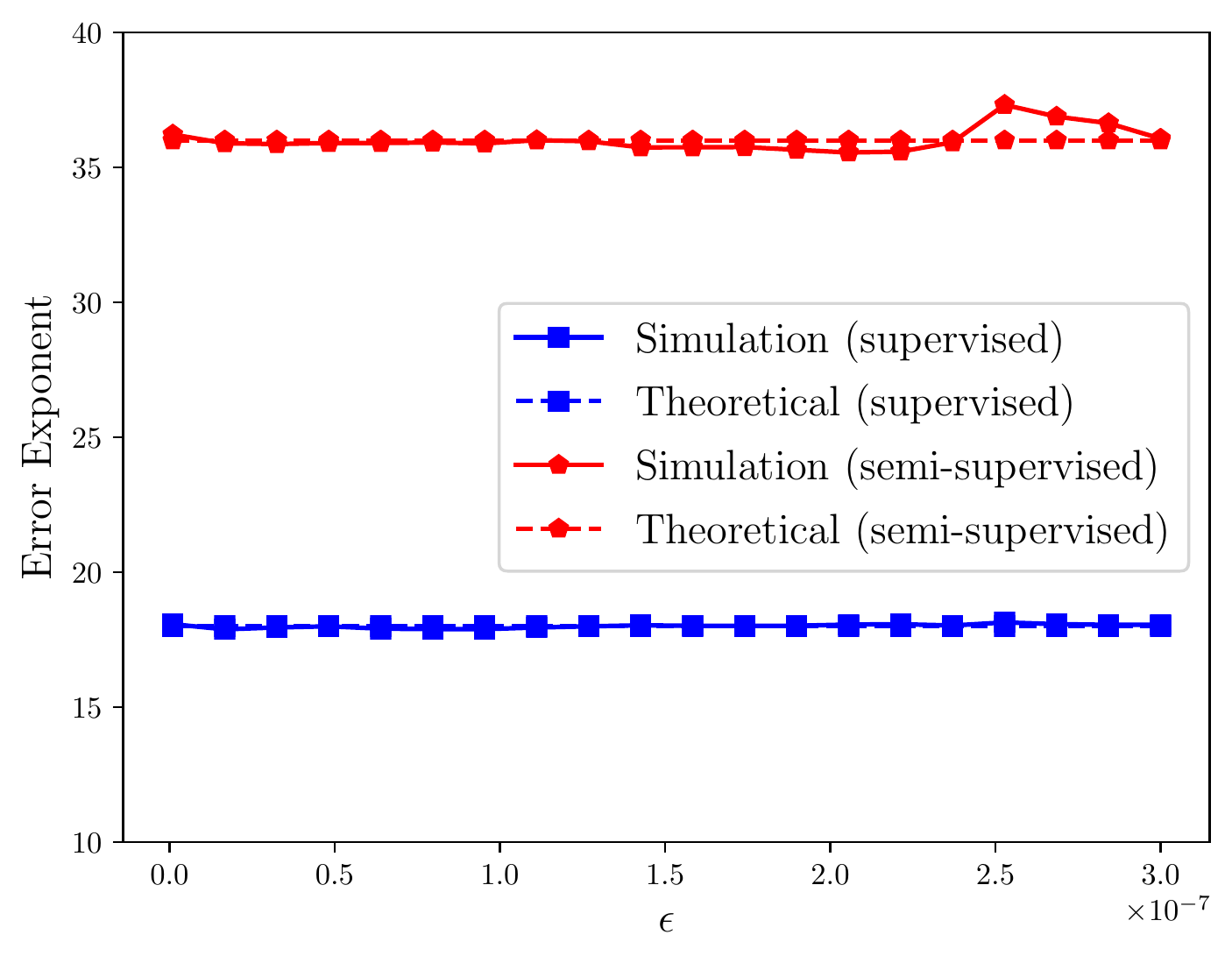}
  \caption{The comparison of theoretical and experimental error exponents in both supervised and semi-supervised scenarios}
  \label{fig:exponent}
\end{figure}

\subsection{Semi-supervised Learning}
\label{sec:sim:semi}

In the experiment for the semi-supervised learning, for each sample of learning error, we take $r = 1$, and generate $n = 10^6$ pairs of $(x_i , y_i)$, i.i.d. from $P_{XY}$, and $m = nr = 10^6$ of $x_j$, i.i.d. from $P_{X}$, and then compute the $\dtmtt$ according to the empirical distribution $\Pt_{XY}$ from~\eqref{eq:PtXY}. This sampling process for the learning error is repeated for $10^5$ times, and the empirical probability of the learning error exceeding $\eps$ is defined as
\begin{align*}
\bar{p}_{n, m}(\eps) = \mathbb{P} \left\{\bbfrob{\dtmt \Phib_k}^2 - \bbfrob{\dtmt \Phibt_k}^2 > \eps\right\}.
\end{align*}
Then, the empirical error exponent can be computed as
\begin{align*}
  -\frac{1}{n \eps} \log \bar{p}_{n, m}(\eps).
\end{align*}
The comparison between the empirical error exponent and the error exponent computed from Corollary \ref{eg:2:semi} %
is plotted in~\figref{fig:exponent}, in which we can see the coincidence between these two error exponents.

\appendices
\reply{
\section{Alternating Conditional Expectation Algorithm (\algoref{alg:ace})}
\label{sec:app:ace}
For convenience, we assume that the empirical distribution $\Ph_{XY} = P_{XY}$ and thus $\dtmh = \dtmt$. Let $\sigma_i$ denote the $i$-th singular value of $\dtmt$, we will show that \algoref{alg:ace} converges to the maximal correlation functions $f^*$ and $g^*$ which achieve the maximal correlation $\rho_k(X; Y) = \sum_{i = 1}^{k} \sigma_i$.

To begin, let $\Phib_k$ and $\Psib_k$ be the matrix composed of top $k$ right singular vectors and left singular vectors of $\dtmt$, respectively, as defined in \secref{sec:pf}.
Then, from the analyses in \secref{sec:pf}, we know that after the alternating conditional expectation processes (cf. line \ref{alg:ace:line:begin}--\ref{alg:ace:line:end} of \algoref{alg:ace}), $\Phih_k$ and $\Phib_k$ have the same column space. Then, after the whitening of $\fh$ in line \ref{alg:ace:line:norm:f}, we have
\begin{align}
  \Phih_k = \Phib_k \Qb,
  \label{eq:phih}
\end{align}
where $\Qb \in \mathbb{R}^{k\times k}$ is an orthogonal matrix. Then, with line \ref{alg:ace:line:g}--\ref{alg:ace:line:norm:g}, we obtain
\begin{align}
  \Phih_k &= \dtmt \Phih_k ((\dtmt \Phih_k)^{\T}\dtmt \Phih_k)^{-1/2}\\
  &= \Psib_k \Sigmab_k \Qb (\Qb^\T \Sigmab_k \Psib_k^\T\Psib_k \Sigmab_k \Qb)^{-1/2}\label{eq:comp:phi:1}\\
  &= \Psib_k \Sigmab_k \Qb (\Qb^\T \Sigmab_k^2 \Qb)^{-1/2}\label{eq:comp:phi:2}\\
  &= \Psib_k \Sigmab_k \Qb (\Qb^\T \Sigmab_k^{-1} \Qb)\label{eq:comp:phi:3}\\
  &= \Psib_k \Qb \label{eq:comp:phi:4}
\end{align}
where \eqref{eq:comp:phi:1} follows from the fact that $\dtmt \Phih_k = \dtmt\Phib_k \Qb = \Psib_k \Sigmab_k \Qb$, where $\Sigmab_k = \diag\{\sigma_1, \dots, \sigma_k\}$. In addition, \eqref{eq:comp:phi:2} follows from the fact that $\Phib_k^\T\Phib_k = \Ib_k$, and \eqref{eq:comp:phi:4} follows from that $\Qb_k\Qb_k^\T = \Ib_k$ since $\Qb$ is orthogonal, with $\Ib_k$ representing the identity matrix of order $k$.

From \eqref{eq:phih} and \eqref{eq:comp:phi:4}, we know that
\begin{align*}
  \E{\fh(X)\fh^\T(X)} = \Phih_k^\T \Phih_k = \Ib_k \quad\text{and}\quad   \E{\gh(Y)\gh^\T(Y)} = \Psih_k^\T \Psih_k = \Ib_k.
\end{align*}
Finally, we have
\begin{align*}
  \E{\fh(X)\gh(Y)} = \trop{\Psih^\T\dtmt\Phih} = \trop{\Qb^\T\Psib_k^\T\dtmt\Phib_k\Qb} = \trop{\Psib_k^\T\dtmt\Phib_k} = \trop{\Sigmab_k} = \sum_{i = 1}^k \sigma_i = \rho_k(X; Y).
\end{align*}
As a consequence,  $\fh$ and $\gh$ are the maximal correlation functions that achieve the HGR maximal correlation $\rho_k(X; Y)$.
}
\reply{
\section{Proof of Eq. \eqref{eq:metric}} \label{sec:app:metric}
    Firstly, note that
    \begin{align*}
      \norm{\dtmt \phibh_1}^2
      =  \sum_{i = 1}^d  \sigma_i^2 \ip{\phibh_1}{\phib_i}^2
      &\leq \sigma_1^2\ip{\phibh_1}{\phib_1}^2 + \sigma_2^2\sum_{i = 2}^d   \ip{\phibh_1}{\phib_i}^2\\
      &= (\sigma_1^2 - \sigma_2^2)\ip{\phibh_1}{\phib_1}^2 + \sigma_2^2\sum_{i = 1}^d   \ip{\phibh_1}{\phib_i}^2\\
      &\leq (\sigma_1^2 - \sigma_2^2)\ip{\phibh_1}{\phib_1}^2 + \sigma_2^2,
    \end{align*}
    where the second equality follows from the fact that
    \begin{align*}
      \sum_{i = 1}^d   \ip{\phibh_1}{\phib_i}^2 \leq \norm{\phibh_1}^2 = 1
    \end{align*}
    since $\phib_1, \dots, \phib_d$ form an orthonormal set. Therefore, we have
    \begin{align*}
      \norm{\dtmt \phib_1}^2 - \norm{\dtmt \phibh_1}^2
      &\geq (\sigma_1^2 - \sigma_2^2) \left( 1 - \ip{\phibh_1}{\phib_1}^2\right)\\
      &= (\sigma_1^2 - \sigma_2^2) \left( 1 - \ip{\phibh_1}{\phib_1}\right) \left( 1 + \ip{\phibh_1}{\phib_1}\right)\\
      &\geq (\sigma_1^2 - \sigma_2^2)\left( 1 - \ip{\phibh_1}{\phib_1}\right)\\
      &= \frac{\sigma_1^2 - \sigma_2^2}{2}\norm{\phib_1 - \phibh_1}^2,
    \end{align*}
    where the second inequality follows from the assumption that $\ip{\phibh_1}{\phib_1} \geq 0$. As a consequence, we obtain \eqref{eq:metric} as desired.
}
\section{Proof of Lemma~\ref{lem:eig:k}} \label{sec:appa}
  Suppose $\lambda_1, \dots, \lambda_k$ take $q$ distinct values, and the indices $i_0, \dots, i_{q}$ are defined such that $0 = i_0 < \dots < i_q = k$ and 
  \begin{align*}
    \lambda_{i_0+1} = \lambda_{i_1} > \lambda_{i_1+1} = \lambda_{i_2} > \dots > \lambda_{i_{q-1}+1} = \lambda_{i_{q}} > \lambda_{i_{q} + 1}.
  \end{align*}
  Therefore, we have
  \begin{align}
    \tr \left\{\Vb_k^\T(\pv)\Ab\Vb_k(\pv)\right\}
    &= \sum_{j = 1}^k \vb_j^\T(\pv)\Ab\vb_j(\pv)%
      = \sum_{s = 1}^q\sum_{i = i_{s-1}+1}^{i_{s}} \vb_i^\T(\pv)\Ab\vb_i(\pv), \label{eq:tr:vk}      
  \end{align}
  where $\vb_j(\pv)$ denotes the $j$-th column of $\Vb_k$. We first consider the summation for $s = 1$,
  \begin{align*}
    \sum_{i = i_0 + 1}^{i_1} \vb_i^\T(\pv)\Ab\vb_i(\pv) = \tr\left\{\Vb_{i_1}^{\T}(\pv)\Ab\Vb_{i_1}(\pv)\right\},
  \end{align*}
  where $\Vb_{i_1}(\pv) \in \mathbb{R}^{d \times i_1}$ is composed of the first $i_1$ columns of $\Vb_{k}(\pv)$. First, note that since $\Ab(\pv)$ is analytic, there exists a symmetric matrix $\Ab''$ such that
  \begin{align*}
    \Ab(\pv) = \Ab + \pv \Ab' + \pv^2 \Ab'' + o\bigl(\pv^2\bigr).
  \end{align*}
  In addition, the analyticity of $\Ab(\pv)$ implies that the eigenspace $\Vb_{k}(\pv)$ is also analytic \cite{kato1976perturbation}, and thus has
  the expansion
  \begin{align*}
    \Vb_{i_1}(\pv) = \Vbh_{i_1} + \pv\Vb_{i_1}' + \pv^2\Vb_{i_1}'' + o\bigl(\pv^2\bigr),    
  \end{align*}
  where $\Vbh_{i_1}$, $\Vb_{i_1}'$, and $\Vb_{i_1}''$ are matrices in $\mathbb{R}^{d \times i_1}$. Moreover, the columns of $\Vbh_{i_1}$ form an orthonormal basis of the eigenspace of $\Ab$ associated with $\lambda_1$, and thus $\Ab\Vbh_{i_1} = \lambda_1 \Vbh_{i_1}$. Then, from $\Vb_{i_1}^{\T}(\pv)\Vb_{i_1}(\pv) = \Ib_{i_1}$, we obtain
  \begin{align*}
    \Ib_{i_1}
    &= \Vb_{i_1}^{\T}(\pv)\Vb_{i_1}(\pv)%
      = \Vbh_{i_1}^{\T}\Vbh_{i_1} + \pv\bigl(\Vbh_{i_1}^{\T}\Vb_{i_1}' + \Vb_{i_1}'^{\T}\Vbh_{i_1}\bigr)%
      + \pv^2\bigl(\Vbh_{i_1}^{\T}\Vb_{i_1}'' + \Vb_{i_1}''^{\T}\Vbh_{i_1} + \Vb_{i_1}'^{\T}\Vb'_{i_1}\bigr) + o(\pv^2),
  \end{align*}
  which in turn implies
  \begin{subequations}
    \begin{gather}
      \Vbh_{i_1}^{\T}\Vb_{i_1}' + \Vb_{i_1}'^{\T}\Vbh_{i_1} = \Ob_{i_1},\label{eq:ortho:1}\\
      \Vbh_{i_1}^{\T}\Vb_{i_1}'' + \Vb_{i_1}''^{\T}\Vbh_{i_1} + \Vb_{i_1}'^{\T}\Vb'_{i_1} = \Ob_{i_1},
      \label{eq:ortho:2}
    \end{gather}
    \label{eq:ortho}
  \end{subequations}
  where $\Ib_{i_1}$ and $\Ob_{i_1}$ are the identity matrix and the zero matrix in $\mathbb{R}^{i_1 \times i_1}$, respectively. Therefore, we have
  \begin{align}
      \Vb_{i_1}^{\T}(\pv)\Ab\Vb_{i_1}(\pv)%
    &= \left(\Vbh_{i_1} + \pv\Vb_{i_1}' + \pv^2\Vb_{i_1}''\right)^{\T}\Ab\left(\Vbh_{i_1} + \pv\Vb_{i_1}' + \pv^2\Vb_{i_1}''\right)%
      + o(\pv^2)\notag\\
    &= \Vbh_{i_1}^{\T}\Ab\Vbh_{i_1} + \pv \Vbh_{i_1}^\T\Ab\Vb'_{i_1} + \Vb_{i_1}'^\T\Ab\Vbh_{i_1}%
      + \pv^2 \Vbh_{i_1}^{\T}\Ab\Vb_{i_1}'' + \Vb_{i_1}'^{\T}\Ab\Vb_{i_1}' + \Vb_{i_1}''^{\T}\Ab\Vbh_{i_1} + o(\pv^2)\notag\\
    &= \lambda_1 \cdot \Ib_{i_1} + \lambda_1\cdot\pv \Vbh_{i_1}^\T\Vb'_{i_1} + \Vb_{i_1}'^\T\Vbh_{i_1}%
      + \pv^2 \lambda_1\Vbh_{i_1}^{\T}\Vb_{i_1}'' + \Vb_{i_1}'^{\T}\Ab\Vb_{i_1}' + \lambda_1\Vb_{i_1}''^{\T}\Vbh_{i_1} + o(\pv^2)\notag\\
    &= \lambda_1 \cdot \Ib_{i_1} - \pv^2 \Vb_{i_1}'^{\T}(\lambda_1\Ib_{d} - \Ab)\Vb_{i_1}' + o(\pv^2),  \label{eq:tr:i1}
  \end{align}
  where the penultimate equality follows from the fact that $\Ab\Vbh_{i_1} = \lambda_1\Vbh_{i_1}$, the last equality follows from \eqref{eq:ortho}, and $\Ib_d$ is the identity matrix in $\mathbb{R}^{d\times d}$.

  In addition, we define the matrix
  \begin{align*}
  \Lambdab_{i_1}(\pv) \defeq \diag\{\lambda_1(\pv), \dots, \lambda_{i_1}(\pv)\},
  \end{align*}
  where $\lambda_1(\pv), \dots, \lambda_{i_1}(\pv)$ are the largest $i_1$ eigenvalues of $\Ab(\pv)$. Then, it follows from the analyticity of $\Ab(\pv)$ that $\Lambdab_{i_1}(\pv)$ is analytic and can be written as
  \begin{align*}
    \Lambdab_{i_1}(\pv) = \lambda_1 \Ib_{i_1} + \pv \Lambdab_{i_1}' + \pv^2\Lambdab_{i_1}'' + o(\pv^2),
  \end{align*}
  where $\Lambdab_{i_1}'$ and $\Lambdab_{i_1}''$ are both diagonal matrices. Now, from $\Ab(\pv) \Vb_{i_1}(\pv) = \Vb_{i_1}(\pv) \Lambdab_{i_1}(\pv)$ we obtain
  \begin{align*}
    &\left(\Ab + \pv \Ab' + \pv^2 \Ab''\right)\left(\Vbh_{i_1} + \pv\Vb_{i_1}' + \pv^2\Vb_{i_1}''\right)%
      = \left(\Vbh_{i_1} + \pv\Vb_{i_1}' + \pv^2\Vb_{i_1}''\right)\left(\lambda_1 \Ib_{i_1} + \pv \Lambdab_{i_1}' + \pv^2\Lambdab_{i_1}''\right) + o(\pv^2).
  \end{align*}
  Comparing the $\pv$-order terms for both sides, we have
  \begin{align*}
    \Ab'\Vbh_{i_1} + \Ab\Vb_{i_1}' = \lambda_1\Vb_{i_1}' + \Vbh_{i_1}\Lambdab_{i_1}',
  \end{align*}
  and thus
  \begin{align}
    (\lambda_1\Ib_d- \Ab)\Vb_{i_1}' =  \Ab'\Vbh_{i_1} - \Vbh_{i_1}\Lambdab_{i_1}'.
    \label{eq:eig:v}
  \end{align}
  Left multiplying \eqref{eq:eig:v} by $\Vbh_{i_1}^\T$, we obtain
  \begin{align}
    \Lambdab_{i_1}' = \Vbh_{i_1}^\T\Ab'\Vbh_{i_1},    
    \label{eq:lambda:1}
  \end{align}
  where we have again exploited the fact that $\Ab\Vbh_{i_1} = \lambda_1\Vbh_{i_1}$.

  Now, we can rewrite $\left[\Vb_{i_1}'^{\T}(\lambda_1\Ib_{d} - \Ab)\Vb_{i_1}'\right]$ of \eqref{eq:tr:i1} as
  \begin{subequations}
  \begin{align}
      \Vb_{i_1}'^{\T}(\lambda_1\Ib_{d} - \Ab)\Vb_{i_1}'\notag%
    &= \left[(\lambda_1\Ib_{d} - \Ab)\Vb_{i_1}'\right]^{\T}\Vb_{i_1}'\\
    &= \left(\Ab'\Vbh_{i_1} - \Vbh_{i_1}\Lambdab_{i_1}'\right)^{\T}\Vb_{i_1}'\label{eq:eig:tr:1}\\
    &= \left( \Vbh_{i_1}^{\T}\Ab' - \Lambdab_{i_1}'\Vbh_{i_1}^{\T} \right) \Vb_{i_1}' \label{eq:eig:tr:2}\\
    &= \Vbh_{i_1}^\T \Ab'\left[\left(\Ib_d - \Vbh_{i_1}\Vbh_{i_1}^\T\right) + \Vbh_{i_1}\Vbh_{i_1}^\T\right]\Vb'_{i_1}  - \Lambdab_{i_1}'\Vbh_{i_1}^{\T}\Vb'_{i_1} \notag \\
    &= \Vbh_{i_1}^\T \Ab'\left(\Ib_d - \Vbh_{i_1}\Vbh_{i_1}^\T\right)\Vb'_{i_1} + \left(\Vbh_{i_1}^\T \Ab'\Vbh_{i_1}- \Lambdab_{i_1}'\right)\Vbh_{i_1}^{\T}\Vb'_{i_1}\notag\\
    &= \Vbh_{i_1}^\T \Ab'\left(\Ib_d - \Vbh_{i_1}\Vbh_{i_1}^\T\right)\Vb'_{i_1},\label{eq:eig:tr:5}
  \end{align}
  \label{eq:eig:tr}
\end{subequations}
where \eqref{eq:eig:tr:1} follows from \eqref{eq:eig:v}, and \eqref{eq:eig:tr:5} follows from \eqref{eq:lambda:1}. Furthermore, it follows from the eigen-decomposition of $\Ab$ that
\begin{align*}
  \Ab = \lambda_1\sum_{j = 1}^{i_1} \vbh_j\vbh_j^{\T} + \sum_{j = i_1 + 1}^d \lambda_j\vb_j \vb_j^{\T},
\end{align*}
where $\vbh_j$ is the $j$-th column of $\Vbh_{i_1}$~($1 \leq j \leq i_1$). Similarly, we have
\begin{align*}
  \Ib_d = \sum_{j = 1}^{i_1} \vbh_j\vbh_j^{\T} + \sum_{j = i_1 + 1}^d \vb_j \vb_j^{\T}.
\end{align*}
Hence, we obtain
\begin{align*}
  \lambda_1\Ib_d - \Ab  = \sum_{j = i_1 + 1}^d (\lambda_1 - \lambda_j)\vb_j \vb_j^{\T},
\end{align*}
and its Moore-Penrose inverse
\begin{align}
  (\lambda_1\Ib_d - \Ab)^{\dagger} = \sum_{j = i_1 + 1}^d \frac{\vb_j \vb_j^{\T}}{\lambda_1 - \lambda_j}.
  \label{eq:mp:inv}
\end{align}
Therefore, we have
\begin{align*}
  \Ib_d - \Vbh_{i_1}\Vbh_{i_1}^{\T} = \sum_{j = i_1 + 1}^d \vb_j \vb_j^{\T} = (\lambda_1\Ib_d - \Ab)^{\dagger}(\lambda_1\Ib_d - \Ab),
\end{align*}
and hence
\begin{subequations}
  \begin{align}
      \left(\Ib_d - \Vbh_{i_1}\Vbh_{i_1}^\T\right)\Vb'_{i_1}%
    &= (\lambda_1\Ib_d - \Ab)^{\dagger}(\lambda_1\Ib_d - \Ab)\Vb'_{i_1}\\
    &= (\lambda_1\Ib_d - \Ab)^{\dagger}\Ab'\Vbh_{i_1} - (\lambda_1\Ib_d - \Ab)^{\dagger} \Vbh_{i_1}\Lambdab_{i_1}'     \label{eq:eig:mp:3}\\
    &= (\lambda_1\Ib_d - \Ab)^{\dagger}\Ab'\Vbh_{i_1},     \label{eq:eig:mp:4}
  \end{align}
    \label{eq:eig:mp}
  \end{subequations}
where to obtain \eqref{eq:eig:mp:3} we have used \eqref{eq:eig:v}, and to obtain \eqref{eq:eig:mp:4} we have used the fact that $(\lambda_1\Ib_d - \Ab)^{\dagger} \Vbh_{i_1}$ is a zero matrix [cf. \eqref{eq:mp:inv}], since $\vbh_i$ is orthogonal to $\vb_{j}$ for all $i \leq i_1 < j$.

Then, from \eqref{eq:tr:i1}, \eqref{eq:eig:tr} and \eqref{eq:eig:mp}, we obtain
\begin{align}
  &\eqspace\Vb_{i_1}^{\T}(\pv)\Ab\Vb_{i_1}(\pv)%
      = \lambda_1 \cdot \Ib_{i_1} - \pv^2 \Vbh_{i_1}^\T \Ab' ( \lambda_1\Ib_d - \Ab)^{\dagger}\Ab'\Vbh_{i_1} + o(\pv^2),
      \label{eq:V.A.T}
\end{align}
which implies that
\begin{subequations}
\begin{align}
    \tr\left\{\Vb_{i_1}^{\T}(\pv)\Ab\Vb_{i_1}(\pv)\right\}%
  &= \lambda_1 \cdot i_1 - \pv^2 \tr\left\{\Vbh_{i_1}^\T \Ab' ( \lambda_1\Ib_d - \Ab)^{\dagger}\Ab'\Vbh_{i_1}\right\} + o(\pv^2)\notag\\
  &= \lambda_1 \cdot i_1 - \pv^2 \sum_{i = 1}^{i_1} \vbh_i^\T \Ab' (\lambda_1\Ib_d - \Ab)^{\dagger} \Ab'\vbh_i + o(\pv^2)\notag\\
  &= \lambda_1 \cdot i_1 - \pv^2 \sum_{i = 1}^{i_1}\sum_{j= i_1 + 1}^d \frac{\left(\vbh_i^{\T}\Ab'\vb_{j}\right)^2}{\lambda_{1} - \lambda_{j}} + o(\pv^2)\label{eq:tr:eps:1}\\
  &= \lambda_1 \cdot i_1 - \pv^2 \sum_{j= i_1 + 1}^d \frac{\left\|\Vbh_{i_1}^{\T}\Ab'\vb_{j}\right\|^2}{\lambda_{1} - \lambda_{j}} + o(\pv^2)\\
  &= \lambda_1 \cdot i_1 - \pv^2 \sum_{j= i_1 + 1}^d \frac{\left\|\Vb_{i_1}^{\T}\Ab'\vb_{j}\right\|^2}{\lambda_{1} - \lambda_{j}} + o(\pv^2)\label{eq:tr:eps:3}\\
  &= \lambda_1 \cdot i_1 - \pv^2 \sum_{i = 1}^{i_1}\sum_{j= i_1 + 1}^d \frac{\left(\vb_i^{\T}\Ab'\vb_{j}\right)^2}{\lambda_{1} - \lambda_{j}} + o(\pv^2),
\end{align}
\label{eq:tr:eps}
\end{subequations}
where \eqref{eq:tr:eps:1} follows from \eqref{eq:mp:inv}, and $\Vb_{i_1}$ of \eqref{eq:tr:eps:3} is defined as $\Vb_{i_1} \defeq [\vb_1, \dots, \vb_{i_1}] \in \mathbb{R}^{d \times i_1}$. To obtain \eqref{eq:tr:eps:3}, we have used the fact that both the columns of $\Vb_{i_1}$ and $\Vbh_{i_1}$ form an orthonormal basis of the eigenspace of $\Ab$ associated with the eigenvalue $\lambda_1$.

Moreover, similar to the above derivations, for any $s$ we have
\begin{align} %
  &\sum_{i = i_{s-1}+1}^{i_{s}} \vb_i^\T(\pv)\Ab\vb_i(\pv)
  = \lambda_{i_s} (i_s - i_{s - 1})%
    - \pv^2 \sum_{i = i_{s-1}+1}^{i_{s}}\sum_{j: \lambda_j \neq \lambda_{i}} \frac{\left(\vb_i^{\T}\Ab'\vb_{j}\right)^2}{\lambda_{i} - \lambda_{j}} + o(\pv^2).\label{eq:sum_l}
\end{align}
Then from \eqref{eq:tr:vk} and~\eqref{eq:sum_l}, we have
\begin{align*}
    \tr \bigl\{\Vb_k^\T(\pv)\Ab\Vb_k(\pv)\bigr\} =  \sum_{s = 1}^q\sum_{i = i_{s-1}+1}^{i_{s}} \vb_i^\T(\pv)\Ab\vb_i(\pv)%
  &=  \sum_{s = 1}^q \lambda_{i_s} (i_s - i_{s - 1}) - \pv^2\sum_{i=1}^k\sum_{j = k + 1}^d\frac{\left(\vb_i^{\T}\Ab'\vb_j\right)^2}{\lambda_i - \lambda_j} + o(\pv^2)\\
  &= \tr \left\{ \Vb_k^{ \T} \Ab \Vb_k \right\}  - \pv^2\sum_{i=1}^k\sum_{j = k + 1}^d\frac{\left(\vb_i^{\T}\Ab'\vb_j\right)^2}{\lambda_i - \lambda_j} + o(\pv^2),
\end{align*}
which finishes the proof of the lemma.
\section{Proof of Lemma~\ref{lem:2}} \label{sec:appb}

Suppose $\lambda_1, \dots, \lambda_k$ take $q$ distinct values, then we define indices $i_0, \dots, i_q$ such that $0 = i_0 < \dots < i_{q-1} < k < k + 1 \leq i_q$ and 
\begin{align*}
  \lambda_{i_{s-1} + 1} = \lambda_{i_{s}} > \lambda_{i_{s} + 1}, \quad 0 \leq s \leq q.
\end{align*}

We first consider the case $q = 1$, which implies $k < i_1$. From \eqref{eq:V.A.T} we have
\begin{align*}
  &\eqspace\Vb_{k}^{\T}(\pv)\Ab\Vb_{k}(\pv)%
    = \lambda_1 \cdot \Ib_{k} - \pv^2 \Vbh_{k}^\T \Ab' ( \lambda_1\Ib_d - \Ab)^{\dagger}\Ab'\Vbh_{k} + o(\pv^2),
\end{align*}
which implies [cf. \eqref{eq:tr:eps}]
\begin{align*}
  \tr \left\{ \Vb_k^{ \T} \Ab \Vb_k \right\} &= k\lambda_1 - \pv^2 \sum_{i = 1}^{k}\sum_{j \in \cIb_k} \frac{\left(\vbh_i^{\T}\Ab'\vb_{j}\right)^2}{\lambda_{1} - \lambda_{j}} + o(\pv^2)%
\end{align*}

To obtain $\vbh_i~(1 \leq i \leq k)$, note that
\begin{align*}
    \Vb_{k}^{\T}(\pv)\Ab(\pv)\Vb_{k}(\pv)%
  &= \left(\Vbh_{k} + \pv\Vb_{k}' + \pv^2\Vb_{k}''\right)^{\T}\left(\Ab + \pv\Ab' + \pv^2\Ab''\right)%
      \left(\Vbh_{k} + \pv\Vb_{k}' + \pv^2\Vb_{k}''\right) + o(\pv^2)\\
  &= \Vbh_{k}^{\T}\Ab\Vbh_{k} + \pv\left(\Vbh_{k}\Ab\Vb_{k}' + \Vb_{k}'\Ab\Vbh_{k} + \Vbh_{k}^{\T}\Ab\Vbh_{k}\right) + o(\pv)\notag\\
  &= \lambda_1 \Ib_k + \pv\left[\lambda_1\left(\Vbh_{k}^{\T}\Vb_{k}' + \Vb_{k}'^{\T}\Vbh_{k}\right) + \Vbh_{k}^{\T}\Ab'\Vbh_{k}\right] + o(\pv)\\
  &= \lambda_1 \Ib_k + \pv \Vbh_{k}^{\T}\Ab'\Vbh_{k} + o(\pv),
\end{align*}
where to obtain the third equality we have used the fact that $\Ab\Vbh_{k} = \lambda_1\Vbh_{k}$, and to obtain the last equality used that $\left(\Vbh_{k}^{\T}\Vb_{k}' + \Vb_{k}'^{\T}\Vbh_{k}\right)$ is a zero matrix as a consequence of \eqref{eq:ortho:1}.

Since the columns of $\Vbh_{k}$ are $k$ orthonormal vectors in the eigenspace of $\Ab$ associated with the eigenvalue $\lambda_1$, we can write $\Vbh_{k}$ as $\Vbh_{k} = \Vb_{i_1}\Ub$, where ${\Ub = [\ub_1, \dots, \ub_k]\in \mathbb{R}^{i_1 \times k}}$ satisfies 
$\Ub^{\T}\Ub = \Ib_k$. Moreover, from the definition of eigenvectors, $\Vb_{k}(\pv)$ is the $d \times k$ matrix with orthonormal columns that maximizes
\begin{align*}
    \tr\left\{\Vb_{k}^{\T}(\pv)\Ab(\pv)\Vb_{k}(\pv)\right\}%
  &= k\lambda_1 + \pv\tr\left\{\Vbh_{k}^{\T}\Ab'\Vbh_{k}\right\} + o(\pv)%
    = k\lambda_1 + \pv\tr\left\{\Ub^{\T}\Vb_{i_1}^{\T}\Ab'\Vb_{i_1}\Ub\right\} + o(\pv).
\end{align*}
Therefore, $\Ub$ is the optimal solution of 
\begin{alignat*}{2}
  &\max_{\Ub'} &\quad &\tr\left\{\Ub'^{\T}\Vb_{i_1}^{\T}\Ab'\Vb_{i_1}\Ub'\right\}\\
  &~\,\text{s.t.}& &\Ub' \in\, \mathbb{R}^{i_1\times k}, \quad\Ub'^{\T}\Ub' = \Ib_k,
\end{alignat*}
which implies that $\ub_1, \dots, \ub_k$ are the top $k$ eigenvectors 
of the matrix $\Vb_{i_1}^{\T}\Ab'\Vb_{i_1}$. %

As a result, we have
  \begin{align}
    \tr \left\{ \Vb_k^{ \T} \Ab \Vb_k \right\}
    &= k\lambda_1 - \pv^2 \sum_{i = 1}^{k}\sum_{j \in \cIb_k} \frac{\left(\vbh_i^{\T}\Ab'\vb_{j}\right)^2}{\lambda_{1} - \lambda_{j}}+ o(\pv^2)%
      \label{eq:tr:k:k+1:simple}
  \end{align}
  where $\vbh_i = \Vb_{i_1}\ub_i~(1 \leq i \leq k)$, and $\ub_1, \dots, \ub_k$ are the top $k$ eigenvectors of the matrix $\Vb_{i_1}^{\T}\Ab'\Vb_{i_1}$.

  Similarly, for $q > 1$, we have
  \begin{align*}
    &\sum_{i = l}^k \vb_i^{\T}(\pv)\Ab\vb_i(\pv)%
      = (k - l + 1)\lambda_k 
    - \pv^2\sum_{i = l}^k\sum_{j \in \cIb_k} \frac{\left(\vbh_i^{\T}\Ab'\vb_{j}\right)^2}{\lambda_{k} - \lambda_{j}} + o(\pv^2),
  \end{align*}
  where $l$ is the minimal element of $\cI_k$, and
  \begin{align*}
    \vbh_{i} \defeq \Vb_{\cI_k}\ub_{i - l + 1}, \quad l \leq i \leq k,
  \end{align*}
  with $\ub_{1}, \dots, \ub_{k - l + 1}$ being the top $k - l + 1$ eigenvectors of $\Vb_{\cI_k}^{\T}\Ab'\Vb_{\cI_k}$.
  Therefore, we obtain
  \begin{subequations}
    \begin{align}
        \tr \left\{ \Vb_k^{\T}(\pv) \Ab \Vb_k(\pv) \right\}%
      &=\tr \left\{ \Vb_{l - 1}^{\T}(\pv) \Ab \Vb_{l - 1}(\pv) \right\}
        + \sum_{i = l}^k \vb_i^{\T}(\pv)\Ab\vb_i(\pv)\label{eq:tr:k:k+1:1}\\
      &=\tr\left\{ \Vb_{l - 1}^{\T} \Ab \Vb_{l - 1}\right\} - \pv^2\sum_{i=1}^{l - 1}\sum_{j = l}^d\frac{\left(\vb_i^{\T}\Ab'\vb_j\right)^2}{\lambda_i - \lambda_j}%
        + \sum_{i = l}^k \vb_i^{\T}(\pv)\Ab\vb_i(\pv) + o(\pv^2)\label{eq:tr:k:k+1:2}\\
      &=\tr\left\{\Vb_{l - 1}^{\T} \Ab \Vb_{l - 1}\right\} - \pv^2\sum_{i=1}^{l - 1}\sum_{j = l}^d\frac{\left(\vb_i^{\T}\Ab'\vb_j\right)^2}{\lambda_i - \lambda_j}%
        + (k - l + 1)\lambda_r - \pv^2\sum_{i = l}^k\sum_{j \in \cIb_k} \frac{\left(\vbh_i^{\T}\Ab'\vb_{j}\right)^2}{\lambda_{k} - \lambda_{j}} + o(\pv^2)\notag\\
      &=\tr\left\{ \Vb_{k}^{\T} \Ab \Vb_{k}\right\} - \pv^2\sum_{i=1}^{l - 1}\sum_{j = l}^d\frac{\left(\vb_i^{\T}\Ab'\vb_j\right)^2}{\lambda_i - \lambda_j}%
        - \pv^2\sum_{i = l}^k\sum_{j \in \cIb_k} \frac{\left(\vbh_i^{\T}\Ab'\vb_{j}\right)^2}{\lambda_{k} - \lambda_{j}} + o(\pv^2),\label{eq:tr:k:k+1:4}
    \end{align}
    \label{eq:tr:k:k+1}
  \end{subequations}
  where \eqref{eq:tr:k:k+1:2} follows from Lemma \ref{lem:eig:k}.

    \section{Proof of \lemref{lem:perturb:B}}
    \label{sec:app:dtmh}
    
    First, let us define $p_{\min} \defeq \min \{P_{XY}(x, y)\colon (x, y) \in \cX \times \cY, P_{XY}(x, y) > 0\}$. Then for all $\Ph_{XY} \in \nbhd(\eps)$, we have, using Pinsker's inequality \cite{cover2012elements}, 
    \begin{subequations}
    \begin{align}
      \sum_{x\in \cX, y\in \cY} \left[\imate(y, x)\right]^2
      &= \sum_{x\in \cX, y\in \cY\colon P_{XY}(x, y) > 0} \frac{[\Ph_{XY}(x, y) - P_{XY}(x, y)]^2}{\eps P_{XY}(x, y)}        \label{eq:imate:pinsker:1}\\
      &\leq \frac{1}{\eps p_{\min}} \sum_{x\in \cX, y\in \cY}\left[\Ph_{XY}(x, y) - P_{XY}(x, y)\right]^2\label{eq:imate:pinsker:2}\\
      &\leq \frac{1}{\eps p_{\min}} \left(\sum_{x\in \cX, y\in \cY}\left|\Ph_{XY}(x, y) - P_{XY}(x, y)\right|\right)^2\label{eq:imate:pinsker:3}\\
      &%
        \leq \frac{2D(\Ph_{XY}\| P_{XY})}{\eps p_{\min}} \\
      & \leq \frac{2}{p_{\min}\alpha_k}.
    \end{align}
        \label{eq:imate:pinsker}
  \end{subequations}
    Therefore, for all $(x, y)\in \cX \times \cY$ we obtain
    \begin{align}
      \left|\imate(y, x)\right| < \sqrt{\frac{2}{p_{\min}\alpha_k}}.
      \label{eq:imate:bound}
    \end{align}
    Thus, it follows from \eqref{eq:theta} that
    \begin{subequations}
    \begin{align}
      |\dtmdmate(y, x)|
      &\leq \frac{1}{p_{\min}} |\imate(y, x)| +  \frac{1}{p_{\min}} \Biggl|\frac{1}{P_X(x)}  \sum_{y' \in \cY} \sqrt{P_{XY}(x, y')} \imate(y', x) +\frac{1}{P_Y(y)} \sum_{x' \in \cX}\sqrt{P_{XY}(x', y)}\imate(y, x') \Biggr|\\
      &\leq \frac{1}{p_{\min}} |\imate(y, x)| +  \frac{|\cX| + |\cY|}{p^2_{\min}} \cdot \max_{(x, y)\in \cX \times \cY}|\imate(y, x)|\\
      &\leq \frac{1  +  |\cX| + |\cY|}{p^2_{\min}} \cdot  \max_{(x, y)\in \cX \times \cY}|\imate(y, x)|\label{eq:bound:Xi:3}\\
      &\leq \frac{1  +  |\cX| + |\cY|}{p^3_{\min}} \sqrt{\frac{2}{\alpha_k}},
    \end{align}
  \end{subequations}
    where %
    the last inequality follows from \eqref{eq:imate:bound} and the fact that $p_{\min} \leq 1$. Hence, we have $\bbfrob{\dtmdmat} \leq C$, where $C \defeq \frac{1  +  |\cX| + |\cY|}{p^3_{\min}} \sqrt{\frac{2|\cX||\cY|}{\alpha_k}}$ depends only on $P_{XY}$.

    To prove \eqref{eq:perturb:B}, we first define $\pv \defeq \sqrt{\eps}$ for the convenience of presentation. From \eqref{eq:imate:def}, we can represent the differences between the empirical marginal distributions and the true marginal distributions as
    \begin{align} \label{eq:marg_pert_X}
      &\Ph_{X} (x) - P_{X} (x) = \pv \sqrt{P_X(x)} \imate_{X} (x), \\ \label{eq:marg_pert_Y}
      &\Ph_{Y} (y) - P_{Y} (y) = \pv \sqrt{P_Y(y)} \imate_{Y} (y),
    \end{align}
    where
    \begin{align}\label{eq:imate:x}
      \imate_{X} (x) \triangleq \sum_{y \in \cY} \sqrt{P_{Y|X}(y|x)} \imate (y,x), \\\label{eq:imate:y}
      \imate_{Y} (y) \triangleq \sum_{x \in \cX} \sqrt{P_{X|Y}(x|y)} \imate (y,x).
    \end{align}

  In addition, it follows from~\eqref{eq:marg_pert_X} and~\eqref{eq:marg_pert_Y} that
  \begin{align}   \label{eq:pxpy}%
    \sqrt{\Ph_X(x)\Ph_Y(y)} &= \sqrt{P_X(x)P_Y(y)} %
    \Biggl[ 1 + \frac{\pv}{2} \left( \frac{\imate_X(x)}{\sqrt{P_X(x)}} +  \frac{\imate_Y(y)}{\sqrt{P_Y(y)}}  \right)\Biggr] + o(\pv)
  \end{align}
  and
  \begin{align}  \label{eq:ipxpy}
    \frac{1}{\sqrt{\Ph_X(x)\Ph_Y(y)}} &= \frac{1}{\sqrt{P_X(x)P_Y(y)}}%
    \Biggl[ 1 - \frac{\pv}{2} \left( \frac{\imate_X(x)}{\sqrt{P_X(x)}} +  \frac{\imate_Y(y)}{\sqrt{P_Y(y)}}  \right)\Biggr] + o(\pv).
  \end{align}
  Therefore, from \eqref{eq:imate:def} and \eqref{eq:marg_pert_X}--\eqref{eq:ipxpy} we have, for all $(x, y)\in \cX \times \cY$,
  \begin{align} %
    \dtmeh (y,x) - \dtmet (y,x) %
 &= \pv\Biggl(\frac{\sqrt{P_{XY}(x,y)} }{\sqrt{P_X(x)P_Y(y)}} \imate (y,x) -  \frac{P_{XY}(x, y) + P_X(x)P_Y(y)}{2\sqrt{P_X(x)P_Y(y)}}\notag\\
 &\qquad\cdot\Biggl[\frac{1}{P_X(x)}  \sum_{y' \in \cY} \sqrt{P_{XY}(x, y')} \imate(y', x) + \frac{1}{P_Y(y)} \sum_{x' \in \cX}\sqrt{P_{XY}(x', y)}\imate(y, x')\Biggr]\Biggr) + o(\pv)\notag\\
 &= \pv \dtmdmate(y, x) + o(\pv)\notag\\
 &= \sqrt{\eps}\, \dtmdmate(y, x) + o\left(\sqrt{\eps}\right),
   \label{eq:dtm:diff}
  \end{align}
  which is equivalent to \eqref{eq:perturb:B}. 

    \section{Proof of \lemref{lem:S1:nbhd}}
    
    \label{sec:app:S1:nbhd}

    For $\eps > 0$ and $t > 0$, we define the subset $\cS^{(t)}_2(\eps)$ of $\nbhd(\eps)$ as %
  \begin{align}
    \label{eq:cS2:def}
    \cS^{(t)}_2(\eps) \defeq \left\{\Ph_{XY} \colon \Ph_{XY} \in \nbhd(\eps), \Ph_{XY} \leftrightarrow \imate, ~\sum_{i = 1}^k\sum_{j = k + 1}^d \frac{\left[\phib_i^{\T}\left(\dtmt^\T \dtmdmat + \dtmdmat^\T \dtmt\right) \phib_j\right]^2}{\sigma_i^2 - \sigma_j^2} \geq t \right\}, 
  \end{align}
  where $\phib_i$ denotes the $i$-th right singular vector of $\dtmt$, and $\Xib$ is as defined in \eqref{eq:theta}. Then, it is convenient to first establish the following useful lemma.

  \begin{lemma} For all $t \in (0, 2)$, we have
    \label{lem:S2:alpha_k}
    \begin{align}
      \lim_{\eps \rightarrow 0^{+}} \eps^{-1}  D\left(\cS_2^{(t)}(\eps) \,\middle\|\, P_{XY}\right) = \frac{t}{2\alpha_k}.
      \label{eq:S2:t}
    \end{align}
  \end{lemma}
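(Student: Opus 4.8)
The plan for proving \lemref{lem:S2:alpha_k} is to show that, as $\eps\to 0^+$, the I-projection $D\bigl(\cS_2^{(t)}(\eps)\,\|\,P_{XY}\bigr)$ is governed by a quadratically constrained quadratic program with optimal value $\tfrac{t}{2\alpha_k}$. (If $\alpha_k=0$ then $\Gb_k=\zerob$, so $\cS_2^{(t)}(\eps)=\emptyset$ for every $\eps$ and both sides of~\eqref{eq:S2:t} are $+\infty$; assume $\alpha_k>0$.) Two facts are needed. First, a second-order expansion of the divergence that is \emph{uniform} over $\nbhd(\eps)$: for $\Ph_{XY}\in\nbhd(\eps)$ the $\eps$-free bound $\bnorm{\vec(\imat)}^2\le 2/(p_{\min}\alpha_k)$ of~\eqref{eq:imate:pinsker} makes the relative perturbations $u(x,y)\defeq \Ph_{XY}(x,y)/P_{XY}(x,y)-1$ uniformly $O(\sqrt\eps)$, so expanding $(1+u)\log(1+u)=u+\tfrac12 u^2+O(u^3)$ and using the linear constraint $\sum_{x,y}\sqrt{P_{XY}(x,y)}\,\imate(y,x)=0$ (which holds for every probability vector $\Ph_{XY}$ and annihilates $\sum_{x,y}P_{XY}(x,y)u(x,y)$) gives
\begin{align*}
  D(\Ph_{XY}\,\|\,P_{XY}) = \frac{\eps}{2}\bnorm{\vec(\imat)}^2 + O(\eps^{3/2}),
\end{align*}
with an implied constant depending only on $P_{XY}$. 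Second, the defining inequality of $\cS_2^{(t)}(\eps)$ is exactly $\vec^{\T}(\imat)\,\Gb_k\,\vec(\imat)\ge t$: since $\vec(\dtmdmat)=\Lb\vec(\imat)$ by the vectorized form of~\eqref{eq:theta} and~\eqref{eq:L:def}, and since $\vec(\ub\vb^{\T})=\vb\otimes\ub$ together with $\trop{A^{\T}B}=\vec^{\T}(A)\vec(B)$ give $\phib_i^{\T}\bigl(\dtmt^{\T}\dtmdmat+\dtmdmat^{\T}\dtmt\bigr)\phib_j=\thetab_{ij}^{\T}\Lb\vec(\imat)$, summing the squares divided by $\sigma_i^2-\sigma_j^2$ reproduces precisely the quadratic form $\Lb^{\T}\bigl(\sum_{i\le k<j}\thetab_{ij}\thetab_{ij}^{\T}/(\sigma_i^2-\sigma_j^2)\bigr)\Lb=\Gb_k$ of~\eqref{eq:gb:k}.

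Granting these, the lower bound follows at once: any $\Ph_{XY}\in\cS_2^{(t)}(\eps)\subseteq\nbhd(\eps)$ has $t\le\vec^{\T}(\imat)\Gb_k\vec(\imat)\le\alpha_k\bnorm{\vec(\imat)}^2$ (Rayleigh quotient, $\Gb_k\succeq\zerob$), hence $\bnorm{\vec(\imat)}^2\ge t/\alpha_k$, and the expansion gives $\eps^{-1}D(\Ph_{XY}\,\|\,P_{XY})\ge\tfrac{t}{2\alpha_k}+O(\sqrt\eps)$ uniformly, so $\liminf_{\eps\to0^+}\eps^{-1}D(\cS_2^{(t)}(\eps)\,\|\,P_{XY})\ge\tfrac{t}{2\alpha_k}$. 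For the matching upper bound I would exhibit an explicit near-optimal sequence: let $\ub$ be a unit-norm eigenvector of $\Gb_k$ for the eigenvalue $\alpha_k$, set $\vec(\imat^{*})\defeq\sqrt{t/\alpha_k}\,\ub$, and define $\Ph_{XY}^{(\eps)}(x,y)\defeq P_{XY}(x,y)+\sqrt{\eps\,P_{XY}(x,y)}\,\imate^{*}(y,x)$, so that $\Ph_{XY}^{(\eps)}\leftrightarrow\imat^{*}$. One checks $\Ph_{XY}^{(\eps)}$ is a genuine probability distribution for small $\eps$: nonnegativity follows from boundedness of $\imat^{*}$; $\imate^{*}(y,x)=0$ at every $(x,y)$ with $P_{XY}(x,y)=0$, since the $(x,y)$-column of $\Lb$ vanishes there so the eigenspace of $\Gb_k$ for $\alpha_k>0$ is supported on $\{P_{XY}>0\}$; and $\sum_{x,y}\sqrt{P_{XY}(x,y)}\,\imate^{*}(y,x)=0$, because $\vec(\imat^{*})$ is orthogonal to the vector $\mathbf{w}$ with entries $\sqrt{P_{XY}(x,y)}$ (established below). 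Moreover $\vec^{\T}(\imat^{*})\Gb_k\vec(\imat^{*})=\alpha_k\cdot t/\alpha_k=t$ meets the second requirement of $\cS_2^{(t)}(\eps)$, and $D(\Ph_{XY}^{(\eps)}\,\|\,P_{XY})=\tfrac{\eps t}{2\alpha_k}+o(\eps)$, which for small $\eps$ is $<\eps/\alpha_k$ precisely because $t<2$; this last bound both places $\Ph_{XY}^{(\eps)}$ in $\nbhd(\eps)$ and shows $\cS_2^{(t)}(\eps)\ne\emptyset$. Hence $\limsup_{\eps\to0^+}\eps^{-1}D(\cS_2^{(t)}(\eps)\,\|\,P_{XY})\le\tfrac{t}{2\alpha_k}$, and combining the two bounds gives~\eqref{eq:S2:t}.

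The step I expect to be the main obstacle is the feasibility claim $\mathbf{w}\perp\mathrm{range}(\Gb_k)$ needed for the upper bound: since $\Gb_k=\Lb^{\T}M\Lb$ with $M\defeq\sum_{i\le k<j}\thetab_{ij}\thetab_{ij}^{\T}/(\sigma_i^2-\sigma_j^2)\succeq\zerob$ and $\Gb_k$ symmetric, it suffices that $\thetab_{ij}^{\T}\Lb\mathbf{w}=0$ for all $1\le i\le k<j\le d$. Substituting $\imate(y,x)=\sqrt{P_{XY}(x,y)}$ into~\eqref{eq:theta} one finds $\Lb\mathbf{w}=\vec(-\mathbf{q}\mathbf{p}^{\T})$, where $\mathbf{p}=\bigl(\sqrt{P_X(x)}\bigr)_x$ and $\mathbf{q}=\bigl(\sqrt{P_Y(y)}\bigr)_y$ are the right and left singular vectors of $\dtm$ for its largest singular value $1$ — the mode removed in~\eqref{eq:dtmt} — whence $\thetab_{ij}^{\T}\Lb\mathbf{w}=-\sigma_i(\psib_i^{\T}\mathbf{q})(\mathbf{p}^{\T}\phib_j)-(\phib_i^{\T}\mathbf{p})(\mathbf{q}^{\T}\dtmt\phib_j)$, which vanishes because $\phib_i^{\T}\mathbf{p}=0$ and $\psib_i^{\T}\mathbf{q}=0$ for every $i\le k$ ($\psib_i$ being the $i$-th left singular vector of $\dtmt$). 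A secondary point needing care is that the $O(\eps^{3/2})$ remainder in the divergence expansion be genuinely uniform over the shrinking neighborhoods $\nbhd(\eps)$ — which is exactly what the $\eps$-independent bound~\eqref{eq:imate:pinsker} guarantees.
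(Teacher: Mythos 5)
Your proof is correct and follows the paper's overall strategy: a second-order Taylor expansion of the KL divergence with respect to $\imat$, vectorization of the quadratic constraint defining $\cS_2^{(t)}(\eps)$ into $\vec^{\T}(\imat)\Gb_k\vec(\imat)\ge t$, and reduction of the resulting QCQP to the spectral norm of $\Gb_k$. The substantive place where you diverge is in showing that the normalization constraint $\sum_{x,y}\sqrt{P_{XY}(x,y)}\,\imate(y,x)=0$ is compatible with the spectral-norm optimum. The paper proves this by contradiction and rescaling: it decomposes $\imate^*$ into a component along $\sqrt{P_{XY}}$ and an orthogonal remainder, shows the former contributes nothing to the objective (indeed drives it to zero when dominant), and argues that any nonzero projection can be removed and the remainder rescaled to strictly improve the objective. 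You instead prove the structural fact $\Gb_k\mathbf{w}=\bzero$ directly, by computing $\Lb\mathbf{w}=\vec(-\mathbf{q}\mathbf{p}^{\T})$ and using that $\mathbf{p},\mathbf{q}$ are the removed top singular mode, so that $\thetab_{ij}^{\T}\Lb\mathbf{w}$ vanishes termwise from the orthogonality $\phib_i^\T\mathbf{p}=\psib_i^\T\mathbf{q}=0$ for $i\le k$. This is a cleaner and more informative argument: it exhibits exactly \emph{why} the constraint is automatically satisfied rather than merely that the optimum can be chosen to satisfy it, and the same computation simultaneously handles the support condition (eigenvectors of $\Gb_k$ for a nonzero eigenvalue vanish where $P_{XY}$ does, since the corresponding columns of $\Lb$ vanish). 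Two further refinements you supply that the paper glosses over are worth keeping: the explicit $\eps$-free bound $\bnorm{\vec(\imat)}^2\le 2/(p_{\min}\alpha_k)$ from \eqref{eq:imate:pinsker} is what makes the $O(\eps^{3/2})$ remainder in \eqref{eq:normg} genuinely uniform over the shrinking sets $\nbhd(\eps)$ — without this the $\liminf$ step of the lower bound is not justified — and the degenerate case $\alpha_k=0$, where $\cS_2^{(t)}(\eps)=\emptyset$ and both sides are $+\infty$, is handled explicitly.
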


  Using \lemref{lem:S2:alpha_k}, we establish \lemref{lem:S1:nbhd} as follows. First,   
  for all empirical distributions $\Ph_{XY} \in \nbhd(\eps)$, it follows from \lemref{lem:perturb:B} that
  \begin{align} \label{eq:dtmh.T:dtmh}
    \dtmh^{\T} \dtmh = \dtmt^{\T} \dtmt + \sqrt{\eps} \left( \dtmt^{\T} \Xib + \Xib^{\T} \dtmt \right) + o\left(\sqrt{\eps}\right).
  \end{align}
  From the perturbation analysis result of \lemref{lem:eig:k}, we can represent the learning error as
  \begin{align}
    \bbfrob{ \dtmt \Phib_k }^2 - \bbfrob{ \dtmt \Phih_k }^2
    = \eps\sum_{i = 1}^k\sum_{j = k + 1}^d \frac{\left[\phib_i^{\T}\left(\dtmt^\T \dtmdmat + \dtmdmat^\T \dtmt\right) \phib_j\right]^2}{\sigma_i^2 - \sigma_j^2} + o(\eps).
    \label{eq:err:perturb}
  \end{align}
  Therefore, for any $t \in (0, 1)$, there exists an $\eps_0 > 0$ such that for all $\eps \in (0, \eps_0)$, we have
  \begin{align}
    \cS_2^{(1 + t)}(\eps)  \subseteq \cS_1(\eps) \cap \nbhd(\eps) \subseteq \cS_2^{(1 - t)}(\eps).
    \label{eq:set:inclusion}
  \end{align}

    This implies that
    \begin{align}
        D\left(\cS^{(1 - t)}_2(\eps)\,\middle\|\, P_{XY}\right) \leq D\left(\cS_1(\eps) \cap \nbhd(\eps) \,\middle\|\, P_{XY}\right)  \leq D\left(\cS^{(1 + t)}_2(\eps)\,\middle\|\, P_{XY}\right).
      \label{eq:bound:min}
    \end{align}
    From the first inequality of \eqref{eq:bound:min}, we obtain
    \begin{subequations}
      \begin{align}
        \liminf_{\eps \rightarrow 0^+} \eps^{-1} D\left(\cS_1(\eps)\cap \nbhd(\eps)\,\middle\|\, P_{XY}\right)  \geq \liminf_{\eps \rightarrow 0^+} \eps^{-1} D\left(\cS^{(1 - t)}_2(\eps)\,\middle\|\, P_{XY}\right) = \frac{1 - t}{2\alpha_k},
      \end{align}
      where the equality follows from \lemref{lem:S2:alpha_k}. Similarly, from the second inequality of \eqref{eq:bound:min} we have
      \begin{align}
        \limsup_{\eps \rightarrow 0^+} \eps^{-1} D\left(\cS_1(\eps)\cap \nbhd(\eps)\,\middle\|\, P_{XY}\right) \leq \limsup_{\eps \rightarrow 0^+} \eps^{-1} D\left(\cS^{(1 + t)}_2(\eps)\,\middle\|\, P_{XY}\right) = \frac{1 + t}{2\alpha_k}.
      \end{align}
      \label{eq:limsup:liminf}
    \end{subequations}
    As $t$ can be chosen to be arbitrarily close to $0$, we must have %
    \begin{align}
      \lim_{\eps \rightarrow 0^+} \eps^{-1} D\left(\cS_1(\eps)\cap \nbhd(\eps)\,\middle\|\, P_{XY}\right) =
      \frac{1}{2\alpha_k}.
      \label{eq:lim:cS1}
    \end{align}

    It remains only to establish \lemref{lem:S2:alpha_k}.
    
    \begin{proof}[Proof of \lemref{lem:S2:alpha_k}]
    Since the set $\cS_2^{(t)}(\eps)$ is closed, we have
    \begin{align*}
      D\left( \cS_2^{(t)}(\eps)\,\middle\|\, P_{XY}\right) = \inf_{\Ph_{XY} \in \cS_2^{(t)}(\eps)}  D(\Ph_{XY}\| P_{XY}) = \min_{\Ph_{XY} \in \cS_2^{(t)}(\eps)}  D(\Ph_{XY}\| P_{XY}).
    \end{align*}
    Then, for all $\Ph_{XY} \in \cS_2^{(t)}(\eps) \subset \nbhd(\eps)$ with $\Ph_{XY} \leftrightarrow \imate$, it follows from \eqref{eq:imate:bound} that $\imate(y, x)$ is bounded for all $(x, y) \in \cX \times \cY$. Hence, it follows from the second-order Taylor series expansion of the K-L divergence that
    \begin{align}
          D(\Ph_{XY} \| P_{XY}) \notag%
        &= \frac{1}{2} \sum_{x \in \cX, y \in \cY} \frac{\left[ \Ph_{XY}(x,y) - P_{XY}(x,y) \right]^2}{P_{XY}(x,y)} + o(\eps) \\ 
        &= \frac{\eps}{2} \sum_{x \in \cX, y \in \cY} \imate^2 (y,x) + o(\eps) = \frac{\eps}{2} \bbfrob{\imat}^2 + o(\eps).\label{eq:normg}    
    \end{align}

    Moreover, %
    since $\Ph_{XY}$ and $P_{XY}$ are probability distributions, it follows from \eqref{eq:imate:def} that
    \begin{align} \label{eq:cond_g}
      \sum_{x \in \cX, y \in \cY}\sqrt{P_{XY}(x,y)} \imate (y,x) = 0.
    \end{align}
    
    Therefore, %
    the characterization of \eqref{eq:S2:t} %
    leads to the following optimization problem:
    \begin{subequations}
      \begin{alignat}{2}
        &\min_{\imat}& \quad& \bbfrob{\imat}^2         \label{eq:opt:org:1}\\
        &~\,\text{s.t.}& & \sum_{i = 1}^k\sum_{j = k + 1}^d \frac{\left[\phib_i^{\T}\left(\dtmt^\T \dtmdmat + \dtmdmat^\T \dtmt\right) \phib_j\right]^2}{\sigma_i^2 - \sigma_j^2} \geq t,\label{eq:opt:org:2}\\
        &&& \sum_{x \in \cX, y \in \cY}\sqrt{P_{XY}(x,y)} \imate (y, x) = 0\label{eq:opt:org:3}.
      \end{alignat}
      \label{eq:opt:org}
    \end{subequations}
    As we will verify, although not imposed as a constraint, the condition $\Ph_{XY} \in \nbhd(\eps)$ can be satisfied for the optimal $\imat$. Note that since both the objective function and the inequality constraint \eqref{eq:opt:org:2} are quadratic, the optimal solution of \eqref{eq:opt:org} can be %
    obtained via solving
    \begin{alignat}{2}
      &\max_{\imat}& \quad & \sum_{i = 1}^k\sum_{j = k + 1}^d \frac{\left[\phib_i^{\T}\left(\dtmt^\T \dtmdmat + \dtmdmat^\T \dtmt\right) \phib_j\right]^2}{\sigma_i^2 - \sigma_j^2}\notag\\
      &~\,\text{s.t.}& & \bbfrob{\imat}^2 \leq 1, \ \sum_{x \in \cX, y \in \cY}\sqrt{P_{XY}(x,y)} \imate (y, x) = 0,
      \label{eq:opt}
    \end{alignat}
    where we have interchanged the objective function and the quadratic function in the inequality constraint. Furthermore, we can show that \eqref{eq:opt} is equivalent to the optimization problem without the equality constraint, i.e,
    \begin{subequations}
      \begin{align}
        &\max_{\imat} ~\sum_{i = 1}^k\sum_{j = k + 1}^d \frac{\left[\phib_i^{\T}\left(\dtmt^\T \dtmdmat + \dtmdmat^\T \dtmt\right) \phib_j\right]^2}{\sigma_i^2 - \sigma_j^2}\label{eq:opt:equiv:obj}\\
        &~\,\text{s.t.} \quad \bbfrob{\imat}^2 \leq 1.\label{eq:opt:equiv:constraint}
      \end{align}
      \label{eq:opt:equiv}
    \end{subequations}
    To see this, suppose that $\imat^*$ is the optimal solution of \eqref{eq:opt:equiv} with $c \defeq \sum_{x\in \cX, y\in \cY} \sqrt{P_{XY}(x,y)}\imate^* (y,x)$. Then, let $z(x, y) \defeq \imate^* (y, x) - c\sqrt{P_{XY}(x,y)}$, and we have
    \begin{align*}
      1 = \sum_{x \in \cX, y\in \cY}\left[\imate^*(y, x)\right]^2 = \sum_{x \in \cX, y\in \cY}z^2(x, y) + c^2,
    \end{align*}
    which implies $|c| \leq 1$. 

    If $|c| = 1$, we have $\imate^*(y,x) = \pm\sqrt{P_{XY}(x,y)}$, and it follows from \eqref{eq:theta} that $\Xi(y,x) = \mp\sqrt{P_{X}(x)P_{Y}(y)}$. Hence, we have $\Xib = \mp\psib_0\phib_0^{\T}$, where $\psib_0$ is a $|\cY|$-dimensional vector with its $y$-th element being $\sqrt{P_{Y}(y)}$, and $\phib_0 \in \mathbb{R}^{|\cX|}$ with the $x$-th element being $\sqrt{P_{X}(x)}$. Then, the objective function is zero since $\dtmt^\T \dtmdmat = \mp\dtmt^\T\psib_0\psib_0^\T = \Ob$, which contradicts the assumption that $\imat^*$ is optimal. Moreover, if $0<|c| < 1$, then we can construct the matrix $\imat'$ with elements $\imate'(y, x) = z(x, y)/\sqrt{1 - c^2}$. It can be verified that $\bbfrob{\imat'}^2 = 1$ and the objective function in \eqref{eq:opt} for $\imat'$ is $1/(1 - c^2)$ times the corresponding value for $\imat^*$. This again contradicts the optimality of $\imat^*$. Therefore, we have $c = 0$, and the optimization problem \eqref{eq:opt:equiv} has the same solution as that of \eqref{eq:opt}.

    In addition, it can be shown that %
    for $(x',y') \in \cX \times \cY$ such that $P_{XY}(x', y') = 0$, we must have $\imate^* (y', x') = 0$, since otherwise we can set $\imate^* (y', x') = 0$ and rescale $\imat^*$ to $\bbfrob{\imat^*}^2 = 1$, which increases the objective function of~\eqref{eq:opt:equiv} due to~\eqref{eq:theta}. Therefore, the optimal solution $\imat^*$ satisfies the definition \eqref{eq:imate:def}.

    To simplify the objective function \eqref{eq:opt:equiv:obj}, we employ the vectorization operation $\vec(\cdot)$ that stacks all columns of a matrix into a vector. Specifically, for $\Wb = [w_{ij}] \in \mathbb{R}^{p\times q}$, we use $\vec(\Wb)$ to denote the $pq$-dimensional column vector with the $[p(j-1) + i]$-th entry being $w_{ij}$. Then, we can rewrite \eqref{eq:opt:equiv:obj} as
    \begin{subequations}
      \begin{align}
            \sum_{i = 1}^k\sum_{j = k + 1}^d \frac{\left[\phib_i^{\T}\left(\dtmt^\T \dtmdmat + \dtmdmat^\T \dtmt\right) \phib_j\right]^2}{\sigma_i^2 - \sigma_j^2}%
          &=\sum_{i = 1}^k\sum_{j = k + 1}^d \frac{\left[\bigl(\dtmt\phib_i\bigr)^{\T} \dtmdmat\phib_j + \bigl(\dtmt\phib_j\bigr)^{\T} \dtmdmat\phib_i \right]^2}{\sigma_i^2 - \sigma_j^2}  \label{eq:tr:vec:1}\\
          &=\sum_{i = 1}^k\sum_{j = k + 1}^d \frac{\left(\tr\left\{\left(\dtmt\phib_i\phib_j^{\T} + \dtmt\phib_j\phib_i^{\T}\right)^{\T} \dtmdmat\right\}\right)^2}{\sigma_i^2 - \sigma_j^2}\label{eq:tr:vec:2}\\
          &=\sum_{i = 1}^k\sum_{j = k + 1}^d \frac{\left[\vec^{\T}\left(\dtmt\phib_i\phib_j^{\T} + \dtmt\phib_j\phib_i^{\T}\right) \vec(\dtmdmat)\right]^2}{\sigma_i^2 - \sigma_j^2}\label{eq:tr:vec:3}\\
          &=\sum_{i = 1}^k\sum_{j = k + 1}^d \frac{\left[\thetab^{\T}_{ij} \vec(\dtmdmat)\right]^2}{\sigma_i^2 - \sigma_j^2}\label{eq:tr:vec:4}\\
          &=\vec^{\T}(\dtmdmat)\left(\sum_{i = 1}^k\sum_{j = k + 1}^d \frac{\thetab_{ij}\thetab^{\T}_{ij}}{\sigma_i^2 - \sigma_j^2}\right) \vec(\dtmdmat),\label{eq:tr:vec:5}
      \end{align}
      \label{eq:tr:vec}
    \end{subequations}
    where %
    to obtain \eqref{eq:tr:vec:2}--\eqref{eq:tr:vec:3} we have used the properties of trace that
    \begin{align*}
      \ub^{\T}\Mb\vb = \tr\left\{\ub^{\T}\Mb\vb\right\} &= \tr\left\{\vb\ub^{\T}\Mb\right\} %
          = \vec^{\T}\left(\ub\vb^{\T}\right)\vec(\Mb),
    \end{align*}
    and to obtain \eqref{eq:tr:vec:4} we have used the fact that
    \begin{align*}
      \vec\left(\dtmt\phib_i\phib_j^{\T} + \dtmt\phib_j\phib_i^{\T}\right) &= \phib_j \otimes (\dtmt\phib_i) + \phib_i \otimes (\dtmt\phib_j)%
          = \thetab_{ij}.
    \end{align*}
    
    Moreover, it follows from \eqref{eq:theta} and~\eqref{eq:L:def} that $\vec(\dtmdmat) = \Lb\vec\left(\imat\right)$. Thus, \eqref{eq:tr:vec:5} can be reduced to
    \begin{align}
      &\vec^{\T}(\imat)\Lb^\T\left(\sum_{i = 1}^k\sum_{j = k + 1}^d \frac{\thetab_{ij}\thetab^{\T}_{ij}}{\sigma_i^2 - \sigma_j^2}\right) \Lb\vec(\imat)%
          = \vec^{\T}(\imat)\Gb_k \vec(\imat).\label{eq:gk}
    \end{align}
    Since $\left\|\vec\left(\imat\right)\right\| = \frob{\imat}$, the constraint of \eqref{eq:opt:equiv} is equivalent to $\left\|\vec\left(\imat\right)\right\| \leq 1$. Therefore, the maximum of \eqref{eq:gk} is the largest singular value $\alpha_k$ of $\Gb_k$, which is the optimal value of the objective functions in \eqref{eq:opt:equiv} and \eqref{eq:opt}. This implies that the optimal solution of the original optimization problem \eqref{eq:opt:org} is $\sqrt{\frac{t}{\alpha_k}} \imat^* $, with the corresponding optimal value being %
 $t\alpha_k^{-1}$. Let $\Ph^*_{XY} \leftrightarrow \sqrt{\frac{t}{\alpha_k}}\imate^*$ denote the corresponding optimal empirical distribution, then we have, for $\eps$ sufficiently small,
    \begin{align*}
      D(\Ph^*_{XY} \| P_{XY}) = \frac{\eps t}{2\alpha_k} + o(\eps) < \frac{\eps }{\alpha_k},
    \end{align*}
    where we have used the fact that $t \in (0, 2)$.

    Hence, we obtain $\Ph^*_{XY} \in \nbhd(\eps)$ and thus
    \begin{align}
      \min_{\Ph_{XY} \in \cS_2^{(t)}(\eps)}  D(\Ph_{XY}\| P_{XY}) = D(\Ph^*_{XY} \| P_{XY}) + o(\eps) = \frac{\eps t}{2\alpha_k} + o(\eps),
      \label{eq:res:final}
    \end{align}
    which implies \eqref{eq:S2:t}.

  \end{proof}

\section{Proof of  \thmref{thm:samples}}
\label{sec:app:col:non-asym}
\reply{
  First, it follows from \eqref{eq:lim:S1} that there exists an $\eps_0 > 0$ such that for all $\eps \in (0, \eps_0)$ we have
  \begin{align}
    D\left(S_1(\eps) \,\middle\|\, P_{XY}\right) = \frac{\eps}{2\alpha_k} + o(\eps) > \frac{\eps}{3\alpha_k} = \kappa\eps,
    \label{eq:lb:D}
  \end{align}
  where we have defined $\kappa = (3\alpha_k)^{-1}$.

  Then, using Sanov's theorem, we have for all $\eps \in (0, \eps_0)$, 
  \begin{align}
    \mathbb{P}_n \left\{ \bbfrob{ \dtmt \Phib_k }^2 - \bbfrob{ \dtmt \Phih_k }^2 > \eps  \right\}
    &\leq (n + 1)^{|\cX||\cY|} \exp\left(-n\cdot  D\left(S_1(\eps) \,\middle\|\, P_{XY}\right) \right)\\
    &< (n + 1)^{|\cX||\cY|} \exp\left(-\kappa n\eps\right)\label{eq:lb:D:res}\\
    &\leq (2n)^{|\cX||\cY|} \exp\left(-\kappa n\eps\right)\label{eq:n:geq1}\\
    &= \left(\frac{8|\cX||\cY|}{\kappa \eps}\right)^{|\cX||\cY|}
     \left(\frac{\kappa  n\eps}{4|\cX||\cY|}\right)^{|\cX||\cY|} \exp\left(-\kappa  n\eps\right)\\
    &<  \left(\frac{8|\cX||\cY|}{\kappa \eps}\right)^{|\cX||\cY|}
      \exp\left(\frac{\kappa  n\eps}{4|\cX||\cY|} \cdot |\cX||\cY|\right) \exp\left(-\kappa  n\eps\right)
    \label{eq:ieq:exp}\\
    &=  \left(\frac{8|\cX||\cY|}{\kappa \eps}\right)^{|\cX||\cY|} \exp\left(-\frac{3\kappa  n\eps}{4}\right),
      \label{eq:sanov:ieq}
  \end{align}
  where to obtain \eqref{eq:lb:D:res} we have used \eqref{eq:lb:D}, to obtain \eqref{eq:n:geq1} we have used the fact that $ n \geq 1$, and to obtain \eqref{eq:ieq:exp} we have used the fact that $x \leq e^x - 1 < e^x$.
  
  Therefore, it suffices to choose $n$ such that
  \begin{align*}
    \left(\frac{8|\cX||\cY|}{\kappa \eps}\right)^{|\cX||\cY|} \exp\left(-\frac{3\kappa  n\eps}{4}\right)
    < \delta,
  \end{align*}
  which is equivalent to
  \begin{align*}
    n > \frac{4|\cX||\cY|}{3\kappa \eps} \log \frac{8|\cX||\cY|}{\kappa \eps} + \frac{4}{3\kappa \eps}\log\frac1\delta
    = \frac{4\alpha_k|\cX||\cY|}{\eps} \log \frac{24\alpha_k|\cX||\cY|}{\eps} + \frac{4\alpha_k}{\eps}\log\frac1\delta = N^{(4\alpha_k)}(\eps, \delta),
  \end{align*}
  where we have used the fact that $\kappa = (3\alpha_k)^{-1}$.
}

\section{Proof of \propref{eg:1}} \label{sec:appc}
  First, note that \eqref{eq:gb:k} can be reduced to
  \begin{align}
    \Gb_k = \sum_{i = 1}^{d - 1} \frac{1}{\sigma_i^2}\Lb^{\T}\thetab_{id}\thetab^{\T}_{id}\Lb = \sum_{i = 1}^{d - 1} \frac{1}{\sigma_i^2}\left(\Lb^{\T}\thetab_{id}\right)\left(\Lb^{\T}\thetab_{id}\right)^\T,
    \label{eq:Gbk:d-1}
  \end{align}
  where
  \begin{align*}
   \thetab_{id} = \phib_d \otimes \bigl(\dtmt\phib_i\bigr) = \sigma_i(\phib_d \otimes \psib_i),\quad i = 1, \dots, d - 1,
  \end{align*}
  where $\psib_i \in \mathbb{R}^{|\cY|}$ %
  is the $i$-th left singular vector of $\dtmt$. Then, from the facts that $\sigma_{d - 1} > 0 = \sigma_d$ and $d = |\cX| \leq |\cY|$, we know that $\phib_d$ is the only right singular vector associated with the singular value $0$, and thus we have
  \begin{align}
    \label{eq:phib:d}    
    \phib_d = \Bigl[\sqrt{P_X(1)}, \dots, \sqrt{P_X(d)}\Bigr]^{\T}.
  \end{align}
  Then it follows from \eqref{eq:L:def} that the $[(x - 1)|\cY| + y]$-th entry of $\left(\Lb^{\T}\thetab_{id}\right)$ is
  \begin{subequations}
    \begin{align}
      &\sum_{x' \in \cX, y' \in \cY} \sqrt{\frac{P_{XY}(x, y)}{P_X(x')P_Y(y')}}  \Biggl(\delta_{xx'}\delta_{yy'} - \frac{1}{2}\biggl[\frac{\delta_{xx'}}{P_X(x')} %
        + \frac{\delta_{yy'}}{P_Y(y')}\biggr]\cdot\left[P_{XY}(x',y')+P_{X}(x')P_Y(y')\right]\Biggr)%
        \cdot\bigl[\sigma_i \phi_d(x')\psi_i(y')\bigr]\notag\\
      &= \sigma_i\phi_d(x)\psi_i(y)\sqrt{\frac{P_{XY}(x, y)}{P_X(x)P_Y(y)}}-\frac{\sigma_i}{2}\sqrt{P_{XY}(x, y)}\notag\\
      &\qquad
        \cdot  \sum_{x' \in \cX, y' \in \cY} \Biggl(\biggl[\frac{\delta_{xx'}}{P_X(x')} + \frac{\delta_{yy'}}{P_Y(y')}\biggr] %
        \cdot \left[\dtmet(y', x') + 2\phi_d(x')\sqrt{P_Y(y')}\right]\psi_i(y')\phi_d(x')\Biggr)\label{eq:comp:alpha:1}\\
      &= \sigma_i\phi_d(x)\psi_i(y)\sqrt{\frac{P_{XY}(x, y)}{P_X(x)P_Y(y)}}-\frac{\sigma_i}{2}\sqrt{P_{XY}(x, y)}\notag\\
      &\qquad\cdot \Biggl( \frac{\phi_d(x)}{{P_X(x)}} \sum_{y' \in \cY}\left[\dtmet(y', x) + 2\phi_d(x)\sqrt{P_Y(y')}\right]\psi_i(y')%
        + \frac{\psi_i(y)}{{P_Y(y)}} \sum_{x' \in \cX}\left[\dtmet(y, x') + 2\phi_d(x')\sqrt{P_Y(y)}\right]\phi_d(x')\Biggr)\label{eq:comp:alpha:2}\\
      &= \sigma_i\phi_d(x)\psi_i(y)\sqrt{\frac{P_{XY}(x, y)}{P_X(x)P_Y(y)}}-\frac{\sigma_i}{2}\sqrt{P_{XY}(x, y)}%
        \cdot \left[ \frac{\sigma_i\phi_i(x)\phi_d(x)}{{P_X(x)}} + \frac{2 \psi_i(y)}{\sqrt{P_Y(y)}}\right]\label{eq:comp:alpha:3}\\
      &= -\frac{\sigma_i^2}{2}\sqrt{P_{Y|X}(y|x)}\phi_i(x),\label{eq:comp:alpha:4}
    \end{align}
    \label{eq:comp:alpha}
  \end{subequations}
  where $\phi_i(x)$ and $\psi_j(y)$ denote the $x$-th entry of $\phib_i$ and the $y$-th entry of $\psib_j$, respectively, and where to obtain \eqref{eq:comp:alpha:1} we have exploited the fact that
  \begin{align*}
    &\frac{P_{XY}(x', y') + P_{X}(x')P_Y(y')}{\sqrt{P_X(x')P_Y(y')}}%
      =\dtmet(y', x') + 2\phi_d(x')\sqrt{P_Y(y')}.
  \end{align*}  
  In addition, to obtain \eqref{eq:comp:alpha:3}, we have used the facts that 
  \begin{subequations}
    \begin{gather}
      \sum_{x' \in \cX} \dtmet(y, x')\phi_d(x') = \sigma_d\psi_d(y) = 0,\\
      \sum_{x'\in \cX}[\phi_d(x')]^2 = \|\phib_d\|^2 = 1,
    \end{gather}
  \end{subequations}
  and for $1 \leq i \leq d-1$, 
  \begin{subequations}
    \begin{gather}
      \sum_{y'\in \cY} \dtmet(y', x)\psi_i(y') = \sigma_i\phi_i(x),\\
      \sum_{y'\in \cY}\sqrt{P_Y(y')}\psi_i(y') = 0,\label{eq:ortho:psi}
    \end{gather}
  \end{subequations}
  where \eqref{eq:ortho:psi} follows from the fact that the vector $\Bigl[\sqrt{P_Y(1)}, \dots, \sqrt{P_Y(|\cY|)}\Bigr]^{\T}\in \mathbb{R}^{|\cY|}$ is a left singular vector of the matrix $\dtmt$ associated with the singular value $0$.

  Hence, from \eqref{eq:comp:alpha} we have
  \begin{align*}
    \Lb^\T\thetab_{id} =
    -\frac{\sigma_i^2}{2} \Mb\phib_i,
  \end{align*}
  where $\Mb \in \mathbb{R}^{(|\cX|\cdot|\cY|) \times |\cX|}$ is a block diagonal matrix defined as  
  \begin{subequations}
  \begin{align}
    \Mb \defeq
      \begin{bmatrix}
        \nub_1 &\zerob_{|\cY|} &\cdots &\zerob_{|\cY|} \\
        \zerob_{|\cY|}& \nub_2& \cdots&\zerob_{|\cY|} \\      
        \vdots&  \vdots&  \ddots& \vdots\\
        \zerob_{|\cY|}& \zerob_{|\cY|} &\cdots  & \nub_{|\cX|}
    \end{bmatrix},
    \label{eq:Mb}
  \end{align}
  where $\zerob_{|\cY|}$ is the zero vector in $\mathbb{R}^{|\cY|}$, and for each $x \in \cX$, $\nub_x$ is a $|\cY|$-dimensional vector defined as
  \begin{align}\label{eq:nub}    
    \nub_x = \biggl[\sqrt{P_{Y|X}(1|x)}, \dots, \sqrt{P_{Y|X}\bigl(|\cY|\big|x\big)}\biggr]^\T.
  \end{align}
  \label{eq:Mb:nub}    
\end{subequations}
  As a result, it follows from \eqref{eq:Gbk:d-1} that
  \begin{align}\label{eq:gbk:eig}
    \Gb_k = \frac14\sum_{i = 1}^{d - 1}\sigma_i^2 (\Mb\phib_i)(\Mb\phib_i)^{\T},
  \end{align}
  from which we can obtain the eigen-decomposition of $\Gb_k$. Indeed, since $\Mb^{\T}\Mb = \Ib_d$, we have $\langle\Mb\phib_i, \Mb\phib_j\rangle = \langle\phib_i, \phib_j\rangle = \delta_{ij}$. Therefore, from \eqref{eq:gbk:eig}, the non-zero eigenvalues of $\Gb_k$ are $\sigma_i^2/4\,(i = 1, \dots, d - 1)$, with the corresponding eigenvectors $\Mb\phib_i\,(i = 1, \dots, d - 1)$. As a result, the largest eigenvalue (i.e., the largest singular value) of $\Gb_k$ is 
  \begin{align*}
    \alpha_k = \spectral{\Gb_k} = \frac{\sigma_1^2}{4},
  \end{align*}
  where $\spectral{\cdot}$ denotes the spectral norm of its argument. 
  
  \section{Proof of \thmref{thm:exponent:k:k+1}} \label{sec:appd}
  The proof is similar to that of \thmref{thm:exponent}, except that we need to replace the perturbation analysis result of \lemref{lem:eig:k} with the corresponding result of \lemref{lem:2}. In particular, we extend the definition $\nbhd(\eps)$ to the case $\sigma_k = \sigma_{k + 1}$ via letting\footnote{It can be verified that, we have $\beta_k = \alpha_k$ if $\sigma_k > \sigma_{k + 1}$. Therefore, the definition \eqref{eq:nbhd:=} is a generalization of \eqref{eq:nbhd}.} %
  \begin{align}
    \nbhd(\eps) \defeq \left\{\Ph_{XY}\colon D(\Ph_{XY}\|P_{XY}) \leq \frac{\eps}{\beta_k}\right\}.
    \label{eq:nbhd:=}
  \end{align}

  Then, we define
  $\cS^{(t)}_3(\eps)$ as the set of $\Ph_{XY}$ such that %
  the corresponding $\imate$ from~\eqref{eq:imate:def} satisfies
\begin{align} 
  &\sum_{i = 1}^{l - 1}\sum_{j = l}^d \frac{\left[\phib_i^{\T}\left(\dtmt^\T \dtmdmat + \dtmdmat^\T \dtmt\right) \phib_j\right]^2}{\sigma_i^2 - \sigma_j^2}%
    + \sum_{i = l}^{k}\sum_{j \in \cIb_k} \frac{\left[\varphib_i^{\T}\left(\dtmt^\T \dtmdmat + \dtmdmat^\T \dtmt\right) \phib_j\right]^2}{\sigma_i^2 - \sigma_j^2}\label{eq:condition_k=k+1}
  \geq t,
\end{align} 
where $\varphib_i$ are as defined in \eqref{eq:varphi:def}. Then, analogous to \lemref{lem:S2:alpha_k}, it is convenient to first establish the following result.
  \begin{lemma} When $\sigma_k = \sigma_{k + 1}$, for all $t \in (0, 2)$, we have
    \label{lem:S3:beta_k}
    \begin{align}
      \lim_{\eps \rightarrow 0^{+}} \eps^{-1}  D\left(\cS_3^{(t)}(\eps) \,\middle\|\, P_{XY}\right) = \frac{t}{2\beta_k}.
      \label{eq:S2:t:=}
    \end{align}
  \end{lemma}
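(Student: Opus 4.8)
The plan is to mirror the proof of \lemref{lem:S2:alpha_k} closely, with the only structural change being that the perturbation expansion of \lemref{lem:eig:k} is replaced throughout by that of \lemref{lem:2}. Concretely: since $\cS_3^{(t)}(\eps)$ is closed, $D(\cS_3^{(t)}(\eps)\,\|\,P_{XY})$ is attained at a minimizing $\Ph_{XY} \leftrightarrow \imat$, and as in \eqref{eq:imate:bound}--\eqref{eq:normg} such a minimizer has $\imat$ uniformly bounded with $D(\Ph_{XY}\|P_{XY}) = \frac{\eps}{2}\bbfrob{\imat}^2 + o(\eps)$, while \eqref{eq:cond_g} forces $\sum_{x,y}\sqrt{P_{XY}(x,y)}\,\imate(y,x) = 0$ and $\imate(y,x) = 0$ whenever $P_{XY}(x,y) = 0$. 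By \lemref{lem:perturb:B} we have $\dtmh^{\T}\dtmh = \dtmt^{\T}\dtmt + \sqrt{\eps}\,(\dtmt^{\T}\dtmdmat + \dtmdmat^{\T}\dtmt) + o(\sqrt{\eps})$, so applying \lemref{lem:2} with $\Ab = \dtmt^{\T}\dtmt$ (whose eigenpairs are $(\sigma_i^2,\phib_i)$) and $\Ab' = \dtmt^{\T}\dtmdmat + \dtmdmat^{\T}\dtmt$ shows that $\bbfrob{\dtmt\Phib_k}^2 - \bbfrob{\dtmt\Phih_k}^2$ equals $\eps$ times the left-hand side of \eqref{eq:condition_k=k+1}, plus $o(\eps)$, where the $\varphib_i$ of \eqref{eq:varphi:def} are exactly the $\vbh_i$ produced by \lemref{lem:2}, since $\Phib_{\cI_k}^{\T}(\dtmt^{\T}\dtmdmat + \dtmdmat^{\T}\dtmt)\Phib_{\cI_k} = \Vb_{\cI_k}^{\T}\Ab'\Vb_{\cI_k}$. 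Hence establishing \eqref{eq:S2:t:=} reduces to minimizing $\bbfrob{\imat}^2$ subject to the left-hand side of \eqref{eq:condition_k=k+1} being $\geq t$, together with the two degenerate conditions above.

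Next, trading the quadratic objective against the quadratic constraint exactly as in the proof of \lemref{lem:S2:alpha_k}, this minimization has optimal value $t/\beta_k$, where $\beta_k$ denotes the maximum of the left-hand side of \eqref{eq:condition_k=k+1} over $\bbfrob{\imat}\leq 1$; the one point needing a remark is that this left-hand side is homogeneous of degree $2$ in $\imat$, which holds because $\varphib_i$, being an eigenvector of $\Phib_{\cI_k}^{\T}(\dtmt^{\T}\dtmdmat + \dtmdmat^{\T}\dtmt)\Phib_{\cI_k}$, is invariant under positive rescaling of $\imat$. As before, the linear equality constraint may be dropped: for an optimal $\imat^*$ with $c \defeq \sum_{x,y}\sqrt{P_{XY}(x,y)}\,\imate^*(y,x)$, the case $\abs{c}=1$ forces $\dtmdmat = \mp\psib_0\phib_0^{\T}$, hence $\dtmt^{\T}\dtmdmat + \dtmdmat^{\T}\dtmt = \Ob$ since $\dtmt^{\T}\psib_0 = \bzero$, so the objective vanishes, a contradiction; the case $0<\abs{c}<1$ contradicts optimality after rescaling; and the same rescaling argument gives $\imate^*(y,x)=0$ whenever $P_{XY}(x,y)=0$, so $\imat^*$ satisfies \eqref{eq:imate:def}. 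One then identifies this objective with $\vec^{\T}(\imat)\Jb_k(\imat)\vec(\imat)$: the first double sum of \eqref{eq:condition_k=k+1} becomes $\vec^{\T}(\imat)\Gb_{l-1}\vec(\imat)$ by the vectorization computation \eqref{eq:tr:vec}--\eqref{eq:gk} with $k$ replaced by $l-1$, while for the second double sum the identity $\varphib_i^{\T}(\dtmt^{\T}\dtmdmat + \dtmdmat^{\T}\dtmt)\phib_j = \tr\{(\dtmt\varphib_i\phib_j^{\T} + \dtmt\phib_j\varphib_i^{\T})^{\T}\dtmdmat\} = \varthetab_{ij}^{\T}\vec(\dtmdmat)$ together with $\vec(\dtmdmat) = \Lb\vec(\imat)$ gives $\vec^{\T}(\imat)\Lb^{\T}\bigl(\sum_{i=l}^{k}\sum_{j\in\cIb_k}\varthetab_{ij}\varthetab_{ij}^{\T}/(\sigma_i^2-\sigma_j^2)\bigr)\Lb\vec(\imat)$; summing the two recovers $\Jb_k(\imat)$ as in \eqref{eq:Jbk}, so the maximization over $\bbfrob{\imat}\leq 1$ is exactly \eqref{eq:opt:k:k+1:Jbk}, whose optimal value is $\beta_k$.

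Finally, undoing the scaling, the minimum of $\bbfrob{\imat}^2$ subject to the constraint equals $t/\beta_k$, attained at $\Ph_{XY}^* \leftrightarrow \sqrt{t/\beta_k}\,\imat^*$ with $\imat^*$ optimal in \eqref{eq:opt:k:k+1:Jbk}; since $t<2$, $D(\Ph_{XY}^*\|P_{XY}) = \frac{\eps t}{2\beta_k} + o(\eps) < \frac{\eps}{\beta_k}$ for $\eps$ small, so $\Ph_{XY}^* \in \nbhd(\eps)$ under the extended definition \eqref{eq:nbhd:=}, whence $D(\cS_3^{(t)}(\eps)\,\|\,P_{XY}) = \frac{\eps t}{2\beta_k} + o(\eps)$, which is \eqref{eq:S2:t:=}. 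The main obstacle is the bookkeeping forced by the dependence of $\varphib_i$, and hence of $\Jb_k$, on $\imat$: one must verify carefully both that these $\varphib_i$ genuinely coincide with the $\vbh_i$ of \lemref{lem:2} under the correspondence $\Ab = \dtmt^{\T}\dtmt$, $\Ab' = \dtmt^{\T}\dtmdmat + \dtmdmat^{\T}\dtmt$, and that the resulting objective stays homogeneous of degree $2$ in $\imat$, so that the reduction to \eqref{eq:opt:k:k+1:Jbk} and the final rescaling are both legitimate.
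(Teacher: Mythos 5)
Your proof is correct and follows essentially the same route as the paper's: reduce via the second-order Taylor expansion of $D(\cdot\|P_{XY})$ to the quadratic minimization \eqref{eq:opt:k:k+1}, interchange objective and constraint, drop the affine constraints by the $|c|=1$ and $0<|c|<1$ rescaling contradictions, vectorize to identify the objective with $\vec^{\T}(\imat)\Jb_k(\imat)\vec(\imat)$, and conclude the optimal value is $t/\beta_k$ with the minimizer inside $\nbhd(\eps)$ for $t<2$. The one place where you go beyond the paper is the explicit remark that the left-hand side of \eqref{eq:condition_k=k+1} is positively homogeneous of degree $2$ in $\imat$ because $\varphib_i$ (being eigenvectors of $\Phib_{\cI_k}^{\T}(\dtmt^{\T}\dtmdmat + \dtmdmat^{\T}\dtmt)\Phib_{\cI_k}$, which scales linearly in $\imat$) are invariant under positive rescaling; the paper leaves this implicit when it says ``following the same argument as for Lemma~\ref{lem:S2:alpha_k},'' but the check is genuinely needed since $\Jb_k(\imat)$ is $\imat$-dependent and the objective is not a fixed quadratic form, so spelling it out is a welcome clarification rather than a deviation.
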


  \begin{proof}
    The proof is similar to that of \lemref{lem:S2:alpha_k}. Using the second-order Taylor series expansion of the K-L divergence \eqref{eq:normg}, the limit \eqref{eq:S2:t:=} can be characterized by the following optimization problem:
    \begin{subequations}
      \begin{alignat}{2} \notag
        &\min_{\imat} &\quad&\bbfrob{\imat}^2 \\ 
        &~\,\text{s.t.}&& \sum_{i = 1}^{l - 1}\sum_{j = l}^d \frac{\left[\phib_i^{\T}\left(\dtmt^\T \dtmdmat + \dtmdmat^\T \dtmt\right) \phib_j\right]^2}{\sigma_i^2 - \sigma_j^2}%
        + \sum_{i = l}^{k}\sum_{j \in \cIb_k} \frac{\left[\varphib_i^{\T}\left(\dtmt^\T \dtmdmat + \dtmdmat^\T \dtmt\right) \phib_j\right]^2}{\sigma_i^2 - \sigma_j^2} \geq t,\\
        &&&\ \sum_{x \in \cX, y \in \cY}\sqrt{P_{XY}(x,y)} \imate (y,x) = 0.
      \end{alignat}
      \label{eq:opt:k:k+1}
    \end{subequations}
    Following the same argument as that for \lemref{lem:S2:alpha_k}, the optimal solution of \eqref{eq:opt:k:k+1} can be obtained by solving
    \begin{subequations}
    \begin{alignat}{2} 
      &\max_{\imat} &\quad&\sum_{i = 1}^{l - 1}\sum_{j = l}^d \frac{\left[\phib_i^{\T}\left(\dtmt^\T \dtmdmat + \dtmdmat^\T \dtmt\right) \phib_j\right]^2}{\sigma_i^2 - \sigma_j^2}%
      + \sum_{i = l}^{k}\sum_{j \in \cIb_k} \frac{\left[\varphib_i^{\T}\left(\dtmt^\T \dtmdmat + \dtmdmat^\T \dtmt\right) \phib_j\right]^2}{\sigma_i^2 - \sigma_j^2} \label{eq:opt:k:k+1:eq:obj}\\ 
      &~\,\text{s.t.} && \bbfrob{\imat}^2 \leq 1,
    \end{alignat}
    \label{eq:opt:k:k+1:eq}
    \end{subequations}
    where we have interchanged the objective function and the quadratic function in the inequality constraint, and removed the equality constraint.

    Then, similar to \eqref{eq:tr:vec}, we can rewrite the objective function \eqref{eq:opt:k:k+1:eq:obj} as
    \begin{align*} \notag
      &\sum_{i = 1}^{l - 1}\sum_{j = l}^d \frac{\left[\phib_i^{\T}\left(\dtmt^\T \dtmdmat + \dtmdmat^\T \dtmt\right) \phib_j\right]^2}{\sigma_i^2 - \sigma_j^2}%
          + \sum_{i = l}^{k}\sum_{j \in \cIb_k} \frac{\left[\varphib_i^{\T}\left(\dtmt^\T \dtmdmat + \dtmdmat^\T \dtmt\right) \phib_j\right]^2}{\sigma_i^2 - \sigma_j^2} \\
      &= \vec^{\T}(\dtmdmat)\left(\sum_{i = 1}^{l - 1}\sum_{j = l}^d \frac{\thetab_{ij}\thetab^{\T}_{ij}}{\sigma_i^2 - \sigma_j^2} + \sum_{i = l}^k\sum_{j \in \cIb_k} \frac{\varthetab_{ij}\varthetab^{\T}_{ij}}{\sigma_i^2 - \sigma_j^2}\right)\vec(\dtmdmat)\\
      &= \vec^{\T}(\imat)\Lb^{\T}\left(\sum_{i = 1}^{l - 1}\sum_{j = l}^d \frac{\thetab_{ij}\thetab^{\T}_{ij}}{\sigma_i^2 - \sigma_j^2} + \sum_{i = l}^k\sum_{j \in \cIb_k} \frac{\varthetab_{ij}\varthetab^{\T}_{ij}}{\sigma_i^2 - \sigma_j^2}\right)%
          \Lb\vec(\imat)\\
      &= \vec^{\T}(\imat)\Jb_k(\imat)\vec(\imat),
    \end{align*}
    where the second equality follows from the fact that $\vec(\dtmdmat) = \Lb\vec(\imat)$. As a result, the optimization problem \eqref{eq:opt:k:k+1:eq} can be rewritten as \eqref{eq:opt:k:k+1:Jbk}, and thus the optimal value is $\beta_k$. Note that if $\sigma_k > \sigma_{k + 1}$, we may let $\varphib_i = \phib_i$ for $i = l, \dots, k$ since it does not change the value of \eqref{eq:opt:k:k+1:eq:obj}. Then, it can be verified that the optimal value of \eqref{eq:opt:k:k+1:eq} is $\alpha_k$, i.e., we have $\beta_k = \alpha_k$ if $\sigma_k > \sigma_{k + 1}$. 

    Finally, using the same argument as that for \lemref{lem:S2:alpha_k}, we conclude that the optimal value of \eqref{eq:opt:k:k+1} is $t/\beta_k$ and we have
    \begin{align*}
      D\left(\cS_3^{(t)}(\eps) \,\middle\|\, P_{XY}\right) = \frac{\eps t}{2\beta_k} + o(\eps),
    \end{align*}
    which implies \eqref{eq:S2:t:=}.
  \end{proof}

  In addition, it follows from \lemref{lem:2} and \lemref{lem:perturb:B} that the corresponding learning error for the empirical distribution $\Ph_{XY} \in \nbhd(\eps)$ is
  \begin{align}
    \bbfrob{ \dtmt \Phib_k }^2 - \bbfrob{ \dtmt \Phih_k }^2
    = \eps\sum_{i = 1}^{l - 1}\sum_{j = l}^d \frac{\left[\phib_i^{\T}\left(\dtmt^\T \dtmdmat + \dtmdmat^\T \dtmt\right) \phib_j\right]^2}{\sigma_i^2 - \sigma_j^2}%
    + \eps\sum_{i = l}^{k}\sum_{j \in \cIb_k} \frac{\left[\varphib_i^{\T}\left(\dtmt^\T \dtmdmat + \dtmdmat^\T \dtmt\right) \phib_j\right]^2}{\sigma_i^2 - \sigma_j^2} + o(\eps).
    \label{eq:err:perturb:=}
  \end{align}

  Therefore, for any $t \in (0, 1)$, there exists an $\eps_0 > 0$ such that for all $\eps \in (0, \eps_0)$, we have
  \begin{align*}
    \cS_3^{(1 + t)}(\eps)  \subseteq \cS_1(\eps) \cap \nbhd(\eps) \subseteq \cS_3^{(1 - t)}(\eps).
  \end{align*}
  
  Then, using arguments similar to \eqref{eq:bound:min}--\eqref{eq:lim:cS1}, we conclude
  \begin{align*}
    \lim_{\eps \rightarrow 0^{+}} \eps^{-1}  D\left(\cS_1^{(t)}(\eps) \,\middle\|\, P_{XY}\right) = \frac{2}{\beta_k}.    
  \end{align*}
  Finally, following the same proof as that for \thmref{thm:exponent}, we obtain \eqref{eq:exponent:k:k+1}.

\section{Proof of Corollary \ref{eg:2}} \label{sec:appe}
We first introduce two useful lemmas.
\begin{lemma}
  \label{lem:psi=phi}
  Suppose $P_{XY}$ is as defined in Corollary \ref{eg:2} with the corresponding matrix $\dtmt$ as given by \eqref{eq:dtmt}. Then, the matrix $\dtmt$ have singular values
  \begin{align}\label{eq:sigma:eg2:new}
    \sigma_1 = \dots = \sigma_{d - 1} = \frac{|p_1 - p_2|\sqrt{d}}{\sqrt{p_1 + (d - 1)p_2}}, \quad\sigma_d = 0.
  \end{align}  
  In addition, for all $\phib  = [\phi(1), \dots, \phi(d)]^{\T} \in \mathbb{R}^d$ with
  \begin{align}\label{eq:phi:constraint}
  \langle\phib, \bone_d\rangle = 0\quad\text{and}\quad\|\phib\| = 1,
  \end{align}
  the corresponding $\psib \defeq \sigma_1^{-1}\dtmt^{\T}\phib = \left[\psi(1), \dots, \psi(|\cY|)\right]^{\T} \in \mathbb{R}^{|\cY|}$ satisfies
\begin{align}\label{eq:phi=psi}
  \psi(y) =
  \begin{cases}
    \sgn(p_1 - p_2)\phi(y),&\text{if}~ y \leq d,\\
    0,&\text{otherwise},\\
  \end{cases}
\end{align}
where $\bone_d$ denotes the vector in $\mathbb{R}^d$ with all entries being $1$.
\end{lemma}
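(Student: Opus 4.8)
The plan is to make the matrix $\dtmt$ completely explicit and then read off both assertions from its (very simple) eigenstructure.

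First I would compute the marginals. Since $P_{XY}(x,y)$ depends on $(x,y)$ only through whether $x=y$, summing over $y$ gives $P_X(x)=p_1+(|\cY|-1)p_2=1/d$ for every $x$ by the normalization hypothesis $d[p_1+(|\cY|-1)p_2]=1$, so $P_X$ is uniform. Writing $q\defeq p_1+(d-1)p_2$, summing over $x$ gives $P_Y(y)=q$ for $y\le d$ (one diagonal term plus $d-1$ off-diagonal ones) and $P_Y(y)=d p_2$ for $y>d$. Substituting into \eqref{eq:dtmt}, for $y>d$ one has $P_X(x)P_Y(y)=p_2=P_{XY}(x,y)$, so $\Bt(y,x)=0$; thus $\dtmt$ vanishes on every row indexed by $y>d$, and on the block $x,y\in\{1,\dots,d\}$ it equals $M\defeq(a-b)\Ib_d+b\,\bone_d\bone_d^{\T}$, where $a\defeq(p_1-q/d)/\sqrt{q/d}$ and $b\defeq(p_2-q/d)/\sqrt{q/d}$. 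The key point is that $a+(d-1)b=0$ — just the identity $p_1+(d-1)p_2=q$ rewritten — so the rows and columns of the symmetric matrix $M$ sum to zero.

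Next I would diagonalize $M$. Since $M\bone_d=0$ and $M\phib=(a-b)\phib$ for every $\phib\perp\bone_d$, the eigenvalues of $M$ are $0$ (simple, with eigenvector $\bone_d$) and $a-b$ (with multiplicity $d-1$). Because $\dtmt$ is $M$ bordered by zero rows, its singular values are $|a-b|$ with multiplicity $d-1$ and $0$ with multiplicity one, and $|a-b|=|p_1-p_2|\sqrt{d}/\sqrt{q}$; this is exactly \eqref{eq:sigma:eg2:new}, so $\sigma_1=\dots=\sigma_{d-1}=|a-b|$ and $\sigma_d=0$. For the second claim, fix $\phib=[\phi(1),\dots,\phi(d)]^{\T}$ satisfying \eqref{eq:phi:constraint}, i.e.\ $\phib\perp\bone_d$ and $\norm{\phib}=1$. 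Then the $y$-th entry of $\dtmt\phib$ equals $(M\phib)_y=(a-b)\phi(y)$ for $y\le d$ and $0$ for $y>d$; dividing by $\sigma_1=|a-b|$ and using $\sgn(a-b)=\sgn\big((p_1-p_2)\sqrt{d}/\sqrt{q}\big)=\sgn(p_1-p_2)$ yields \eqref{eq:phi=psi}, while $\norm{\psib}=\norm{\phib}=1$, consistent with $\phib$ being a genuine right singular vector of $\dtmt$ for $\sigma_1$.

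The whole argument is a short spectral computation, so there is no real obstacle; the only points needing care are the book-keeping between the index ranges $y\le d$ and $y>d$ — in particular the degenerate case $|\cY|=d$, where the second range is empty and $\dtmt=M$ outright — and tracking the sign of $a-b$ so that the factor $\sgn(p_1-p_2)$ surfaces correctly in \eqref{eq:phi=psi}.
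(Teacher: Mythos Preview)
Your proof is correct and follows essentially the same route as the paper: compute the marginals, observe that the rows of $\dtmt$ indexed by $y>d$ vanish while the top $d\times d$ block is a scalar multiple of the centering projection $\Ib_d-d^{-1}\bone_d\bone_d^{\T}$ (your parametrization $M=(a-b)\Ib_d+b\,\bone_d\bone_d^{\T}$ together with $a+(d-1)b=0$ is exactly this), and then read off both the singular values and the action on any $\phib\perp\bone_d$. The only cosmetic difference is that the paper writes the projection form of the block directly rather than deriving it from the row-sum identity; note also that the statement's $\dtmt^{\T}\phib$ is a typo for $\dtmt\phib$, which you (and the paper's own proof) correctly interpret.
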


\begin{proof}
  From the definition of $P_{XY}$, we have
  \begin{align*}
    P_X(x) = \frac{1}{d}, \quad \forall\, x \in \cX,
  \end{align*}
  and
  \begin{align*}
    P_Y(y) =
    \begin{cases}
      p_1 + (d - 1)p_2,& \text{if}~ y \leq d,\\
      dp_2,& \text{otherwise.}
    \end{cases}
  \end{align*}
  
  Therefore, from \eqref{eq:dtmt} we have
  \begin{align}\label{eq:dtmt:eg2}
    \dtmt = \frac{(p_1 - p_2)\sqrt{d}}{\sqrt{p_1 + (d - 1)p_2}}
    \begin{bmatrix}
      \Ib_d - d^{-1}\bone_d\bone_d^{\T}\\
      \Ob_{|\cY| - d,d}
    \end{bmatrix},
  \end{align}
  where $\Ob_{|\cY| - d,d}$ is the zero matrix in $\mathbb{R}^{(|\cY| - d) \times d}$. As a result, we have
  \begin{align*}
    \dtmt^{\T}\dtmt = \frac{(p_1 - p_2)^2d}{p_1 + (d - 1)p_2} \left(\Ib_d - \frac{1}{d}\bone_d\bone_d^{\T}\right).
  \end{align*}
  Since the matrix
  \begin{align*}
    \Ib_d - \frac{1}{d}\bone_d\bone_d^{\T}
  \end{align*}
  has eigenvalues $\lambda_1 = \dots = \lambda_{d - 1} = 1$ and $\lambda_d = 0$, we obtain the singular values $\sigma_1, \dots, \sigma_d$ of $\dtmt$ as given by \eqref{eq:sigma:eg2:new}, and thus we can rewrite \eqref{eq:dtmt:eg2} as
  \begin{align}\label{eq:dtmt:eg2:new}
    \dtmt = \sgn(p_1 - p_2)\cdot\sigma_1
    \begin{bmatrix}
      \Ib_d - d^{-1}\bone_d\bone_d^{\T}\\
      \Ob_{|\cY| - d,d}
    \end{bmatrix}.
  \end{align}    

  Hence, for all $\phib \in \mathbb{R}^d$ with \eqref{eq:phi:constraint}, we have
  \begin{align*}    
    \psib = \sigma_1^{-1}\dtmt\phib
    &= \sgn(p_1 - p_2)
    \begin{bmatrix}
      \Ib_d - d^{-1}\bone_d\bone_d^{\T}\\
      \Ob_{|\cY| - d,d}
    \end{bmatrix}
    \phib%
      =
    \sgn(p_1 - p_2)
    \begin{bmatrix}
      \phib\\
      \zerob_{|\cY| - d}
    \end{bmatrix},
  \end{align*}
  where $\zerob_{|\cY| - d}$ is the zero vector in $\mathbb{R}^{|\cY| - d}$. 
\end{proof}

\begin{lemma}\label{lem:phi_1:max}
For all $\phib = [\phi(1), \dots, \phi(d)]^{\T} \in \mathbb{R}^d$ satisfying \eqref{eq:phi:constraint}, we have
  \begin{align*}
    \phi^2(1) \leq \frac{d - 1}{d},
  \end{align*}
  where the inequality holds with equality if and only if $\phib = \pm \phib'$, where
  \begin{align}\label{eq:phi'}
    \phib' \defeq \frac{1}{\sqrt{(d-1)d}}\left[d-1, -1, \dots, -1
    \right]^{\T} \in \mathbb{R}^d.
  \end{align}
\end{lemma}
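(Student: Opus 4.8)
The plan is to reduce the claim to a one–line Cauchy--Schwarz estimate exploiting the two given constraints. First I would use the zero-mean condition $\langle\phib,\bone_d\rangle = 0$, i.e. $\sum_{i=1}^d \phi(i) = 0$, to write
\begin{align*}
  \phi(1) = -\sum_{i=2}^d \phi(i).
\end{align*}
Applying the Cauchy--Schwarz inequality to the right-hand side gives $\phi^2(1) \leq (d-1)\sum_{i=2}^d \phi^2(i)$. Then I would invoke the unit-norm condition $\|\phib\|^2 = 1$, which yields $\sum_{i=2}^d \phi^2(i) = 1 - \phi^2(1)$, so that $\phi^2(1) \leq (d-1)\bigl(1 - \phi^2(1)\bigr)$. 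Rearranging gives $d\,\phi^2(1) \leq d-1$, i.e. $\phi^2(1) \leq \frac{d-1}{d}$, as desired.

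For the equality characterization I would trace through when the single inequality above is tight. Equality in Cauchy--Schwarz forces $\phi(2) = \phi(3) = \dots = \phi(d) =: c$ for some constant $c$. The zero-mean constraint then gives $\phi(1) = -(d-1)c$, and the unit-norm constraint gives $(d-1)^2 c^2 + (d-1)c^2 = (d-1)d\,c^2 = 1$, hence $c = \pm\frac{1}{\sqrt{(d-1)d}}$. Substituting back shows $\phib = \pm\phib'$ with $\phib'$ as in \eqref{eq:phi'}, and conversely it is immediate that $\phib'$ satisfies \eqref{eq:phi:constraint} and attains $\phi^2(1) = \frac{(d-1)^2}{(d-1)d} = \frac{d-1}{d}$.

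This argument is essentially routine, so I do not anticipate a genuine obstacle; the only point requiring a little care is the equality case, where I must verify that the Cauchy--Schwarz equality condition together with the two linear/quadratic constraints pins the vector down to exactly $\pm\phib'$ (rather than merely a one-parameter family), which the computation above does. An alternative, if a cleaner write-up is preferred, is to note that the map $\phib \mapsto \phi(1)$ restricted to the unit sphere of the hyperplane $\{\phib : \langle\phib,\bone_d\rangle = 0\}$ is linear, so its extrema occur at $\pm$ the unit vector in that hyperplane proportional to the projection of $\bfe_1$ onto it; that projection is $\bfe_1 - \tfrac1d\bone_d = \tfrac1d[d-1,-1,\dots,-1]^{\T}$, whose normalization is exactly $\phib'$, immediately giving both the bound and the equality cases.
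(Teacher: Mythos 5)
Your proof is correct and follows essentially the same route as the paper's: express $\phi(1) = -\sum_{i=2}^d \phi(i)$, bound $\phi^2(1) \le (d-1)\sum_{i=2}^d\phi^2(i)$ (the paper phrases this as arithmetic mean $\le$ root mean square; you phrase it as Cauchy--Schwarz — the same inequality), then use the unit-norm constraint and rearrange, with the equality case pinned down by $\phi(2)=\cdots=\phi(d)$. Your appended geometric remark (projecting $\bfe_1$ onto the hyperplane $\bone_d^\perp$) is a nice alternative but not needed and not present in the paper.
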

\begin{proof}
From $\langle\phib, \bone_d\rangle = 0$ we have
\begin{align*}
  \sum_{i = 1}^d \phi(i) = 0.
\end{align*}
Therefore, we obtain
\begin{align*}
  \phi^2(1) = \left[\sum_{i= 2}^d \phi(i)\right]^2
  &\leq (d - 1)\sum_{i = 2}^d\phi^2(i)%
    = (d - 1)\left[\|\phib\|^2 - \phi^2(1)\right],
\end{align*}
where the inequality follows from the fact that the arithmetic mean is no greater than the root mean square. As a result, we have
\begin{align*}
  \phi^2(1) \leq \frac{d-1}{d} \|\phib\|^2 = \frac{d-1}{d},
\end{align*}
where the inequality holds with equality if and only if
\begin{align}
  \phi(2) = \phi(3) = \dots = \phi(d).
  \label{eq:phi:2:d}
\end{align}
Hence, it follows from \eqref{eq:phi:constraint} and \eqref{eq:phi:2:d} that $\phib = \pm \phib'$.
\end{proof}

Now, Corollary \ref{eg:2} can be proved as follows.

\begin{proof}[Proof of Corollary \ref{eg:2}]
  From \lemref{lem:psi=phi}, we have $\sigma_1 = \dots = \sigma_{d - 1} > \sigma_d = 0$. Therefore, for all $1 \leq k \leq d - 1$ we have $\cI_k = [d - 1]$, which further implies that $l = \min \cI_k = 1$ and $\cIb_k = \{d\}$.
  Hence, from \eqref{eq:Jbk} we have
  \begin{align}
    \Jb_k(\imat) = \Lb^{\T}\left(\sum_{i = 1}^k \frac{\varthetab_{id}\varthetab^{\T}_{id}}{\sigma_1^2}\right)\Lb = \frac{1}{\sigma_1^2}\sum_{i = 1}^k \left(\Lb^{\T} \varthetab_{id}\right)\left(\Lb^{\T} \varthetab_{id}\right)^{\T}.
  \end{align}
  In addition, following the same derivation as that in \eqref{eq:comp:alpha}, we have
  \begin{align}
    \Lb^{\T} \varthetab_{id} = -\frac{\sigma_1^2}{2}\Mb \varphib_i,\quad  1 \leq i \leq k,
  \end{align}
  and thus
  \begin{align}\label{eq:gbk:eig:semi}
    \Jb_k(\imat) = \frac{\sigma_1^2}{4} \sum_{i = 1}^{k}(\Mb \varphib_i)(\Mb \varphib_i)^{\T}.
  \end{align}
  Note that since $\left\langle\Mb \varphib_i, \Mb \varphib_j\right\rangle = \delta_{ij}$, \eqref{eq:gbk:eig:semi} demonstrates the eigen-decomposition of $\Gb_k$. Therefore, from \thmref{thm:exponent:k:k+1}, we have
  \begin{align*}
    \beta_k \leq \bbspectral{\Jb_k(\imat)} = \frac{\sigma_1^4}{4}.
  \end{align*}
  To prove the inequality holds with equality, it suffices to show that there exists a $\imat$ with $\|\imat\|_{\F} \leq 1$ such that
  \begin{align}\label{eq:alpha:eq}
    \vec^{\T}\left(\imat\right)\Jb_k(\imat)\vec\left(\imat\right) = \bbspectral{\Jb_k(\imat)}.
  \end{align}
  Indeed, as we now illustrate, if $\imat$ is chosen as
  \begin{align}\label{eq:imat:opt}
    \imate(y, x) = \sqrt{P_{Y|X}(y|x)}\phi'(x)
  \end{align}
  with $\phi'$ as defined in \eqref{eq:phi'}, then we have $\varphib_1 = \pm \phib'$ and $\vec\left(\imat\right) = \Mb\phib'= \pm\Mb\varphib_1$, and thus \eqref{eq:alpha:eq} holds.

  To see this, first note that from \eqref{eq:imat:opt} we have $\|\imat\|_{\F} = 1$, 
  \begin{align*}
    \sum_{y' \in \cY}\sqrt{P_{X Y}(x, y')}\imate(y', x) = \sqrt{P_X(x)}\phi'(x)
  \end{align*}
  and
  \begin{align*}
      \sum_{x' \in \cX}\sqrt{P_{X Y}(x', y)}\imate(y, x')%
    &= \sqrt{P_Y(y)}\sum_{x' \in \cX}\frac{P_{XY}(x', y)}{\sqrt{P_X(x')P_Y(y)}}\phi'(x')%
      = \sigma_1\sqrt{P_Y(y)}\psi'(y),
  \end{align*}
  where $\psib' = \left[\psi'(1), \dots, \psi'(|\cY|)\right]^{\T} \defeq \sigma_1^{-1}\dtmt\phib'$.

  Therefore, from \eqref{eq:theta} we obtain
  \begin{align}
      \dtmdmate(y, x)%
    &= \frac{\sqrt{P_{XY}(x,y)}}{\sqrt{P_X(x)P_Y(y)}} \imate(y, x)%
      -  \frac{P_{XY}(x, y) + P_X(x)P_Y(y)}{2\sqrt{P_X(x)P_Y(y)}}\notag\\
    &\qquad \cdot\Biggl[\frac{1}{P_X(x)}  \sum_{y' \in \cY} \sqrt{P_{XY}(x, y')} \imate(y', x)%
      +\frac{1}{P_Y(y)} \sum_{x' \in \cX}\sqrt{P_{XY}(x', y)}\imate(y, x') \Biggr] \notag\\
    &= \frac{P_{XY}(x,y)}{\sqrt{P_X(x)P_Y(y)}} \cdot \frac{\phi'(x)}{\sqrt{P_X(x)}}%
      -  \frac{P_{XY}(x, y) + P_X(x)P_Y(y)}{2\sqrt{P_X(x)P_Y(y)}}
      \Biggl[\frac{\phi'(x)}{\sqrt{P_X(x)}} +\frac{\sigma_1\psi'(y)}{\sqrt{P_Y(y)}}  \Biggr] \notag\\
    &= \frac{1}{2}\dtmet(y, x)\cdot \left[\frac{\phi'(x)}{\sqrt{P_X(x)}} - \frac{\sigma_1\psi'(y)}{\sqrt{P_Y(y)}}\right] - \sigma_1\sqrt{P_X(x)}\psi'(y).\label{eq:theta:eg}
  \end{align}

  In addition, since $\cI_k = [d - 1]$, from \eqref{eq:varphi:def}, $\varphib_1$ is the solution of the optimization problem
  \begin{alignat}{2}
    &\max_{\phib} &\quad & \phib^{\T}\left(\dtmt^{\T}\dtmdmat\right)\phib   \notag\\
    &~\,\mathrm{s.t.}& & \langle\phib, \phib_d\rangle = 0, \quad\|\phib\| = 1,\label{eq:opt:phi}
  \end{alignat}  
  where $\phib_d$ is the $d$-th right singular vector of $\dtmt$. Since $\sigma_{d - 1} > 0 = \sigma_d$ and $d = |\cX| \leq |\cY|$, we know that 
\begin{align*}
  \phib_d = \left[\sqrt{P_X(1)}, \dots, \sqrt{P_X(d)}\right]^{\T} = \frac{1}{\sqrt{d}}\bone_d,
\end{align*}
and thus $\langle\phib, \phib_d\rangle = 0$ is equivalent to $\langle\phib, \bone_d\rangle = 0$.

Now, for all $\phib$ satisfying the constraints of \eqref{eq:opt:phi}, the objective function of \eqref{eq:opt:phi} is
  \begin{subequations}
    \begin{align}
        \phib^{\T}\left(\dtmt^{\T}\dtmdmat\right)\phib&=\sigma_1\psib^{\T}\dtmdmat\phib\\
      &= \frac{\sigma_1}{2}\sum_{x \in \cX, y\in \cY} \dtmet(y, x)\cdot \left[\frac{\phi'(x)}{\sqrt{P_X(x)}}- \frac{\sigma_1\psi'(y)}{\sqrt{P_Y(y)}}\right] \phi(x) \psi(y)%
        - \sigma_1\sum_{x \in \cX, y\in \cY}\sqrt{P_X(x)}\psi'(y)\phi(x) \psi(y) \label{eq:opt:phi:obj:2}\\
      &= \frac{\sigma_1}{2}\sum_{x \in \cX, y\in \cY} \dtmet(y, x)\cdot \left[\frac{\phi'(x)}{\sqrt{P_X(x)}}- \frac{\sigma_1\psi'(y)}{\sqrt{P_Y(y)}}\right] \phi(x) \psi(y)\label{eq:opt:phi:obj:3}\\
      &= \frac{\sigma^2_1}{2}\left[\sum_{x \in \cX} \frac{\phi'(x)}{\sqrt{P_X(x)}}\cdot \phi^2(x) - \sigma_1\sum_{y\in \cY}\frac{\psi'(y)}{\sqrt{P_Y(y)}} \cdot \psi^2(y)\right]\label{eq:opt:phi:obj:4}\\
      &= \frac{\sigma^2_1}{2}\Biggl[\frac{1}{\sqrt{P_X(1)}}\sum_{x \in \cX} \phi'(x) \phi^2(x)%
        - \frac{\sgn(p_1 - p_2)\cdot\sigma_1}{\sqrt{P_Y(1)}}\sum_{y\in [d]}\phi'(y) \phi^2(y)\Biggr]\label{eq:opt:phi:obj:5}\\
      &= \frac{\sigma^2_1}{2}\left[\frac{1}{\sqrt{P_X(1)}} - \frac{\sgn(p_1 - p_2)\cdot\sigma_1}{\sqrt{P_Y(1)}}\right]\sum_{x \in \cX} \phi'(x) \phi^2(x).\label{eq:opt:phi:obj:6}
    \end{align}
    \label{eq:opt:phi:obj}
  \end{subequations}
  where $\psib \defeq \sigma_1^{-1}\dtmt \phib$, and where to obtain \eqref{eq:opt:phi:obj:3} we have used the fact that $\langle\phib, \phib_d\rangle = 0$, to obtain \eqref{eq:opt:phi:obj:4} we have used the facts that $\dtmt \phib = \sigma_1 \psib$ and $\dtmt^{\T} \psib = \sigma_1 \phib$, and to obtain \eqref{eq:opt:phi:obj:5} we have used \lemref{lem:psi=phi} and the facts that
  \begin{gather*}
    P_X(1) = P_X(2) = \dots = P_X(d),\\
    P_Y(1) = P_Y(2) = \dots = P_Y(d).
  \end{gather*}
  Furthermore, to maximize \eqref{eq:opt:phi:obj:6}, note that
  \begin{align*}
    \sum_{x \in \cX} \phi'(x) \phi^2(x)
      = \sqrt{\frac{d - 1}{d}}\left[\phi^2(1) - \frac{1}{d - 1}\sum_{i = 2}^{d}\phi^2(i)\right]%
    &= \sqrt{\frac{d - 1}{d}}\left[\phi^2(1) - \frac{\|\phib\|^2 - \phi^2(1)}{d - 1}\right]\\
    &= \sqrt{\frac{d}{d - 1}}\phi^2(1) - \frac{1}{\sqrt{(d - 1)d}}.
  \end{align*}
  As a result, if follows from \lemref{lem:phi_1:max} that \eqref{eq:opt:phi:obj:6} is maximized when $\phib = \pm \phib'$, i.e., we have $\varphib_1 = \pm \phib'$, which finishes the proof.
\end{proof}

  \section{The Generalized ACE Algorithm \eqref{eq:gACE}} \label{sec:appf}
First, we define the $| \cX |$ and $| \cY |$ dimensional vectors $\bar{\phib}_i$ and  $\bar{\psib}_i$, respectively, for $i = 1, \ldots , k$ as
\begin{align*}
\phibar_i (x) = \sqrt{\Pt_X(x)} \fb_i(x), \quad \psibar_i (y) = \sqrt{\Pt_Y(y)} \gb_i (y),
\end{align*}
where $\Pt_X$ and $\Pt_Y$ are the marginal distributions of $\Pt_{XY}$, and $\fb_i$ and $\gb_i$ are the $i$-th dimension of $\fb$ and $\gb$, i.e.,
\begin{align*}
\fb(x) = \left[ \fb_1(x) \ \cdots \ \fb_k(x) \right]^{\T}, \quad \gb(y) = \left[ \gb_1(y) \ \cdots \ \gb_k(y) \right]^{\T},
\end{align*}
for all $x$ and $y$. Then, the iterative steps of the generalized ACE algorithm~\eqref{eq:gACE} can be equivalently expressed as
\begin{equation}
\begin{aligned} 
\text{i)}\quad \Phibb_k &\leftarrow \dtmtt^{\T} \Psibb_k \left( \Psibb_k^{\T} \Psibb_k \right)^{-1}  \\
\text{ii)}\quad \Psibb_k &\leftarrow \dtmtt \Phibb_k \left( \Phibb_k^{\T} \Phibb_k \right)^{-1}
\end{aligned}
\label{eq:ace_matrix}
\end{equation}
where 
\begin{align*} %
\Phibb_k = \left[ \phibb_1, \dots, \phibb_k \right], \quad  \Psibb_k = \left[\psibb_1, \dots, \psibb_k \right].
\end{align*}
Note that \eqref{eq:ace_matrix} coincides with %
 the alternating least squares algorithm \cite{koren2009matrix} for solving the low-rank approximation problem
\begin{align*}
  \min_{\Psibb_k, \Phibb_k}~\bbfrob{\dtmtt - \Psibb_k\Phibb_k^\T}^2.
\end{align*}
Then, using the same argument as that in \appref{sec:app:ace}, we know that the generalized ACE algorithm~\eqref{eq:gACE} essentially computes the singular vectors of $\dtmtt$ with respect to the top $k$ singular values. 

  \section{Proof of \lemref{lem:perturb:B:semi}}
  \label{sec:app:dtmtt}
  For any $\Pt_{XY} \in \nbhdbar(\eps)$ with the corresponding empirical distributions $\Ph_{XY}$ and $Q_{X}$, it follows from \eqref{eq:nbhd:bar} that
  \begin{align*}
    D\bigl(\Ph_{XY}\big\|P_{XY}\bigr) \leq \frac{\eps}{\alphah_k(r)}\quad\text{and}\quad D(Q_{X}\|P_{X}) \leq \frac{\eps}{r\alphah_k(r)}.
  \end{align*}
  Then, following the same argument as that for \eqref{eq:imate:pinsker}, we obtain
  \begin{align*}
    \max_{x\in \cX, y \in \cY} |\imatej(y, x)| \leq \sqrt{\frac{2}{p_{\min}\alphah_k(r)}}\quad\text{and}\quad \max_{x\in \cX} |\imatem(x)| \leq \sqrt{\frac{2}{rp_{\min}\alphah_k(r)}}.
  \end{align*}
  In addition, from \eqref{eq:imates} we have
  \begin{align}
    |\imates(y, x)|
    &\leq |\imatej(y, x)| + %
        \frac{r}{1 + r}|\imatem(x)| + |\cY| \cdot \max_{y' \in \cY}{|\imatej(y', x)|}\notag\\ 
    &\leq (|\cY| + 1)\cdot \max_{x \in \cX, y\in \cY}|\imatej(y, x)| + %
        \sqrt{r} \cdot \max_{x \in \cX}|\imatem(x)|\notag\\ %
    &\leq (|\cY| + 2)\sqrt{\frac{2}{p_{\min}\alphah_k(r)}},
  \end{align}
  where to obtain the second inequality we have used the fact that $\frac{r}{1 + r} \leq \frac{\sqrt{r}}{2} \leq \sqrt{r}$.

  Then, it follows from \eqref{eq:bound:Xi:3} that
  \begin{align}
    |\dtmdmatet(y, x)|
    &\leq \frac{1  +  |\cX| + |\cY|}{p^2_{\min}} \cdot  \max_{(x, y)\in \cX \times \cY}|\imates(y, x)|\\
    &\leq \frac{1  +  |\cX| + |\cY|}{p^2_{\min}} \cdot  (|\cY| + 2) \cdot \sqrt{\frac{2}{p_{\min}\alphah_k(r)}} \\
    &\leq \frac{(1  +  |\cX| + |\cY|)^2}{p^3_{\min}} \sqrt{\frac{2}{\alphah_k(r)}}.
  \end{align}
  Hence, we have $\frob{\dtmdmatt} \leq  \bar{C}$ with $\bar{C} \defeq \frac{(1  +  |\cX| + |\cY|)^2}{p^3_{\min}} \sqrt{\frac{2|\cX||\cY|}{\alphah_k(r)}}$.

  Turning now to the second part of the lemma, for the convenience of representation, in the following we use $\pv$ to replace $\sqrt{\eps}$.

  From \eqref{eq:Px:semi}, \eqref{eq:Q} and \eqref{eq:marg_pert_X}, we conclude
  \begin{align*}
    \Pt_X(x)
        = \frac{1}{r + 1}\Ph_X(x) + \frac{r}{r + 1}Q_X(x)
        = P_X(x) + \pv \sqrt{P_X(x)} \cdot \frac{\imate_X(x) + r \imatem(x)}{1 + r},
  \end{align*}
  where $\imatej_{X}$ and $\imatem$ are as defined in \eqref{eq:marg_pert_X} and \eqref{eq:Q}. In addition, it follows from \eqref{eq:imate:def} and \eqref{eq:marg_pert_X} that
  \begin{align*}
    \Ph_{Y|X}(y|x)
    &= \frac{\Ph_{XY}(x, y)}{\Ph_{X}(x)}%
        = P_{Y|X}(y|x) + \pv\sqrt{P_{Y|X}(y|x)}%
            \cdot\left[ \frac{\imatej(y, x)}{\sqrt{P_{X}(x)}} - \sqrt{P_{XY}(x, y)}\cdot\frac{\imatej_X(x)}{P_X(x)}\right] + o(\pv),
  \end{align*}
  where $\imatej$ is as defined in \eqref{eq:imate:def}.

  Therefore, we have%
  \begin{align*}
    \Pt_{XY}(x, y)
    &= \Pt_X(x) \Ph_{Y|X}(y|x)\\
    &= P_{XY}(x,y) + \pv\sqrt{P_{XY}(x, y)}\cdot\biggl( \imatej(y, x)%
        + \frac{r}{1 + r}\sqrt{P_{Y|X}(y|x)}\cdot\left[\imatem(x)-\imatej_X(x)\right]\biggr) + o(\pv)\\
    &= P_{XY}(x,y) + \pv\sqrt{P_{XY}(x, y)} \imates(y, x) + o(\pv).%
  \end{align*}
  Finally, it follows from \eqref{eq:marg_pert_X}--\eqref{eq:dtm:diff} that%
  \begin{align*}
    \dtmtt =  \dtmt + \tau \dtmdmatt + o(\tau) = \dtmt + \sqrt{\eps}\, \dtmdmatt + o\left(\sqrt{\eps}\right).
  \end{align*}

    \section{Proof of Lemma \ref{lem:exponent:kl:semi}}
  \label{sec:app:sanov:semi}
First, note that
  \begin{align}
      \mathbb{P}_{n, m} \left\{ \bbfrob{ \dtmt \Phib_k }^2 - \bbfrob{ \dtmt \Phibt_k }^2 > \eps  \right\} 
    &= \sum_{T(\Ph_{XY}), \ T(Q_{X})\colon \Pt_{XY}\in \cSt_1(\eps)} \P_{n, m}\bigl\{T(\Ph_{XY}), T(Q_{X})\bigr\}\notag\\
      &= \sum_{T(\Ph_{XY}), \ T(Q_{X})\colon \Pt_{XY}\in \cSt_1(\eps)} \P_n\bigl\{T(\Ph_{XY})\bigr\}\P_{m}\left\{T(Q_{X})\right\},
        \label{eq:P:semi}
  \end{align}
  where %
  $T(\Ph_{XY})$ and $T(Q_X)$ denote the type class of $\Ph_{XY}$ and the type class of $Q_X$, respectively, and the last equality follows from the fact that $Q_{X}$ is independent of $\Ph_{XY}$. Then, the probabilities of the two type classes are \cite{cover2012elements}
  \begin{gather*}
    \P_n\bigl\{T(\Ph_{XY})\bigr\} \doteq \exp\left\{-n D(\Ph_{XY}\|P_{XY})\right\}
  \end{gather*}
  and
  \begin{align*}
    \P_{m}\{T(Q_{X})\} &\doteq \exp\left\{-m D(Q_X\|P_{X})\right\}%
      = \exp\left\{-nr D(Q_X\|P_{X})\right\}.
  \end{align*}
  Moreover, for both type classes, the numbers of types are at most polynomial in $n$. Therefore, via the Laplace principle \cite{dembo2011large} it follows that
  \begin{align*}
    &\mathbb{P}_{n, m} \left\{ \bbfrob{ \dtmt \Phib_k }^2 - \bbfrob{ \dtmt \Phibt_k }^2 > \eps  \right\}%
      \doteq \exp \biggl\{ -n\cdot \inf_{\Pt_{XY} \in \cSt_1(\eps)} \Bigl[D(\Ph_{XY}\| P_{XY})+ r D(Q_{X}\| P_{X}) \Bigr] \biggr\}.
  \end{align*}

  \section{Proof of \lemref{lem:S1t:nbhd}}  
  \label{sec:app:S1t:nbhd}
  For $\eps > 0$ and $t > 0$, we define the subset $\cSt^{(t)}_2(\eps)$ of $\nbhdbar(\eps)$ as %
  \begin{align}
    \label{eq:cS2:def:semi}
    \cSt^{(t)}_2(\eps) \defeq \left\{\Pt_{XY} \colon \Pt_{XY} \in \nbhdbar(\eps), ~\sum_{i = 1}^k\sum_{j = k + 1}^d \frac{\left[\phib_i^{\T}\left(\dtmt^\T \dtmdmatt + \dtmdmatt^\T \dtmt\right) \phib_j\right]^2}{\sigma_i^2 - \sigma_j^2} \geq t \right\}, 
  \end{align}
  where $\dtmdmatt$ is as defined in \eqref{eq:theta:semi}. Then, it is convenient to first establish the following useful lemma.

\begin{lemma}
  \label{lem:S2t:alphah_k}
  For all $t \in (0, 2)$, we have
  \begin{align}
    -\lim_{\eps \rightarrow 0^+} \frac{1}{\eps} \inf_{\Pt_{XY} \in \cSt_2^{(t)}(\eps)} \left[D\bigl(\Ph_{XY}\big\| P_{XY}\bigr) + r D(Q_X \| P_X)\right] = \frac{t}{2\alphah_k(r)}.
    \label{eq:S2t:t}
  \end{align}
\end{lemma}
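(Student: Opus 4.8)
The plan is to mirror, in the semi-supervised setting, the proof of \lemref{lem:S2:alpha_k} given in \appref{sec:app:S1:nbhd}, the one new ingredient being that one must now track the two empirical distributions $\Ph_{XY}$ and $Q_X$ simultaneously, with the matrix $\Lbh(r)$ absorbing their coupling. Since $\cSt_2^{(t)}(\eps)$ is closed, the infimum in~\eqref{eq:S2t:t} is attained. First I would observe that for any $\Pt_{XY}\in\nbhdbar(\eps)$, with the associated correspondences $\Ph_{XY}\leftrightarrow\imate$ and $Q_X\leftrightarrow\imatem$, the boundedness estimates on $|\imate(y,x)|$ and $|\imatem(x)|$ derived in the proof of \lemref{lem:perturb:B:semi} justify the second-order Taylor expansion of the relative entropy [cf.~\eqref{eq:normg}], giving
\begin{align*}
  D\bigl(\Ph_{XY}\big\|P_{XY}\bigr) + r\,D(Q_X\|P_X) = \frac{\eps}{2}\Bigl(\bbfrob{\imat}^2 + r\|\imatm\|^2\Bigr) + o(\eps) = \frac{\eps}{2}\|\convec\|^2 + o(\eps),
\end{align*}
with $\convec$ as in~\eqref{eq:def:convec}. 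Hence, to leading order, the infimum in~\eqref{eq:S2t:t} is $\frac{\eps}{2}$ times the optimal value of the problem of minimizing $\|\convec\|^2$ over $(\imate,\imatem)$ subject to the inequality $\sum_{i=1}^k\sum_{j=k+1}^d\bigl[\phib_i^{\T}(\dtmt^{\T}\dtmdmatt+\dtmdmatt^{\T}\dtmt)\phib_j\bigr]^2/(\sigma_i^2-\sigma_j^2)\ge t$, together with the linear constraints $\sum_{x,y}\sqrt{P_{XY}(x,y)}\,\imate(y,x)=0$ and $\sum_{x}\sqrt{P_X(x)}\,\imatem(x)=0$ imposed by $\Ph_{XY}$ and $Q_X$ being probability distributions.

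Next, because both the objective and the quadratic form in the inequality constraint are homogeneous of degree two in $(\imate,\imatem)$, the rescaling argument of \appref{sec:app:S1:nbhd} reduces this (up to a factor $\sqrt{t/\alphah_k(r)}$ on the optimizer) to maximizing $\sum_{i,j}\bigl[\phib_i^{\T}(\dtmt^{\T}\dtmdmatt+\dtmdmatt^{\T}\dtmt)\phib_j\bigr]^2/(\sigma_i^2-\sigma_j^2)$ over $\|\convec\|^2\le1$ with the equality constraints dropped: adding a multiple of $\sqrt{P_{XY}(\cdot,\cdot)}$ to $\imate$ or a multiple of $\sqrt{P_X(\cdot)}$ to $\imatem$ changes $\imates$ in~\eqref{eq:imates} only by a multiple of $\sqrt{P_{XY}(\cdot,\cdot)}$, hence changes $\dtmdmatt$ in~\eqref{eq:theta:semi} only by a multiple of $\psib_0\phib_0^{\T}$ (with $\psib_0,\phib_0$ the vectors having entries $\sqrt{P_Y(y)},\sqrt{P_X(x)}$), which is annihilated by $\dtmt^{\T}$ and therefore leaves the objective unchanged, and by the same reasoning the optimal $\imate$ vanishes wherever $P_{XY}=0$, matching~\eqref{eq:imate:def}. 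I would then vectorize the objective exactly as in~\eqref{eq:tr:vec}, writing it as $\vec^{\T}(\dtmdmatt)\bigl(\sum_{i,j}\thetab_{ij}\thetab_{ij}^{\T}/(\sigma_i^2-\sigma_j^2)\bigr)\vec(\dtmdmatt)$, and then establish the linear identity $\vec(\dtmdmatt)=\Lb\,\Lbh(r)\,\convec$ by checking entry by entry that~\eqref{eq:imates} is precisely the action of $\Lbh(r)$ in~\eqref{eq:Lbh:def} on $\convec$, while $\dtmdmatt$ is produced from $\imates$ by the same linear map $\Lb$ that~\eqref{eq:theta}--\eqref{eq:L:def} use to produce $\dtmdmat$ from $\imate$. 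This turns the objective into $\convec^{\T}\Lbh^{\T}(r)\Gb_k\Lbh(r)\convec=\convec^{\T}\Gbh_k(r)\convec$ by~\eqref{eq:gbh:def}, whose maximum over $\|\convec\|\le1$ is the spectral norm $\spectral{\Gbh_k(r)}=\alphah_k(r)$.

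Consequently the minimization problem has optimal value $t/\alphah_k(r)$, so the infimum in~\eqref{eq:S2t:t} equals $\eps t/(2\alphah_k(r))+o(\eps)$; since $t<2$ this is strictly below $\eps/\alphah_k(r)$, so the associated $\Pt_{XY}$ indeed lies in $\nbhdbar(\eps)$, which confirms that the $\nbhdbar(\eps)$-membership built into the definition~\eqref{eq:cS2:def:semi} of $\cSt_2^{(t)}(\eps)$ is not active, and~\eqref{eq:S2t:t} follows. I expect the main obstacle to be the verification of the identity $\vec(\dtmdmatt)=\Lb\Lbh(r)\convec$ --- reconciling the coupled definition~\eqref{eq:imates} of $\imates$ with the block structure of $\Lbh(r)$ in~\eqref{eq:Lbh:def} and hence with the definition~\eqref{eq:gbh:def} of $\Gbh_k(r)$; adapting the constraint-removal and zero-support arguments of \appref{sec:app:S1:nbhd} to the fact that $\imates$ now mixes the labeled and unlabeled fluctuations is the remaining point needing care.
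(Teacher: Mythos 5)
Your proposal follows essentially the same route as the paper's proof of \lemref{lem:S2t:alphah_k}: second-order Taylor expansion of the weighted K-L divergences into $\frac{\eps}{2}\|\convec\|^2$, reduction via homogeneity to the dual maximization of $\vec^{\T}(\dtmdmatt)\bigl(\sum_{i,j}\thetab_{ij}\thetab_{ij}^{\T}/(\sigma_i^2-\sigma_j^2)\bigr)\vec(\dtmdmatt)$ over $\|\convec\|\le 1$, the linear identity $\vec(\dtmdmatt)=\Lb\Lbh(r)\convec$, and identification of the optimum with $\spectral{\Gbh_k(r)}=\alphah_k(r)$, followed by scaling back and checking membership in $\nbhdbar(\eps)$.

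The only point at which you diverge slightly is the justification for dropping the two linear equality constraints: you argue directly from an invariance observation (shifting $\imate$ by a multiple of $\sqrt{P_{XY}}$ or $\imatem$ by a multiple of $\sqrt{P_X}$ perturbs $\imates$ only along $\sqrt{P_{XY}}$, hence perturbs $\dtmdmatt$ only along $\psib_0\phib_0^{\T}$, which $\dtmt^{\T}$ annihilates), whereas the paper decomposes the candidate optimum into its $c_1\sqrt{P_{XY}}$, $c_2\sqrt{P_X}$ components and a remainder, and derives a contradiction in the two cases $c_1^2 + rc_2^2 = 1$ and $0 < c_1^2 + rc_2^2 < 1$ separately. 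The two arguments are equivalent in substance; yours makes the mechanism more transparent, while the paper's is more explicit about what the degenerate case $c_1^2 + rc_2^2 = 1$ produces (a zero objective). Either way the conclusion and the remainder of the proof are identical.
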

Using \lemref{lem:S2t:alphah_k}, we establish \lemref{lem:S1t:nbhd} as follows. First, for all $\Pt_{XY} \in \nbhdbar(\eps)$, it follows from \lemref{lem:perturb:B:semi} that
\begin{align*}
{\dtmtt}{}^{\T} \dtmtt = {\dtmt}^{\T} \dtmt + \eps \left( \dtmt^{\T} \dtmdmatt + \dtmdmatt^{\T} \dtmt \right) + o\left(\sqrt{\eps}\right).
\end{align*}
From the perturbation analysis result of \lemref{lem:eig:k}, we can represent the learning error as
  \begin{align}
    \bbfrob{ \dtmt \Phib_k }^2 - \bbfrob{ \dtmt \Phibt_k }^2
    = \eps\sum_{i = 1}^k\sum_{j = k + 1}^d \frac{\left[\phib_i^{\T}\left(\dtmt^\T \dtmdmatt + \dtmdmatt^\T \dtmt\right) \phib_j\right]^2}{\sigma_i^2 - \sigma_j^2} + o(\eps).
    \label{eq:err:perturb:semi}
  \end{align}
Therefore, for any $t \in (0, 1)$, there exists an $\eps_0 > 0$ such that for all $\eps \in (0, \eps_0)$, we have
  \begin{align*}
    \cSt_2^{(1 + t)}(\eps)  \subseteq \cSt_1(\eps) \cap \nbhdbar(\eps) \subseteq \cSt^{(1 - t)}_2(\eps).
  \end{align*}

  Then, using arguments similar to \eqref{eq:bound:min}--\eqref{eq:lim:cS1}, from \lemref{lem:S2t:alphah_k} we obtain
  \begin{align*}
    &-\lim_{\eps \rightarrow 0^+} \frac{1}{\eps} \inf_{\Pt_{XY} \in \cSt_1(\eps) \cap \nbhdbar(\eps)} \left[D\bigl(\Ph_{XY}\big\| P_{XY}\bigr) + r D(Q_X \| P_X)\right]\\
    &= -\lim_{\eps \rightarrow 0^+} \frac{1}{\eps} \inf_{\Pt_{XY} \in \cSt_2^{(1)}(\eps)} \left[D\bigl(\Ph_{XY}\big\| P_{XY}\bigr) + r D(Q_X \| P_X)\right] = \frac{1}{2\alphah_k(r)}.
  \end{align*}
  
  It remains only to establish \lemref{lem:S2t:alphah_k}.
    
  \begin{proof}[Proof of \lemref{lem:S2t:alphah_k}]
    Since the set $\cSt_2^{(t)}(\eps)$ is closed, we have
    \begin{align*}
      \inf_{\Pt_{XY} \in \cSt_2^{(t)}(\eps)} \left[D\bigl(\Ph_{XY}\big\| P_{XY}\bigr) + r D(Q_X \| P_X)\right] = \min_{\Pt_{XY} \in \cSt_2^{(t)}(\eps)} \left[D\bigl(\Ph_{XY}\big\| P_{XY}\bigr) + r D(Q_X \| P_X)\right].
    \end{align*}
    Then, for all $\Pt_{XY} \in \cSt_2^{(t)}$ with the corresponding empirical distributions $\Ph_{XY} \leftrightarrow \imatej$ and $Q_X \leftrightarrow \imatem$ for labeled and unlabeled data, it follows from the second-order Taylor series expansion of the K-L divergence that
    \begin{align}
      &D(\Ph_{XY}\| P_{XY}) + r D(Q_{X}\| P_{X})%
                                = \frac{\eps}{2} \left[ \|\imatj\|_{\F}^2 + r\|\imatm\|^2\right] + o(\eps),\label{eq:local:kl:semi}
    \end{align}
Therefore, %
the characterization of the error exponent~\eqref{eq:asym_sample_complexity:semi} can be reduced to the following optimization problem:
\begin{subequations}
  \begin{alignat}{2}
    &\min_{\imatj, \imatm} &\quad &\|\imatj\|_{\F}^2 + r \|\imatm\|^2\\
    &~\,\text{s.t.}& &\sum_{i = 1}^k\sum_{j = k + 1}^d \frac{\left[\phib_i^{\T}\left(\dtmt^\T \dtmdmatt + \dtmdmatt^\T \dtmt\right) \phib_j\right]^2}{\sigma_i^2 - \sigma_j^2}  \geq t,\\
    &              & &\sum_{x \in \cX}\sqrt{P_{X}(x)} \imatem (x) = 0, \\
    &              & &\sum_{x \in \cX, y \in \cY}\sqrt{P_{XY}(x,y)} \imatej (y, x) = 0,
  \end{alignat}
  \label{eq:opt:semi:org}
\end{subequations}
where the equality constraints follow from the definitions of $\imatej$ and $\imatem$. As we will verify, although not imposed as a constraint, the condition $\Pt_{XY} \in \nbhdbar(\eps)$ can be satisfied for the optimal $(\imatj, \imatm)$. Since both the objective function and the inequality constraint of \eqref{eq:opt:semi:org} are quadratic, the optimal solution can be obtained via solving
\begin{subequations}
  \begin{alignat}{2}
    &\max_{\imatj, \imatm} &\quad &\sum_{i = 1}^k\sum_{j = k + 1}^d \frac{\left[\phib_i^{\T}\left(\dtmt^\T \dtmdmatt + \dtmdmatt^\T \dtmt\right) \phib_j\right]^2}{\sigma_i^2 - \sigma_j^2}\\
    &~\,\text{s.t.}& & \|\imatj\|_{\F}^2 + r \|\imatm\|^2 \leq 1,\\
    &              & &\sum_{x \in \cX}\sqrt{P_{X}(x)} \imatem (x) = 0, \\
    &              & &\sum_{x \in \cX, y \in \cY}\sqrt{P_{XY}(x,y)} \imatej (y, x) = 0,
  \end{alignat}
  \label{eq:opt:semi}
\end{subequations}
where we have again interchanged the objective function and the quadratic function in the inequality constraint. Then, with arguments similar to those of the supervised case, we can verify the optimal solution of~\eqref{eq:opt:semi} also satisfies~\eqref{eq:imate:def} and~\eqref{eq:imates}. Furthermore, it can be verified that \eqref{eq:opt:semi} is equivalent to the optimization problem without the equality constraints, i.e.,
  \begin{subequations}
    \begin{alignat}{2}
      &\max_{\imatj, \imatm} &\quad &\sum_{i = 1}^k\sum_{j = k + 1}^d \frac{\left[\phib_i^{\T}\left(\dtmt^\T \dtmdmatt + \dtmdmatt^\T \dtmt\right) \phib_j\right]^2}{\sigma_i^2 - \sigma_j^2}\\
      &~\,\text{s.t.}& & \|\imatj\|_{\F}^2 + r \|\imatm\|^2 \leq 1.
    \end{alignat}
    \label{eq:opt:semi:equiv}
  \end{subequations}
  To see this, suppose $(\imatj^*, \imatm^*)$ is the optimal solution of \eqref{eq:opt:semi:equiv}, and define $c_1 \defeq \sum_{x \in \cX, y\in \cY}\imatej^*(y, x)\sqrt{P_{XY}(x, y)}$ and $c_2 \defeq \sum_{x \in \cX}\imatem^*(x)$. With $z_1(x, y) \defeq \imatej^*(y, x) - c_1\sqrt{P_{XY}(x, y)}$ and $z_2(x, y) \defeq \imatem^*(x) - c_2\sqrt{P_X(x)}$, we have
  \begin{align}
    1 &= \|\imatj^*\|_{\F}^2 + r \|\imatm^*\|^2%
      = \sum_{x \in \cX, y\in \cY}z_1^2(x, y) + r \sum_{x \in \cX}z_2^2(x) + (c_1^2 + rc_2^2),
  \end{align}
  which implies $c_1^2 + rc_2^2 \leq 1$.
  
  If $c_1^2 + r c_2^2 = 1$, then we have $z_1(x, y) \equiv 0$ and $z_2(x) \equiv 0$, which implies $\imatej^*(y, x) = c_1\sqrt{P_{XY}(x, y)}$ and $\imatem^*(x) = c_2\sqrt{P_X(x)}$. Therefore, it follows from %
  \eqref{eq:theta:semi} that
  \begin{align*}
    \dtmdmatet(y, x) = - \frac{c_1 + c_2 r}{1 + r} \sqrt{P_X(x)P_Y(y)}.
  \end{align*}
  which implies that $\dtmt^{\T}\dtmdmatt$ is a zero matrix. As a result, 
  the objective function of \eqref{eq:opt:semi} is zero, which contradicts the optimality of $(\imatj^*, \imatm^*)$. Moreover, if $c_1^2 + r c_2^2 < 1$, we can construct a feasible solution $(\imatj', \imatm')$ with 
  \begin{align*}
    \imatej'(y, x) = \frac{z_1(x, y)}{\sqrt{1 - c_1^2 - rc_2^2}}
  \quad\text{and}\quad
  \imatem'(x) = \frac{z_2(x)}{\sqrt{1 - c_1^2 - rc_2^2}},
  \end{align*}
  and it is straightforward to verify that the objective function for $(\imatj', \imatm')$ is $\left(1 - c_1^2 - rc_2^2\right)^{-1}$ times the value for $(\imatj^*, \imatm^*)$. This again contradicts the optimality of $(\imatj^*, \imatm^*)$. Therefore, we have $c_1 = c_2 = 0$, and the optimization problem \eqref{eq:opt:semi:equiv} has the same solution as \eqref{eq:opt:semi}.

To simplify the optimization problem \eqref{eq:opt:semi:equiv}, we define the vector $\convec \in \mathbb{R}^{|\cX|(|\cY| + 1)}$ as
\begin{align}\label{eq:convec:def}
  \convec \defeq
  \begin{bmatrix}
    \vec(\imatj)\\
    \sqrt{r}\imatm
  \end{bmatrix},
\end{align}
and let $\imats$ be the $|\cY| \times |\cX|$ matrix with the entries $\imates(y, x)$. Then, it follows from \eqref{eq:Lbh:def} and \eqref{eq:imates} that $\vec(\imats) = \Lbh(r)\convec$.

Therefore, the objective function of \eqref{eq:opt:semi:equiv} can be rewritten as
\begin{subequations}
  \begin{align}
      \sum_{i = 1}^k\sum_{j = k + 1}^d \frac{\left[\phib_i^{\T}\left(\dtmt^\T \dtmdmatt + \dtmdmatt^\T \dtmt\right) \phib_j\right]^2}{\sigma_i^2 - \sigma_j^2}%
    &=\vec^{\T}(\dtmdmatt)\left(\sum_{i = 1}^k\sum_{j = k + 1}^d \frac{\thetab_{ij}\thetab^{\T}_{ij}}{\sigma_i^2 - \sigma_j^2}\right) \vec(\dtmdmatt),\label{eq:gbh:1}\\
    &= \vec^{\T}(\imats)\Lb^{\T}\left(\sum_{i = 1}^k\sum_{j = k + 1}^d \frac{\thetab_{ij}\thetab^{\T}_{ij}}{\sigma_i^2 - \sigma_j^2}\right)\Lb \vec(\imats)\label{eq:gbh:2}\\
    &= \vec^{\T}(\imats)\Gb_k \vec(\imats)\label{eq:gbh:3}\\
    &= \convec^{\T}\Lbh^{\T}(r)\Gb_k\Lbh(r)\convec\label{eq:gbh:4}\\
    &= \convec^{\T}\Gbh_k(r)\convec,\label{eq:gbh:5}
  \end{align}
  \label{eq:gbh:convec}
\end{subequations}
where to obtain \eqref{eq:gbh:1} we have used \eqref{eq:tr:vec}, %
and to obtain \eqref{eq:gbh:3} we have used \eqref{eq:gb:k}. In addition, since
  $ \|\convec\|^2 = \|\imatj\|_{\F}^2 + r\|\imatm\|^2$,
  the constraint of \eqref{eq:opt:semi:equiv} can be rewritten as $\|\convec\| \leq 1$.

  As a result, the maximum of \eqref{eq:gbh:5} is the spectrum norm of $\Gbh_k(r)$, i.e., $\alphah_k(r)$, which is also the optimal value of the objective functions in \eqref{eq:opt:semi:equiv} and \eqref{eq:opt:semi}. This implies that the optimal solution of the original optimization problem \eqref{eq:opt:semi:org} is $\left(\sqrt{\frac{t}{\alphah_k(r)}} \imatj^*, \sqrt{\frac{t}{\alphah_k(r)}}\imatm^*\right) $, with the corresponding optimal value being $t/\alphah_k(r)$. Let $\Ph^*_{XY} \leftrightarrow \sqrt{\frac{t}{\alpha_k}}\imate^*$ and $Q^*_{X} \leftrightarrow \sqrt{\frac{t}{\alpha_k}}\imatem^*$ denote the corresponding empirical distributions, then we have, for $\eps$ sufficiently small,
  \begin{align*}
      D\bigl(\Ph^*_{XY} \big\| P_{XY}\bigr) + r D(Q^*_X \| P_X) = \frac{\eps t}{2\alphah_k(r)} + o(\eps) < \frac{\eps }{\alphah_k(r)},
  \end{align*}
  where to obtain the inequality we have used the fact that $t \in (0, 2)$.

  Hence, the corresponding optimal distribution $\Pt^*_{XY}$ as defined in \eqref{eq:PtXY} satisfies $\Pt^*_{XY} \in \nbhdbar(\eps)$. Therefore, we conclude
  \begin{align*}
    \min_{\Pt_{XY} \in \cSt_2^{(t)}(\eps)} \left[D\bigl(\Ph_{XY}\big\| P_{XY}\bigr) + r D(Q_X \| P_X)\right] = \frac{\eps t}{2\alphah_k(r)} + o(\eps), 
  \end{align*}
  which implies \eqref{eq:S2t:t}.

\end{proof}

  \section{Proof of \thmref{thm:sample:semi}}
  \label{sec:app:thm:non-asym:semi}
\reply{  
  First, it follows from \eqref{eq:lim:S1t} that there exists an $\bar{\eps}_0 > 0$ that depends only on $P_{XY}$ and $r$ such that for all $\eps \in (0, \bar{\eps}_0)$ we have
  \begin{align}
    \inf_{\Pt_{XY} \in \cSt_1(\eps)} \Bigl[D(\Ph_{XY}\| P_{XY})+ r D(Q_{X}\| P_{X}) \Bigr] = \frac{\eps}{2\alphah_k(r)} + o(\eps) > \frac{\eps}{3\alphah_k(r)} = \bar{\kappa}\eps,
    \label{eq:lb:D:semi}
  \end{align}
  where we have defined $\bar{\kappa} \defeq [{3}{\alphah_k(r)}]^{-1}$.

  Then,  for all $\eps \in (0, \bar{\eps}_0)$, it follows from \eqref{eq:P:semi} that
  \begin{align}
    &
      \mathbb{P}_{n, m} \left\{ \bbfrob{ \dtmt \Phib_k }^2 - \bbfrob{ \dtmt \Phibt_k }^2 > \eps  \right\} \notag \\
    &= \sum_{T(\Ph_{XY}), \ T(Q_{X})\colon \Pt_{XY}\in \cSt_1(\eps)} \P_n\bigl\{T(\Ph_{XY})\bigr\}\P_{m}\left\{T(Q_{X})\right\}\\
    &\leq \sum_{T(\Ph_{XY}), \ T(Q_{X})\colon \Pt_{XY}\in \cSt_1(\eps)} \exp\left\{-n D(\Ph_{XY}\|P_{XY}) - nr D(Q_X\|P_{X})\right\}\label{eq:sanov:ieq:semi:1}\\
    &\leq \sum_{T(\Ph_{XY}), \ T(Q_{X})\colon \Pt_{XY}\in \cSt_1(\eps)} \exp \biggl\{ -n\cdot \inf_{\Pt_{XY} \in \cSt_1(\eps)} \Bigl[D(\Ph_{XY}\| P_{XY})+ r D(Q_{X}\| P_{X}) \Bigr] \biggr\}\\
    &\leq (n + 1)^{|\cX||\cY|} (nr + 1)^{|\cX|} \exp \biggl\{ -n\cdot \inf_{\Pt_{XY} \in \cSt_1(\eps)} \Bigl[D(\Ph_{XY}\| P_{XY})+ r D(Q_{X}\| P_{X}) \Bigr] \biggr\}\label{eq:sanov:ieq:semi:2}\\
    &< (n + 1)^{|\cX||\cY|} (nr + 1)^{|\cX|} \exp\left(-\bar{\kappa} n\eps \right)\label{eq:sanov:ieq:semi:3}\\
    &\leq (2n)^{|\cX||\cY|} (n(r + 1))^{|\cX|} \exp\left(-\bar{\kappa} n\eps \right)\label{eq:sanov:ieq:semi:4}\\
    &= 2^{|\cX||\cY|} \cdot (r + 1)^{|\cX|} \cdot n^{|\cX|(1 + |\cY|)} \exp\left(-\bar{\kappa} n\eps \right)\\
    &= 2^{|\cX||\cY|} \cdot (r + 1)^{|\cX|} \cdot
      \left(\frac{4|\cX|(1 + |\cY|)}{\bar{\kappa}\eps}\right)^{|\cX|(1 + |\cY|)} \cdot
      \left(\frac{\bar{\kappa}n\eps}{4|\cX|(1 + |\cY|)}\right)^{|\cX|(1 + |\cY|)} \cdot \exp\left(-\bar{\kappa} n\eps \right)\\
    &< 2^{|\cX||\cY|} \cdot (r + 1)^{|\cX|} \cdot
      \left(\frac{4|\cX|(1 + |\cY|)}{\bar{\kappa}\eps}\right)^{|\cX|(1 + |\cY|)} \cdot\exp\left(-\frac{3\bar{\kappa}n\eps}{4} \right)\label{eq:sanov:ieq:semi:7}\\
    &< \left(\frac{8(1+r)|\cX|(1 + |\cY|)}{\bar{\kappa}\eps}\right)^{|\cX|(1 + |\cY|)} \cdot \exp\left(-\frac{3\bar{\kappa}n\eps}{4} \right)\\
    &\leq \left(\frac{12(1+r)|\cX| |\cY|}{\bar{\kappa}\eps}\right)^{|\cX|(1 + |\cY|)} \cdot \exp\left(-\frac{3\bar{\kappa}n\eps}{4} \right),\label{eq:sanov:ieq:semi:9}
  \end{align}
  where $T(\Ph_{XY})$ and $T(Q_X)$ denote the type class of $\Ph_{XY}$ and the type class of $Q_X$, respectively, and where \eqref{eq:sanov:ieq:semi:1} follows from the upper bound of probability of type classes \cite[Theorem 11.1.4]{cover2012elements}, where \eqref{eq:sanov:ieq:semi:2} follows from the upper bound of the number of types \cite[Theorem 11.1.1]{cover2012elements}. In addition, \eqref{eq:sanov:ieq:semi:3} follows from \eqref{eq:lb:D:semi}, \eqref{eq:sanov:ieq:semi:4} follows from $n \geq 1$,  \eqref{eq:sanov:ieq:semi:7} follows from the fact that $x \leq e^x - 1 < e^x$, and \eqref{eq:sanov:ieq:semi:9} follows from the fact that $1 + |\cY| \leq 3|\cY|/2$ since $|\cY| \geq 2$.

  Therefore, it suffices to choose $n$ such that
  \begin{align*}
    \left(\frac{12(1+r)|\cX| |\cY|}{\bar{\kappa}\eps}\right)^{|\cX|(1 + |\cY|)} \cdot \exp\left(-\frac{3\bar{\kappa}n\eps}{4} \right) < \delta,
  \end{align*}
  which is equivalent to 
  \begin{align*}
    n > \frac{4|\cX|(1+|\cY|)}{3\bar{\kappa}\eps} \log \frac{12(1+r)|\cX||\cY|}{\bar{\kappa}\eps} + \frac{4}{3\bar{\kappa}\eps}\log\frac{1}{\delta} &= \frac{4\alphah_k(r)|\cX|(1+|\cY|)}{\eps} \log \frac{36\alphah_k(r)(1+r)|\cX||\cY|}{\eps} + \frac{4\alphah_k(r)}{\eps}\log\frac{1}{\delta}\\
   &=  \bar{N}^{(4\alphah_k(r))}(\eps, \delta, r),
  \end{align*}
  where we have used the fact that $\bar{\kappa} = [3\alphah_k(r)]^{-1}$.
}  
\section{Proof of  \propref{prop:alphah}} \label{sec:apph}

First, we write the matrix $\Lbh(r)$ as defined in \eqref{eq:Lbh:def} as
    $\Lbh(r) = \left[\Lbh_1(r), \Lbh_2(r)\right]$,
  where $\Lbh_1(r)$ is composed of the first $(|\cX|\cdot|\cY|)$ columns of $\Lbh(r)$, and $\Lbh_2(r)$ is composed of the rest $|\cX|$ columns of $\Lbh$. Then it follows from the definition of $\Lbh(r)$ that
  \begin{subequations}
  \begin{align}
    \Lbh_1(r) = \Ib_{|\cX|\cdot|\cY|} - \frac{r}{1 + r} \Mb\Mb^{\T}
    \label{eq:Lbh:1}
  \end{align}
  and
  \begin{align}
    \Lbh_2(r) =
    \frac{\sqrt{r}}{1 + r} \Mb,
    \label{eq:Lbh:2}
  \end{align}
  \label{eq:Lbh:12}
\end{subequations}
  where $\Ib_{|\cX|\cdot|\cY|}$ is the identity matrix in $\mathbb{R}^{(|\cX|\cdot|\cY|)\times (|\cX|\cdot|\cY|)}$, and $\Mb$ is as defined in \eqref{eq:Mb:nub}.

  Therefore, we have
  \begin{subequations}
    \begin{align}
      \Lbh(r)\Lbh^{\T}(r)
      &= \Lbh_1(r)\Lbh_1^{\T}(r) + \Lbh_2(r)\Lbh_2^{\T}(r)\\
      &= \Ib_{|\cX|\cdot|\cY|} - \frac{2r}{1 + r} \Mb\Mb^{\T} + \frac{r^2}{(1+r)^2} \Mb\Mb^{\T}\Mb\Mb^{\T}%
        + \frac{r}{(1 + r)^2}\Mb\Mb^{\T}\\
      &= \Ib_{|\cX|\cdot|\cY|} - \frac{r}{1 + r} \Mb\Mb^{\T}\label{eq:Lbh:3}\\
      &= \Lbh_1(r),  
    \end{align}
  \end{subequations}
  where to obtain \eqref{eq:Lbh:3} we have exploited the fact that $\Mb^{\T}\Mb$ is the identity matrix in $\mathbb{R}^{|\cX|}$.

  Then, with $\spectral{\cdot}$ denoting the spectral norm, we have
  \begin{subequations}
  \begin{align}
    \alphah_k(r) &= \bbspectral{\Lbh^\T(r) \Gb_k \Lbh(r)} = \bbspectral{\left[\Gb_k^{\frac12}\Lbh(r)\right]^{\T} \Gb_k^{\frac12}\Lbh(r)}\label{eq:alphah:1}\\
    &= \bbspectral{ \Gb_k^{\frac12} \Lbh(r)\Lbh^\T(r)\Gb_k^{\frac12}}\label{eq:alphah:2}\\
    &= \bbspectral{ \Gb_k^{\frac12} \left(\Ib_{|\cX|\cdot|\cY|} - \frac{r}{1 + r}\Mb\Mb^\T\right)\Gb_k^{\frac12}},\label{eq:alphah:3}%
  \end{align}
\end{subequations}
where $\Gb_k^{\frac{1}{2}}$ is defined as the positive semidefinite matrix $\Cb$ such that $\Cb^2 = \Gb_k$, and where \eqref{eq:alphah:2} follows from the fact that for all matrices $\Ab$, we have
\begin{align}
  \bbspectral{\Ab\Ab^\T} = \bbspectral{\Ab^\T\Ab}.
  \label{eq:AAt:AtA}
\end{align}

Moreover, from $\Mb^{\T}\Mb = \Ib_{|\cX|\cdot|\cY|}$ we have
\begin{align}
  \Ib_{|\cX|\cdot|\cY|} - \frac{r}{1 + r}\Mb\Mb^\T = \left[\Pb(r)\right]^2, %
  \label{eq:t(r)}
\end{align}
where we have defined %
\begin{align*}
  \Pb(r) \defeq \Ib_{|\cX|\cdot|\cY|} - \left(1 - \frac{1}{\sqrt{1 + r}}\right)\Mb\Mb^\T.
\end{align*}
Then, it follows from \eqref{eq:alphah:3} and \eqref{eq:AAt:AtA}--\eqref{eq:t(r)} that
\begin{align*}
  \alphah_k(r) = \bbspectral{ \Pb(r)\Gb_k\Pb(r)}.
\end{align*}
Furthermore, for all $r_2 > r_1 \geq 0$, we define $\hat{\Pb}$ as
\begin{align*}
  \hat{\Pb} \defeq \Ib_{|\cX|\cdot|\cY|} - \left(1 - \sqrt{\frac{1+r_1}{1+r_2}}\right)\Mb\Mb^\T,
\end{align*}
then it can be verified that $\hat{\Pb}$ satisfies $\bspectral{\hat{\Pb}} = 1$ and
$\Pb(r_2) = \Pb(r_1) \hat{\Pb} = \hat{\Pb} \Pb(r_1)$. Hence, we have
\begin{align*}
  \alphah_k(r_2)
  &= \bbspectral{ \Pb(r_2)\Gb_k \Pb(r_2)}\\
  &= \left\| \hat{\Pb}\Pb(r_1)\Gb_k \Pb(r_1) \hat{\Pb} \right\|_{\mathrm{s}}\\
  &\leq  \bspectral{\hat{\Pb}}^2 \bbspectral{\Pb(r_1)\Gb_k \Pb(r_1)}\\
  &= \bbspectral{\Pb(r_1)\Gb_k \Pb(r_1)}%
      = \alphah_k(r_1),
\end{align*}
where the inequality follows from the submultiplicativity of the spectral norm \cite{horn2012matrix}. %

To prove the convexity of $\alphah_k(r)$, we first define the function $w(r) = \frac{r}{1 + r}$ for $r \geq 0$. Since $w(r)$ is an increasing and concave function of $r$, we have, for all $r_1, r_2 > 0$ and $\theta \in (0, 1)$, 
\begin{align*}
  w(\theta r_1 + (1 - \theta)r_2) \geq \theta w(r_1) + (1 - \theta)w(r_2),
\end{align*}
which implies that
\begin{align*}
  \theta r_1 + (1 - \theta) \geq w^{-1}\left(\theta w(r_1) + (1 - \theta)w(r_2)\right).
\end{align*}
Therefore, we have
\begin{align*}
    \alphah_k\left(\theta r_1 + (1 - \theta)\right)%
  &\leq \alphah_k\left(w^{-1}\left(\theta w(r_1) + (1 - \theta)w(r_2)\right)\right)\\
  &= \bbspectral{ \Gb_k^{\frac12} \left[\Ib_{|\cX|\cdot|\cY|} - \left(\theta w(r_1) + (1 - \theta)w(r_2)\right)\Mb\Mb^\T\right]\Gb_k^{\frac12}}\\
  &= \left\| \Gb_k^{\frac12} \left[\theta\left(\Ib_{|\cX|\cdot|\cY|} - w(r_1)\Mb\Mb^\T\right)%
    + (1 - \theta)\left(\Ib_{|\cX|\cdot|\cY|} - w(r_2)\Mb\Mb^\T\right)\right]\Gb_k^{\frac12}\right\|_{\mathrm{s}}\\
  &\leq \theta\bbspectral{ \Gb_k^{\frac12} \left(\Ib_{|\cX|\cdot|\cY|} - w(r_1)\Mb\Mb^\T\right)\Gb_k^{\frac12}}%
    + (1-\theta)\bbspectral{ \Gb_k^{\frac12} \left(\Ib_{|\cX|\cdot|\cY|} - w(r_2)\Mb\Mb^\T\right)\Gb_k^{\frac12}}\\
  &=  \theta \alphah_k(r_1) + (1-\theta)\alphah_k(r_2),
\end{align*}
where the first equality follows from the fact that $\alphah_k(r)$ is non-increasing, and the second equality follows from the triangle inequality for the spectral norm.

Finally, to obtain the lower bound of \eqref{eq:alphah:bound}, note that
\begin{subequations}
  \begin{align}
    \alphah_k(r)
    &= \bbspectral{ \Gb_k^{\frac12} \left(\Ib_{|\cX|\cdot|\cY|} - \frac{r}{1 + r}\Mb\Mb^\T\right)\Gb_k^{\frac12}}\label{eq:alphah:lb:1}\\
    &\geq \bbspectral{ \Gb_k} - \frac{r}{1 + r}\bbspectral{\Gb_k^{\frac12}\Mb\Mb^\T\Gb_k^{\frac12}}\label{eq:alphah:lb:2}\\
    &= \bbspectral{ \Gb_k} - \frac{r}{1 + r}\bbspectral{\Mb^\T\Gb_k\Mb}\label{eq:alphah:lb:3}\\
    &\geq \bbspectral{ \Gb_k} - \frac{r}{1 + r}\bbspectral{\Gb_k}\bbspectral{\Mb}^2\label{eq:alphah:lb:4}\\
    &= \frac{1}{1 + r} \bbspectral{\Gb_k}= \frac{1}{1 + r} \alphah_k(0),\label{eq:alphah:lb:5}
  \end{align}
\end{subequations}
where \eqref{eq:alphah:lb:2} follows from the triangle inequality, \eqref{eq:alphah:lb:3} follows from \eqref{eq:AAt:AtA}, \eqref{eq:alphah:lb:4} follows from the submultiplicativity of the spectral norm, and the penultimate equality follows from the fact that $\spectral{\Mb} = \sqrt{\bbspectral{\Mb^{\T}\Mb}} = 1$, since $\Mb^{\T}\Mb$ is an identity matrix.

To obtain the upper bound of \eqref{eq:alphah:bound}, note that
\begin{align*}
    \alphah_k(r)%
  &= \bbspectral{ \Gb_k^{\frac12} \left(\Ib_{|\cX|\cdot|\cY|} - \frac{r}{1 + r}\Mb\Mb^\T\right)\Gb_k^{\frac12}}\\
  &= \left\|\frac{1}{1 + r}\Gb_k + \frac{r}{1 + r}\Gb_k^{\frac12}\left(\Ib_{|\cX|\cdot|\cY|} - \Mb\Mb^\T\right)\Gb_k^{\frac12}\right\|_{\mathrm{s}}\\
  &\leq \frac{1}{1 + r} \bbspectral{\Gb_k} + \frac{r}{1 + r}\bbspectral{\Gb_k^{\frac12}\left(\Ib_{|\cX|\cdot|\cY|} - \Mb\Mb^\T\right)\Gb_k^{\frac12}}\\
  &= \frac{1}{1 + r}\alphah_k(0) + \frac{1}{1 + r}\alphah_k(\infty),
\end{align*}
where we have again used the triangle inequality.

\section{Proof of \propref{eg:1:semi}} \label{sec:app:eg1:semi}
  First, note that from \eqref{eq:gbh:def} we have
\begin{align}
  \Gbh_k(r) &= \Lbh^{\T}(r)\Gb_k\Lbh(r)%
    = \frac14\sum_{i = 1}^{d - 1}\sigma_i^2 \left(\Lbh^{\T}(r)\Mb\phib_i\right)\left(\Lbh^{\T}(r)\Mb\phib_i\right)^{\T}%
    = \frac{1}{4(1+r)}\sum_{i = 1}^{d - 1}\sigma_i^2 \Bigl(\Mbh(r)\phib_i\Bigr)\Bigl(\Mbh(r)\phib_i\Bigr)^{\T},\label{eq:gbh:eig}
\end{align}
where the second equality follows from \eqref{eq:gbk:eig}, and in the last equality we have defined
\begin{align}\label{eq:Mbh:def}
  \Mbh(r) \defeq \sqrt{1 + r}\cdot\Lbh^{\T}(r)\Mb.
\end{align}
 In addition, note that $\Mbh(r)$ satisfies
\begin{align}
  \label{eq:Mbh}
  \Mbh^{\T}(r)\Mbh(r)
    = (1 + r)\Mb^{\T}\Lbh(r)\Lbh^{\T}(r)\Mb%
    = (1 + r)\Mb^{\T}\left(\Ib_{|\cX|\cdot|\cY|} - \frac{r}{1 + r}\Mb\Mb^{\T}\right)\Mb%
  &= \Mb^{\T}\Mb= \Ib_d,
\end{align}
where to obtain the second equality we have used \eqref{eq:Lbh:3}. Therefore, we have $\left\langle\Mbh(r)\phib_i, \Mbh(r)\phib_j\right\rangle = \langle\phib_i, \phib_j\rangle = \delta_{ij}$, and it follows from \eqref{eq:gbh:eig} that the non-zero eigenvalues of $\Gbh_k(r)$ are 
\begin{align*}
  \frac{\sigma_i^2}{4(1+r)},\quad i = 1, \dots, d - 1.
\end{align*}
Hence, the largest eigenvalue (i.e., the largest singular value) of $\Gbh_k(r)$ is 
\begin{align*}
   \alphah_k(r) = \bspectral{\Gbh_k(r)} = \frac{\sigma_1^2}{4(1 + r)}.
\end{align*}
\section{Proof of \thmref{thm:exponent:semi:k:k+1}}
\label{sec:app:thm:semi:k:k+1}

Similar to the proof of \thmref{thm:exponent:k:k+1}, we first extend the definition of $\nbhdbar(\eps)$ to the case $\sigma_k = \sigma_{k + 1}$ via letting
  \begin{align}
    \nbhdbar(\eps) \defeq \left\{\Pt_{XY}\colon
    D(\Ph_{XY}\| P_{XY})+ r D(Q_{X}\| P_{X}) \leq \frac{\eps}{\betah_k(r)}\right\},
    \label{eq:nbhd:bar:=}
  \end{align}
  and define $\cSt^{(t)}_3(\eps)$ as the set of $\Pt_{XY}$ such that the corresponding $\imates$ from~\eqref{eq:imates} satisfies
  \begin{align} 
    &\sum_{i = 1}^{l - 1}\sum_{j = l}^d \frac{\left[\phib_i^{\T}\left(\dtmt^\T \dtmdmatt + \dtmdmatt^\T \dtmt\right) \phib_j\right]^2}{\sigma_i^2 - \sigma_j^2}%
                + \sum_{i = l}^{k}\sum_{j \in \cIb_k} \frac{\left[\varphibb_i^{\T}\left(\dtmt^\T \dtmdmatt + \dtmdmatt^\T \dtmt\right) \phib_j\right]^2}{\sigma_i^2 - \sigma_j^2}
                \geq 1,\label{eq:condition_k=k+1:semi}
  \end{align} 
  where $\varphibb_i$ are as defined in \eqref{eq:varphi:def:semi}. Then the following result, analogous to \lemref{lem:S2t:alphah_k} for the case $\sigma_k > \sigma_{k + 1}$, will be useful in our analysis.
  \begin{lemma}
    \label{lem:S3t:betah_k}
    For all $t \in (0, 2)$, we have
    \begin{align}
      -\lim_{\eps \rightarrow 0^+} \frac{1}{\eps} \inf_{\Pt_{XY} \in \cSt_3^{(t)}(\eps)} \left[D\bigl(\Ph_{XY}\big\| P_{XY}\bigr) + r D(Q_X \| P_X)\right] = \frac{t}{2\betah_k(r)}.
      \label{eq:S3t:=}
    \end{align}
  \end{lemma}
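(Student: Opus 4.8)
The argument will parallel the proof of \lemref{lem:S2t:alphah_k} (the semi-supervised case $\sigma_k > \sigma_{k+1}$), with the perturbation expansion of \lemref{lem:eig:k} replaced by that of \lemref{lem:2} to accommodate the degeneracy $\sigma_k = \sigma_{k+1}$, just as \lemref{lem:S3:beta_k} is the degenerate analogue of \lemref{lem:S2:alpha_k} in the supervised setting. First I would note that $\cSt_3^{(t)}(\eps)$ is closed, so the infimum in \eqref{eq:S3t:=} is attained. For a distribution $\Pt_{XY}\in\cSt_3^{(t)}(\eps)$ --- which by definition lies in $\nbhdbar(\eps)$ and satisfies the inequality \eqref{eq:condition_k=k+1:semi} with threshold $t$ --- let $\Ph_{XY}\leftrightarrow\imat$ and $Q_X\leftrightarrow\imatm$ be the associated perturbations. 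Using the boundedness of $\imat$ and $\imatm$ obtained in the proof of \lemref{lem:perturb:B:semi} together with the second-order Taylor series expansion of the K-L divergence, I would get, as in \eqref{eq:local:kl:semi}, that $D(\Ph_{XY}\|P_{XY}) + r\,D(Q_X\|P_X) = \frac{\eps}{2}\bigl(\|\imat\|_{\F}^2 + r\|\imatm\|^2\bigr) + o(\eps)$, with the $o(\eps)$ uniform over $\cSt_3^{(t)}(\eps)$.

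Next I would apply \lemref{lem:2} with $\Ab = \dtmt^{\T}\dtmt$ and perturbation direction $\Ab' = \dtmt^{\T}\dtmdmatt + \dtmdmatt^{\T}\dtmt$: since \lemref{lem:perturb:B:semi} gives $\dtmtt = \dtmt + \sqrt{\eps}\,\dtmdmatt + o(\sqrt{\eps})$ (so the perturbation parameter is $\sqrt{\eps}$), the learning error $\bbfrob{\dtmt\Phib_k}^2 - \bbfrob{\dtmt\Phibt_k}^2$ equals $\eps$ times the left-hand side of \eqref{eq:condition_k=k+1:semi} plus $o(\eps)$, with the vectors $\varphibb_i$ exactly as in \eqref{eq:varphi:def:semi}. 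Hence, as $\eps\to0^+$, the infimum of $D(\Ph_{XY}\|P_{XY}) + r\,D(Q_X\|P_X)$ over $\cSt_3^{(t)}(\eps)$ is governed by the optimization problem of minimizing $\|\imat\|_{\F}^2 + r\|\imatm\|^2$ subject to the left-hand side of \eqref{eq:condition_k=k+1:semi} being at least $t$ and to the zero-mean conditions $\sum_{x,y}\sqrt{P_{XY}(x,y)}\,\imate(y,x)=0$ and $\sum_x\sqrt{P_X(x)}\,\imatem(x)=0$ inherited from the definitions of $\imat,\imatm$. Since both objective and active constraint are quadratic, I would interchange them (as in \eqref{eq:opt:semi:equiv}), so the quantity in question becomes $t$ divided by the maximum of the left-hand side of \eqref{eq:condition_k=k+1:semi} over $\|\imat\|_{\F}^2 + r\|\imatm\|^2\le1$. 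Following \lemref{lem:S2t:alphah_k}, the zero-mean constraints may be dropped (an optimizer violating them either forces $\dtmdmatt$ to vanish, hence gives a zero objective, or can be strictly improved by renormalizing its mean-zero part), and the optimizer vanishes on coordinates with $P_{XY}(x,y)=0$, so it corresponds to admissible perturbations via \eqref{eq:imate:def} and \eqref{eq:imates}. Vectorizing and using $\vec(\dtmdmatt)=\Lb\vec(\imats)$ and $\vec(\imats)=\Lbh(r)\convec$ (cf.\ \eqref{eq:imates}, \eqref{eq:Lbh:def}, \eqref{eq:convec:def}), the objective becomes $\convec^{\T}\Jbh_k(r,\imat,\imatm)\convec$ with $\|\convec\|^2=\|\imat\|_{\F}^2 + r\|\imatm\|^2\le1$, which is exactly problem \eqref{eq:opt:semi:k:k+1:simple} with optimal value $\betah_k(r)$. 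This would give $\inf_{\Pt_{XY}\in\cSt_3^{(t)}(\eps)}[D(\Ph_{XY}\|P_{XY}) + r\,D(Q_X\|P_X)] = \frac{\eps t}{2\betah_k(r)} + o(\eps)$; since $t\in(0,2)$, the corresponding optimal $\Pt^{*}_{XY}$ satisfies $D(\Ph^{*}_{XY}\|P_{XY}) + r\,D(Q^{*}_X\|P_X) < \eps/\betah_k(r)$ for small $\eps$, so $\Pt^{*}_{XY}\in\nbhdbar(\eps)$, legitimizing the use of the $\nbhdbar(\eps)$ neighborhood, and \eqref{eq:S3t:=} follows.

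The step I expect to be the main obstacle, and the only place this is not a verbatim transcription of \lemref{lem:S2t:alphah_k}, is the interchange of objective and constraint: unlike in \thmref{thm:exponent:semi}, the matrix $\Jbh_k(r,\imat,\imatm)$ is not fixed, since it depends on $\convec$ both through $\dtmdmatt$ and through the eigenvectors $\varphibb_i$ of $\Phib_{\cI_k}^{\T}\bigl(\dtmt^{\T}\dtmdmatt + \dtmdmatt^{\T}\dtmt\bigr)\Phib_{\cI_k}$. What makes the reduction valid is that $\dtmdmatt$ is linear in $(\imat,\imatm)$, hence in $\convec$, while each $\varphibb_i$ is invariant under positive scaling of $\convec$; consequently $\convec\mapsto\convec^{\T}\Jbh_k(r,\imat,\imatm)\convec$ is positively homogeneous of degree two, and the usual quadratic-program manipulation --- swapping the two quadratics and then rescaling the optimizer until the norm constraint is active --- carries over exactly as in \lemref{lem:S2t:alphah_k} and \lemref{lem:S3:beta_k}. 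One must also confirm, as is already built into \lemref{lem:2} and \eqref{eq:varphi:def:semi}, that taking the $\varphibb_i$ to be the top $k-l+1$ eigenvectors of that matrix is the maximizing choice.
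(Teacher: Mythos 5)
Your proposal is correct and tracks the paper's proof step for step: second-order Taylor expansion of the weighted KL divergence as in \eqref{eq:local:kl:semi}, reduction to a constrained optimization, interchange of objective and constraint, dropping the zero-mean constraints, and vectorization to $\convec^{\T}\Jbh_k(r,\imat,\imatm)\convec$ over $\{\|\convec\|\le 1\}$ with optimal value $\betah_k(r)$. Your closing observation --- that the interchange remains valid because $\dtmdmatt$ is linear in $\convec$ and the $\varphibb_i$ are invariant under positive scaling, so the objective is positively homogeneous of degree two --- is a genuine clarification of a point the paper handles only by appeal to the analogy with \lemref{lem:S2t:alphah_k}.
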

  \begin{proof}
    The proof is similar to that of \lemref{lem:S3:beta_k}. Using the second-order Taylor series expansion of the K-L divergence \eqref{eq:local:kl:semi}, the limit \eqref{eq:S3t:=} can be characterized by the following optimization problem:
    \begin{subequations}
      \begin{alignat}{2} %
        &\min_{\imatj, \imatm} & \quad & \|\imatj\|_{\F}^2 + r \|\imatm\|^2
         \\
        &~\,\text{s.t.} && \sum_{i = 1}^{l - 1}\sum_{j = l}^d \frac{\left[\phib_i^{\T}\left(\dtmt^\T \dtmdmatt + \dtmdmatt^\T \dtmt\right) \phib_j\right]^2}{\sigma_i^2 - \sigma_j^2}%
        + \sum_{i = l}^{k}\sum_{j \in \cIb_k} \frac{\left[\varphibb_i^{\T}\left(\dtmt^\T \dtmdmatt + \dtmdmatt^\T \dtmt\right) \phib_j\right]^2}{\sigma_i^2 - \sigma_j^2} \geq t,\\
        &&& \sum_{x \in \cX}\sqrt{P_{X}(x)} \imatem (x) = 0,\quad \sum_{x \in \cX, y \in \cY}\sqrt{P_{XY}(x,y)} \imatej (y, x) = 0.
      \end{alignat}
      \label{eq:opt:semi:k:k+1}
    \end{subequations}

    Following the same argument as that for \lemref{lem:S2t:alphah_k}, the optimal solution of \eqref{eq:opt:semi:k:k+1} can be obtained by solving
    \begin{subequations}
      \begin{alignat}{2} %
        &\max_{\imatj, \imatm} & \quad &\sum_{i = 1}^{l - 1}\sum_{j = l}^d \frac{\left[\phib_i^{\T}\left(\dtmt^\T \dtmdmatt + \dtmdmatt^\T \dtmt\right) \phib_j\right]^2}{\sigma_i^2 - \sigma_j^2}%
        + \sum_{i = l}^{k}\sum_{j \in \cIb_k} \frac{\left[\varphibb_i^{\T}\left(\dtmt^\T \dtmdmatt + \dtmdmatt^\T \dtmt\right) \phib_j\right]^2}{\sigma_i^2 - \sigma_j^2} \\
        &~\,\text{s.t.} && \|\imatj\|_{\F}^2 + r \|\imatm\|^2 \leq 1,
      \end{alignat}
      \label{eq:opt:semi:k:k+1:eq}
    \end{subequations}
    where we have interchanged the objective function and the quadratic function in the inequality constraint, and removed the equality constraints.

    In addition, similar to \eqref{eq:gbh:convec}, we can rewrite the objective function of \eqref{eq:opt:semi:k:k+1:eq} as
    \begin{align*} \notag
      &\sum_{i = 1}^{l - 1}\sum_{j = l}^d \frac{\left[\phib_i^{\T}\left(\dtmt^\T \dtmdmatt + \dtmdmatt^\T \dtmt\right) \phib_j\right]^2}{\sigma_i^2 - \sigma_j^2}%
          + \sum_{i = l}^{k}\sum_{j \in \cIb_k} \frac{\left[\varphibb_i^{\T}\left(\dtmt^\T \dtmdmatt + \dtmdmatt^\T \dtmt\right) \phib_j\right]^2}{\sigma_i^2 - \sigma_j^2} 
          = \convec^{\T}\Jbh_k(r, \imat, \imatm)\convec,
    \end{align*}
    where $\Jbh_k(r, \imat, \imatm)$ is as defined in \eqref{eq:Jbh}. As a result, the optimization problem \eqref{eq:opt:semi:k:k+1:eq} can be rewritten as \eqref{eq:opt:semi:k:k+1:simple}, and thus the optimal value is $\betah_k(r)$. Finally, using the same argument as that for \lemref{lem:S2t:alphah_k}, we conclude that the optimal value of \eqref{eq:opt:semi:k:k+1:simple} is $t/\betah_k(r)$ and thus
    \begin{align*}
      \inf_{\Pt_{XY} \in \cSt_3^{(t)}(\eps)} \left[D\bigl(\Ph_{XY}\big\| P_{XY}\bigr) + r D(Q_X \| P_X)\right] = \frac{\eps t}{2\betah_k(r)} + o(\eps),
    \end{align*}
    which implies \eqref{eq:S3t:=}.    
  \end{proof}
  
  In addition, it follows from \lemref{lem:2} and \lemref{lem:perturb:B:semi} that the corresponding learning error for the distribution $\Pt_{XY} \in \nbhd(\eps)$ is
  \begin{align}
    \bbfrob{ \dtmt \Phib_k }^2 - \bbfrob{ \dtmt \Phibt_k }^2
    = \eps\sum_{i = 1}^{l - 1}\sum_{j = l}^d \frac{\left[\phib_i^{\T}\left(\dtmt^\T \dtmdmatt + \dtmdmatt^\T \dtmt\right) \phib_j\right]^2}{\sigma_i^2 - \sigma_j^2}%
    + \eps\sum_{i = l}^{k}\sum_{j \in \cIb_k} \frac{\left[\varphibb_i^{\T}\left(\dtmt^\T \dtmdmatt + \dtmdmatt^\T \dtmt\right) \phib_j\right]^2}{\sigma_i^2 - \sigma_j^2} + o(\eps).
    \label{eq:err:perturb:=:semi}
  \end{align}

  Therefore, for any $t \in (0, 1)$, there exists an $\eps_0 > 0$ such that for all $\eps \in (0, \eps_0)$, we have
  \begin{align*}
    \cSt_3^{(1 + t)}(\eps)  \subseteq \cSt_1(\eps) \cap \nbhdbar(\eps) \subseteq \cSt_3^{(1 - t)}(\eps).
  \end{align*}
  
  Then, using arguments similar to \eqref{eq:bound:min}--\eqref{eq:lim:cS1}, from \lemref{lem:S3t:betah_k} we have
    \begin{align*}
    &-\lim_{\eps \rightarrow 0^+} \frac{1}{\eps} \inf_{\Pt_{XY} \in \cSt_1(\eps) \cap \nbhdbar(\eps)} \left[D\bigl(\Ph_{XY}\big\| P_{XY}\bigr) + r D(Q_X \| P_X)\right]\\
    &= -\lim_{\eps \rightarrow 0^+} \frac{1}{\eps} \inf_{\Pt_{XY} \in \cSt_3^{(1)}(\eps)} \left[D\bigl(\Ph_{XY}\big\| P_{XY}\bigr) + r D(Q_X \| P_X)\right] = \frac{1}{2\betah_k(r)}.
  \end{align*}
  Finally, following the same proof as that for \thmref{thm:exponent:semi}, we obtain \eqref{eq:exponent:semi:k:k+1}.

  \section{Proof of Corollary \ref{eg:2:semi}}  
  \label{sec:app:eg2:semi}

  From \lemref{lem:psi=phi}, we have $\sigma_1 = \dots = \sigma_{d - 1} > \sigma_d = 0$. Therefore, for all $1 \leq k \leq d - 1$ we have $\cI_k = [d - 1]$, which further implies that
  \begin{align*}
    l = \min \cI_k = 1\quad\text{and}\quad\cIb_k = \{d\}.
  \end{align*}
  Hence, from \eqref{eq:Jbh} we have
  \begin{align}
    \Jbh_k(r, \imatj, \imatm)& = \Lbh^{\T}(r)\Lb^{\T}\left(\sum_{i = 1}^k \frac{\varthetabb_{id}\varthetabb^{\T}_{id}}{\sigma_1^2}\right)\Lb\Lbh(r)%
      = \frac{1}{\sigma_1^2}\sum_{i = 1}^k \left(\Lbh^{\T}(r)\Lb^{\T} \varthetabb_{id}\right)\left(\Lbh^{\T}(r)\Lb^{\T} \varthetabb_{id}\right)^{\T}.
  \end{align}  
  In addition, similar to \eqref{eq:comp:alpha}, we have
  \begin{align}
    \Lb^{\T} \varthetabb_{id} = -\frac{\sigma_1^2}{2}\Mb \varphibb_i,
  \end{align}
  and thus
  \begin{align}\label{eq:gbh:eig:semi}
    \Gbh_k
    &= \frac{\sigma_1^2}{4} \sum_{i = 1}^{k}\left(\Lbh^{\T}(r)\Mb \varphibb_i\right)\left(\Lbh^{\T}(r)\Mb \varphibb_i\right)^{\T}%
      = \frac{1}{4(1+r)}\sum_{i = 1}^{d - 1}\sigma_i^2 \Bigl(\Mbh(r)\varphibb_i\Bigr)\Bigl(\Mbh(r)\varphibb_i\Bigr)^{\T},
  \end{align}
  where $\Mbh(r)$ is as defined in \eqref{eq:Mbh:def}. Note that since $\left\langle\Mbh(r) \varphibb_i, \Mbh(r) \varphibb_j\right\rangle = \delta_{ij}$, \eqref{eq:gbh:eig:semi} demonstrates the eigen-decomposition of $\Gbh_k$. Therefore, from \thmref{thm:exponent:semi:k:k+1} and the definition of $\betah(r)$, we have
  \begin{align*}
    \betah_k(r) \leq \bbspectral{\Jbh_k(r, \imatj, \imatm)} = \frac{\sigma_1^4}{4(1 + r)}.
  \end{align*}
  To prove that the inequality holds with equality, it suffices to construct $\imat$ and $\imatm$ such that the corresponding $\convec$ as defined in \eqref{eq:def:convec} satisfies $\|\convec\|^2 \leq 1$ and
  \begin{align}\label{eq:alphah:eq}
    \convec^{\T}\Gbh_k\convec = \bbspectral{\Jbh_k(r, \imatj, \imatm)}.
  \end{align}
  Indeed, as we now illustrate, if $\imat$ and $\imatm$ are chosen as
  \begin{subequations}
    \begin{align}\label{eq:imat:opt:semi}
      \imate(y, x) = \frac{1}{\sqrt{1 + r}}\sqrt{P_{Y|X}(y|x)}\phi'(x)
    \end{align}
    and
    \begin{align}\label{eq:imatm:opt:semi}
      \imatem(x) = \frac{1}{\sqrt{1 + r}}\phi'(x)
    \end{align}
    \label{eq:imat:m:opt:semi}
  \end{subequations}
  with $\phib'$ as defined in \eqref{eq:phi'}, then we have $\varphibb_1 = \pm\phib'$ and $\convec = \Mbh(r)\phib'= \pm\Mbh(r)\varphibb_1$, and thus \eqref{eq:alphah:eq} holds.

  To see this, first note that from \eqref{eq:Lbh:12} we have
  \begin{align*}
    \Mbh(r) = \sqrt{1 + r}\Lbh^{\T}(r)\Mb =
    \sqrt{1 + r}
    \begin{bmatrix}
      \Lbh_1^{\T}(r)\Mb\\
      \Lbh_2^{\T}(r)\Mb
    \end{bmatrix}
    =\frac{1}{\sqrt{1+r}}
    \begin{bmatrix}
      \Mb\\
      \sqrt{r}\Ib_{d}
    \end{bmatrix},  
  \end{align*}
  and it follows from \eqref{eq:def:convec} and \eqref{eq:imat:m:opt:semi} that $\convec = \Mbh(r)\phib'$. Therefore, we have $\|\convec\|^2 = \|\phib'\|^2 = 1$. 

In addition, from \eqref{eq:imates} we have
\begin{align*}
  \imates(y, x) 
  &= \imatej(y, x) + \frac{r}{1 + r}\sqrt{P_{Y|X}(y|x)}%
    \cdot\biggl[\imatem(x) - \sum_{y'}\sqrt{P_{Y|X}(y'|x)}\imatej(y', x)\biggr]\\
  &= \imatej(y, x)%
    = \frac{1}{\sqrt{1 + r}}\sqrt{P_{Y|X}(y|x)}\phi'(x),
\end{align*}
i.e.,
\begin{align*}
  \vec(\imats) = \frac{1}{\sqrt{1 + r}} \Mb\phib'.
\end{align*}
Then, similar to \eqref{eq:theta:eg}, from \eqref{eq:theta:semi} we obtain
\begin{align*}
  \dtmdmatet(y, x) = \frac{1}{\sqrt{1 + r}} \dtmdmate(y, x)
\end{align*}
with $\dtmdmate(y, x)$ as given by \eqref{eq:theta:eg}.
Furthermore, following the same proof as that for Corollary \ref{eg:2}, $\varphibb_1$ is the solution of the optimization problem
  \begin{alignat}{2}
    &\max_{\phib} &\quad & \phib^{\T}\left(\dtmt^{\T}\dtmdmatt\right)\phib   \notag\\
    &~\,\mathrm{s.t.}& & \langle\phib, \phib_d\rangle = 0, \quad\|\phib\| = 1,\label{eq:opt:phi:semi}
  \end{alignat}
  which has the same solution as the optimization problem \eqref{eq:opt:phi} since $\dtmdmatt = \dtmdmat / \sqrt{1 + r}$. Hence, we obtain $\varphibb_1 = \pm\phib’$, which finishes the proof.
  
\bibliographystyle{IEEEtran}
\bibliography{ref}

\end{document}